\keywords{separation logic, internal calculus, adjunct/quantifier elimination}
\newif\ifLongVersionWithAppendix\LongVersionWithAppendixtrue
\begin{document}

 
\title[Axiomatisation for Quantifier-Free Separation Logic]{\texorpdfstring{A Complete Axiomatisation for\\ Quantifier-Free Separation Logic}{A Complete Axiomatisation for Quantifier-Free Separation Logic}}
\titlecomment{{\lsuper*}This is the long version of the first part of~\cite{Demri&Lozes&Mansutti20}.}

\author[S.~Demri]{St\'ephane Demri\rsuper{a}}	

\author[E.~Lozes]{{\'E}tienne Lozes\rsuper{b}}	

\author[A.~Mansutti]{Alessio Mansutti\rsuper{a}}	

\address{\lsuper{a}LSV, CNRS, ENS Paris-Saclay - 4, avenue des Sciences - 91190 Gif-sur-Yvette}	
\email{demri@lsv.fr, mansutti@lsv.fr}  

\address{\lsuper{b}I3S - Les Algorithmes - B\^atiment Euclide B - 2000, route des Lucioles - 06900 Sophia Antipolis}	
\email{etienne.lozes@i3s.unice.fr}  





\begin{abstract}
We present the first complete axiomatisation for quantifier-free separation logic.
The logic is equipped with the standard concrete heaplet semantics and the proof system
has no external  feature such as nominals/labels.
It is not possible to rely completely on proof systems for Boolean BI as
the concrete semantics needs to be taken into account.
Therefore, we present the first internal Hilbert-style axiomatisation for quantifier-free separation logic.
The calculus is divided in three parts: the axiomatisation
of core formulae where Boolean combinations of core formulae capture the expressivity of the whole logic,
axioms and inference rules to simulate a bottom-up elimination of separating connectives, and finally
structural axioms and inference rules from propositional calculus and Boolean BI with the magic wand.

\end{abstract}

\maketitle

\section{Introduction}\label{section:introduction}
\subsection*{The virtue of axiomatising program logics}
Designing a Hilbert-style axiomatisation for your favourite logic is usually quite challenging.
This does not  lead necessarily to optimal decision procedures, but the completeness proof usually provides essential
insights to better understand the logic at hand. That is why many logics related to program verification
have been axiomatised, often requiring non-trivial completeness proofs.
By way of example, there are axiomatisations for
the linear-time $\mu$-calculus~\cite{Kaivola95,Doumane17},
the modal $\mu$-calculus~\cite{Walukiewicz00}
or for
the alternating-time temporal logic ATL~\cite{Goranko&vanDrimmelen06},
the full computation tree logic CTL$^{*}$~\cite{Reynolds01},
for probabilistic extensions of $\mu$-calculus~\cite{Larsen&Mardare&Xue16}
or for a coalgebraic generalisation~\cite{Schroder&Venema18}.
Concerning the separation logics that extend Hoare-Floyd logic
to verify programs with mutable data structures
(see e.g.~\cite{OHearn&Pym99,Reynolds02,Ishtiaq&OHearn01,OHearn12,Pym&Spring&OHearn18}),
a Hilbert-style axiomatisation of Boolean BI has been introduced in~\cite{Galmiche&Larchey06},
but remained at the abstract level of Boolean BI. More recently, HyBBI~\cite{Brotherston&Villard14},
a hybrid version of Boolean BI has been introduced in order
to axiomatise various classes of abstract separation logics; HyBBI
naturally considers  classes of abstract models (typically preordered partial monoids) but it does not fit
exactly the heaplet semantics of separation logics. Furthermore, the addition of nominals (in the sense
of hybrid modal logics, see e.g.~\cite{Areces&Blackburn&Marx01}) extends substantially the object language. Other  frameworks
to axiomatise classes of abstract separation logics can be found
in~\cite{Docherty&Pym18,Docherty19} and in~\cite{Houetal18}, respectively with labelled tableaux calculi
and with sequent-style proof systems.
%
\subsection*{Our motivations}
Since the birth of separation logics, there has been  a lot of interest in the study of 
decidability and computational  complexity issues,
see e.g.~\cite{Calcagno&Yang&OHearn01,Brochenin&Demri&Lozes09,Bozga&Iosif&Perarnau10,Cooketal11,Demrietal17,Brotherston&Kanovich18,DemriLM18,Mansutti18,Mansutti20},
and comparatively less attention to the design of proof systems, and even less with the puristic approach
that consists in discarding any external feature such as nominals or labels in the calculi.
The well-known advantages of such an approach include an exhaustive understanding of the expressive power of
the logic and discarding the use of any external artifact referring to semantical objects.
For instance, a tableaux calculus with labels for quantifier-free separation logic is designed in~\cite{Galmiche&Mery10},
whereas Hilbert-style calculi for abstract separation logics with nominals are defined in~\cite{Brotherston&Villard14}.
Similarly, display calculi for bunched logics are provided in~\cite{Brotherston12} but such calculi extend
Gentzen-style proof systems by allowing new structural connectives, which provides an elegant means to simulate labels.
In this paper, we advocate a puristic approach and aim at designing a Hilbert-style proof system 
for quantifier-free 
separation logic \slSW (which includes the separating conjunction $\separate$ and implication $\magicwand$, as well as all 
Boolean connectives) and more generally for other separation logics,  while remaining within the very logical
language (see the second part of~\cite{Demri&Lozes&Mansutti20}).%
\footnote{We aim at defining {\em internal} calculi according to the terminology from
the  Workshop on External and Internal Calculi for Non-Classical Logics, FLOC'18, Oxford,
\url{http://weic2018.loria.fr}.}
Consequently, in this work, we only focus on axiomatising separation logics,
and  we have no claim for  practical applications in the field of program verification with separation logics.
Aiming at internal calculi is  a non-trivial
task as the general frameworks for abstract separation logics make use of labels, see e.g.~\cite{Docherty&Pym18,Houetal18}.
We cannot rely on label-free calculi for BI, see e.g.~\cite{Pym02,Galmiche&Larchey06}, as separation logics are usually understood
as Boolean BI interpreted on models of heap memory and therefore require calculi that cannot abstract as much as
it is the case for Boolean BI. Finally, there are many translations from separation logics into
logics or theories, 
see e.g.~\cite{Calcagno&Gardner&Hague05,Piskac&Wies&Zufferey13,Brochenin&Demri&Lozes12,Reynoldsetal16}.
However, completeness cannot in general be inherited by sublogics as the proof system should
only use the sublogic and therefore the axiomatisation of sublogics may lead
to different methods.
A more  detailed discussion
about the related work can be found in Section~\ref{section:related-work}.
%
\subsection*{Our contribution}
We propose a modular axiomatisation of quantifier-free separation logic, starting with a complete
axiomatisation of a Boolean algebra of core formulae, and incrementally adding support for the
spatial connectives: the separating conjunction and the separating implication (a.k.a.~the magic wand).
The same approach could be followed for other fragments of separation logic, as we did in the conference
version of this paper~\cite{Demri&Lozes&Mansutti20} (see also a similar approach in~\cite{Demri&Fervari&Mansutti19}). 
Thus, our approach can be considered with the broader perspective of a generic method for axiomatising 
separation logics. Let us be a bit more precise.

In Section~\ref{section:PSL}, we present the first Hilbert-style proof system
for \slSW
that uses axiom schemas and rules involving only formulae of this logic.
We mainly introduce our approach and present the notations
that are used throughout the paper. 
Each formula of \slSW is  equivalent to a Boolean combination of {\em core formulae}: 
simple formulae of the logic expressing elementary properties
about the models~\cite{Lozes04}.
Though core formulae (also called {\em test formulae}) have been handy in several occasions for
establishing complexity results for separation logics, 
see e.g.~\cite{Brochenin&Demri&Lozes09,DemriLM18,Mansutti18,Echenim&Iosif&Peltier19},
in the paper, these formulae are instrumental for the axiomatisation.
Indeed, the axiomatisation of \slSW is designed starting from 
an axiomatisation of Boolean combinations of core formulae (introduced in Section~\ref{subsection:axiomCoreFormulae}),
and adding axioms and rules that allow to syntactically transform every formula of \slSW into such Boolean combinations.
This transformation is introduced in Section~\ref{section:starelimination} and in Section~\ref{section:magicwandelimination}:
the former section shows how to eliminate the separating conjunction~$\separate$, 
whereas the latter one treat the separating implication~$\magicwand$. 
Schematically, for a valid formula~$\aformula$, we conclude $\vdash \aformula$  from
$\vdash \aformula'$ and $\vdash \aformula' \Leftrightarrow \aformula$, where
$\aformula'$ is a Boolean combination of core formulae.
Our methodology leads to a calculus that is divided in three parts: (1) the axiomatisation
of Boolean combinations of core formulae,
(2) axioms and inference rules to simulate a bottom-up elimination of the separating conjunction, 
and (3) axioms and inference rules to simulate a bottom-up elimination of the magic wand.
Such an approach that consists in first axiomatising a syntactic fragment of the whole logic (in our case, the core formulae), 
is best described in~\cite{Doumane17}
(see also~\cite{Walukiewicz00,vanBenthem2011ldii,WangC13,Luck18,Demri&Fervari&Mansutti19}).
Section~\ref{section:related-work} compares works from the literature with our contribution, either for
separation logics (abstract versions, fragments, etc.) or for knowledge logics for which the axiomatisation has been performed
by using a reduction to a strict syntactic fragment though expressively complete. 

This paper is the complete version of the first part of~\cite{Demri&Lozes&Mansutti20} dedicated
to quantifier-free separation logic \slSW. The complete version of the second part of~\cite{Demri&Lozes&Mansutti20} 
dedicated  to the new  separation logic \intervalSL is too long to be included in the present document.  
A technical appendix contains syntactic derivations omitted from the body of the paper.

\section{Preliminaries}\label{section:preliminaries}
\subsection{Quantifier-free separation logic}
We present the quantifier-free separation logic \slSW, that includes standard features such as
the separating conjunction $\separate$, the separating implication $\magicwand$
and closure under Boolean connectives.
Let $\PVAR = \set{\avariable, \avariablebis, \ldots}$ be a countably infinite set of \defstyle{program variables}.
The formulae $\aformula$ of \slSW and its atomic formulae
$\aatomicformula$ are built from  the grammars below where $\avariable, \avariablebis \in \PVAR$. 
\begin{nscenter}
$
\aatomicformula ::= \avariable = \avariablebis \ \mid \
                    \avariable \Ipto \avariablebis \ \mid \
                    \emptyconstant
  \qquad\qquad
\aformula ::= \aatomicformula \ \mid \  \neg \aformula \ \mid \ \aformula \wedge \aformula
                 \ \mid \ \aformula \separate \aformula  \ \mid \ \aformula \magicwand \aformula.
$
\end{nscenter}
The connectives $\Rightarrow$, $\Leftrightarrow$ and $\vee$ are defined as usually.
In the heaplet semantics, the formulae of \slSW are interpreted on  \defstyle{memory states} that are pairs
$\pair{\astore}{\aheap}$  where
${\astore: \PVAR \rightarrow \LOC}$ is
a variable valuation (the \defstyle{store}) from the set of program variables to a
countably infinite set of \defstyle{locations} $\LOC = \set{\alocation_0,\alocation_1, \alocation_2, \ldots}$, whereas
$\aheap: \LOC \to_{\fin} \LOC$ is a partial function with finite domain (the \defstyle{heap}).
We write $\domain{\aheap}$ to denote its domain and $\range{\aheap}$ to denote its range.
A \defstyle{memory cell} of $\aheap$ is understood as a pair of locations $\pair{\alocation}{\alocation'}$
such that $\alocation \in \domain{\aheap}$ and $\alocation'  = \aheap(\alocation)$.
As usual, the heaps $\aheap_1$ and $\aheap_2$ are said to be \defstyle{disjoint}, written $\aheap_1 \hdisjoint \aheap_2$,
if ${\domain{\aheap_1} \cap \domain{\aheap_2} = \emptyset}$;  when this holds, we write $\aheap_1 + \aheap_2$ to denote the heap
corresponding to the disjoint union of the graphs of $\aheap_1$ and $\aheap_2$, hence $\domain{\aheap_1 + \aheap_2} = \domain{\aheap_1} \uplus \domain{\aheap_2}$.
When the domains of $\aheap_1$ and $\aheap_2$  are not disjoint, the composition $\aheap_1 + \aheap_2$ is not defined.
Moreover, we write $\aheap' \sqsubseteq \aheap$ to denote that $\domain{\aheap'} \subseteq \domain{\aheap}$ and for all locations
$\alocation \in \domain{\aheap'}$, we have $\aheap'(\alocation) = \aheap(\alocation)$.
If $\aheap' \sqsubseteq \aheap$ then $\aheap'$ is said to be a \defstyle{subheap} of $\aheap$.
The satisfaction relation~$\models$ is defined as follows
(we omit standard clauses for the Boolean connectives $\neg$ and $\wedge$):
\begin{nscenter}
\begin{tabular}[t]{lll}
  $\pair{\astore}{\aheap} \models \avariable = \avariablebis$ & $\equivdef$ & $\astore(\avariable) = \astore(\avariablebis)$,  \\[2pt]
  $\pair{\astore}{\aheap} \models \emp$ & $\equivdef$ & $\domain{\aheap} = \emptyset$, \\[2pt]
  
  $\pair{\astore}{\aheap} \models \avariable {\Ipto} \avariablebis$ & $\equivdef$ & 
  $\astore(\avariable)\in\domain{\aheap}$ and $\aheap(\astore(\avariable)) = \astore(\avariablebis)$, \\[2pt]
  
  $\pair{\astore}{\aheap} \models \aformula_1 \separate \aformula_2$ & $\equivdef$ & there are 
  $\aheap_1,\aheap_2$ such that 
  $\aheap_1 \hdisjoint \aheap_2$, $(\aheap_1 + \aheap_2) = \aheap$,  \\
  & &  $\pair{\astore}{\aheap_1}  \models \aformula_1$ and  $\pair{\astore}{\aheap_2}  \models \aformula_2$,\\[2pt]
  
  $\pair{\astore}{\aheap} \models \aformula_1 \magicwand \aformula_2$ & $\equivdef$ & for all $\aheap_1$ such that
  $\aheap_1 \hdisjoint \aheap$ and $\pair{\astore}{\aheap_1} \models \aformula_1$, \\
  & &  we have $\pair{\astore}{\aheap + \aheap_1} \models \aformula_2$.
\end{tabular}
\end{nscenter}
\cut{
\begin{nscenter}
\begin{tabular}[t]{cc}
  $\pair{\astore}{\aheap} \models \avariable = \avariablebis$ $\equivdef$ $\astore(\avariable) = \astore(\avariablebis)$ &
  $\pair{\astore}{\aheap} \models \emp$ $\equivdef$ $\domain{\aheap} = \emptyset$ \\
  \multicolumn{2}{l}{$\pair{\astore}{\aheap} \models \avariable {\Ipto} \avariablebis$ $\equivdef$
  $\astore(\avariable)\in\domain{\aheap}$ and $\aheap(\astore(\avariable)) = \astore(\avariablebis)$} \\
  \multicolumn{2}{l}{$\pair{\astore}{\aheap} \models \aformula_1 \separate \aformula_2$ $\equivdef$ there are 
  $\exists\aheap_1,\aheap_2$ such that 
  $\aheap_1 \hdisjoint \aheap_2$, $(\aheap_1 + \aheap_2) = \aheap$, 
  $\pair{\astore}{\aheap_1}  \models \aformula_1$ and  $\pair{\astore}{\aheap_2}  \models \aformula_2$}\\
  \multicolumn{2}{l}{$\pair{\astore}{\aheap} \models \aformula_1 \magicwand \aformula_2$ $\equivdef$ $\forall\aheap_1.$
  ($\aheap_1 \hdisjoint \aheap$ and $\pair{\astore}{\aheap_1} \models \aformula_1$)
  implies $\pair{\astore}{\aheap + \aheap_1} \models \aformula_2$.}
\end{tabular}
\end{nscenter}
}
We denote with $\bot$ the contradiction $\avariable \neq \avariable$, and with $\top$ its negation $\neg\bot$.
The septraction operator $\septraction$ (kind of dual of $\magicwand$), defined
by $\aformula\septraction\aformulabis \egdef \neg(\aformula\magicwand\neg\aformulabis)$,
has the following semantics:
\begin{nscenter}
$\pair{\astore}{\aheap}\models \aformula\septraction\aformulabis$
$\equivdef$ there is a heap $\aheap'$ such that 
$\aheap \hdisjoint \aheap'$, $\pair{\astore}{\aheap'}\models\aformula$, and 
$\pair{\astore}{\aheap + \aheap'}\models\aformulabis$.
\end{nscenter}
We adopt the standard precedence between classical connectives, and extend it for the connectives of separation logic as follows:~$\set{\lnot} > \{\land,\lor,\separate\} > \set{\implies,\magicwand,\septraction} > \{\iff\}$. 
Notice that the separating conjunction~$\separate$ has a higher precedence than the separating implication~$\magicwand$, and it has the same precedence as the (classical) conjunction~$\land$. For instance, $\aformula \separate \aformulabis \implies \aformulater$ and $\lnot \aformula \magicwand \aformulabis \separate \aformulabis$ 
stand for $(\aformula \separate \aformulabis) \implies \aformulater$ and $(\lnot \aformula) \magicwand (\aformulabis \separate \aformulabis)$, respectively.

A formula $\aformula$ is \defstyle{valid} if $\pair{\astore}{\aheap}\models\aformula$ for all 
memory states $\pair{\astore}{\aheap}$ (and we write $\models \aformula$).
For a complete description of separation logic, see e.g.~\cite{Reynolds02}.
Given a set of formulae $\Gamma$, we write $\Gamma \models \aformula$ (semantical entailment)
whenever $\pair{\astore}{\aheap} \models \aformula$ holds
for every memory state $\pair{\astore}{\aheap}$ satisfying every formula in $\Gamma$.

It is worth noting that quantifier-free $\slSW$ axiomatised in the paper admits a \pspace-complete validity problem, 
see e.g.~\cite{Calcagno&Yang&OHearn01}, and should not be confused with propositional separation logic with the stack-heap models
 shown undecidable in~\cite[Corollary 5.1]{Brotherston&Kanovich14}
(see also~\cite[Section 4]{DemriDeters15bis}), in which there are propositional variables interpreted by sets of
memory states.

\subsection{Core formulae}\label{subsection:core-formulae}
We introduce the following well-known shortcuts, that play an important role in the sequel. 
Let $\avariable \in \PVAR$ and $\inbound \in \Nat$.%
\begin{center}
  \bgroup 
  \def\arraystretch{1.5}
  \begin{tabular}{lll}
    Shortcut: & Definition: & Semantics:\\[2pt]
    \hline
    $\alloc{\avariable}$ 
      $\egdef$&\!%
      $(\avariable \Ipto \avariable) \magicwand \false$
      &
      $\pair{\astore}{\aheap} \models \alloc{\avariable}$ iff $\astore(\avariable) \in \domain{\aheap}$ 
      \\[4pt]
    $\size \geq \inbound$
    $\egdef$&\hspace{-0.3cm}%
    $\begin{cases}
      \true &\text{if}~ \inbound = 0\\
      \lnot \emp &\text{if}~\inbound = 1\\
      \lnot \emp \separate \size \geq \inbound{-}1 &\text{otherwise}
    \end{cases}$ 
    &
    $\pair{\astore}{\aheap} \models \size \geq \inbound$ iff $\card{\domain{\aheap}} \geq \inbound$
  \end{tabular}
  \egroup
  \vspace{4pt}
\end{center}
We use $\size {=} \inbound$ as a shorthand for $\size {\geq }\inbound \land \lnot \size {\geq} \inbound {+} 1$.
We also write $\card{\aset}$ to denote the cardinality of the set $\aset$.

The \defstyle{core formulae} are expressions of the form
$\avariable = \avariablebis$,
$\alloc{\avariable}$,
$\avariable \Ipto \avariablebis$ and
$\size \geq \inbound$,
where $\avariable,\avariablebis \in \PVAR$ and $\inbound \in \Nat$.
As we can see, the core formulae are simple \slSW
formulae. 
It is well-known, see e.g.~\cite{Yang01,Lozes04bis}, that these formulae capture essential properties of the memory states.
In particular, every formula of \slSW is logically equivalent to a Boolean combination of
core formulae~\cite{Lozes04bis}.

As a simple but crucial insight, since the core formulae are formulae of \slSW, we can 
freely use them to help us defining the proof system for \slSW, and preventing us from going outside the original language. 
Having this in mind, the resulting proof system is Hilbert-style and completely internal 
(the formal definition of these types of systems is recalled below).

Given~$\asetvar \subseteq_\fin \PVAR$  and $\bound \in \Nat$,
we define~$\coreformulae{\asetvar}{\bound}$ as the set 
$$\{ \avariable = \avariablebis,\ \alloc{\avariable},\ \avariable \Ipto \avariablebis,\ \size \geq 
\inbound \mid \avariable,\avariablebis \in \asetvar,\ \inbound \in \interval{0}{\bound}\}.
$$ 
$\boolcomb{\coreformulae{\asetvar}{\bound}}$ is defined as the set of Boolean 
combinations of formulae from~$\coreformulae{\asetvar}{\bound}$, whereas
$\conjcomb{\coreformulae{\asetvar}{\bound}}$ is the set of conjunctions of literals built upon~$\coreformulae{\asetvar}{\bound}$.
As usual, a \defstyle{literal} is understood as  a core formula or its negation.
Let~${\aformula = \aliteral_1 \land \dots \land \aliteral_n} \in \conjcomb{\coreformulae{\asetvar}{\bound}}$ be a conjunction of literals $\aliteral_1,\dots,\aliteral_n$.
We write $\literals{\aformula}$ for $\{\aliteral_1,\dots,\aliteral_n\}$.
In forthcoming developments, we are interested in the maximum $\inbound$ (if any) of formulae of the form $\size \geq \inbound$ occurring positively in a conjunction of literals, if any.
For this reason, we write $\maxsize{\aformula}$ for ${\max(\{\inbound \in \Nat \mid \size \geq \inbound \in \literals{\aformula}\} \cup \{0\})}$.
For instance,
given $\aformula ={\alloc{\avariable} \land \size \geq 2 \land \lnot \size \geq 4}$,
we have
$\literals{\aformula} = \{\alloc{\avariable}, \size \geq 2, \lnot \size \geq 4\}$, 
and $\maxsize{\aformula} = 2$.
Given two conjunctions of literals $\aformula \in \conjcomb{\coreformulae{\asetvar}{\bound_1}}$ and $\aformulabis \in \conjcomb{\coreformulae{\asetvar}{\bound_2}}$,
$\aformulabis \inside \aformula$ stands for $\literals{\aformulabis} \subseteq \literals{\aformula}$.
Finally, we introduce a few more shortcuts and we write 

\begin{itemize}[before=\vspace{4pt},after=\vspace{1pt}]
  \setlength{\itemsep}{5pt}
  \begin{minipage}[t]{0.5\linewidth}
  \item $\aformulater \inside \orliterals{\aformula}{\aformulabis}$ \,for \,%
  ``$\aformulater \inside \aformula$ or $\aformulater \inside \aformulabis$'',
  \item $\orliterals{\aformula}{\aformulabis} \inside \aformulater$ \,for \,%
  ``$\aformula \inside \aformulater$ or $\aformulabis \inside \aformulater$''.
  \end{minipage}%
  \begin{minipage}[t]{0.5\linewidth}
  \item $\aformulater \inside \andliterals{\aformula}{\aformulabis}$ \,for \,%
  ``$\aformulater \inside \aformula$ and $\aformulater \inside \aformulabis$'', 
  \end{minipage}
\end{itemize}
Given a finite set of formulae $\Gamma = \{\aformula_1,\dots,\aformula_n\}$, 
we write $\bigwedge \Gamma$ as a shorthand for $\aformula_1 \land \dots \land \aformula_n$.
Similarly, 
$\bigseparate \Gamma$ stands for $\aformula_1 \separate \dots \separate \aformula_n$.
It is important to notice that, similarly to the classical conjunction, the separating conjunction $\separate$ is associative and commutative 
(see the axioms~\ref{starAx:FS:Assoc} and~\ref{starAx:FS:Commute} in Figure~\ref{figure-full-proof-system}), and therefore 
the semantics of $\bigseparate \Gamma$ is uniquely defined, 
regardless of the choice of ordering for $\aformula_1 , \dots, \aformula_n$.

\subsection{Hilbert-style proof systems}
A \defstyle{Hilbert-style proof system} $\aproofsystem$ is defined as a set of 
tuples
 $((\aformulaschema_1,\dots,\aformulaschema_n),\aformulaschemabis)$ with $n\geq 0$,
where $\aformulaschema_1,\dots,\aformulaschema_n,\aformulaschemabis$ are \defstyle{formula schemata} (a.k.a 
\defstyle{axiom schemata}).
When $n \geq 1$, $((\aformulaschema_1,\dots,\aformulaschema_n),\aformulaschemabis)$ is called an \defstyle{inference rule},
otherwise it is an \defstyle{axiom}. As usual, formula schemata generalise the notion of formulae by allowing metavariables
for formulae (typically $\aformula, \aformulabis, \aformulater$), for program variables (typically $\avariable, \avariablebis, \avariableter$)
or for any type of syntactic objects in formulae, depending on the context.
 The set of formulae \defstyle{derivable} from
$\aproofsystem$ is the least set $S$ such that for all
 $((\aformulaschema_1,\dots,\aformulaschema_n),\aformulaschemabis)\in\aproofsystem$ and for all
substitutions $\asubstitution$, if
$\aformulaschema_1\asubstitution,\dots,\aformulaschema_n\asubstitution\in S$
then $\aformulaschemabis\asubstitution\in S$.
We write $\pfentails{\aproofsystem}{\aformula}$ if $\aformula$ is
derivable from~$\aproofsystem$.
A proof system $\aproofsystem$
is \defstyle{sound} if all derivable formulae are valid. $\aproofsystem$
is \defstyle{complete} if
all valid formulae are derivable.
We say that $\aproofsystem$ is \defstyle{adequate} whenever it is both sound and complete.
Lastly, $\aproofsystem$ is \defstyle{strongly complete}
whenever for all sets of formulae $\Gamma$ and formulae $\aformula$,
we have $\Gamma \models \aformula$ (semantical entailment) if and only if
$\pfentails{\aproofsystem\cup \Gamma}{\aformula}$.

Interestingly enough, there is no strongly complete proof system
for \slSW, as strong completeness implies compactness
and separation logic is not compact.
Indeed, the set ${\set{\size\geq \inbound \mid \inbound\in\Nat}}$ is unsatisfiable, as heaps have finite domains, but all finite subsets of it
are satisfiable.
Even for the weaker notion of completeness, deriving an Hilbert-style axiomatisation for \slSW remains challenging.
Indeed, the satisfiability problem  for \slSW reduces to its validity problem,
making \slSW an unusual logic from a proof-theoretical point of view.
Let us develop a bit further this point.

Let $\aformula$ be a formula built over program variables in $\asetvar \subseteq_{\fin} \PVAR$, and let $\approx$ be an equivalence relation on $\asetvar$.
  The formula
  $\aformulabis_{\approx} \egdef (\emp \land \bigwedge_{\avariable \approx \avariablebis} \avariable = \avariablebis \land  \bigwedge_{\substack{\avariable \not\approx \avariablebis}} \avariable \neq \avariablebis )\implies (\aformula \septraction \true)$
  can be shown to be valid iff
  for every store $\astore$ agreeing on $\approx$, there is a heap $\aheap$ such that $\pair{\astore}{\aheap} \models \aformula$.
  It is known
  that for  all stores $\astore,\astore'$ agreeing on $\approx$, and every heap $\aheap$, 
  the memory states $\pair{\astore}{\aheap}$ and $\pair{\astore'}{\aheap}$
  satisfy the same set of formulae having variables from $\asetvar$.
  Since the antecedent of $\aformulabis_{\approx}$ is satisfiable, we conclude that
   $\aformulabis_{\approx}$ is valid iff there are a store $\astore$ agreeing on $\approx$ and a heap $\aheap$ such that 
   $\pair{\astore}{\aheap} \models \aformula$.
   To check whether $\aformula$ is satisfiable, it is sufficient to find an equivalence relation $\approx$ on $\asetvar$ such that
   $\aformulabis_{\approx}$ is valid. As the number of equivalence relations on $\asetvar$ is finite, we obtain a Turing reduction 
   from satisfiability to validity.
Consequently, it is not possible to define sound and complete axiom systems for any extension of 
\slSW admitting an undecidable validity problem
(as long as there is a reduction from satisfiability to validity, as above).
A good example is
the logic $\seplogic{\separate,\magicwand,\ls}$~\cite{Demri&Lozes&Mansutti18bis} (extension of \slSW with the well-known list-segment predicate $\ls$); see also the first-order
separation logic in~\cite{Brochenin&Demri&Lozes12}.
Indeed, to obtain a sound and complete axiom system, the validity problem has to be recursively enumerable (r.e.).
However,  this would imply that the satisfiability problem is also r.e.. As
a formula
$\aformula$ is not valid if and only if
$\lnot \aformula$ is satisfiable, we then conclude that the set of valid formulae is recursive, 
hence decidable, a contradiction.

\cut{
It is worth also noting that quantifier-free $\slSW$ axiomatised in the paper admits a \pspace-complete validity problem, see e.g.~\cite{Calcagno&Yang&OHearn01},
and should not be confused with propositional separation logic with the stack-heap models shown undecidable in~\cite[Corollary 5.1]{Brotherston&Kanovich14}
(see also~\cite[Section 4]{DemriDeters15bis}),
in which there are propositional variables interpreted by sets of
memory states.
}
\section{Hilbert-style proof system for \texorpdfstring{\slSW}{propositional SL}}\label{section:PSL}
\begin{figure}

\begin{footnotesize}

\fbox{
\begin{minipage}{0.97\linewidth}
\begin{enumerate}[align=left,leftmargin=*]
  \setlength\itemsep{4pt}
\begin{minipage}{0.51\linewidth}
\item[\axlab{A^\corepedix}{coreAx:FS:EqRef}] $\avariable = \avariable$
\item[\axlab{A^\corepedix}{coreAx:FS:EqSub}] $\aformula \land \avariable = \avariablebis \implies \aformula\completesubstitute{\avariable}{\avariablebis}$
\end{minipage}\hfill\
\begin{minipage}{0.50\linewidth}
\item[\axlab{A^\corepedix}{coreAx:FS:PointAlloc}] $\avariable \Ipto \avariablebis \implies \alloc{\avariable}$
\item[\axlab{A^\corepedix}{coreAx:FS:PointInj}] $\avariable \Ipto \avariablebis \land \avariable \Ipto \avariableter \implies \avariablebis = \avariableter$
\end{minipage}
\end{enumerate}

\vspace{0.1cm}
\hrule
\vspace{0.2cm}

\noindent
\begin{enumerate}[align=left,leftmargin=*]
  \setlength\itemsep{4pt}
\addtocounter{enumi}{6}
\begin{minipage}[t]{0.51\linewidth}
\item[\axlab{A^\separate}{starAx:FS:Commute}] $(\aformula \separate \aformulabis) \iff (\aformulabis \separate \aformula)$
\item[\axlab{A^\separate}{starAx:FS:Assoc}] $(\aformula \separate \aformulabis) \separate \aformulater \iff \aformula \separate (\aformulabis \separate \aformulater)$
\addtocounter{enumi}{2}
\item[\axlab{A^\separate}{starAx:FS:Emp}] $\aformula \iff \aformula \separate \emp$
\addtocounter{enumi}{1}
\item[\axlab{A^\separate}{starAx:FS:DoubleAlloc}] $\alloc{\avariable} \separate \alloc{\avariable} \iff\bot$
\item[\axlab{A^\separate}{starAx:FS:MonoCore}] $\aelement {\separate} \true \implies \aelement \!\assuming{\aelement \in \{
\lnot\emp, \avariable = \avariablebis, \avariable \neq \avariablebis, \avariable \Ipto \avariablebis\}}$
\item[\axlab{A^\separate}{starAx:FS:AllocNeg}] $\lnot\alloc{\avariable} \separate \lnot\alloc{\avariable} \implies \lnot\alloc{\avariable}$

\end{minipage}\hfill\
\begin{minipage}[t]{0.50\linewidth}
\item[\axlab{A^\separate}{starAx:FS:PointsNeg}] \!$(\alloc{\avariable} \land \lnot \avariable \Ipto \avariablebis) \separate \true \implies \lnot \avariable \Ipto \avariablebis$
\item[\axlab{A^\separate}{starAx:FS:AllocSizeOne}] \!$
\alloc{\avariable} \implies
(\alloc{\avariable} \land \size = 1) \separate \true$
\item[\axlab{A^\separate}{starAx:FS:SizeOne}] \!$\lnot \emp \implies \size = 1 \separate \true$
\item[\axlab{A^\separate}{starAx:FS:SizeNeg}] \!$\lnot \size \,{\geq}\,\inbound_1 \,{\separate} \lnot \size \,{\geq}\, \inbound_2 \implies\! \lnot \size \,{\geq}\, \inbound_1{+}\inbound_2{\dotminus}1$
\item[\axlab{A^\separate}{starAx:FS:SizeTwo}] \!$\alloc{\avariable}\wedge\alloc{\avariablebis}\wedge \avariable\neq\avariablebis\implies \size\geq2$
\end{minipage}
\end{enumerate}

\vspace{0.1cm}
\hrule
\vspace{0.2cm}

\noindent
\begin{enumerate}[align=left,leftmargin=*]
  \setlength\itemsep{4pt}
\addtocounter{enumi}{20}
\begin{minipage}{0.56\linewidth}
\item[\axlab{A^{\magicwand}}{wandAx:FS:Size}]$(\size = 1 \land \bigwedge_{\avariable \in \asetvar}\lnot \alloc{\avariable}) \septraction \true\!\assuming{\asetvar \subseteq_{\fin} \PVAR}$
\item[\axlab{A^{\magicwand}}{wandAx:FS:PointsTo}] $\lnot \alloc{\avariable} \implies (\avariable \Ipto \avariablebis \land \size = 1 \septraction \true)$
\end{minipage}%
\item[\axlab{A^{\magicwand}}{wandAx:FS:Alloc}] $\lnot \alloc{\avariable} \implies ((\alloc{\avariable} \land \size = 1 \land \bigwedge_{\avariablebis \in \asetvar}\lnot \avariable \Ipto \avariablebis ) \septraction \true) \assuming{\asetvar \subseteq_{\fin} \PVAR}$
\end{enumerate}

\vspace{0.1cm}
\hrule
\vspace{0.2cm}

\noindent
\begin{center}
  \hfill
  \rulelab{\textbf{$\separate$-Intro}}{rule:FS:starinference}
  $\inference{\aformula \implies \aformulater}{\aformula \separate \aformulabis \implies \aformulater \separate \aformulabis }{}$
  \hfill
  \rulelab{\textbf{$\separate$-Adj}}{rule:FS:staradj}
  $\inference{\aformula \separate \aformulabis \implies \aformulater}{\aformula \implies (\aformulabis \magicwand \aformulater)}{}$
  \hfill 
  \rulelab{\textbf{$\magicwand$-Adj}}{rule:FS:magicwandadj}
  $\inference{\aformula \implies (\aformulabis \magicwand \aformulater)}{\aformula \separate \aformulabis \implies \aformulater}{}$
  \hfill\,

  \vspace{0.3cm}
  (axioms and modus ponens from propositional calculus are omitted)
\end{center}
\end{minipage}
}
\end{footnotesize}

\caption{The proof system $\magicwandsys$.}
\label{figure-full-proof-system}
\end{figure}

In Figure~\ref{figure-full-proof-system}, we present the proof system~$\magicwandsys$
that shall be shown to be sound and complete for quantifier-free separation logic  \slSW.
$\magicwandsys$  and all the subsequent fragments of $\magicwandsys$ contain
the axiom schemata and modus ponens for the propositional calculus (we omit these rules
in the presentation). In the axioms~\ref{starAx:FS:MonoCore},~\ref{wandAx:FS:Size} and~\ref{wandAx:FS:Alloc},
the notation $\aformula\!\!\assuming{\mathcal{B}}$
refers to the axiom schema $\aformula$ assuming that the Boolean condition $\mathcal{B}$ holds.
We highlight the fact that, in these three axioms, $\mathcal{B}$ is a simple syntactical condition.
In the axiom~\ref{starAx:FS:SizeNeg}, $a \dotminus b$, where $a,b \in \Nat$, stands for $\max(0,a-b)$.

Though the full proof system $\magicwandsys$ is presented quite early in the paper,
its final design remains the outcome of a refined analysis on principles
behind \slSW tautologies. 
Fortunately, we do not start from scratch as the calculus must contain
the axioms and rules from the Hilbert-style proof system for Boolean BI~\cite{Galmiche&Larchey06}.
At first glance the system $\magicwandsys$ may seem quite arbitrary,
but the role of the different axioms shall become clearer during the paper.
In designing the system, we tried to define axioms that are as simple as possible,
which helps highlighting the most fundamental properties of \slSW.
Note that we have not formally proved that our proof system $\magicwandsys$
is minimal (though we have tried our best to have a small amount of small axioms). 
Such an investigation would be out of the scope of the paper, mainly for lack of space.
The standard way to proceed would be to design models different from memory states
and to establish that all axioms but one are valid (which would prove that this axiom is
needed when all the other axioms are present). 


We insist: the core formulae in~$\magicwandsys$ should be understood as mere abbreviations, 
which makes all the axioms in Figure~\ref{figure-full-proof-system} belong to the original language of \slSW.
In order to show the completeness of~$\magicwandsys$,
we first establish the completeness for subsystems of~$\magicwandsys$,
with respect to syntactical fragments of \slSW. In particular, we consider 
\begin{itemize}
  \item $\coresys$: an adequate proof system for the propositional logic of core formulae (see Figure~\ref{figure-proof-system-core-formulae}),
  \item $\coresys(\separate)$: an extension of $\coresys$ that is adequate for the logic $\slSA$, 
  i.e.~the logic obtained from $\slSW$ by removing the separating implication $\magicwand$ at the price of adding the formula $\alloc{\avariable}$ (see Figure~\ref{figure-proof-system-star}).
  \item The full $\magicwandsys$, which can be seen as an extension of $\coresys(\separate)$ that allows to reason about 
  the separating implication (see Figure~\ref{figure-proof-system-magicwand}).
\end{itemize}

For the completeness of $\coresys$ and $\coresys(\separate)$, we add intermediate axioms that reveal to be useless when the full proof system~$\magicwandsys$ is considered, as they become derivable. By convention,
the axioms whose name is of the form $A^?_i$ are axioms that remain in $\magicwandsys$
(see Figure~\ref{figure-full-proof-system})
whereas those named $I^?_i$ are intermediate axioms that are instrumental for the proof of completeness of a subsystem among $\coresys$ and $\coresys(\separate)$
(and therefore none of them occur in  Figure~\ref{figure-full-proof-system}).
The numbering of the axioms in Figure~\ref{figure-full-proof-system}
is not consecutive, as intermediate axioms shall be placed within the holes.
It is worth noting that the axiom~\ref{starAx:FS:DoubleAlloc} had an intermediate status in~\cite{Demri&Lozes&Mansutti20}
but we realised that actually this axiom does need to be considered as a first-class axiom in the proof system
$\magicwandsys$.

The choice of introducing $\coresys$ and $\coresys(\separate)$ naturally follows from the 
main steps required for the completeness of $\magicwandsys$.
In particular, the main ``task'' of $\coresys(\separate)$ is to produce a bottom-up elimination of the separating conjunction $\separate$, at the price of introducing Boolean combinations of core formulae, which can be proved valid thanks to $\coresys$. Similarly, the axioms and rules added to $\coresys(\separate)$ to define $\magicwandsys$ are dedicated to perform  a bottom-up elimination of the separating implication.
A merit of this methodology is that only the completeness of the calculus $\coresys$
is proved using the standard countermodel method.
The additional steps required to prove the completeness of $\coresys(\separate)$ and $\magicwandsys$ are (almost) completely syntactical. For instance, 
to show the completeness of $\coresys(\separate)$, we consider arbitrary
Boolean combinations of core formulae $\aformula$ and $\aformulabis$, 
and exhibiting a Boolean combination of core formulae $\aformulater$ such that $\aformula \separate \aformulabis \iff \aformulater$ is valid. We show that this validity can be \emph{syntactically} proved within~$\coresys(\separate)$, and then rely on the fact that $\coresys$ is complete for Boolean combination of core formulae to deduce that $\coresys(\separate)$ is complete for $\slSA$.

\cut{
The  \defstyle{core formulae}  are expressions of the form
$\avariable = \avariablebis$,
$\alloc{\avariable}$,
$\avariable \Ipto \avariablebis$ and
$\size \geq \inbound$,
where $\avariable,\avariablebis \in \PVAR$ and $\inbound \in \Nat$.
As previously shown, these formulae are from \slSW and are used in the axiom system as abbreviations.
Given
$\asetvar\subseteq_\fin \PVAR$  and $\bound \in \Nat$,
we define $\coreformulae{\asetvar}{\bound}$ as
the set $\{ \avariable = \avariablebis,\ \alloc{\avariable},\ \avariable \Ipto \avariablebis,\ \size \geq \inbound \mid \avariable,\avariablebis \in \asetvar,\ \inbound \in \interval{0}{\bound}\}$.
$\boolcomb{\coreformulae{\asetvar}{\bound}}$ is the set of Boolean combinations of formulae from $\coreformulae{\asetvar}{\bound}$, whereas
$\conjcomb{\coreformulae{\asetvar}{\bound}}$ is the set of conjunctions of literals built upon $\coreformulae{\asetvar}{\bound}$
(a literal being a core formula or its negation).
Given $\aformula = \aliteral_1 \land \dots \land \aliteral_n \in \conjcomb{\coreformulae{\asetvar}{\bound}}$, every $\aliteral_i$ being a literal, $\literals{\aformula} \egdef \{\aliteral_1,\dots,\aliteral_n\}$.
$\aformulabis \inside \aformula$ stands for $\literals{\aformulabis} \subseteq \literals{\aformula}$.
We write $\aformulater \inside \orliterals{\aformula}{\aformulabis}$,
$\orliterals{\aformula}{\aformulabis} \inside \aformulater$ and
$\aformulater \inside \andliterals{\aformula}{\aformulabis}$
for
``$\aformulater \inside \aformula$ or $\aformulater \inside \aformulabis$'',
``$\aformula \inside \aformulater$ or $\aformulabis \inside \aformulater$'', and
``$\aformulater \inside \aformula$ and $\aformulater \inside \aformulabis$'', respectively.\\[2pt]
}

Along the paper, we shall have the opportunity to explain the intuition between the axioms and rules.
Below, we provide a few hints. The axioms~\ref{coreAx:FS:EqRef}--~\ref{coreAx:FS:PointInj} deal with the core formulae
and are quite immediate to grasp. More interestingly, whereas the axioms~\ref{starAx:FS:Commute}--\ref{starAx:FS:Emp}
are quite general about separating conjunction and are inherited from Boolean BI, the axioms~\ref{starAx:FS:MonoCore}--\ref{starAx:FS:SizeTwo}
state how separating conjunction behaves with the core formulae. As for Boolean combinations of core formulae
involved in the axioms~\ref{coreAx:FS:EqRef}--~\ref{coreAx:FS:PointInj}, these axioms~\ref{starAx:FS:MonoCore}--\ref{starAx:FS:SizeTwo}
are also not difficult
to understand.
Besides, the inference rules~\ref{rule:FS:staradj} and~\ref{rule:FS:magicwandadj} simply
reflect that separating conjunction and separating implication are adjoint operators,
and are taken from Boolean BI, see e.g.~\cite{Galmiche&Larchey06}.
The axioms~\ref{wandAx:FS:Size}--\ref{wandAx:FS:Alloc} dedicated to the interaction between the separating
implication and core formulae are expressed with the help of the septraction operator $\septraction$ to ease the understanding
but as well-known, septraction is defined with the help of the separating implication and Boolean negation.
For instance, the axiom~\ref{wandAx:FS:PointsTo} states that it is always possible to add some one-memory-cell
heap $\aheap'$ to some heap $\aheap$ while none of the variables from a finite set~$\asetvar$ is allocated
in  $\aheap'$. This natural property in our framework would not hold in general if $\LOC$ were
not infinite. Obviously, the septraction $\septraction$ is also understood as an abbreviation.

As a sanity check, we show that the proof system $\magicwandsys$ is sound with respect
to \slSW. The proof does not pose any specific difficulty (as usual with most soundness proofs)
but this is the opportunity for the reader to further get familiar with the axioms and rules
from $\magicwandsys$.

\begin{lem}\label{lemma:magicwandPSLvalid}
$\magicwandsys$ is sound.
\end{lem}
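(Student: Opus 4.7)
The plan is a routine case analysis: for each axiom schema, show that every substitution instance is valid under the heaplet semantics, and for each inference rule, show that it preserves validity. Since modus ponens and the propositional tautologies are sound by standard reasoning, I only need to inspect the axioms and rules specific to \slSW.

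First I would dispatch the core axioms \ref{coreAx:FS:EqRef}--\ref{coreAx:FS:PointInj}, which follow immediately from the semantics: reflexivity of $=$ in the store, the substitution principle, the fact that $\astore(\avariable)\in\domain{\aheap}$ whenever $\aheap(\astore(\avariable))$ is defined, and the functionality of $\aheap$. Next I would handle the group of separating-conjunction axioms. The axioms \ref{starAx:FS:Commute}, \ref{starAx:FS:Assoc} and \ref{starAx:FS:Emp} transfer from the commutative monoid structure of heaps under disjoint union with the empty heap as unit. For \ref{starAx:FS:DoubleAlloc}, two heaps allocating $\astore(\avariable)$ cannot be disjoint. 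For \ref{starAx:FS:MonoCore}, each listed literal either depends only on the store (equalities and disequalities) or is monotone in the heap (allocation, points-to, $\lnot\emp$), so if it holds of a subheap it holds of the whole. \ref{starAx:FS:AllocNeg} uses that $\domain{\aheap_1+\aheap_2}=\domain{\aheap_1}\uplus\domain{\aheap_2}$; \ref{starAx:FS:PointsNeg} uses the same identity together with functionality. \ref{starAx:FS:AllocSizeOne} and \ref{starAx:FS:SizeOne} split off the singleton subheap $\{\astore(\avariable)\mapsto \aheap(\astore(\avariable))\}$, resp.\ any one cell of a non-empty heap. \ref{starAx:FS:SizeNeg} is a counting identity: $\card{\domain{\aheap_1}}<\inbound_1$ and $\card{\domain{\aheap_2}}<\inbound_2$ force $\card{\domain{\aheap_1+\aheap_2}}<\inbound_1+\inbound_2-1$ (when both bounds are positive, which is the only non-trivial case due to $\dotminus$). \ref{starAx:FS:SizeTwo} is immediate, since $\astore(\avariable)\neq \astore(\avariablebis)$ and both are in $\domain{\aheap}$.

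The subtlest part is the group of magic-wand axioms \ref{wandAx:FS:Size}--\ref{wandAx:FS:Alloc}, which genuinely rely on $\LOC$ being infinite. For each, I unfold $\septraction$ and must produce a disjoint heap $\aheap'$ witnessing the existential. For \ref{wandAx:FS:Size}, since $\asetvar$ is finite and $\LOC$ is infinite, pick any location $\alocation\notin\domain{\aheap}\cup\astore(\asetvar)$ and any $\alocation'\in\LOC$; set $\aheap'=\{\alocation\mapsto\alocation'\}$, which has size $1$ and allocates no variable from $\asetvar$. For \ref{wandAx:FS:PointsTo}, from $\lnot\alloc{\avariable}$ we have $\astore(\avariable)\notin\domain{\aheap}$, so $\aheap'=\{\astore(\avariable)\mapsto\astore(\avariablebis)\}$ is disjoint from $\aheap$ and satisfies $\avariable\Ipto\avariablebis\land\size=1$. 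For \ref{wandAx:FS:Alloc}, again $\astore(\avariable)\notin\domain{\aheap}$; pick any $\alocation'\in\LOC\setminus\astore(\asetvar)$, which exists by infinity of $\LOC$, and take $\aheap'=\{\astore(\avariable)\mapsto\alocation'\}$.

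Finally I would verify the three inference rules. \ref{rule:FS:starinference} follows from the monotonicity built into the existential clause defining $\separate$: given a witnessing decomposition $\aheap=\aheap_1+\aheap_2$ with $\pair{\astore}{\aheap_1}\models\aformula$ and $\pair{\astore}{\aheap_2}\models\aformulabis$, validity of $\aformula\implies\aformulater$ yields $\pair{\astore}{\aheap_1}\models\aformulater$, so $\pair{\astore}{\aheap}\models\aformulater\separate\aformulabis$. The adjunction rules \ref{rule:FS:staradj} and \ref{rule:FS:magicwandadj} are the standard residuation between $\separate$ and $\magicwand$: unfolding the semantics of $\magicwand$, the two statements $\aformula\separate\aformulabis\models\aformulater$ and $\aformula\models\aformulabis\magicwand\aformulater$ are literally equivalent on every memory state. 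The main (mild) obstacle in the whole argument is bookkeeping in the $\septraction$ axioms to ensure a fresh location exists; once that is made explicit, soundness follows by inspection.
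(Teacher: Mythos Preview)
Your proposal is correct and follows essentially the same approach as the paper: a routine case analysis verifying validity of each axiom schema and validity-preservation of each rule, with the paper similarly noting that the Boolean BI axioms/rules (\ref{starAx:FS:Commute}, \ref{starAx:FS:Assoc}, \ref{starAx:FS:Emp}, \ref{rule:FS:starinference}, \ref{rule:FS:staradj}, \ref{rule:FS:magicwandadj}) are inherited and then treating the remaining axioms individually. One tiny imprecision: in your discussion of \ref{starAx:FS:MonoCore} you list ``allocation'' among the monotone literals, but $\alloc{\avariable}$ is not actually one of the listed $\aelement$'s in that axiom (it is handled separately as a derived lemma); this does not affect your argument for the literals that are listed.
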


\noindent The validity of the axioms~\ref{coreAx:FS:EqRef},~\ref{coreAx:FS:EqSub},~\ref{coreAx:FS:PointAlloc} and~\ref{coreAx:FS:PointInj} is 
straightforward. Moreover, the validity of the axioms~\ref{starAx:FS:Commute},~\ref{starAx:FS:Assoc} and~\ref{starAx:FS:Emp}
and the three inference rules~(\ref{rule:FS:starinference},~\ref{rule:FS:staradj} and~\mbox{\ref{rule:FS:magicwandadj}}) is inherited from Boolean BI
(see~\cite{Brotherston&Villard14} and~\cite[Section 2]{Galmiche&Larchey06}).
Below, we show the validity of the remaining axioms, thus proving~Lemma~\ref{lemma:magicwandPSLvalid}.

\begin{proof}[Validity of the axiom~{\rm\ref{starAx:FS:DoubleAlloc}}]
Let us show that $(\alloc{\avariable} \separate \alloc{\avariable})$ is not satisfiable.
{\em Ad~absurdum}, suppose there is a memory state $\pair{\astore}{\aheap}$ such that
$\pair{\astore}{\aheap} \models (\alloc{\avariable} \separate \alloc{\avariable})$. 
By definition of $\models$,  there are 
  $\aheap_1,\aheap_2$ such that 
  $\aheap_1 \bot \aheap_2$, $(\aheap_1 + \aheap_2) = \aheap$,  
  $\pair{\astore}{\aheap_1}  \models \alloc{\avariable}$ and  $\pair{\astore}{\aheap_2}  \models \alloc{\avariable}$.
Thus, $\astore(\avariable) \in \domain{\aheap_1}$ and $\astore(\avariable) \in \domain{\aheap_2}$,
which leads to a contradiction with $\aheap_1 \bot \aheap_2$.
\end{proof}
\begin{proof}[Validity of the axiom~{\rm\ref{starAx:FS:MonoCore}}]
The proof of the validity of every instantiation of \ref{starAx:FS:MonoCore} is similar (and quite easy), 
therefore we show just the case with $\avariable \Ipto \avariablebis \separate \true \implies \avariable \Ipto \avariablebis$.
Suppose $\pair{\astore}{\aheap} \models \avariable \Ipto \avariablebis \separate \true$. Then, there is a subheap $\aheap_1 \sqsubseteq \aheap$ such that $\pair{\astore}{\aheap_1} \models \avariable \Ipto \avariablebis$.
Hence, $\aheap_1(\astore(\avariable)) = \astore(\avariablebis)$.
As $\aheap_1 \sqsubseteq \aheap$, we obtain  $\aheap(\astore(\avariable)) = \astore(\avariablebis)$, 
which implies $\pair{\astore}{\aheap} \models \avariable \Ipto \avariablebis$.
\end{proof}
\begin{proof}[Validity of the axiom~{\rm \ref{starAx:FS:AllocNeg}}]
Suppose $\pair{\astore}{\aheap} \models \lnot \alloc{\avariable} \separate \lnot \alloc{\avariable}$. Then, there are
two disjoint heaps $\aheap_1,\aheap_2$ such that $\aheap = \aheap_1 + \aheap_2$, $\pair{\astore}{\aheap_1} \models \lnot \alloc{\avariable}$ and $\pair{\astore}{\aheap_2} \models \lnot \alloc{\avariable}$.
Then $\astore(\avariable) \not \in \domain{\aheap_1}$ and $\astore(\avariable) \not \in \domain{\aheap_2}$.
Since $\aheap = \aheap_1 + \aheap_2$, $\domain{\aheap} = \domain{\aheap_1}\cup\domain{\aheap_2}$ and therefore
$\astore(\avariable) \not \in \domain{\aheap}$.
We conclude that $\pair{\astore}{\aheap} \models \lnot \alloc{\avariable}$.
\end{proof}
\begin{proof}[Validity of the axiom~{\rm\ref{starAx:FS:PointsNeg}}]
Suppose $\pair{\astore}{\aheap} \models (\alloc{\avariable} \land \lnot \avariable \Ipto \avariablebis) \separate \true$.
Then there is a subheap $\aheap_1 \sqsubseteq \aheap$ such that $\pair{\astore}{\aheap_1} \models \alloc{\avariable} \land \lnot \avariable \Ipto \avariablebis$.
Hence, $\astore(\avariable) \in \domain{\aheap_1}$ and $\aheap_1(\astore(\avariable)) \neq \astore(\avariablebis)$.
As $\aheap_1 \sqsubseteq \aheap$, we obtain  $\astore(\avariable) \in \domain{\aheap}$ and $\aheap(\astore(\avariable)) \neq \astore(\avariablebis)$ which by definition implies $\pair{\astore}{\aheap} \models \lnot \avariable \Ipto \avariablebis$.
\end{proof}
\begin{proof}[Validity of the axiom~{\rm\ref{starAx:FS:AllocSizeOne}}]
Suppose $\pair{\astore}{\aheap} \models \alloc{\avariable}$.
Let $\aheap_1 \egdef \{\astore(\avariable)\pto\aheap(\astore(\avariable))\}$
As~$\astore(\avariable) \in \domain{\aheap}$, $\aheap_1 \sqsubseteq \aheap$ and
$\pair{\astore}{\aheap_1} \models \alloc{\avariable} \land \size = 1$.
We define $\aheap_2$ as the unique heap such that $\aheap_2 + \aheap_1 = \aheap$.
As $\pair{\astore}{\aheap_2} \models \true$,
we have $\pair{\astore}{\aheap} \models (\alloc{\avariable} \land \size = 1) \separate \true$.
\end{proof}
\noindent The proof of~axiom~\ref{starAx:FS:SizeOne} is similar to the one of~\ref{starAx:FS:AllocSizeOne}, and hence omitted herein.

\begin{proof}[Validity of the axiom~{\rm \ref{starAx:FS:SizeNeg}}]
Suppose $\pair{\astore}{\aheap} \models \lnot \size \geq \inbound_1 \separate \lnot \size \geq \inbound_2$, where~${\inbound_1,\inbound_2\geq 0}$. 
Since $\lnot \size \geq 0$ is not satisfiable, this implies that necessarily $\inbound_1,\inbound_2\geq 1$.
Hence, the axiom~\ref{starAx:FS:SizeNeg} is trivially valid  when $\inbound_1 = 0$ or $\inbound_2 = 0$.
In the sequel, $\inbound_1,\inbound_2\geq 1$.
Then, there are heaps $\aheap_1,\aheap_2$ such that $\aheap_1 \disjoint \aheap_2$, $\aheap_1 + \aheap_2 = \aheap$, 
$\pair{\astore}{\aheap_1} \models \lnot \size \geq \inbound_1$ and ${\pair{\astore}{\aheap_2} \models \lnot \size \geq \inbound_2}$.
By definition, $\card{\domain{\aheap_1}} \leq \inbound_1-1$ and $\card{\domain{\aheap_2}} \leq \inbound_2-1$.
Since $\domain{\aheap} = \domain{\aheap_1}\cup\domain{\aheap_2}$, we obtain $\card{\domain{\aheap}} \leq \inbound_1+\inbound_2-2$, which implies
$\pair{\astore}{\aheap} \models \lnot \size \geq \inbound_1+\inbound_2 \dotminus 1$.
\end{proof}
\begin{proof}[Validity of the axiom~{\rm\ref{starAx:FS:SizeTwo}}]
Suppose $\pair{\astore}{\aheap} \models \alloc{\avariable} \land \alloc{\avariablebis} \land \avariable \neq \avariablebis$. 
By definition,
$\astore(\avariable) \neq \astore(\avariablebis)$, and $\astore(\avariable), \astore(\avariablebis) \in \domain{\aheap}$. Hence, $\card{\domain{\aheap}} \geq 2$, and $\pair{\astore}{\aheap} \models \size \geq 2$.
\end{proof}
\begin{proof}[Validity of the axiom~{\rm\ref{wandAx:FS:Size}}]
Let $\asetvar\subseteq_\fin\PVAR$ and  $\pair{\astore}{\aheap}$ be a memory state.
Let $\aheap_1$ be a heap of size one such that $\aheap_1(\alocation) = \alocation$ for some 
$\alocation \not \in \domain{\aheap}\cup\astore(\asetvar)$. We write $\astore(\asetvar)$ to denote
the set $\set{\astore(\avariable) \mid \avariable \in \asetvar}$. 
Trivially $\pair{\astore}{\aheap_1}\models \size = 1 \land \bigwedge_{\avariable \in \asetvar} \lnot \alloc{\avariable}$.
Moreover $\aheap_1 \hdisjoint \aheap$ holds, hence $\aheap_1+\aheap_2$ is defined and $\pair{\astore}{\aheap+\aheap_1} \models \true$.
Then, $\pair{\astore}{\aheap} \models (\size = 1 \land \bigwedge_{\avariable \in \asetvar} \lnot \alloc{\avariable}) \septraction \true$.
\end{proof}
\begin{proof}[Validity of the axiom~{\rm\ref{wandAx:FS:PointsTo}}]
Suppose $\pair{\astore}{\aheap} \models \lnot \alloc{\avariable}$.
Let $\aheap_1$ be the heap of size one such that $\aheap_1(\astore(\avariable)) = \astore(\avariablebis)$.
Trivially, $\pair{\astore}{\aheap_1} \models \avariable \Ipto \avariablebis \land \size = 1$. 
Moreover, as $\astore(\avariable) \not\in\domain{\aheap}$, $\aheap_1 \hdisjoint \aheap$ holds. Therefore, $\aheap_1+\aheap$ 
is defined, and $\pair{\astore}{\aheap+\aheap_1} \models \true$.
Then, $\pair{\astore}{\aheap} \models (\avariable \Ipto \avariablebis \land \size=1) \septraction \true$.
\end{proof}
\begin{proof}[Validity of the axiom~{\rm\ref{wandAx:FS:Alloc}}]
Suppose $\pair{\astore}{\aheap} \models \lnot \alloc{\avariable}$.
Let $\asetvar \subseteq_\fin \PVAR$ and $\aheap_1  \egdef \{\astore(\avariable)\pto \alocation\}$, where $\alocation \not \in \astore(\asetvar)$.
Hence, ${\pair{\astore}{\aheap_1} \models \alloc{\avariable} \land \size = 1} \land \bigwedge_{\avariablebis \in \asetvar} \lnot \avariable \Ipto \avariablebis$.
Since $\astore(\avariable) \not\in\domain{\aheap}$,
$\aheap_1 \disjoint \aheap$.
Therefore, the heap $\aheap+\aheap_1$ is defined and $\pair{\astore}{\aheap+\aheap_1} \models \true$.
Then, $\pair{\astore}{\aheap} \models (\alloc{\avariable} \land \size = 1 \land \bigwedge_{\avariablebis \in \asetvar} \lnot \avariable \Ipto \avariablebis) \septraction \true$.
\end{proof}

\begin{figure}
  \begin{syntproof}
  1 & \emp \implies \lnot \size \geq 1
  & \mbox{\ref{axiom:dubneg} and def. of $\size \geq 1$}
  \\
  2 & \alloc{\avariable} \land \size = 1 \implies \lnot \size \geq 2
  & \mbox{\ref{axiom:andelim}}
  \\
  3 & \emp \separate (\alloc{\avariable} \land \size = 1 )\implies \lnot \size \geq 1 \separate \lnot \size \geq 2
  & \mbox{\ref{rule:starintroLR}, 1, 2}
  \\
  4 & \lnot \size \geq 1 \separate \lnot \size \geq 2 \implies \lnot \size \geq 2
  & \mbox{\ref{starAx:FS:SizeNeg}}
  \\
  5 & \emp \separate (\alloc{\avariable} \land \size = 1) \implies \lnot \size \geq 2
  & \mbox{\ref{rule:imptr}, 3, 4}
  \\
  6 & \emp \implies \big(\alloc{\avariable} \land \size = 1 \magicwand \lnot \size \geq 2\big)
  & \mbox{\ref{rule:FS:staradj}, 5}
  \end{syntproof}
  \caption{A proof of \ $\emp \implies \big((\alloc{\avariable} \land \size = 1) \magicwand \lnot \size \geq 2\big)$.}\label{figure:aproof}
\end{figure}

\begin{exa}\label{exa:aproof}
To further familiarise with the axioms and the rules of~$\coresys(\separate,\magicwand)$, in Figure~\ref{figure:aproof},
we present a proof of
$\emp \implies \big(\alloc{\avariable} \land \size = 1 \magicwand \lnot \size \geq 2\big)$.
In the proof, a line ``$j\, \mid\, \aformulater \ \ A, i_1,\dots,i_k$''
states that $\aformulater$ is a theorem denoted by the index $j$ and derivable by the axiom or the rule $A$.
If $A$ is a rule, the indices $i_{1},\dots,i_{k} < j$ denote the theorems used as premises in order to derive $\aformulater$.
When a formula is obtained as a propositional tautology or by propositional reasoning from other
formulae, we may write ``PC'' (standing for short `Propositional Calculus'). 
Similarly, we provide any useful piece of information
justifying the derivation, such as ``Ind. hypothesis'', ``See \dots'' or ``Previously derived''.
In the example,
we use the rule~\ref{rule:FS:staradj}, which together with the rule~\ref{rule:FS:magicwandadj} states that the connectives
$\separate$ and $\magicwand$ are adjoint operators,
as well as the axiom~\ref{starAx:FS:SizeNeg}, stating that $\card{\domain{\aheap}} \leq \inbound_1 {+} \inbound_2$ holds whenever a heap $\aheap$ can be split into two subheaps whose domains have less than $\inbound_1{+}1$ and $\inbound_2{+}1$ elements, respectively.
We also use the following theorems and rules:

\begin{nscenter}
  \noindent
  \scalebox{0.9}{
  \lemmalab{\textbf{($\land$Er)}}{axiom:andelim}\,
  $
  \aformulabis \land \aformula \implies \aformula
  $}
  \hfill
    \scalebox{0.9}{
    \lemmalab{\textbf{($\lnot\lnot$I)}}{axiom:dubneg}\,
    $
    \aformula \implies \lnot\lnot\aformula
    $}%
  \hfill
  \scalebox{0.9}{
  \rulelab{\textbf{$\implies$-Tr}}{rule:imptr}
  $
  \inference{\aformula \implies \aformulater\quad\aformulater \implies \aformulabis}{\aformula \implies \aformulabis}
  $}%
  \hfill
  \scalebox{0.9}{
  \rulelab{\textbf{$\separate$-Ilr}}{rule:starintroLR}
  $
  \inference{\aformula \implies \aformula'\quad \aformulabis \implies \aformulabis'}{\aformula \separate \aformulabis \implies \aformula' \separate \aformulabis'}{}
  $}%
  \hfill\,
\end{nscenter}
\end{exa}

\noindent The first two theorems and the first rule are derivable by pure propositional reasoning. By way of example, we show that the inference rule~\ref{rule:starintroLR} is admissible.

\vspace{-12pt}

\noindent
\begin{minipage}[t]{0.43\linewidth}
\begin{syntproof}
1 & \aformula \implies \aformula'
& \mbox{Hypothesis} \\
2 & \aformulabis \implies \aformulabis'
& \mbox{Hypothesis} \\
3 & \aformula \separate \aformulabis \implies \aformula' \separate \aformulabis
& \mbox{\ref{rule:FS:starinference}, 1}\\
4 & \aformulabis \separate \aformula' \implies \aformulabis' \separate \aformula'
& \mbox{\ref{rule:FS:starinference}, 2} \\
\end{syntproof}
\end{minipage}%
\hfill
\begin{minipage}[t]{0.53\linewidth}
  \begin{syntproof}
  5 & \aformula' \separate \aformulabis \implies  \aformulabis \separate \aformula'
  & \mbox{\ref{starAx:FS:Commute}} \\
  6 & \aformulabis' \separate \aformula' \implies  \aformula' \separate \aformulabis'
  & \mbox{\ref{starAx:FS:Commute}} \\
  7 & \aformula \separate \aformulabis \implies \aformulabis \separate \aformula'
    & \mbox{\ref{rule:imptr}, 3, 5}\\
  8 &  \aformula \separate \aformulabis \implies \aformula' \separate \aformulabis'
  & \mbox{\ref{rule:imptr} twice, 7, 4, 6}
  \end{syntproof}
  \end{minipage}%

\vspace{2pt}

\begin{rem}
Note that an alternative proof of theorem 5 in Figure~\ref{figure:aproof} consists in applying \ref{rule:imptr} to theorem~2 and
  $\emp \separate \big(\alloc{\avariable}\land\size{=}1\big) \implies \alloc{\avariable}\land\size {=} 1$, which holds by
the axioms~\ref{starAx:FS:Emp}~and~\ref{starAx:FS:Commute}.
\end{rem}

\begin{exa}
In Figure~\ref{fig:anotherproof}, we develop the proof of $\emp \implies (\alloc{\avariable} \land \size = 1
\magicwand \size = 1)$ as a more complete example.
We use the following theorems and rules:
\begin{center}
  \noindent
  \scalebox{0.8}{
  \lemmalab{\textbf{($\magicwand\land$-DistrL)}}{axiom:magicsep}\,
  $
  (\aformula \magicwand \aformulabis) \land (\aformula \magicwand \aformulater) \implies (\aformula \magicwand \aformulabis \land \aformulater)
  $}%
  \hfill
  \scalebox{0.8}{
  \lemmalab{\textbf{($\land\true$IL)}}{axiom:andtrue}\,
  $
  \aformula \implies \true \land \aformula
  $}%
  \hfill
  \scalebox{0.8}{
  \rulelab{\textbf{$\land$-InfL}}{rule:andinf}
  $
  \inference{\aformula \implies \aformulater}{\aformula \land \aformulabis \implies \aformulater \land \aformulabis}
  $}%
\end{center}

\noindent The rightmost axiom and the only rule are derivable by propositional reasoning. 
We show the admissibility of the axiom
\ref{axiom:magicsep}.

\begin{syntproof}
1 & (\aformula \septraction \lnot \aformulabis \vee \lnot \aformulater)
\implies (\aformula \septraction \lnot \aformulabis) \vee (\aformula \septraction \lnot \aformulater)
& \mbox{\ref{mwAx:OrR}, Lemma~\ref{lemma:septractionadmissible}} \\
2 & \lnot (\aformula \magicwand \lnot (\lnot \aformulabis \vee \lnot \aformulater))
\implies \lnot (\aformula \magicwand \lnot \lnot \aformulabis) \vee \lnot (\aformula \magicwand \lnot \lnot \aformulater)
& \mbox{Def. $\septraction$, 1} \\
3 & \lnot (\aformula \magicwand \aformulabis \wedge \aformulater)
\implies \lnot (\aformula \magicwand \aformulabis) \vee \lnot (\aformula \magicwand  \aformulater)
& \mbox{Replacement of equivalents, 2} \\
4 & (\aformula \magicwand \aformulabis) \wedge (\aformula \magicwand  \aformulater)
\implies (\aformula \magicwand \aformulabis \wedge \aformulater)
& \mbox{PC, 3}
\end{syntproof}
\end{exa}

\begin{figure}
\begin{syntproof}
  1 & \true \separate (\alloc{\avariable} \land \size = 1) \implies (\alloc{\avariable} \land \size = 1) \separate \true
  & \mbox{\ref{starAx:FS:Commute}}
  \\
  2 & \alloc{\avariable} \land \size = 1 \implies \size \geq 1
  & \mbox{\ref{axiom:andelim}}
  \\
  3 & (\alloc{\avariable} \land \size = 1) \separate \true \implies \size \geq 1 \separate \true
  & \mbox{\ref{rule:FS:starinference}, 2}
  \\
  4 & \size \geq 1 \separate \true \implies \size \geq 1
  & \mbox{\ref{starAx:FS:MonoCore} ($\size \geq 1 \egdef \lnot \emp$)}
  \\
  5 & \true \separate (\alloc{\avariable} \land \size = 1) \implies \size \geq 1
  & \mbox{\ref{rule:imptr} twice, 1, 3, 4}
  \\
  6 & \true \implies (\alloc{\avariable} \land \size = 1 \magicwand \size \geq 1)
  & \mbox{\ref{rule:FS:staradj}, 5}
  \\
  7 & \emp \implies (\alloc{\avariable} \land \size = 1 \magicwand \lnot \size \geq 2)
  & \mbox{\small{See Example~\ref{exa:aproof}}}
  \\
  8 & (\alloc{\avariable} \land \size = 1 \magicwand \lnot \size \geq 2) \implies \\
    & \quad \true \land (\alloc{\avariable} \land \size = 1 \magicwand \lnot \size \geq 2)
  & \mbox{\ref{axiom:andtrue}}
  \\
  9 & \true \land (\alloc{\avariable} \land \size = 1 \magicwand \lnot \size \geq 2) \implies\\
    & \quad \big((\alloc{\avariable} \land \size = 1 \magicwand \size \geq 1) \land\\
    & \quad (\alloc{\avariable} \land \size = 1 \magicwand \lnot \size \geq 2)\big)
  & \mbox{\ref{rule:andinf}, 6}
  \\
  10 & \big((\alloc{\avariable} \land \size = 1 \magicwand \size \geq 1) \land\\
    & \quad (\alloc{\avariable} \land \size = 1 \magicwand \lnot \size \geq 2)\big) \implies\\
    & \quad (\alloc{\avariable} \land \size = 1 \magicwand \size = 1)
  & \mbox{\ref{axiom:magicsep} + Def. $\size$}
  \\
  11 & (\alloc{\avariable} \land \size = 1 \magicwand \lnot \size \geq 2) \implies\\
    & \quad (\alloc{\avariable} \land \size = 1 \magicwand \size = 1)
  & \mbox{\ref{rule:imptr} twice, 8, 9, 10}
  \\
  12 & \emp \implies (\alloc{\avariable} \land \size = 1 \magicwand \size = 1)
  & \mbox{\ref{rule:imptr}, 7, 11}
\end{syntproof}

\footnotesize{
(recall that $\size = \inbound$ is a shortcut for $\size \geq \inbound \land \lnot \size \geq \inbound{+}1$)}

\caption{\label{fig:anotherproof}A proof of
$\emp \implies (\alloc{\avariable} \land \size = 1 \magicwand \size = 1)$.
}

\end{figure}

\vspace{-10pt}

\subsection*{Main ingredients of the method.}
Before showing completeness of $\magicwandsys$, let us recall the key ingredients of the method we follow,
not only to provide
a vade mecum for axiomatising other separation logics (which, in the second part of~\cite{Demri&Lozes&Mansutti20}, we illustrate on the newly
introduced logic  \intervalSL), but also to identify the
essential features  and where variations are still possible.
The Hilbert-style axiomatisation of \slSW shall culminate
with Theorem~\ref{theo:PSLcompleteAx} that states the adequateness of the proof system $\magicwandsys$.

In order to axiomatise \slSW internally, as already emphasised several times, the core formulae  play an essential
role. The main properties of these formulae is that their Boolean combinations capture the full logic \slSW~\cite{Lozes04bis}
and all the core formulae can be expressed in \slSW.
Generally speaking, our axiom system naturally leads to a form of constructive completeness, as
advocated in~\cite{Doumane17,Luck18}: the axiomatisation provides
proof-theoretical means to transform any formula into an equivalent Boolean combination
of core formulae, and it contains also a part dedicated to the derivation of valid Boolean combinations
of core formulae (understood as a syntactical fragment of \slSW).
What is specific to each logic is the design of the set of core formulae and in the case of \slSW,
this was already known since~\cite{Lozes04bis}.

Derivations in the proof system $\magicwandsys$ shall simulate the bottom-up elimination of
separating connectives (see forthcoming Lemmata~\ref{lemma:starPSLelim} and~\ref{lemma:magicwandPSLelim})
when the arguments are two Boolean combinations of core formulae.
To do so, $\magicwandsys$ contains axiom schemas that perform such an elimination in multiple ``small-step''
derivations, e.g.\ by deriving a single $\alloc{\avariable}$ predicate from $\alloc{\avariable}\separate\true$
(with forthcoming intermediate axiom~\ref{starAx:StarAlloc}).
Alternatively, it would have been possible to include ``big-step'' axiom schemas that, given the two Boolean combinations of
core formulae, derive the equivalent formula in one single derivation step (see e.g.~\cite{Echenim&Iosif&Peltier19}).
The main difference is that small-step axioms provide
a simpler understanding of the key properties of the logic.

\section{A simple calculus for the core formulae}\label{subsection:axiomCoreFormulae}
To axiomatise \slSW, we start by introducing the proof
system $\coresys$  dedicated to Boolean combinations of core formulae, see Figure~\ref{figure-proof-system-core-formulae}.
As explained earlier, it also contains the axiom schemata and modus ponens for the propositional calculus.
Moreover, the axioms whose name is of the form $A^{\corepedix}_i$ are axioms that remain in the global system for \slSW,
whereas those named $I^{\corepedix}_i$ are intermediate axioms that are removed when considering the axioms dealing with
the separating connectives. As explained before,
the intermediate axioms are handy to establish results about the axiomatisation of Boolean combinations of core
formulae but are not needed when all the axioms and rules of $\magicwandsys$ are considered. 

In the axiom~\ref{coreAx:EqSub},
$\aformula\completesubstitute{\avariable}{\avariablebis}$ stands for
the formula obtained from $\aformula$ by replacing with the variable $\avariable$ every occurrence of $\avariablebis$.
Let $\pair{\astore}{\aheap}$ be a memory state.
The axioms state that $=$ is an equivalence relation (first two axioms),
$\aheap(\astore(\avariable)) = \astore(\avariablebis)$ implies $\astore(\avariable) \in \domain{\aheap}$ (axiom~\ref{coreAx:PointAlloc}) and
that $\aheap$ is a (partial) function (axiom~\ref{coreAx:PointInj}).
Furthermore, there are two intermediate axioms about size formulae: \ref{coreAx:Size} states that if $\domain{\aheap}$ has at least $\inbound{+}1$ elements,
then it has at least $\inbound$ elements,
whereas \ref{coreAx:AllocSize} states instead that if there are $\inbound$ distinct memory cells corresponding to program variables, then indeed $\domain{\aheap} \geq \inbound$.
It is easy to check that $\coresys$ is sound (see also Lemma~\ref{lemma:magicwandPSLvalid}).
In order to establish its completeness with respect to 
Boolean combinations of core formulae,
we first show that
$\coresys$ is
complete for a subclass of Boolean combinations of core formulae, namely for \defstyle{core types} defined below.
Then, we show that every formula in $\boolcomb{\coreformulae{\asetvar}{\bound}}$ is provably equivalent to a
disjunction of core types (Lemma~\ref{prop:corePSLone}).

\begin{figure}
  \label{figure-proof-system-core-formulae}
  \begin{footnotesize}

  \fbox{
    \begin{minipage}{0.95\linewidth}
      \vspace{2pt}
    \begin{enumerate}[align=left,leftmargin=*]
      \setlength\itemsep{4pt}
      \begin{minipage}{0.44\linewidth}
      \item[\axlab{A^\corepedix}{coreAx:EqRef}] $\avariable = \avariable$
      \item[\axlab{A^\corepedix}{coreAx:EqSub}] $\aformula \land \avariable = \avariablebis \implies \aformula\completesubstitute{\avariable}{\avariablebis}$
      \item[\axlab{A^\corepedix}{coreAx:PointAlloc}] $\avariable \Ipto \avariablebis \implies \alloc{\avariable}$
      \end{minipage}%
      \begin{minipage}{0.56\linewidth}
      \item[\axlab{A^\corepedix}{coreAx:PointInj}] $\avariable \Ipto \avariablebis \land \avariable \Ipto \avariableter \implies \avariablebis = \avariableter$
      \item[\axlab{I^\corepedix}{coreAx:Size}] $\size \geq \inbound{+}1 \implies \size \geq \inbound$
      \item[\axlab{I^\corepedix}{coreAx:AllocSize}] $\bigwedge_{\avariable \in \asetvar}(\alloc{\avariable} \land
      \bigwedge_{\avariablebis \in \asetvar \setminus \{\avariable\}} \avariable \neq \avariablebis) \implies \size \geq \card{\asetvar}$
      \end{minipage}
    \end{enumerate}
    \vspace{2pt}
    \end{minipage}
  }

  \end{footnotesize}
  \caption{Proof system $\coresys$ for Boolean combinations of core formulae.}
  \label{figure}
  \end{figure}

\subsection*{Introduction to core types.}
Let $\asetvar {\subseteq_{\fin}} \PVAR$ and $\bound \in \Nat^+$.
We write $\coretype{\asetvar}{\bound}$ to denote the set of \defstyle{core types} defined by
$$
\scaledformulasubset{
  \aformula \in \conjcomb{\coreformulae{\asetvar}{\bound}}
}{
  \bmat[
    {\rm for \ all} \ \aformulabis \in  \coreformulae{\asetvar}{\bound},\  \orliterals{\aformulabis}{\lnot \aformulabis} \inside \aformula, \text{ and } (\aformulabis \land \lnot \aformulabis) \not\inside \aformula
  ]
}{0.9}{1}.
$$
Note that if $\aformula \in \coretype{\asetvar}{\bound}$, then $\aformula$ is a conjunction such that
for every  $\aformulabis \in \coreformulae{\asetvar}{\bound}$, there is exactly one literal in $\aformula$ built
upon $\aformulabis$.

\cut{
\subsection*{Introduction to core types.}
Let $\asetvar {\subseteq_{\fin}} \PVAR$, $\bound \in \Nat^+$ and $\widehat{\bound} {=} \bound{+}\card{\asetvar}$.
We write $\coretype{\asetvar}{\bound}$ to denote the set of \defstyle{core types} defined by
$$
\scaledformulasubset{
  \aformula \in \conjcomb{\coreformulae{\asetvar}{\widehat{\bound}}}
}{
  \bmat[
    \forall\aformulabis{\in} \coreformulae{\asetvar}{\widehat{\bound}},\  \orliterals{\aformulabis}{\lnot \aformulabis} \inside \aformula, \text{ and } (\aformulabis \land \lnot \aformulabis) \not\inside \aformula
  ]
}{0.9}{1}.
$$
Note that if $\aformula \in \coretype{\asetvar}{\bound}$, then $\aformula$ is a conjunction such that
for every  $\aformulabis \in \coreformulae{\asetvar}{\widehat{\bound}}$, there is exactly one literal in $\aformula$ built
upon $\aformulabis$.
}

\begin{lem}[Refutational completeness]
\label{lemma:corePSLtwo}
Let $\aformula \in \coretype{\asetvar}{\bound}$, where $\bound \geq \card{\asetvar}$.
The formula $\neg \aformula$ is valid if and only if $\vdash_{\coresys} \neg \aformula$.
\end{lem}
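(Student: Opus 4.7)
The right-to-left direction is immediate from the (routine) soundness of $\coresys$. For the converse, the plan is to prove the contrapositive: assuming that the core type $\aformula$ is \emph{consistent}, i.e.~$\not\vdash_{\coresys} \neg \aformula$, construct a memory state $\pair{\astore}{\aheap}$ such that $\pair{\astore}{\aheap} \models \aformula$. Since $\aformula$ is a core type, for every formula in $\coreformulae{\asetvar}{\bound}$ exactly one of it or its negation appears as a literal of $\aformula$, so a model can essentially be read off from $\aformula$ provided these choices are mutually coherent.

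The first step is to derive the required coherence properties from the axioms, using consistency to rule out the negative literal whenever a positive one is forced. By~\ref{coreAx:EqRef} and~\ref{coreAx:EqSub}, the equality literals of $\aformula$ induce an equivalence relation on $\asetvar$ that is a congruence for $\alloc{\cdot}$ and $\cdot \Ipto \cdot$; by~\ref{coreAx:PointAlloc}, every $\avariable \Ipto \avariablebis \in \aformula$ forces $\alloc{\avariable} \in \aformula$; by~\ref{coreAx:PointInj}, whenever $\avariable \Ipto \avariablebis$ and $\avariable \Ipto \avariableter$ both belong to $\aformula$, so does $\avariablebis = \avariableter$; by~\ref{coreAx:Size}, the set $\{\inbound \in \interval{0}{\bound} \mid \size \geq \inbound \in \aformula\}$ is downward closed, hence of the form $\interval{0}{N}$ for some $N \in \interval{0}{\bound}$; and by~\ref{coreAx:AllocSize} together with the hypothesis $\bound \geq \card{\asetvar}$, this $N$ is at least the number $A$ of equivalence classes that $\aformula$ declares allocated.

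With these invariants in hand, I would build the model as follows. Fix an injection from the equivalence classes of $\asetvar$ to $\LOC$ and let $\astore$ send each variable to the location of its class. For each allocated class $[\avariable]$, add to $\aheap$ a cell at $\astore(\avariable)$ whose image is $\astore(\avariablebis)$ when some $\avariable \Ipto \avariablebis \in \aformula$ (unambiguous thanks to~\ref{coreAx:PointInj}), and otherwise a fresh location outside $\astore(\asetvar)$. Finally, pad $\aheap$ with $N - A$ extra cells rooted at locations outside $\astore(\asetvar)$, so that $\card{\domain{\aheap}} = N$ exactly; this is possible because $\LOC$ is infinite.

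The main obstacle is then the verification that $\pair{\astore}{\aheap} \models \aformula$, the delicate cases being the negative literals. Positive equalities and points-to facts hold by construction, the size constraints hold because $\card{\domain{\aheap}} = N$, and disequalities hold since distinct classes map to distinct locations. Negated allocations $\lnot \alloc{\avariable}$ hold because non-allocated classes contribute no cell and padding cells are rooted outside $\astore(\asetvar)$. Finally, $\lnot \avariable \Ipto \avariablebis \in \aformula$ is satisfied because the target of $\astore(\avariable)$, when it exists, is either dictated by some $\avariable \Ipto \avariableter \in \aformula$ with the case $\avariableter = \avariablebis$ already ruled out by consistency together with~\ref{coreAx:EqSub}, or is freshly chosen outside $\astore(\asetvar)$.
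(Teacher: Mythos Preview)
Your proposal is correct and follows essentially the same approach as the paper: prove the contrapositive by reading off a model from a consistent core type, using the axioms~\ref{coreAx:EqRef}--\ref{coreAx:AllocSize} to establish the necessary coherence (equivalence relation and congruence, functionality of $\Ipto$, $\domain{f}\subseteq D$, downward closure of size literals, and $N\geq A$ via~\ref{coreAx:AllocSize} with $\bound\geq\card{\asetvar}$), then building the heap on the allocated classes and padding to the right size. The only cosmetic difference is that the paper uses a single fixed dummy target $\alocation_0$ for allocated classes lacking a points-to and for padding cells, whereas you pick fresh locations outside $\astore(\asetvar)$; either choice works for the verification of the negative $\Ipto$-literals.
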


\begin{proof}
We show that $\aformula$ is unsatisfiable if and only if $\vdash_{\coresys}\lnot \aformula$.
The ``only if'' part follows from the soundness of $\coresys$, so we prove the ``if'' part.
Let $\aformula\in \coretype{\asetvar}{\bound}$ be such that $\not\vdash_{\coresys}\aformula\Rightarrow \bot$, and let us prove that
$\aformula$ is satisfiable.
By the axioms~\ref{coreAx:EqRef}
and~\ref{coreAx:EqSub},
there is an equivalence relation $\approx$ on $\asetvar$ such that
$\avariable\approx\avariablebis$ iff $\avariable=\avariablebis$ occurs positively
in $\aformula$. We write $[\avariable]$ to denote the equivalence class of $\avariable$
with respect to $\approx$.
By the axioms~$\ref{coreAx:EqSub}$ and~$\ref{coreAx:PointInj}$,
there is a partial map $f: (\asetvar/\approx)\to(\asetvar/\approx)$ on equivalence classes such
that $\avariable\Ipto \avariablebis$ occurs positively iff $f([x])$ is defined and $f([x])=[y]$.
Let $D=\{[\avariable]\mid \alloc{\avariable}\mbox{ occurs positively in }\aformula\}$.
By the axiom $\ref{coreAx:PointAlloc}$, $\mathsf{dom}(f)\subseteq D$.
Let $n = \maxsize{\aformula}$. We recall that, by definition of $\maxsize{.}$, 
$n$ is the greatest $\beta$ such that $\size\geq\beta$ occurs positively in $\aformula$
(or zero if there are none).

Let us show that $n \geq \card{D}$. {\em Ad absurdum}, suppose that $n < \card{D}$. 
From the axiom~\ref{coreAx:AllocSize}, $\vdash_{\coresys}\aformula \Rightarrow \size \geq \card{D}$
and by definition of $n$ and the fact that $\bound \geq \card{\asetvar} \geq \card{D}$, 
$\vdash_{\coresys}\aformula \Rightarrow \size \geq n$ and 
$\vdash_{\coresys}\aformula \Rightarrow \neg (\size \geq (n+1))$ since both $\size \geq n$ and
$(\size \geq (n+1))$ (possibly negated) occur in $\aformula$ as $\bound \geq  \card{\asetvar}$.
By using the axiom~\ref{coreAx:Size} and propositional reasoning, 
we can get that $\vdash_{\coresys}\aformula \Rightarrow \neg (\size \geq \card{D})$ since 
$\vdash_{\coresys}\aformula \Rightarrow \neg (\size \geq (n+1))$,
which leads to a contradiction. Consequently,~$n\geq \card{D}$.

Let $\alocation_0,\alocation_1,\dots,\alocation_n\in\LOC$ be $n+1$ distinct locations, and let
us fix an enumeration $C_1,\dots,C_{\card{D}}$ on the equivalence classes of $\approx$.
Let $(\astore,\aheap)$ be defined by
\begin{itemize}
\item $\astore(\avariable) \egdef \alocation_i$ if $[x]$ is the $i$th equivalence class $C_i$,
\item $\aheap(\alocation_i) \egdef \alocation_j$ if $0<i\leq \card{D}$ and the $i$th equivalence class is mapped to the $j$th
one by $f$,
\item $\aheap(\alocation_i) \egdef \alocation_0$ if either $0<i\leq \card{D}$ and the $i$th equivalence class is not in the domain of $f$, or $i> \card{D}$.
\end{itemize}
Then, by construction, $\pair{\astore}{\aheap}$ satisfies all positive literals of the form $\avariable=\avariablebis$ or 
$\avariable\Ipto\avariablebis$ or $\alloc{\avariable}$
that occur positively in $\aformula$, and all negative literals that  occur  in $\aformula$. It also
satisfies $\size\geq n$, falsifies $\size\geq n+1$ (assuming $n+1 \leq \bound$), and by the axiom~\ref{coreAx:Size}, it satisfies all size literals
in $\aformula$.
\end{proof}

By classical reasoning, one can show that every $\aformula \in \boolcomb{\coreformulae{\asetvar}{\bound}}$
is provably equivalent to a disjunction of core types. Together with Lemma~\ref{lemma:corePSLtwo}, this implies that $\coresys$ is adequate with respect to the propositional logic of core formulae.

To prove forthcoming Theorem~\ref{theo:corePSLcompl}, let us first establish the following simple lemma.

\begin{lem}[Core Types Lemma]
\label{prop:corePSLone}
Let $\aformula \in \boolcomb{\coreformulae{\asetvar}{\bound}}$. There is a disjunction
$\aformulabis=\aformulabis_1\vee\ldots\vee\aformulabis_n$ with
$\aformulabis_i\in\coretype{\asetvar}{\max(\card{\asetvar},\bound)}$ for all $i$
such that $\prove_{\coresys} \aformula \iff \aformulabis$.
\end{lem}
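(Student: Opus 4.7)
The argument is essentially a propositional normal-form computation, since $\coresys$ extends classical propositional calculus and the statement only concerns $\iff$-provability. Write $\bound' \egdef \max(\card{\asetvar},\bound)$ and enumerate the finite set $\coreformulae{\asetvar}{\bound'}$ as $\aformulater_1, \dots, \aformulater_m$. Because $\coreformulae{\asetvar}{\bound} \subseteq \coreformulae{\asetvar}{\bound'}$, every core formula occurring in $\aformula$ is among the $\aformulater_i$'s, so $\aformula$ can be viewed as a Boolean combination of $\aformulater_1, \dots, \aformulater_m$ read as propositional atoms.

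Starting from the propositional tautology $\top \iff \bigwedge_{i=1}^m (\aformulater_i \lor \lnot \aformulater_i)$ and distributing $\land$ over $\lor$, the plan is to derive in $\coresys$
\[
\aformula \iff \bigvee_{\aliteral \in \{+,-\}^m} \bigl(\aformula \land \aformulater_1^{\aliteral_1} \land \cdots \land \aformulater_m^{\aliteral_m}\bigr),
\]
with the convention $\aformulater_i^{+} \egdef \aformulater_i$ and $\aformulater_i^{-} \egdef \lnot \aformulater_i$. Each conjunction $\aformulater_1^{\aliteral_1} \land \cdots \land \aformulater_m^{\aliteral_m}$ propositionally determines the truth value of $\aformula$ under the atomic reading, so each disjunct is provably equivalent either to $\aformulater_1^{\aliteral_1} \land \cdots \land \aformulater_m^{\aliteral_m}$ (when $\aliteral$ makes $\aformula$ true) or to $\bot$. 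Discarding the $\bot$-disjuncts yields $\aformulabis = \aformulabis_1 \lor \cdots \lor \aformulabis_n$ with each $\aformulabis_j \in \coretype{\asetvar}{\bound'}$ and $\prove_{\coresys} \aformula \iff \aformulabis$; if every disjunct reduces to $\bot$, one takes $n = 0$, reading the empty disjunction as $\bot$.

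The only delicate point beyond bookkeeping is the bound inflation from $\bound$ to $\bound'$ when $\bound < \card{\asetvar}$: the size literals $\size \geq \inbound$ with $\inbound \in \interval{\bound{+}1}{\card{\asetvar}}$ do not appear in $\aformula$ but must occur in every core type of $\coretype{\asetvar}{\bound'}$. This is absorbed for free by the tautology above, which contributes a factor $\aformulater_i \lor \lnot \aformulater_i$ for each such size formula. Notably, the intermediate axioms~\ref{coreAx:Size} and~\ref{coreAx:AllocSize} play no role in this lemma — they become necessary only in Lemma~\ref{lemma:corePSLtwo}, to certify which core types are unsatisfiable.
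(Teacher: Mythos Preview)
Your proposal is correct and follows essentially the same approach as the paper: both arguments are purely propositional, observing that any Boolean combination of core formulae over $\coreformulae{\asetvar}{\bound'}$ can be brought to a disjunction of complete sign vectors (core types) by case-splitting on each atom, and that $\coresys$ proves all such equivalences because it contains propositional calculus. The paper phrases it as ``take a DNF, then complete each conjunct by splitting on missing atoms'', whereas you expand fully over all $2^m$ sign vectors and then discard the $\bot$-disjuncts; these are two presentations of the same computation, and your remark that the intermediate axioms \ref{coreAx:Size} and \ref{coreAx:AllocSize} are unused here is accurate and matches the paper's reliance on ``pure propositional reasoning''.
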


\begin{proof}
Let $\aformulabis_1\vee\ldots\vee\aformulabis_n$ be a formula in disjunctive normal form logically equivalent to $\aformula$.
If $\aformulabis_i$ is not a core type in $\coretype{\asetvar}{\max(\card{\asetvar},\bound)}$, 
there is a core formula
$\aformulater\in\coreformulae{\asetvar}{\max(\card{\asetvar},\bound)}$
that occurs neither positively nor negatively in $\aformulabis_i$.
Replacing $\aformulabis_i$ with $(\aformulabis_i\wedge\aformulater)\vee(\aformulabis_i\wedge\neg\aformulater)$,
 and repeating this for all missing core formulae and for all~$i$, we obtain a
disjunction of core types of the expected form. Since all equivalences follow from pure propositional reasoning, the equivalence between $\aformula$ and the obtained formula can be proved
in $\coresys$.
\end{proof}

\begin{thm}[Adequacy]
\label{theo:corePSLcompl}
A Boolean combination of core formulae $\aformula$ is valid iff $\prove_{\coresys} \aformula$.
\end{thm}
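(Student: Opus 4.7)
The forward direction ($\vdash_{\coresys} \aformula$ implies $\aformula$ is valid) will follow immediately from the already noted soundness of $\coresys$. For the converse, my plan is to invoke the Core Types Lemma to reduce to a disjunction of core types, and then combine Refutational Completeness with pure propositional reasoning to assemble a derivation in $\coresys$.

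Concretely, given a valid $\aformula \in \boolcomb{\coreformulae{\asetvar}{\bound}}$, I would first apply the Core Types Lemma to obtain a disjunction $\aformulabis = \aformulabis_1 \lor \dots \lor \aformulabis_n$ of core types in $\coretype{\asetvar}{\bound'}$, with $\bound' \egdef \max(\card{\asetvar}, \bound)$, satisfying $\vdash_{\coresys} \aformula \iff \aformulabis$. It then suffices to derive $\aformulabis$ in $\coresys$. Next, I would consider the finite set $T$ of all core types in $\coretype{\asetvar}{\bound'}$ and split off $T' \egdef T \setminus \{\aformulabis_1, \dots, \aformulabis_n\}$. Two observations drive the argument: (i) because a core type fixes the polarity of every atomic formula in $\coreformulae{\asetvar}{\bound'}$, distinct core types are mutually exclusive and every memory state satisfies exactly one, so validity of $\aformulabis$ forces every $\aformulater \in T'$ to be unsatisfiable; (ii) the total disjunction $\bigvee_{\aformulater \in T} \aformulater$ is a classical propositional tautology --- it arises from distributing the conjunction of excluded-middle instances $\bigwedge_{\gamma}(\gamma \lor \lnot \gamma)$ indexed by the atomic core formulae $\gamma \in \coreformulae{\asetvar}{\bound'}$ --- and so is derivable in $\coresys$ from its propositional axioms alone.

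From (i) and Refutational Completeness --- which applies since $\bound' \geq \card{\asetvar}$ --- I would derive $\vdash_{\coresys} \lnot \aformulater$ for each $\aformulater \in T'$, and hence $\vdash_{\coresys} \lnot \bigvee_{\aformulater \in T'} \aformulater$ by propositional reasoning. Combining this with (ii) and the propositional decomposition $\bigvee_{\aformulater \in T} \aformulater \equiv \bigvee_{\aformulater \in T'} \aformulater \lor \aformulabis$ will yield $\vdash_{\coresys} \aformulabis$, and finally $\vdash_{\coresys} \aformula$ through the provable equivalence supplied by the Core Types Lemma. The only mildly delicate step is bookkeeping: checking that the bound produced by the Core Types Lemma meets the hypothesis of Refutational Completeness, which is automatic from $\bound' = \max(\card{\asetvar}, \bound) \geq \card{\asetvar}$. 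Beyond this, I expect no real obstacle --- all remaining work is classical propositional manipulation within $\coresys$.
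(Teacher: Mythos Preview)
Your argument is correct, but it takes a longer road than the paper's. The paper applies the Core Types Lemma to $\lnot\aformula$ rather than to $\aformula$: this yields $\vdash_{\coresys} \lnot\aformula \iff (\aformulabis_1 \lor \dots \lor \aformulabis_n)$ with each $\aformulabis_i$ a core type in $\coretype{\asetvar}{\max(\card{\asetvar},\bound)}$. Since $\aformula$ is valid, $\lnot\aformula$ is unsatisfiable, hence so is each $\aformulabis_i$, and Refutational Completeness immediately gives $\vdash_{\coresys} \aformulabis_i \Rightarrow \bot$ for every $i$; propositional reasoning then yields $\vdash_{\coresys} \aformula$.

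The difference is that by dualising first, the paper never needs your set $T$ of \emph{all} core types, the semantic observation that distinct core types are mutually exclusive, or the propositional tautology $\bigvee_{\aformulater \in T}\aformulater$. Your route works, but it trades a one-line trick (apply the lemma to the negation) for two additional ingredients. What your approach does buy is a slightly more explicit picture of why completeness holds: you are effectively showing that the satisfiable core types at level $\bound'$ partition the model space and that $\aformula$ provably covers all of them. That viewpoint can be useful elsewhere, but for this theorem the paper's dual application is shorter and avoids any appeal to semantics beyond soundness and Refutational Completeness.
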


\begin{proof} 
Let $\aformula$ be a Boolean combination of core formulae in $\coretype{\asetvar}{\bound}$ for some $\asetvar$ and $\bound$.
As all the axioms are valid (Lemma~\ref{lemma:magicwandPSLvalid}),
$\prove_{\coresys} \aformula$ implies that $\aformula$ is
valid.
Let us assume that $\aformula$ is valid, and let us prove that $\prove_{\coresys}\aformula$.
By Lemma~\ref{prop:corePSLone}, there is a disjunction $\aformulabis=\aformulabis_1\vee\ldots\vee\aformulabis_n$ of core types
in $\coretype{\asetvar}{\max(\card{\asetvar},\bound)}$
such that $\prove_{\coresys}(\neg \aformula)\Leftrightarrow\aformulabis$.
As $\aformula$ is valid, the formulae
$\neg\aformula$, $\aformulabis$ and all the $\aformulabis_i$'s are unsatisfiable.
By Lemma~\ref{lemma:corePSLtwo}, $\prove_{\coresys}\aformulabis_i\Rightarrow\bot$, for all $i$.
By propositional reasoning, $\prove_{\coresys} \aformula$.
\end{proof}

\section{Axiomatisation for \slSA}\label{section:starelimination}

We write \slSA to denote the fragment of \slSW in which the separating implication is removed
at the price of adding the atomic formulae of the form $\alloc{\avariable}$.
We define an Hilbert-style axiomatisation for \slSA,
obtained by enriching $\coresys$ with axioms and one inference rule
that handle  the separating conjunction~$\separate$, leading to the proof system~$\coresys(\separate)$.
Fundamentally, as we work now within \slSA, the core formula $\size \geq \inbound$ can be encoded in the logic.
According to its definition,
given in~Section~\ref{subsection:core-formulae},
we see $\size \geq 0$ as $\true$, $\size \geq 1$ as $\lnot \emp$ and $\size \geq \inbound{+}2$ as $\lnot \emp \separate \size \geq \beta{+}1$.

\begin{figure}
    \begin{footnotesize}
    \fbox{
    \begin{minipage}{0.95\linewidth}
        \vspace{3pt}
        \noindent
        \begin{enumerate}[align=left,leftmargin=*]
            \setlength\itemsep{4pt}%
        \addtocounter{enumi}{6}%
        \begin{minipage}[t]{0.44\linewidth}%
        \vspace{2pt}
        \item[\axlab{A^\separate}{starAx:Commute}] $(\aformula \separate \aformulabis) \iff (\aformulabis \separate \aformula)$
        \item[\axlab{A^\separate}{starAx:Assoc}] $(\aformula \separate \aformulabis) \separate \aformulater \iff \aformula \separate (\aformulabis \separate \aformulater)$
        \item[\axlab{I^\separate}{starAx:DistrOr}] $(\aformula \lor \aformulabis) \separate \aformulater \implies
        (\aformula \separate \aformulater) \lor (\aformulabis \separate \aformulater)$
        \item[\axlab{I^\separate}{starAx:False}] $(\bot \separate \aformula) \iff \bot$
        \item[\axlab{A^\separate}{starAx:Emp}] $\aformula \iff \aformula \separate \emp$
        \item[\axlab{I^\separate}{starAx:StarAlloc}] $\alloc{\avariable} \separate \true \implies \alloc{\avariable}$
        \item[\axlab{A^\separate}{starAx:DoubleAlloc}] $(\alloc{\avariable} \separate \alloc{\avariable})\iff\bot$\\
        \item[\rulelab{\textbf{$\separate$-Intro}}{rule:starinference}]
        $\inference{\aformula \implies \aformulater}{\aformula \separate \aformulabis \implies \aformulater \separate \aformulabis }{}$
        \vspace{3pt}
        \end{minipage}
        \begin{minipage}[t]{0.56\linewidth}
        \vspace{2pt}
        \item[\axlab{A^\separate}{starAx:MonoCore}] $\aelement \separate \true \implies \aelement \assuming{\aelement \in \{ \lnot\emp,\, \avariable = \avariablebis,\, \avariable \neq \avariablebis,\, \avariable \Ipto \avariablebis\}}$
        \item[\axlab{A^\separate}{starAx:AllocNeg}] $\lnot\alloc{\avariable} \separate \lnot\alloc{\avariable} \implies \lnot\alloc{\avariable}$
        \item[\axlab{A^\separate}{starAx:PointsNeg}] $(\alloc{\avariable} \land \lnot \avariable \Ipto \avariablebis) \separate \true \implies \lnot \avariable \Ipto \avariablebis$
        \item[\axlab{A^\separate}{starAx:AllocSizeOne}] $
        \alloc{\avariable} \implies
        (\alloc{\avariable} \land \size = 1) \separate \true$
        \item[\axlab{A^\separate}{starAx:SizeOne}] $\lnot \emp \implies \size = 1 \separate \true$
        \item[\axlab{A^\separate}{starAx:SizeNeg}] $\lnot \size \geq \inbound_1 \separate \lnot \size \geq \inbound_2 \implies \lnot \size \geq \inbound_1{+}\inbound_2{\dotminus}1$
        \item[\axlab{A^\separate}{starAx:SizeTwo}] $\alloc{\avariable}\wedge\alloc{\avariablebis}\wedge \avariable\neq\avariablebis\implies \size\geq2$\\ 
        \item[] 
        \quad\ ($a \dotminus b = \max(0,a-b)$) 
        \end{minipage}
        \end{enumerate}
    \end{minipage}
    }
    \end{footnotesize}
    
    \caption{Additional axioms and rule for $\coresys(\separate)$.}
    \label{figure-proof-system-star}
    \end{figure}

The axioms and the rule added to $\coresys$ in order to define $\coresys(\separate)$ are presented in Figure~\ref{figure-proof-system-star}.
Their soundness has been proved in~Lemma~\ref{lemma:magicwandPSLvalid},
with the exception of the three intermediate axioms
\ref{starAx:DistrOr},
\ref{starAx:False}
and~\ref{starAx:StarAlloc},
which are used for the completeness of $\coresys(\separate)$ with respect to \slSA, 
but are discharged from the proof system for $\slSW$ (Figure~\ref{figure-full-proof-system}), as they become derivable (Lemma~\ref{lemma:admissible-axioms-2}).

\begin{lem}{$\coresys(\separate)$ is sound.}
    \label{lemma:axiomatisation-star-sound}
\end{lem}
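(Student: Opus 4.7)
The strategy is to leverage Lemma~\ref{lemma:magicwandPSLvalid}, which has already established the validity of every axiom in $\magicwandsys$ and, in particular, every axiom in Figure~\ref{figure-proof-system-star} whose label starts with $A^\separate$. The soundness of the inference rule~\ref{rule:starinference} is likewise inherited, as it coincides with~\ref{rule:FS:starinference} and follows the Boolean BI derivation already invoked in Lemma~\ref{lemma:magicwandPSLvalid}. Thus the plan reduces to checking the three intermediate axioms~\ref{starAx:DistrOr},~\ref{starAx:False} and~\ref{starAx:StarAlloc}, each of which I expect to dispatch by a direct unfolding of the semantics of~$\separate$.

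For~\ref{starAx:False}, I would simply observe that $\bot$ has no model, so no subheap can satisfy it; hence $\bot\separate\aformula$ is unsatisfiable and therefore equivalent to~$\bot$. For~\ref{starAx:DistrOr}, assuming $\pair{\astore}{\aheap}\models(\aformula\lor\aformulabis)\separate\aformulater$, the semantics of $\separate$ furnishes disjoint heaps $\aheap_1,\aheap_2$ with $\aheap_1+\aheap_2=\aheap$, $\pair{\astore}{\aheap_1}\models\aformula\lor\aformulabis$, and $\pair{\astore}{\aheap_2}\models\aformulater$; a case split on which disjunct holds for~$\aheap_1$ immediately yields $\pair{\astore}{\aheap}\models(\aformula\separate\aformulater)\lor(\aformulabis\separate\aformulater)$. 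Finally for~\ref{starAx:StarAlloc}, if $\pair{\astore}{\aheap}\models\alloc{\avariable}\separate\true$, there is a subheap $\aheap_1\sqsubseteq\aheap$ with $\astore(\avariable)\in\domain{\aheap_1}\subseteq\domain{\aheap}$, which gives $\pair{\astore}{\aheap}\models\alloc{\avariable}$. Notice that~\ref{starAx:StarAlloc} is not an instance of~\ref{starAx:MonoCore}, because $\alloc{\avariable}$ is excluded from the list of side-conditions there; so it genuinely requires a separate (albeit trivial) verification.

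I do not foresee any real obstacle: all three checks are one-line semantic arguments, entirely in the spirit of the per-axiom proofs already written out in the treatment of Lemma~\ref{lemma:magicwandPSLvalid}. The only mild point of care is to recall that $\emp$, $\size\geq\inbound$, and $\alloc{\avariable}$ are abbreviations within \slSA (as emphasised at the start of Section~\ref{section:starelimination}), so that all the axioms of Figure~\ref{figure-proof-system-star} can indeed be read as formulae of \slSA and their semantics evaluated directly.
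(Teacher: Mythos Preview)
Your proposal is correct and matches the paper's approach: the paper likewise defers the non-intermediate axioms and the rule to Lemma~\ref{lemma:magicwandPSLvalid}, and gives the same one-line semantic argument for~\ref{starAx:StarAlloc}. The only cosmetic difference is that the paper dispatches~\ref{starAx:DistrOr} and~\ref{starAx:False} by citing the Boolean BI literature rather than writing out the semantics, whereas you spell them out; both are equally valid and equally brief.
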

\begin{proof}
The axioms \ref{starAx:DistrOr} 
and~\ref{starAx:False} are inherited from Boolean BI (see~\cite{Brotherston&Villard14} and~\cite[Section 2]{Galmiche&Larchey06}). The soundness of~\ref{starAx:StarAlloc} is straightforward.
Indeed, suppose $\pair{\astore}{\aheap} \models \alloc{\avariable} \separate \true$. 
So, there is $\aheap' \subheap \aheap$ such that $\pair{\astore}{\aheap'} \models \alloc{\avariable}$. By definition of $\alloc{\avariable}$, $\astore(\avariable) \in \domain{\aheap'}$.
By $\aheap' \subheap \aheap$, $\astore(\avariable) \in \domain{\aheap}$. We conclude that
$\pair{\astore}{\aheap} \models \alloc{\avariable}$.
\end{proof}

Let us look further at the axioms in Figure~\ref{figure-proof-system-star}.
The axioms deal with the commutative monoid properties of $\pair{\separate}{\emp}$ and its distributivity
over $\vee$ (as for Boolean BI, see e.g.~\cite{Galmiche&Larchey06}).
The rule~\ref{rule:starinference},
sometimes called ``frame rule'' by analogy with the
rule of the same name in program logic,
states that logical equivalence is a congruence
for $\separate$.
$\coresys(\separate)$ is 
designed with the idea of being as simple as possible. On one side, this helps understanding the key ingredients of \slSA. On the other side, this makes the proof of completeness of $\coresys(\separate)$ more challenging. To work towards this proof while familiarising with the new axioms, we first show a set of 
intermediate theorems (see Appendix~\ref{appendix-separate-auxiliary-stuff}). 

\begin{lem}
    \label{lemma:separate-auxiliary-stuff}
    The following rules and axioms are admissible in $\coresys(\separate)$:
    \begin{enumerate}[align=left,leftmargin=*]
    \setlength\itemsep{3pt}
    \item[\indexedlab{I^{\separate}}{\ref{lemma:separate-auxiliary-stuff}}{starAx:auxilary-1}]
        $\avariable \sim \avariablebis \land (\aformula \separate \aformulabis) \implies (\aformula \land \avariable \sim \avariablebis) \separate \aformulabis$, 
        where~$\sim$ stands for $=$ or $\neq$.
    \item[\indexedlab{I^{\separate}}{\ref{lemma:separate-auxiliary-stuff}}{starAx:auxilary-2}]
        $\avariable = \avariablebis \land ((\aformula \land \alloc{\avariable}) \separate \aformulabis)
        \implies (\aformula \land \alloc{\avariablebis}) \separate \aformulabis$.
    \item[\indexedlab{I^{\separate}}{\ref{lemma:separate-auxiliary-stuff}}{starAx:auxilary-3}]
        $(\aformula\land \alloc{\avariable}) \separate \aformulabis \implies \aformula \separate (\aformulabis \land \lnot \alloc{\avariable})$. 
    \item[\indexedlab{I^{\separate}}{\ref{lemma:separate-auxiliary-stuff}}{starAx:auxilary-4}]
        $\lnot \alloc{\avariable} \land (\aformula \separate \aformulabis) \implies 
        (\aformula \land \lnot \alloc{\avariable}) \separate \aformulabis$.
    \item[\indexedlab{I^{\separate}}{\ref{lemma:separate-auxiliary-stuff}}{starAx:auxilary-4bis}]
        $\alloc{\avariable} \land (\aformula \separate (\lnot \alloc{\avariable} \land \aformulabis)) \implies (\aformula \land \alloc{\avariable}) \separate (\lnot \alloc{\avariable} \land \aformulabis)$
    \item[\indexedlab{I^{\separate}}{\ref{lemma:separate-auxiliary-stuff}}{starAx:auxilary-5}]
        $\avariable \Ipto \avariablebis \land ((\aformula \land \alloc{\avariable}) \separate \aformulabis)
        \implies (\aformula \land \avariable \Ipto \avariablebis) \separate \aformulabis$.
    \item[\indexedlab{I^{\separate}}{\ref{lemma:separate-auxiliary-stuff}}{starAx:auxilary-6}]
        $\lnot \avariable \Ipto \avariablebis \land (\aformula \separate \aformulabis) \implies 
        (\aformula \land \lnot \avariable \Ipto \avariablebis) \separate \aformulabis$.
    \end{enumerate} 
\end{lem}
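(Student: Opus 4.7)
My plan is to prove each of the seven items by a short syntactic derivation that follows a uniform pattern: introduce a propositional case split on the relevant core formula inside one of the conjuncts of the separating conjunction, use \ref{rule:starinference} together with \ref{starAx:DistrOr} and \ref{starAx:False} to push the split outside of $\separate$, and then eliminate the impossible disjunct via one of the core-interaction axioms of Figure~\ref{figure-proof-system-star}. The extra commutation and reassociation needed to reach the right shape is handled via \ref{starAx:Commute} and \ref{starAx:Assoc}.

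For item~\ref{starAx:auxilary-1}, the case split is on $\avariable \sim \avariablebis \lor \lnot(\avariable \sim \avariablebis)$ inside $\aformula$; the disjunct containing the wrong literal implies, by weakening under \ref{rule:starinference}, a formula of the form $\aelement \separate \true$ with $\aelement$ being $\avariable = \avariablebis$ or $\avariable \neq \avariablebis$, which reduces to $\aelement$ via \ref{starAx:MonoCore}, contradicting the hypothesis. Item~\ref{starAx:auxilary-2} then follows from~\ref{starAx:auxilary-1} combined with \ref{coreAx:EqSub}: once $\avariable = \avariablebis$ has been pushed inside the left conjunct, substitution yields $\alloc{\avariablebis}$ and propositional reasoning concludes. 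Item~\ref{starAx:auxilary-6} is the simplest: case-splitting $\avariable \Ipto \avariablebis \lor \lnot \avariable \Ipto \avariablebis$ inside $\aformula$, the positive disjunct yields $\avariable \Ipto \avariablebis \separate \true$ by weakening, hence $\avariable \Ipto \avariablebis$ via \ref{starAx:MonoCore}, again contradicting the hypothesis.

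For items~\ref{starAx:auxilary-3},~\ref{starAx:auxilary-4}, and~\ref{starAx:auxilary-4bis}, I would use the allocation-specific axioms. Item~\ref{starAx:auxilary-3} pivots on \ref{starAx:DoubleAlloc}: from $(\aformula \land \alloc{\avariable}) \separate \aformulabis$ I case-split on $\alloc{\avariable} \lor \lnot \alloc{\avariable}$ inside $\aformulabis$ and refute the positive subcase via $\alloc{\avariable} \separate \alloc{\avariable} \iff \bot$. Item~\ref{starAx:auxilary-4} uses~\ref{starAx:StarAlloc} in the same way: case-splitting on $\alloc{\avariable} \lor \lnot \alloc{\avariable}$ inside $\aformula$, the positive subcase gives $\alloc{\avariable} \separate \true$, hence $\alloc{\avariable}$, contradicting $\lnot \alloc{\avariable}$. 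Item~\ref{starAx:auxilary-4bis} instead uses~\ref{starAx:AllocNeg}: case-splitting on $\alloc{\avariable} \lor \lnot \alloc{\avariable}$ inside $\aformula$, the $\lnot \alloc{\avariable}$ subcase combines with the existing $\lnot \alloc{\avariable}$ on the right conjunct to yield $\lnot \alloc{\avariable} \separate \lnot \alloc{\avariable}$, whence $\lnot \alloc{\avariable}$, contradicting the hypothesis. Finally, item~\ref{starAx:auxilary-5} uses \ref{starAx:PointsNeg}: case-splitting $\avariable \Ipto \avariablebis \lor \lnot \avariable \Ipto \avariablebis$ inside the left conjunct, the negative subcase (which still carries $\alloc{\avariable}$) yields $(\alloc{\avariable} \land \lnot \avariable \Ipto \avariablebis) \separate \true$, hence $\lnot \avariable \Ipto \avariablebis$ by~\ref{starAx:PointsNeg}, contradicting the hypothesis.

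The main obstacle throughout is the bookkeeping inside $\separate$: since \slSA lacks a direct conjunction-introduction beneath $\separate$, every manipulation of the conjuncts of $\aformula \separate \aformulabis$ must be routed through \ref{rule:starinference}, possibly after using \ref{starAx:Commute} to swap sides and \ref{starAx:DistrOr} together with \ref{starAx:False} to propagate the case split past $\separate$. I expect items~\ref{starAx:auxilary-4bis} and~\ref{starAx:auxilary-5} to be the most delicate, as each juggles two layers of information (global $\alloc{\avariable}$ versus local $\lnot \alloc{\avariable}$ on the right; global $\avariable \Ipto \avariablebis$ versus local $\alloc{\avariable}$ on the left) simultaneously, but the conceptual content is in each case the split-and-refute pattern described above.
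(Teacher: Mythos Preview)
Your proposal is correct and follows essentially the same approach as the paper's proof: for each item the paper performs a propositional case split on the relevant core literal inside one conjunct, pushes it through~$\separate$ via \ref{rule:starinference} and \ref{starAx:DistrOr}, and then discharges the contradictory disjunct using exactly the axiom you name in each case (\ref{starAx:MonoCore} for items~\ref{starAx:auxilary-1} and~\ref{starAx:auxilary-6}, \ref{coreAx:EqSub} after~\ref{starAx:auxilary-1} for item~\ref{starAx:auxilary-2}, \ref{starAx:DoubleAlloc} for item~\ref{starAx:auxilary-3}, \ref{starAx:StarAlloc} for item~\ref{starAx:auxilary-4}, \ref{starAx:AllocNeg} for item~\ref{starAx:auxilary-4bis}, and \ref{starAx:PointsNeg} for item~\ref{starAx:auxilary-5}).
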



In~$\coresys(\separate)$, the axioms~\ref{coreAx:Size} and~\ref{coreAx:AllocSize}
of $\coresys$ are superfluous and 
can be removed. Indeed, notice that both axioms do not appear in the proof system~$\magicwandsys$  given in Figure~\ref{figure-full-proof-system}.

\begin{lem}\label{lemma:admissible-axioms-1}
The axioms~\ref{coreAx:Size} and~\ref{coreAx:AllocSize} are derivable in $\coresys(\separate)$.
\end{lem}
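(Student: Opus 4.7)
The plan is to derive each axiom by induction on the relevant parameter, exploiting the new axioms for $\separate$ together with the auxiliary results of Lemma~\ref{lemma:separate-auxiliary-stuff}. A key point is that in $\slSA$ the formula $\size \geq \inbound$ is no longer primitive but abbreviates an iteration of $\lnot \emp \separate \cdots$, which is what allows the inductive unfolding to be meaningful.

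For \ref{coreAx:Size}, I proceed by induction on $\inbound$. The case $\inbound = 0$ is trivial ($\size \geq 1 \implies \true$). For $\inbound = 1$, I need $\lnot \emp \separate \lnot \emp \implies \lnot \emp$: from $\lnot \emp \implies \true$ and \ref{rule:starinference} I get $\lnot \emp \separate \lnot \emp \implies \lnot \emp \separate \true$, and then \ref{starAx:MonoCore} (instance $\aelement = \lnot \emp$) closes the case. For $\inbound \geq 2$, unfolding the definition gives $\size \geq \inbound{+}1 \egdef \lnot \emp \separate \size \geq \inbound$ and $\size \geq \inbound \egdef \lnot \emp \separate \size \geq \inbound{-}1$; the inductive hypothesis combined with \ref{rule:starinference} (and \ref{starAx:Commute} to apply the rule on the right operand) then yields the result.

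For \ref{coreAx:AllocSize}, I proceed by induction on $n = \card{\asetvar}$. The cases $n = 0, 1$ reduce either to a triviality or to $\alloc{\avariable} \implies \lnot \emp$, which follows from \ref{starAx:AllocSizeOne} and \ref{starAx:MonoCore}; the case $n = 2$ is \ref{starAx:SizeTwo} itself. For $n \geq 3$, write $\asetvar = \{\avariable_1, \ldots, \avariable_n\}$ and let $H$ be the antecedent of \ref{coreAx:AllocSize}. Applying \ref{starAx:AllocSizeOne} to $\avariable_n$, I obtain $H \implies (\alloc{\avariable}_n \land \size = 1) \separate \true$. The goal is then to push every remaining inequality $\avariable_i \neq \avariable_j$ (with $i, j < n$) and every remaining allocation $\alloc{\avariable}_i$ (with $i < n$) from the outer context into the right operand of $\separate$, yielding $H \implies (\alloc{\avariable}_n \land \size = 1) \separate H'$, where $H'$ is the antecedent of \ref{coreAx:AllocSize} for $\asetvar \setminus \{\avariable_n\}$. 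The inductive hypothesis provides $H' \implies \size \geq n{-}1$, so \ref{rule:starinference} gives $H \implies (\alloc{\avariable}_n \land \size = 1) \separate \size \geq n{-}1$; since $\alloc{\avariable}_n \land \size = 1 \implies \lnot \emp$, one more application of \ref{rule:starinference} yields $H \implies \lnot \emp \separate \size \geq n{-}1 \equiv \size \geq n$.

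The main obstacle is this pushing step: inequalities are handled directly by \ref{starAx:auxilary-1}, but each positive allocation $\alloc{\avariable}_i$ (with $i < n$) requires more care. The trick is to first strengthen the left operand by $\lnot \alloc{\avariable}_i$: indeed, from \ref{starAx:SizeTwo} and the definition $\size = 1 \egdef \size \geq 1 \land \lnot \size \geq 2$ one derives $\alloc{\avariable}_n \land \size = 1 \land \avariable_n \neq \avariable_i \implies \lnot \alloc{\avariable}_i$, and \ref{rule:starinference} (together with \ref{starAx:auxilary-1} to bring the inequality inside) transports this refinement into the left operand. After commuting via \ref{starAx:Commute} so that $\lnot \alloc{\avariable}_i$ is on the right, \ref{starAx:auxilary-4bis} pulls $\alloc{\avariable}_i$ from the outer context into what is now the left operand; a final commutation places $\alloc{\avariable}_i$ on the right operand where it is needed. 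Iterating this construction for each $i < n$ reconstructs $H'$ inside the right operand of $\separate$, completing the induction.
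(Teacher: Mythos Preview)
Your proof is correct. For~\ref{coreAx:Size} it coincides with the paper's argument (induction on~$\inbound$, unfolding the definition of~$\size \geq \inbound$ and using~\ref{rule:starinference}); your explicit treatment of the case~$\inbound = 1$ via~\ref{starAx:MonoCore} is just a cleaner handling of what the paper sweeps under ``def.~of~$\size$''.

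For~\ref{coreAx:AllocSize} the route differs. The paper first establishes an intermediate theorem~\ref{starAx:DualView},
\[
\textstyle\bigwedge_{\avariable \in \asetvar}\bigl(\alloc{\avariable} \land \bigwedge_{\avariablebis \in \asetvar \setminus \{\avariable\}} \avariable \neq \avariablebis\bigr) \implies \bigl(\bigseparate_{\avariable \in \asetvar} (\alloc{\avariable} \land \size = 1)\bigr) \separate \true,
\]
by induction on~$|\asetvar|$: after splitting off one cell via~\ref{starAx:AllocSizeOne}, it case-splits with~\ref{starAx:DistrOr} on whether the inductive premise holds in the residual heap and derives a contradiction in each bad disjunct (using~\ref{starAx:SizeTwo} when some~$\alloc{\avariablefour}$ fails and~\ref{starAx:MonoCore} when some inequality fails). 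The axiom~\ref{coreAx:AllocSize} is then a direct corollary. Your argument instead avoids the case-split and the intermediate $n$-fold separation: you push the literals of~$H'$ one by one into the right operand using the transport lemmas~\ref{starAx:auxilary-1} and~\ref{starAx:auxilary-4bis}, having first manufactured~$\lnot \alloc{\avariable_i}$ on the left via~\ref{starAx:SizeTwo}. This is more elementary and closer to the later ``add the missing literals'' technique in the proof of Lemma~\ref{lemma:starPSLelim-sat}. The cost is that you do not obtain~\ref{starAx:DualView} as a by-product, and the paper \emph{does} reuse~\ref{starAx:DualView} in that same lemma (Step~1, where it derives $\boxstar{\aformula}{\aformulabis} \implies \ALLOC \separate \top$); so while your derivation of~\ref{coreAx:AllocSize} stands on its own, the paper's detour is not gratuitous.
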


\begin{proof}[Derivability of~{\rm\ref{coreAx:Size}}.]
The proof is by induction on $\inbound$.
\begin{description}
    \item[base case: $\inbound = 0$] 
    The instance of the axiom$~\ref{coreAx:Size}$ with $\inbound = 0$ amounts to
    derive the formula ${\size \geq 1 \implies \size \geq 0}$. By definition $\size \geq 1 = \neg \emp$
    and $\size \geq 0 = \top$, and therefore, by propositional reasoning,
    $\vdash_{\coresys(\separate)} \size \geq 1 \implies \size \geq 0$.
    \item[induction step: $\inbound > 0$]
        By induction hypothesis, assume $\vdash_{\coresys(\separate)} \size \geq \inbound \implies \size \geq \inbound-1$.
        The formula $\size \geq \inbound+1 \implies \size \geq \inbound$ is derived as follows:
        \begin{syntproof}
            1 & \size \geq \inbound \implies \size \geq \inbound-1
            & \mbox{Induction hypothesis} \\
            2 &  (\size \geq \inbound) \separate \lnot \emp  \implies (\size \geq \inbound-1) \separate \lnot \emp
            & \mbox{\ref{rule:starinference}, 1}
            \\
            3 &  \size \geq \inbound +1   \implies \size \geq \inbound
            & \mbox{2, def.~of $\size$}\hfill\qedhere
        \end{syntproof}
\end{description}
\end{proof}

\noindent
Before proving the validity of~\ref{coreAx:AllocSize}, we derive the intermediate theorem below.
Let $\asetvar \subseteq_{\fin} \PVAR$.
\begin{enumerate}[align=left, leftmargin=*, topsep=5pt]
    \item[\quad\indexedlab{I^{\magicwand}}{\ref{lemma:admissible-axioms-1}}{starAx:DualView}] \
    $\bigwedge_{\avariable \in \asetvar}(\alloc{\avariable} \land
 \bigwedge_{\avariablebis \in \asetvar \setminus \{\avariable\}} \avariable \neq \avariablebis) \implies (\bigseparate_{\avariable \in \asetvar} (\alloc{\avariable} \land \size = 1)) \separate \true$.
\end{enumerate}

\begin{proof}[Derivability of~{\rm\ref{starAx:DualView}}.]
    The proof is by induction on the size of $\asetvar$. 
    We distinguish two base cases, 
    for $\card{\asetvar} = 1$ and $\card{\asetvar} = 0$.
    \begin{description}
        \item[base case: $\card{\asetvar} = 1$]
            In this case,~\ref{starAx:DualView} is exactly \ref{starAx:AllocSizeOne}.
        \item[base case: $\card{\asetvar} = 0$]
            In this case, \ref{starAx:DualView} is $\true \implies \true \separate \true$.
            \begin{syntproof}
                1   & \emp \implies \true
                    & \mbox{PC} \\
                2   & \true \implies \true \separate \emp 
                    & \mbox{\ref{starAx:Emp}} \\
                3   & \true \separate \emp \implies \emp \separate \true 
                    & \mbox{\ref{starAx:Commute}}\\
                4   & \emp \separate \true \implies \true \separate \true 
                    & \mbox{\ref{rule:starinference}, 1}\\
                5   & \true \implies \true \separate \true 
                    & \mbox{\ref{rule:imptr}, 2, 3, 4}
            \end{syntproof}
        \item[induction step: $\card{\asetvar} \geq 2$]
            Let $\avariableter \in \asetvar$. By induction hypothesis,
            $$
                \textstyle\vdash_{\coresys(\separate)} \bigwedge_{\avariablefour \in \asetvar \setminus \{\avariableter\}}(\alloc{\avariablefour} \land
                \bigwedge_{\avariablefifth \in \asetvar \setminus \{\avariablefour,\avariableter\}} \avariablefour \neq \avariablefifth) \implies (\bigseparate_{\avariablefour \in \asetvar \setminus \{\avariableter\}} (\alloc{\avariablefour} \land \size = 1)) \separate \true.
            $$
            We write $\aformulater$ for the premise $\bigwedge_{\avariablefour \in \asetvar \setminus \{\avariableter\}}(\alloc{\avariablefour} \land
            \bigwedge_{\avariablefifth \in \asetvar \setminus \{\avariablefour,\avariableter\}} \avariablefour \neq \avariablefifth)$ above.
            Below, we aim for a proof of
            $$
                \textstyle\vdash_{\coresys(\separate)} 
                \bigwedge_{\avariable \in \asetvar}(\alloc{\avariable} \land
                \bigwedge_{\avariablebis \in \asetvar \setminus \{\avariable\}} \avariable \neq \avariablebis)
                \implies (\alloc{\avariableter} \land \size = 1) \separate \aformulater.
            $$
            In this way, the provability of~\ref{starAx:DualView}
            follows directly by induction hypothesis together with~\ref{starAx:Commute} 
            and~\ref{rule:starinference}. We have 
            \begin{syntproof}
                1 & \bigwedge_{\avariable \in \asetvar}(\alloc{\avariable} \land
                \bigwedge_{\avariablebis \in \asetvar \setminus \{\avariable\}} \avariable \neq \avariablebis) \implies (\alloc{\avariableter} \land \size = 1) \separate \true 
                & \mbox{\ref{starAx:AllocSizeOne} and PC}\\
                2 & \true \implies \aformulater \lor \lnot \aformulater 
                & \mbox{PC}\\
                3 & (\alloc{\avariableter} \land \size = 1) \separate \true 
                \implies (\alloc{\avariableter} \land \size = 1) \separate (\aformulater \lor \lnot \aformulater)
                & \mbox{\ref{rule:starinference},~\ref{starAx:Commute}, 2}\\
                4 & (\alloc{\avariableter} \land \size = 1) \separate (\aformulater \lor \lnot \aformulater) \implies\\[-2pt]
                &\quad 
                    ((\alloc{\avariableter} \land \size = 1) \separate \aformulater)
                    \lor 
                    ((\alloc{\avariableter} \land \size = 1) \separate \lnot \aformulater)
                & \mbox{\ref{starAx:Commute} and \ref{starAx:DistrOr}}\\
                5 & \bigwedge_{\avariable \in \asetvar}(\alloc{\avariable} \land
                \bigwedge_{\avariablebis \in \asetvar \setminus \{\avariable\}} \avariable \neq \avariablebis) \implies\\[-2pt]
                &\quad 
                    ((\alloc{\avariableter} \land \size = 1) \separate \aformulater)
                    \lor 
                    ((\alloc{\avariableter} \land \size = 1) \separate \lnot \aformulater)
                & \mbox{\ref{rule:imptr} 1, 3, 4}
            \end{syntproof}
            \noindent By propositional reasoning, $\lnot \aformulater$ is equivalent to 
            $\bigvee_{\avariablefour \in \asetvar \setminus \{\avariableter\}}(\lnot\alloc{\avariablefour} \lor
            \bigvee_{\avariablefifth \in \asetvar \setminus \{\avariablefour,\avariableter\}} \avariablefour = \avariablefifth)$.
            Due to the complexity of this formula, we proceed now rather informally, but our arguments entail the existence of a proper derivation.
            We aim at showing that
            \begin{equation}
                \vdash_{\coresys(\separate)} \bigwedge_{\avariable \in \asetvar}(\alloc{\avariable} \land
                \bigwedge_{\avariablebis \in \asetvar \setminus \{\avariable\}} \avariable \neq \avariablebis) \land ((\alloc{\avariableter} \land \size = 1) \separate \lnot \aformulater) \implies \bottom.
            \tag{$\dagger$}\label{parlabel-dagger}
            \end{equation}
            By propositional calculus and~\ref{starAx:DistrOr}, we can distribute conjunctions and separating conjunctions over disjunctions. 
            We derive:
            $$
                \vdash_{\coresys(\separate)} {\bigwedge_{\avariable \in \asetvar}(\alloc{\avariable} \land
                \bigwedge_{\avariablebis \in \asetvar \setminus \{\avariable\}} \avariable \neq \avariablebis)} \land ((\alloc{\avariableter} \land \size = 1) \separate \lnot \aformulater) \implies \aformulafour' \lor \aformulafour'',
            $$
             where $\aformulafour'$ and $\aformulafour''$ are defined, respectively, as%
                $$   
                \begin{aligned}
                    &\qquad
                    \textstyle\bigvee_{\avariablefour \in \asetvar \setminus \{\avariableter\}}
                    \!\left( 
                    \textstyle\bigwedge_{\avariable \in \asetvar}(\alloc{\avariable} \land
                    \bigwedge_{\avariablebis \in \asetvar \setminus \{\avariable\}} \avariable \neq \avariablebis) 
                     \,{\land}\,
                    ((\alloc{\avariableter} \land \size = 1)\,{\separate}\,\lnot \alloc{\avariablefour})\!\right),\\
                    &\qquad \textstyle\bigvee_{\substack{\avariablefour \in \asetvar \setminus \{\avariableter\}\\ \avariablefifth \in \asetvar \setminus \{\avariableter,\avariablefour\}}}
                    \!\Big(  
                    \textstyle\bigwedge_{\avariable \in \asetvar}(\alloc{\avariable} \land
                    \bigwedge_{\avariablebis \in \asetvar \setminus \{\avariable\}} \avariable \neq \avariablebis) 
                    \land 
                    ((\alloc{\avariableter} \land \size = 1) \separate \avariablefour = \avariablefifth)
                    \Big).
                    & 
                \end{aligned}
                $$
            In order to deduce~(\ref{parlabel-dagger}) it is sufficient to prove, in~$\coresys(\separate)$, 
            that every disjunct
            of $\aformulafour'$ and $\aformulafour''$ implies $\false$. 
            Clearly, if $\aformulafour'$ and $\aformulafour''$ do not have any disjunct, 
            i.e. when~$\asetvar \setminus \{\avariableter\}$ is empty,
            then the formula is propositionally equivalent to $\false$, which allows us to conclude~(\ref{parlabel-dagger}).
            Otherwise, let us consider each disjunct in $\aformulafour'$ and $\aformulafour''$ (separately), and prove their inconsistency.
            \begin{description}
                \item[case: $\aformulafour'$]
                    Let $\avariablefour \in \asetvar \setminus \{\avariableter\}$.
                    We show the inconsistency of
                    $$
                        \overline{\aformulafour} \egdef \textstyle\bigwedge_{\avariable \in \asetvar}(\alloc{\avariable} \land
                        \bigwedge_{\avariablebis \in \asetvar \setminus \{\avariable\}} \avariable \neq \avariablebis) 
                        \,{\land}\,
                        ((\alloc{\avariableter} \land \size = 1)\,{\separate}\,\lnot \alloc{\avariablefour}).
                    $$

                    \vspace{-10pt}

                    \begin{syntproof}
                        6   & \textstyle\bigwedge_{\avariable \in \asetvar}(\alloc{\avariable} \land
                        \bigwedge_{\avariablebis \in \asetvar \setminus \{\avariable\}} \avariable \neq \avariablebis) \implies \alloc{\avariablefour} \land \avariablefour \neq \avariableter 
                            & \mbox{PC}\\
                        7   & \overline{\aformulafour} \implies \alloc{\avariablefour} \land \avariablefour \neq \avariableter \land 
                        ((\alloc{\avariableter} \land \size = 1)\,{\separate}\,\lnot \alloc{\avariablefour}) 
                            & \mbox{PC}\\
                        8   & \alloc{\avariablefour} \land 
                        ((\alloc{\avariableter} \land \size = 1)\,{\separate}\,\lnot \alloc{\avariablefour}) 
                        \implies\\[-2pt]
                            &\quad ((\alloc{\avariableter} \land \size = 1 \land \alloc{\avariablefour})\,{\separate}\,\lnot \alloc{\avariablefour})
                            & \mbox{\ref{starAx:auxilary-4bis}}\\
                        9   & \avariablefour \neq \avariableter \land ((\alloc{\avariableter} \land \size = 1 \land \alloc{\avariablefour})\,{\separate}\,\lnot \alloc{\avariablefour}) \implies\\[-2pt]
                        &\quad ((\alloc{\avariableter} \land \size = 1 \land \alloc{\avariablefour} \land \avariablefour \neq \avariableter)\,{\separate}\,\lnot \alloc{\avariablefour})
                            &\mbox{\ref{starAx:auxilary-1}}\\
                        10  & \alloc{\avariableter} \land \alloc{\avariablefour} \land \avariablefour \neq \avariableter \implies \size \geq 2 
                            & \mbox{\ref{starAx:SizeTwo}}\\
                        11  & \size = 1 \implies \lnot \size \geq 2
                            & \mbox{PC}\\
                        12  & \alloc{\avariableter} \land \size = 1 \land \alloc{\avariablefour} \land \avariablefour \neq \avariableter \implies \bottom 
                            & \mbox{\ref{rule:imptr}, PC, 10, 11}\\
                        13  & \overline{\aformulafour} \implies (\alloc{\avariableter} \land \size = 1 \land \alloc{\avariablefour} \land \avariablefour \neq \avariableter) \separate \lnot \alloc{\avariablefour}
                            & \mbox{PC, 7, 8, 9}\\
                        14  & (\alloc{\avariableter} \land \size = 1 \land \alloc{\avariablefour} \land \avariablefour \neq \avariableter) \separate \lnot \alloc{\avariablefour} \implies {\false\! \separate \lnot \alloc{\avariablefour}}
                            & \mbox{\ref{rule:starinference}, 12}\\
                        15  & {\false \separate \lnot \alloc{\avariablefour}} \implies \false 
                            & \mbox{\ref{starAx:False}, 14}\\
                        16  & \overline{\aformulafour} \implies \false 
                            & \mbox{PC, 13, 15}
                    \end{syntproof}
                    \noindent Since $\overline{\aformulafour}$ is an arbitrary disjunct appearing in $\aformulafour'$, we conclude that $\vdash_{\coresys(\separate)} \aformulafour' \implies \false$.
                    \item[case: $\aformulafour''$] Let $\avariablefour \in \asetvar \setminus \{\avariableter\}$ and $\avariablefifth \in \asetvar \setminus \{\avariableter,\avariablefour\}$.
                    Notice that if $\avariablefour$ or $\avariablefifth$ do not exist, then $\aformulafour''$ is defined as $\false$ and so the proof is complete.
                    Otherwise, we show the inconsistency of 
                        $$
                            \widehat{\aformulafour} \egdef 
                            \textstyle\bigwedge_{\avariable \in \asetvar}(\alloc{\avariable} \land
                            \bigwedge_{\avariablebis \in \asetvar \setminus \{\avariable\}} \avariable \neq \avariablebis) 
                            \land 
                            ((\alloc{\avariableter} \land \size = 1) \separate \avariablefour = \avariablefifth).
                        $$
                    \begin{syntproof}
                        17   & 
                        \textstyle\bigwedge_{\avariable \in \asetvar}(\alloc{\avariable} \land
                        \bigwedge_{\avariablebis \in \asetvar \setminus \{\avariable\}} \avariable \neq \avariablebis)  \implies \avariablefour \neq \avariablefifth 
                            & \mbox{PC}\\
                        18   & \alloc{\avariableter} \land \size = 1 \implies \true 
                            & \mbox{PC}\\
                        19   & (\alloc{\avariableter} \land \size = 1) \separate \avariablefour = \avariablefifth \implies \avariablefour = \avariablefifth \separate \true
                            & \mbox{\ref{rule:starinference}, 18, \ref{starAx:Commute}}\\
                        20   & \avariablefour = \avariablefifth \separate \true \implies \avariablefour = \avariablefifth 
                            & \mbox{\ref{starAx:MonoCore}}\\
                        21   & ((\alloc{\avariableter} \land \size = 1) \separate \avariablefour = \avariablefifth) \implies \avariablefour = \avariablefifth   
                            & \mbox{\ref{rule:imptr}, 19, 20}\\
                        22   & \widehat{\aformulafour} \implies \false 
                            & \mbox{PC, 17, 21}
                    \end{syntproof}
                    \noindent Since $\widehat{\aformulafour}$ is an arbitrary disjunct appearing in $\aformulafour''$, we conclude that $\vdash_{\coresys(\separate)} \aformulafour'' \implies \false$.
            \end{description}
            From $\vdash_{\coresys(\separate)} \aformulafour' \implies \false$ and $\vdash_{\coresys(\separate)} \aformulafour'' \implies \false$ we conclude that 
            (\ref{parlabel-dagger}) holds.
            From the theorem 5 derived in this proof, this allows us to conclude that 
                $$
                    \textstyle\vdash_{\coresys(\separate)} 
                    \bigwedge_{\avariable \in \asetvar}(\alloc{\avariable} \land
                    \bigwedge_{\avariablebis \in \asetvar \setminus \{\avariable\}} \avariable \neq \avariablebis)
                    \implies (\alloc{\avariableter} \land \size = 1) \separate \aformulater,
                $$
            which concludes the proof, as explained at the beginning of the induction step. 
            \qedhere
    \end{description}
\end{proof}

\noindent We complete the proof of Lemma~\ref{lemma:admissible-axioms-1} by showing a derivation of~\ref{coreAx:AllocSize}.

\begin{proof}[Derivability of~{\rm\ref{coreAx:AllocSize}}.]
    Let $\asetvar \subseteq_{\fin} \PVAR$.
    If $\asetvar=\emptyset$, then the instance of the axiom~\ref{coreAx:AllocSize} becomes $\top\implies \size\geq 0$,
    which, by definition of~$\size \geq 0$,
    is syntactically equivalent to~$\top\implies \top$ and hence valid by propositional reasoning.
    Below, assume $\asetvar \neq \emptyset$ and fix $\avariableter \in \asetvar$.
    \begin{syntproof}
        1  & \bigwedge_{\avariable \in \asetvar}(\alloc{\avariable} \land
        \bigwedge_{\avariablebis \in \asetvar \setminus \{\avariable\}} \avariable \neq \avariablebis) \implies\\
            & \quad (\bigseparate_{\avariable \in \asetvar} (\alloc{\avariable} \land \size = 1)) \separate \true
            & \mbox{\ref{starAx:DualView}}\\
        2   & \alloc{\avariable} \land \size = 1 \implies \size \geq 1
            & \mbox{PC, def. of $\size = 1$}\\
        3   & (\bigseparate_{\avariable \in \asetvar} (\alloc{\avariable} \land \size = 1)) \separate \true
            \implies (\bigseparate_{\avariable \in \asetvar} \size \geq 1) \separate \true
            & \mbox{multiple applications of}\\[-3pt] 
            && \mbox{\ref{rule:starinference}, 2, \ref{starAx:Commute} and \ref{rule:imptr}}\\
        4   & (\bigseparate_{\avariable \in \asetvar} \size \geq 1) \separate \true
            \implies 
            (\size \geq 1 \separate \true) \separate (\bigseparate_{\avariable \in \asetvar \setminus \{\avariableter\}} \size \geq 1) 
            & \mbox{\ref{starAx:Commute}, \ref{starAx:Assoc}, def.~of~$\avariableter$}\\
        5   & \size \geq 1 \separate \true \implies \size \geq 1 
            & \mbox{\ref{starAx:MonoCore}, def. of~$\size \geq 1$}\\
        6   & (\size \geq 1 \separate \true) \separate (\bigseparate_{\avariable \in \asetvar \setminus \{\avariableter\}} \size \geq 1) \implies (\bigseparate_{\avariable \in \asetvar } \size \geq 1)
            & \mbox{\ref{rule:starinference}}\\
        7   & (\bigseparate_{\avariable \in \asetvar} \size \geq 1) \implies 
            \size \geq \card{\asetvar}
            & \mbox{\ref{starAx:Assoc}, def.~of~$\size \geq \card{\asetvar}$}\\
        8   & \bigwedge_{\avariable \in \asetvar}(\alloc{\avariable} \land
        \bigwedge_{\avariablebis \in \asetvar \setminus \{\avariable\}} \avariable \neq \avariablebis) \implies \size \geq \card{\asetvar}
            & \mbox{\ref{rule:imptr}, 1, 3, 4, 6, 7}\hfill\qedhere
    \end{syntproof}
\end{proof}

\begin{figure}
    \scalebox{1.2}{
      $
      \begin{aligned}
      &\bigwedge \formulasubset{\avariable \sim \avariablebis \inside \orliterals{\aformula}{\aformulabis}}{\bmat[\sim \in \{=,\neq\}]}
      & \land &
      \bigwedge \aformulasubset{\alloc{\avariable}\inside \orliterals{\aformula}{\aformulabis}}
      \\ \land &
      \bigwedge \aformulasubset{\lnot\alloc{\avariable} \inside\andliterals{\aformula}{\aformulabis}}
      & \land &
      \bigwedge \formulasubset{\lnot \avariable \Ipto \avariablebis}{\bmat[\alloc{\avariable}\land\lnot\avariable\Ipto\avariablebis\inside \orliterals{\aformula}{\aformulabis}]}
      \\ \land &
      \bigwedge \formulasubset{\avariable \neq \avariable}{\bmat[\alloc{\avariable}\inside\andliterals{\aformula}{\aformulabis}]}
      & \land & \bigwedge \formulasubset{\size\geq\inbound_1{+}\inbound_2}{\bmat[\size\geq\inbound_1\inside\aformula\\\size\geq\inbound_2\inside\aformulabis]}
      \\ \land &
      \bigwedge \aformulasubset{\avariable \Ipto \avariablebis \inside \orliterals{\aformula}{\aformulabis}}
      &\land &
      \bigwedge \formulasubset{\lnot\size\geq\inbound_1{+}\inbound_2{\dotminus}1}{\bmat[\lnot\size\geq\inbound_1\inside\aformula\\\lnot\size\geq\inbound_2\inside\aformulabis]}
      \end{aligned}
      $
    }

    \caption{The formula $\boxstar{\aformula}{\aformulabis}$.}
    \label{figure:boxstar}
    \end{figure}

From now on, we understand $\coresys(\separate)$ as the proof system obtained from $\coresys$ by adding all schemata from
Figure~\ref{figure-proof-system-star} but by removing~\ref{coreAx:Size} and~\ref{coreAx:AllocSize}.
We show that $\starsys$ enjoys the $\separate$ elimination 
property when the argument formulae are core types. That is, given two satisfiable core types $\aformula$ and $\aformulabis$, in $\coretype{\asetvar}{\bound}$, we show that the formula $\aformula \separate \aformulabis$ is provably equivalent to
the formula
$\boxstar{\aformula}{\aformulabis}$ in $\conjcomb{\coreformulae{\asetvar}{2\bound}}$, defined in~Figure~\ref{figure:boxstar}.

\begin{lem}\label{lemma:starPSLelim-sat}
 Let $\asetvar \subseteq_{\fin} \PVAR$ and $\bound \geq \card{\asetvar}$.
 If $\aformula$ and  $\aformulabis$  are two satisfiable core types in 
 $\coretype{\asetvar}{\bound}$, then
$\prove_{\starsys} \aformula \separate \aformulabis \iff \boxstar{\aformula}{\aformulabis}$.
\end{lem}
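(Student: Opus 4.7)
The plan is to establish the provable equivalence by deriving the two implications separately, with very different strategies for the two directions.

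For the forward direction $\aformula \separate \aformulabis \implies \boxstar{\aformula}{\aformulabis}$, the proof is routine: I would walk through the six families of conjuncts in $\boxstar{\aformula}{\aformulabis}$ one by one, and in each case exhibit a single axiom that yields the conjunct from $\aformula \separate \aformulabis$. Concretely, pure literals of the form $\avariable \sim \avariablebis$, $\avariable \Ipto \avariablebis$, or $\lnot \emp$ present in $\aformula$ or $\aformulabis$ propagate across $\separate$ by \ref{starAx:MonoCore} (after using \ref{starAx:Commute} if needed and weakening the other side to $\true$). Positive $\alloc{\avariable}$ on one side uses \ref{starAx:StarAlloc}; negated $\alloc{\avariable}$ present on \emph{both} sides uses \ref{starAx:AllocNeg}; negated points-to follows from \ref{starAx:PointsNeg}; and the $\avariable \neq \avariable$ conjunct triggered by double allocation comes from \ref{starAx:DoubleAlloc}. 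Additive positive size follows from the definition of $\size \geq k$ as iterated $\separate$ together with \ref{starAx:Assoc} and \ref{rule:starinference}, while additive negative size is exactly \ref{starAx:SizeNeg}.

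The reverse direction $\boxstar{\aformula}{\aformulabis} \implies \aformula \separate \aformulabis$ is where the real work lies, and I expect this to be the main obstacle because it requires \emph{constructing} a split rather than inspecting one. My plan is to produce a skeleton splitting from $\boxstar{\aformula}{\aformulabis}$ and then progressively attach the remaining literals of $\aformula$ and $\aformulabis$ to the correct side. The skeleton comes from a cell-by-cell construction: using \ref{starAx:DualView} (together with \ref{starAx:AllocSizeOne}) one can peel off, from the heap, $\card{A_\aformula}$ singleton cells for variables allocated in $\aformula$ and $\card{A_\aformulabis}$ singleton cells for variables allocated in $\aformulabis$, while ``packing'' the remaining domain into a $\true$-slot on one side. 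This yields a formula of the shape $\widehat{\aformula}_0 \separate \widehat{\aformulabis}_0$ where $\widehat{\aformula}_0$ already contains the right allocated cells for $\aformula$ and analogously for $\aformulabis$.

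Once the skeleton is in place, I would use the auxiliary derivations of Lemma~\ref{lemma:separate-auxiliary-stuff} to pour in the remaining literals: pure equalities/disequalities via \ref{starAx:auxilary-1}, allocation flags via \ref{starAx:auxilary-2} and \ref{starAx:auxilary-4}--\ref{starAx:auxilary-4bis}, and points-to literals (positive and negative) via \ref{starAx:auxilary-5} and \ref{starAx:auxilary-6}. Because $\boxstar{\aformula}{\aformulabis}$ contains every equality, disequality, allocation, and points-to literal that appears in $\aformula$ or $\aformulabis$, each such literal is available on the left of these rules and can be inserted into the intended side of the $\separate$. The last and most delicate piece is to discharge the size constraints: I would split on whether $\maxsize{\aformula} + \maxsize{\aformulabis}$ already saturates the available heap, derive the positive size literals of each side from the additive size conjunct of $\boxstar{\aformula}{\aformulabis}$ (together with the intermediate size axioms made derivable in Lemma~\ref{lemma:admissible-axioms-1}), and derive the negative size literals from \ref{starAx:SizeNeg} read contrapositively. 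The bookkeeping of these size literals, and the need to keep $\aformula$ and $\aformulabis$ satisfiable along the construction (which is where the hypothesis of satisfiability of both core types is used), is what I expect to be the fiddliest part of the argument.
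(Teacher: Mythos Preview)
Your forward direction matches the paper's exactly: a case analysis on the literals of $\boxstar{\aformula}{\aformulabis}$, dispatching each with the axiom you name.

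Your reverse direction also follows the paper's overall architecture (skeleton via \ref{starAx:DualView}, then literal-by-literal enrichment via Lemma~\ref{lemma:separate-auxiliary-stuff}), but there is one point where your ordering would not work as written. You propose to put all the garbage ``into a $\true$-slot on one side'' and handle the size constraints last. That is a dead end: once the split $\widehat{\aformula}_0 \separate \widehat{\aformulabis}_0$ is fixed, there is no auxiliary lemma that moves heap mass across the $\separate$; the lemmas in Lemma~\ref{lemma:separate-auxiliary-stuff} only attach literals that are already true of the existing subheap. If all garbage sits on (say) the $\aformulabis$-side, the $\aformula$-side has size exactly $n_\aformula$, which in general is strictly below $\maxsize{\aformula}$, and you can never recover $\size \geq \maxsize{\aformula}$ there.

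The paper fixes this by making size distribution part of the \emph{skeleton}, not an afterthought. Concretely, it first derives $\boxstar{\aformula}{\aformulabis} \implies \ALLOC \separate \GARB$ (your ``peel off cells'' step, with the garbage kept as a separate factor of controlled size), then splits $\GARB$ itself into $\GARB(\aformula) \separate \GARB(\aformulabis)$ of sizes $\maxsize{\aformula}-n_\aformula$ and $\maxsize{\aformulabis}-n_\aformulabis$, using the derivable theorems $\size \geq \inbound_1+\inbound_2 \implies \size = \inbound_1 \separate \size \geq \inbound_2$ and $\size = \inbound_1+\inbound_2 \implies \size = \inbound_1 \separate \size = \inbound_2$. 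Only after reassembling this into $(\ALLOC(\aformula) \separate \GARB(\aformula)) \separate (\ALLOC(\aformulabis) \separate \GARB(\aformulabis))$, where each side already carries the correct extremal $\size$-literals, does one start inserting the remaining literals via Lemma~\ref{lemma:separate-auxiliary-stuff}. At that stage, the non-extremal size literals of $\aformula$ and $\aformulabis$ follow purely by \ref{coreAx:Size} from the extremal ones already present---not from \ref{starAx:SizeNeg} read contrapositively, which does not yield per-side information.
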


\noindent The equivalence $\aformula \separate \aformulabis \iff \boxstar{\aformula}{\aformulabis}$ 
is reminiscent to the one in~\cite[Lemma 3]{Echenim&Iosif&Peltier19} that is proved semantically.
In a way, because $\starsys$ will reveal to be complete, the restriction of
\cite[Lemma 3]{Echenim&Iosif&Peltier19} to \slSA can be
replayed completely syntactically within $\starsys{}$.

\begin{proof}[Structure of the proof of~Lemma~\ref{lemma:starPSLelim-sat}]
\renewcommand{\qedsymbol}{}
Before presenting the technical developments, let us explain the structure of the whole proof of~Lemma~\ref{lemma:starPSLelim-sat}, which might help
to follow the different steps. In order to show that 
$\prove_{\starsys} \aformula \separate \aformulabis \iff \boxstar{\aformula}{\aformulabis}$, 
we start showing that $\prove_{\starsys}\aformula \separate \aformulabis\implies \boxstar{\aformula}{\aformulabis}$.
This can be done rather mechanically since for every literal 
$\aliteral$ of $\boxstar{\aformula}{\aformulabis}$, one can construct a derivation for 
$\prove_{\starsys}\aformula \separate \aformulabis\implies \aliteral$.
The main difficulty in the proof rests on showing that 
$\prove_{\starsys}\boxstar{\aformula}{\aformulabis} \implies \aformula \separate \aformulabis$.
To do so, we build a sequence of 
formulae $\aformula^{(1)}\!\separate \aformulabis^{(1)}$,
   $\aformula^{(2)} \!\separate \aformulabis^{(2)}$, $\dots$, 
$\aformula^{(k)} \!\separate \aformulabis^{(k)}$ satisfying the following conditions:
\begin{itemize}
\item  for all $i \in \interval{1}{k}$,
  $\prove_{\starsys} \boxstar{\aformula}{\aformulabis} \implies \aformula^{(i)} \separate \aformulabis^{(i)}$, 
  the formulae~$\aformula^{(i)}$ and $\aformulabis^{(i)}$ are conjunctions of core formulae, and
\item for all $j \in \interval{1}{i}$, $\aformula^{(j)} \inside \aformula^{(i)}$ and 
      $\aformulabis^{(j)} \inside \aformulabis^{(i)}$.
\item $\aformula = \aformula^{(k)}$ and $\aformulabis = \aformulabis^{(k)}$
   (modulo associativity/commutativity of the classical conjunction).
\end{itemize}
In order to build $\aformula^{i+1}$ (resp.  $\aformulabis^{i+1}$), we identify a literal $\aliteral$
in $\aformula$ (resp. in $\aformulabis$)  that does not occur yet in $\aformula^{i}$
and we show that $\prove_{\starsys}  \boxstar{\aformula}{\aformulabis} \implies
\aformula^{(i+1)} \separate \aformulabis^{(i+1)}$ with $\aformula^{i+1} \egdef \aformula^{(i)} \wedge \aliteral$
(resp. $\aformulabis^{(i+1)} \egdef \aformulabis^{(i)} \wedge \aliteral$)
and $\aformulabis^{(i+1)} \egdef \aformulabis^{(i)}$ (resp. $\aformula^{(i+1)} \egdef \aformula^{(i)}$).
The case analysis on the shape of the literal $\aliteral$ is rather mechanical but it remains to specify
how the first formulae $\aformula^{(1)}$ and $\aformulabis^{(1)}$ are designed. 
In short, $\aformula^{(1)}$ (resp. $\aformulabis^{(1)}$) is dedicated to the 
part of $\aformula$  (resp. $\aformulabis$) 
related to the size of the heap domain and to the allocated variables. Details
will follow.

To construct these above-mentioned derivations, some additional derivations are instrumental
in particular to establish that the formulae below are derivable in $\starsys$:
$$
\size \geq \inbound_1 + \inbound_2 \implies \size = \inbound_1 \separate \size \geq \inbound_2,
\ \ \ \ \ \ 
 \size = \inbound_1 + \inbound_2 \implies \size = \inbound_1 \separate \size = \inbound_2.
$$
Such derivations can be found in Appendix~\ref{appendix-derivation-one}. 
We now develop the proof of~Lemma~\ref{lemma:starPSLelim-sat}.
\end{proof}

\begin{proof}[Proof of~Lemma~\ref{lemma:starPSLelim-sat}]
First of all, let us briefly explain what is the rationale for having literals of the form $\avariable \neq \avariable$
in the definition of $\boxstar{\aformula}{\aformulabis}$.
Recall that ${\alloc{\avariable}\inside\andliterals{\aformula}{\aformulabis}}$ is a shortcut
to state that $\alloc{\avariable}$ occurs in both the core types $\aformula$ and
$\aformulabis$. Since ${(\alloc{\avariable} \wedge \aformula')} \separate
(\alloc{\avariable} \wedge \aformulabis')$ is unsatisfiable,  $\alloc{\avariable}\inside\andliterals{\aformula}{\aformulabis}$
entails that $\boxstar{\aformula}{\aformulabis}$ should be unsatisfiable. That is why, if $\alloc{\avariable}\inside\andliterals{\aformula}{\aformulabis}$, then~$\avariable \neq \avariable$ is part of
$\boxstar{\aformula}{\aformulabis}$.

\paragraph{($\Rightarrow$):}
Let us show that~$\prove_{\starsys}\aformula \separate \aformulabis\implies \boxstar{\aformula}{\aformulabis}$.
      We establish that $\prove_{\starsys}\aformula \separate
      \aformulabis\implies \aliteral$ holds for every literal $\aliteral$ of $\boxstar{\aformula}{\aformulabis}$. We reason by a case analysis on
      $\aliteral \inside \boxstar{\aformula}{\aformulabis}$.

  \begin{description}

  \item[case: $\aliteral$ is an (in)equality or $ \aliteral \, = \, \avariable \Ipto \avariablebis$]
    For all the equalities and inequalities in $\aformula$ or $\aformulabis$,
    as well as all the literals of the
  form $\avariable\Ipto\avariablebis$,
  $\prove_{\starsys}\aformula \separate
  \aformulabis\implies \aliteral$
  follows from the rule~\ref{rule:starinference} and
  the axiom~\ref{starAx:MonoCore}. Let us provide below the proper derivation
when $\aliteral$ is a literal in $\aformula$ that is an equality, an inequality or of the form $\avariable\Ipto\avariablebis$.
  \end{description}

\vspace{-13pt}

\noindent%
\begin{minipage}[t]{0.46\linewidth}
\begin{syntproof}
1 & \aformula \implies \aliteral
& \mbox{PC} \\
2 & \aformulabis \implies \top
& \mbox{PC} \\
3 & \aformula \separate \aformulabis \implies \aliteral \separate \top
& \mbox{\ref{rule:starintroLR}, 1, 2}
\end{syntproof}
\end{minipage}\quad
\begin{minipage}[t]{0.5\linewidth}
\begin{syntproof}
4 &  \aliteral \separate \top \implies \aliteral
& \mbox{\ref{starAx:MonoCore}} \\
5 & \aformula \separate \aformulabis \implies \aliteral
& \mbox{\ref{rule:imptr}, 3, 4}
\end{syntproof}
\end{minipage}
\vspace{0.2cm}

\begin{itemize}
  \item[] Assume there is a literal $\avariable\neq\avariable$ that occurs in $\boxstar{\aformula}{\aformulabis}$.
     As both $\aformula$ and $\aformulabis$ are satisfiable, and thanks to~\ref{coreAx:EqRef},
     this is necessarily due to $\alloc{\avariable}$ occurring both in $\aformula$ and
     $\aformulabis$.
\end{itemize}

    \vspace{-3pt}

  \noindent
  \begin{minipage}{0.48\linewidth}
  \begin{syntproof}
    1 & \aformula \implies \alloc{\avariable}
    & \mbox{PC} \\
    2  & \aformulabis \implies \alloc{\avariable}
    & \mbox{PC} \\
    3 & \aformula \separate \aformulabis \implies  \alloc{\avariable } \separate  \alloc{\avariable}
    & \mbox{\ref{rule:starintroLR}, 1, 2}
  \end{syntproof}
  \end{minipage}\quad
  \begin{minipage}{0.52\linewidth}
  \begin{syntproof}
    4 &   \alloc{\avariable } \separate  \alloc{\avariable}  \implies \perp
    & \mbox{\ref{starAx:DoubleAlloc}} \\
    5 & \perp \implies \avariable\neq\avariable
    & \mbox{PC} \\
    6 &  \aformula \separate \aformulabis \implies \avariable\neq\avariable
    & \mbox{\ref{rule:imptr}, 4, 5}
  \end{syntproof}
  \end{minipage}

  \vspace{5pt}

  \begin{description}

  \item[case: $\aliteral \,=\, \alloc{\avariable}$]
    Follows from~\ref{starAx:StarAlloc} and~\ref{rule:starinference}.

  \item[case: $\aliteral \,=\, \lnot \alloc{\avariable}$]
    Follows from~\ref{starAx:AllocNeg} and~\ref{rule:starinference}.

  \item[case: $\aliteral \,=\, \neg\avariable\Ipto\avariablebis$]
    Let~$\neg\avariable\Ipto\avariablebis$ be a literal occurring in $\boxstar{\aformula}{\aformulabis}$.
    So, ${\alloc{\avariable}\wedge\neg\avariable\Ipto\avariablebis}$ occurs in
    $\aformula$ or $\aformulabis$, say in $\aformula$
    (the other case is equivalent, due to~\ref{starAx:Commute}).

    \vspace{-2pt}

    \begin{syntproof}
      1 & \aformula \implies \alloc{\avariable}\wedge \neg\avariable\Ipto\avariablebis
      & \mbox{PC} \\
      2  & \aformulabis \implies \top
      & \mbox{PC} \\
      3 & \aformula \separate \aformulabis \implies  (\alloc{\avariable}\wedge \neg\avariable\Ipto\avariablebis) \separate \top
      & \mbox{\ref{rule:starintroLR}, 1, 2} \\
      4 &  (\alloc{\avariable}\wedge \neg\avariable\Ipto\avariablebis)  \separate  \top \implies \neg\avariable\Ipto\avariablebis
      & \mbox{\ref{starAx:PointsNeg}} \\
      5 &  \aformula \separate \aformulabis \implies \neg\avariable\Ipto\avariablebis
      & \mbox{\ref{rule:imptr}, 3, 4}
   \end{syntproof}

   \vspace{4pt}

  \item[case : $\aliteral \,=\, \size \geq \inbound_1 + \inbound_2$, where $\size \geq \inbound_1 \inside \aformula$ and $\size \geq \inbound_2 \inside \aformulabis$]
  \end{description}

  \vspace{-13pt}
  \noindent
  \begin{minipage}[t]{0.46\linewidth}
    \begin{syntproof}
      1 & \aformula \implies \size\geq\inbound_1
      & \mbox{PC} \\
      2  & \aformulabis \implies \size\geq\inbound_2
      & \mbox{PC} \\
    \end{syntproof}
  \end{minipage} \quad
  \begin{minipage}[t]{0.50\linewidth}
    \begin{syntproof}
      3 & \aformula \separate \aformulabis \implies  \size\geq\inbound_1 \separate   \size\geq\inbound_2
      & \mbox{\ref{rule:starintroLR}, 1, 2} \\
      4 &  \aformula \separate \aformulabis \implies  \size\geq (\inbound_1+\inbound_2)
      & \mbox{Def. $\size$}
   \end{syntproof}
  \end{minipage}

  \vspace{7pt}

  \begin{itemize}
   \item[] \noindent Notice that, as $\aformula$ and $\aformulabis$ are satisfiable core types, $\size \geq 0$ appears positively in both these formulae, and thus appears in $\boxstar{\aformula}{\aformulabis}$.
  \end{itemize}
  \begin{description}
  \item[case: $\aliteral \,=\, \lnot \size \geq \inbound_1 + \inbound_2 \dotminus 1$, where $\lnot \size \geq \inbound_1 \inside \aformula$ and $\lnot \size \geq \inbound_2 \inside \aformulabis$]~

    \begin{syntproof}
      1 & \aformula \implies \lnot \size\geq\inbound_1
      & \mbox{PC} \\
      2  & \aformulabis \implies \lnot \size\geq\inbound_2
      & \mbox{PC} \\
      3 & \aformula \separate \aformulabis \implies  \lnot \size\geq\inbound_1 \separate  \lnot \size\geq\inbound_2
      & \mbox{\ref{rule:starintroLR}, 1, 2} \\
      4 & \lnot \size\geq\inbound_1 \separate  \lnot \size\geq\inbound_2
          \implies \lnot \size \geq \inbound_1 + \inbound_2 \dotminus 1
        & \mbox{\ref{starAx:SizeNeg}}\\
      5 &  \aformula \separate \aformulabis \implies \lnot \size \geq \inbound_1 + \inbound_2 \dotminus 1
      & \mbox{\ref{rule:imptr}, 3, 4}
  \end{syntproof}
  \end{description}

\paragraph{($\Leftarrow$):}
   Let us show that $\prove_{\starsys}\boxstar{\aformula}{\aformulabis}\implies\aformula \separate \aformulabis$.
   If $\boxstar{\aformula}{\aformulabis}$ is unsatisfiable, then by completeness of $\coresys$ (Theorem~\ref{theo:corePSLcompl}), $\vdash_{\coresys} \boxstar{\aformula}{\aformulabis} \implies \false$, and thus $\prove_{\coresys}\boxstar{\aformula}{\aformulabis}\implies\aformula \separate \aformulabis$.
   Since $\starsys$ includes $\coresys$, we conclude that
   $\prove_{\starsys}\boxstar{\aformula}{\aformulabis}\implies\aformula \separate \aformulabis$.
   Otherwise, below, we assume $\boxstar{\aformula}{\aformulabis}$ to be satisfiable.
   In particular, this implies that no literals of the form ${\avariable \neq \avariable}$ or ${\neg \size \geq 0}$
   appear in $\boxstar{\aformula}{\aformulabis}$.
   Moreover, by definition of $\boxstar{\aformula}{\aformulabis}$,
   this implies that $\aformula$, $\aformulabis$ and $\boxstar{\aformula}{\aformulabis}$
   agree on the satisfaction of the core formulae
   ${\avariable=\avariablebis}$,
   i.e.~$\aformula$, $\aformulabis$ and $\boxstar{\aformula}{\aformulabis}$ contain exactly
   the same (in)equalities.
   Since $\aformula$ is satisfiable, these equalities
   define an equivalence relation.
   Let $\avariable_1,\ldots\avariable_n$ be a maximal enumeration of representatives of the
   equivalence classes (one per equivalence class) such that $\alloc{\avariable_i}$ occurs in $\boxstar{\aformula}{\aformulabis}$.
   As it is maximal, for every $\alloc{\avariable} \inside \boxstar{\aformula}{\aformulabis}$ there is $i \in \interval{1}{n}$ such that $\avariable_i$ is syntactically equal to $\avariable$.
   Consequently, from the definition of $\boxstar{\aformula}{\aformulabis}$,
   if $\alloc{\avariable}$ occurs in $\aformula$ or in $\aformulabis$, then
   there is some $\avariable_i$ such that $\avariable = \avariable_i$ occurs in $\aformula$ (and therefore also in
   $\aformulabis$ and in $\boxstar{\aformula}{\aformulabis}$).
   Let us define the formula $\ALLOC$ below:
   $$
    \ALLOC \egdef \big(\alloc{\avariable_1}\wedge\size=1\big) \separate \cdots \separate \big(\alloc{\avariable_n}\wedge\size=1\big).
   $$
   We have,
   \begin{syntproof}
      1   & \boxstar{\aformula}{\aformulabis} \implies \bigwedge_{i \in \interval{1}{n}} (\alloc{\avariable_i} \land \bigwedge_{j \in \interval{1}{n} \setminus \{i\}} \avariable_i \neq \avariable_j)
          & \mbox{PC, def.~of~$\avariable_1,\dots,\avariable_n$}\\
      2   &  \bigwedge_{i \in \interval{1}{n}} (\alloc{\avariable_i} \land \bigwedge_{j \in \interval{1}{n} \setminus \{i\}} \avariable_i \neq \avariable_j) \implies \ALLOC \separate \true
          & \mbox{\ref{starAx:DualView}}\\
      3   & \boxstar{\aformula}{\aformulabis} \implies \ALLOC \separate \top
          & \mbox{\ref{rule:imptr}, 1, 2}
   \end{syntproof}
   \noindent Moreover, we show that $\prove_{\starsys} \ALLOC \implies \size \geq n$ and $\prove_{\starsys} {\ALLOC \implies \lnot \size \geq n{+}1}$
   (theorems 4 and 7 below), and so $\prove_{\starsys} \ALLOC \implies \size = n$.
   \begin{syntproof}
    1 & \aformulater \land \size = 1 \implies \size \geq 1
      & \mbox{PC, def.~of~$\size = 1$}\\
    2 & \aformulater \land \size = 1 \implies \lnot \size \geq 2
      & \mbox{PC, def.~of~$\size = 1$}\\
    3 & \ALLOC \implies \bigseparate_{i \in \interval{1}{n}} \size \geq 1
      & \mbox{multiple applications of}\\[-3pt]
      && \mbox{\ref{rule:starinference}, 1, \ref{starAx:Commute} and \ref{rule:imptr}}\\
    4 & \ALLOC \implies \size \geq n
      & \mbox{3, def.~of~$\size \geq n$}\\
    5 & \ALLOC \implies \bigseparate_{i \in \interval{1}{n}} \lnot \size \geq 2
      & \mbox{multiple applications of}\\[-3pt]
      && \mbox{\ref{rule:starinference}, 2, \ref{starAx:Commute} and \ref{rule:imptr}}\\
    6 & \bigseparate_{i \in \interval{1}{n}} \lnot \size \geq 2 \implies \lnot \size \geq n+1
      & \mbox{$n$ applications of~\ref{starAx:SizeNeg} and~\ref{rule:starinference}}\\
    7 & \ALLOC \implies \lnot \size \geq n+1
      & \mbox{\ref{rule:imptr}, 5, 6}\\
    8 & \ALLOC \implies \size = n
      & \mbox{PC, 4, 7, def.~of $\size = n$}
   \end{syntproof}
   \noindent After deriving
   $\prove_{\starsys}
    \boxstar{\aformula}{\aformulabis} \implies \ALLOC \separate \top$ and
    $\prove_{\starsys} \ALLOC \implies \size = n$,
   the proof is divided in three steps: (1) we  isolate the allocated cells and the garbage, (2) we distribute the alloc and size literals according to the goal $\aformula \separate \aformulabis$ and (3) we add the missing literals.

   \paragraph{\textbf{\em Step 1, isolating allocated cells and garbage}}
   Since $\boxstar{\aformula}{\aformulabis}$ is a conjunction of literals built from core formulae,
   we can rely on~$\maxsize{\boxstar{\aformula}{\aformulabis}}$,
   i.e.~the maximum $\inbound$ among the formulae $\size \geq \inbound$ appearing positively in $\boxstar{\aformula}{\aformulabis}$. First, we show some important properties of~$\boxstar{\aformula}{\aformulabis}$, related to~$\maxsize{\boxstar{\aformula}{\aformulabis}}$.
   \begin{itemize}[align=left]
    \setlength{\itemsep}{3pt}
    \item[\itemlabel{\textbf{A}}{csl:star-proof-item-A}]
      $\maxsize{\boxstar{\aformula}{\aformulabis}} = \maxsize{\aformula} + \maxsize{\aformulabis}$,
    \item[\itemlabel{\textbf{B}}{csl:star-proof-item-B}]
      If there is $\inbound \in \Nat$ such that
      $\lnot \size \geq \inbound \inside \boxstar{\aformula}{\aformulabis}$, then
      \begin{center}
        \hfill
        $\lnot \size \geq \maxsize{\aformula}+1 \inside {\aformula}$,
        \hfill $\lnot \size \geq \maxsize{\aformulabis}+1 \inside {\aformulabis}$.
        \hfill\,
      \end{center}
    \item[\itemlabel{\textbf{C}}{csl:star-proof-item-C}]
      If there is $\inbound \in \Nat$ such that
      $\lnot \size \geq \inbound \inside \boxstar{\aformula}{\aformulabis}$, then
      \begin{center}
      $\lnot \size \geq \maxsize{\boxstar{\aformula}{\aformulabis}} + 1 \inside \boxstar{\aformula}{\aformulabis}$.
      \end{center}
   \end{itemize}
   \begin{proof}[Proof of~{\rm\ref{csl:star-proof-item-A}}]
      By definition of $\maxsize{.}$, we know that $\size \geq \maxsize{\aformula} \inside \aformula$ and ${\size \geq \maxsize{\aformulabis}} \inside \aformulabis$.
      By definition of $\boxstar{\aformula}{\aformulabis}$,
      $\size \geq \maxsize{\aformula} + \maxsize{\aformulabis} \inside \boxstar{\aformula}{\aformulabis}$.
      \emph{Ad absurdum}, suppose that $\maxsize{\aformula} + \maxsize{\aformulabis} \neq \maxsize{\boxstar{\aformula}{\aformulabis}}$ and thus, by definition of $\maxsize{.}$, there is $\inbound > \maxsize{\aformula} + \maxsize{\aformulabis}$ such that ${\size \geq \inbound} \inside \boxstar{\aformula}{\aformulabis}$.
      By definition of $\boxstar{\aformula}{\aformulabis}$,
      we conclude that there are $\inbound_1$ and $\inbound_2$ such that
      $\inbound_1 + \inbound_2 = \inbound$,
      $\size \geq \inbound_1 \inside \aformula$ and $\size \geq \inbound_2 \inside \aformulabis$.
      As $\inbound_1 + \inbound_2 > \maxsize{\aformula} + \maxsize{\aformulabis}$,
      either $\inbound_1 > \maxsize{\aformula}$ or $\inbound_2 > \maxsize{\aformulabis}$.
      Let us assume $\inbound_1 > \maxsize{\aformula}$
      (the other case is analogous).
      We have $\size \geq \inbound_1 \inside \aformula$.
      However, this is contradictory,
      since by definition of $\maxsize{.}$ for all $\inbound' > \maxsize{\aformula}$,
      $\size \geq \inbound' \not\inside \aformula$.
      Thus, $\maxsize{\aformula} + \maxsize{\aformulabis} = \maxsize{\boxstar{\aformula}{\aformulabis}}$.
   \end{proof}
  \begin{proof}[Proof of~{\rm\ref{csl:star-proof-item-B}}]
      Let $\inbound \in \Nat$ such that
      $\lnot \size \geq \inbound \inside {\boxstar{\aformula}{\aformulabis}}$.
      By definition of $\boxstar{\aformula}{\aformulabis}$, this implies
      that there are $\inbound_1,\inbound_2 \in \interval{0}{\bound}$
      such that $\inbound = \inbound_1 + \inbound_2 \dotminus 1$, $\lnot \size \geq \inbound_1 \inside \aformula$ and $\lnot \size \geq \inbound_2 \inside \aformulabis$.
      Since $\aformula$ and $\aformulabis$ are satisfiable,
      by definition of~$\maxsize{.}$, we derive that
      $\inbound_1 > \maxsize{\aformula}$ and $\inbound_2 > \maxsize{\aformulabis}$.
      This implies that the core formula $\size \geq \maxsize{\aformula}+1$ belongs to $\coreformulae{\asetvar}{\bound}$ and, analogously,
      that the core formula  $\size \geq \maxsize{\aformulabis}+1$ belongs to $\coreformulae{\asetvar}{\bound}$.
      Since $\aformula$ is in $\coretype{\asetvar}{\bound}$,
      this implies that $\size \geq \maxsize{\aformula}+1$ is an atomic formula appearing in $\aformula$. By definition of $\maxsize{\aformula}$, the formula cannot appear positively,
      i.e.~$\lnot \size \geq \maxsize{\aformula}+1 \inside \aformula$.
      Analogously, $\aformulabis$ is in $\coretype{\asetvar}{\bound}$,
      which leads to~$\lnot \size \geq \maxsize{\aformulabis}+1 \inside \aformulabis$.
  \end{proof}
  \begin{proof}[Proof of~{\rm\ref{csl:star-proof-item-C}}]
      Directly from~\ref{csl:star-proof-item-A} and~\ref{csl:star-proof-item-B}.
      Indeed, by definition of $\boxstar{\aformula}{\aformulabis}$,
      we know that
      for every ${\lnot \size \geq \inbound \inside \aformula}$ and every $\lnot \size \geq \inbound' \inside \aformulabis$,
      $\lnot \size \geq \inbound+\inbound'\dotminus1 \inside \boxstar{\aformula}{\aformulabis}$.
  \end{proof}

   \noindent
   Now, let us consider ${\inbound_{g} = \maxsize{\boxstar{\aformula}{\aformulabis}} \dotminus n}$.
   We define the formula $\GARB$ below:
    $$
    \GARB \egdef
    \begin{cases}
      \size = \inbound_{g}
      &\text{if}~\lnot \size \geq \inbound \inside {\boxstar{\aformula}{\aformulabis}},~\text{for some}~\inbound\\
      \size \geq \inbound_{g}
      &\text{otherwise},
    \end{cases}
    $$
   where we recall that $\size = \inbound_{g}$
   stands for $\size \geq \inbound_{g} \land \lnot (\size \geq \inbound_{g} + 1)$.
   Notice that $\GARB$ is a conjunction of literals where at least one $\size\geq \inbound$ occurs positively (i.e.~$\size \geq 0$).
   The objective of this step of the proof is to show that $\prove_{\starsys}\boxstar{\aformula}{\aformulabis}\implies\ALLOC \separate \GARB$.
   First, we focus on the positive part of $\GARB$, and prove
   ${\prove_{\starsys}\boxstar{\aformula}{\aformulabis}\implies\ALLOC \separate \size \geq \inbound_{g}}$.
   If ${\inbound_{g} = 0}$ then $\size \geq \inbound_{g} = \true$ and we have already shown
   $\prove_{\starsys}
   \boxstar{\aformula}{\aformulabis} \implies {\ALLOC \separate \top}$.
   So, let us assume that $\inbound_{g} > 1$. Notice that then
   $\maxsize{\boxstar{\aformula}{\aformulabis}} \dotminus n  = \maxsize{\boxstar{\aformula}{\aformulabis}} - n$.
   We have
   \begin{syntproof}
      1   & \true \implies \size \geq \inbound_{g} \lor \lnot \size \geq \inbound_{g}
          & \mbox{PC}\\
      2   & \ALLOC \separate \true \implies \ALLOC \separate (\size \geq \inbound_{g} \lor \lnot \size \geq \inbound_{g})
          & \mbox{\ref{rule:starinference}, \ref{starAx:Commute}, 1}\\
      3   & \ALLOC \separate (\size \geq \inbound_{g} \lor \lnot \size \geq \inbound_{g}) \implies\\[-2pt]
          &(\ALLOC \separate \size \geq \inbound_{g}) \lor (\ALLOC \separate \lnot \size \geq \inbound_{g})
          & \mbox{\ref{starAx:DistrOr}, \ref{starAx:Commute}}\\
      4   & \ALLOC \implies \lnot \size \geq n+1
          & \mbox{Previously derived}\\
      5   & \ALLOC \separate \lnot \size \geq \inbound_{g} \implies (\lnot \size \geq n+1) \separate \lnot \size \geq \inbound_{g}
          & \mbox{\ref{rule:starinference}, 4}\\
      6   & (\lnot \size \geq n+1) \separate \lnot \size \geq \inbound_{g} \implies \lnot \size \geq \maxsize{\boxstar{\aformula}{\aformulabis}}
          & \mbox{\ref{starAx:SizeNeg}, def.~of~$\inbound_{g}$}\\
      7   & \ALLOC \separate \true \implies (\ALLOC \separate \size \geq \inbound_{g})
            \lor \lnot \size \geq \maxsize{\boxstar{\aformula}{\aformulabis}}
          & \mbox{PC, 2, 3, 5, 6}\\
      8   & \boxstar{\aformula}{\aformulabis} \implies \size \geq \maxsize{\boxstar{\aformula}{\aformulabis}}
          & \mbox{PC, def.~of~$\maxsize{.}$}\\
      9   & \boxstar{\aformula}{\aformulabis} \implies \ALLOC \separate \true
          & \mbox{Previously derived}\\
      10  & \boxstar{\aformula}{\aformulabis} \implies (\ALLOC \separate \size \geq \inbound_{g})
      \lor \lnot \size \geq \maxsize{\boxstar{\aformula}{\aformulabis}}
          & \mbox{\ref{rule:imptr}, 7, 9}\\
      11  & \boxstar{\aformula}{\aformulabis} \implies \ALLOC \separate \size \geq \inbound_{g}
          & \mbox{PC, 8, 10}
   \end{syntproof}
   \noindent If for every $\inbound$, $\lnot \size \geq \inbound \not\inside {\boxstar{\aformula}{\aformulabis}}$, then by definition of $\GARB$ we
   conclude that
  $$
  \vdash_{\coresys(\separate)} \boxstar{\aformula}{\aformulabis} \implies \ALLOC \separate \GARB.
  $$
   Otherwise,
   suppose that there is $\inbound$ such that ${\lnot \size \geq \inbound} \inside {\boxstar{\aformula}{\aformulabis}}$.
   So, $\GARB$ is defined as $\size \geq \inbound_{g} \land \lnot (\size \geq \inbound_{g} + 1)$.
   Directly from~\ref{csl:star-proof-item-C}, we know that
   $\lnot \size \geq \maxsize{\boxstar{\aformula}{\aformulabis}} + 1 \inside {\boxstar{\aformula}{\aformulabis}}$.
    By propositional reasoning,
    $$
    \vdash_{\coresys(\separate)} \textstyle\boxstar{\aformula}{\aformulabis} \implies \lnot \size \geq \maxsize{\boxstar{\aformula}{\aformulabis}} + 1.
    $$
    Then, $\boxstar{\aformula}{\aformulabis} \implies \ALLOC \separate \GARB$ is derived as follows:

  \begin{syntproof}
    1 & \size \geq \inbound_{g} \implies
        (\size \geq \inbound_{g} \land \size \geq \inbound_{g}+1)
        \lor \size = \inbound_{g}
      & \mbox{PC, def. of~$\size = \inbound_{g}$}\\
    2 & \ALLOC \separate \size \geq \inbound_{g} \implies\\[-2pt]
      & \ALLOC {\separate} \big( (\size \geq \inbound_{g} \land \size \geq \inbound_{g}{+}1)
      \lor \size = \inbound_{g} \big)
      & \mbox{\ref{rule:starinference}, \ref{starAx:Commute}, 1}\\
    3 & \ALLOC {\separate} \big( (\size \geq \inbound_{g} \land \size \geq \inbound_{g}{+}1)
    \lor \size = \inbound_{g} \big)\\[-2pt]
     & \implies \big(\ALLOC \separate (\size \geq \inbound_{g} \land \size \geq \inbound_{g}{+}1)\big) {\lor} \big(\ALLOC \separate \size = \inbound_{g}\big)
      & \mbox{\ref{starAx:DistrOr}, \ref{starAx:Commute}}\\
    4 & \size \geq \inbound_{g} \land \size \geq \inbound_{g}+1 \implies \size \geq \inbound_{g}+1
      & \mbox{PC}\\
    5 & \ALLOC \implies \size \geq n
      & \mbox{Previously derived}\\
    6 & \ALLOC \separate (\size \geq \inbound_{g} {\land} \size \geq \inbound_{g}{+}1) \implies \size \geq n \separate \size \geq \inbound_{g}+1
      & \mbox{\ref{rule:starintroLR}, 4, 5}\\
    7 & \size \geq n \separate \size \geq \inbound_{g}+1
        \implies \size \geq \maxsize{\boxstar{\aformula}{\aformulabis}}+1
      & \mbox{\ref{starAx:Assoc}, def.~of~$\size \geq \inbound$}\\
    8 & \ALLOC \separate \size \geq \inbound_{g} \implies
      \size \geq \maxsize{\boxstar{\aformula}{\aformulabis}}+1\\
      & \quad \lor \big(\ALLOC \separate \size = \inbound_{g} \big)
      & \mbox{PC, 2, 3, 6, 7}\\
    9 & \boxstar{\aformula}{\aformulabis} \implies \ALLOC \separate \size \geq \inbound_{g}
      & \mbox{Previously derived}\\
    10  & \boxstar{\aformula}{\aformulabis} \implies \size \geq \maxsize{\boxstar{\aformula}{\aformulabis}}+1 {\lor} \big(\ALLOC \separate \size = \inbound_{g} \big)
        & \mbox{\ref{rule:imptr}, 8, 9}\\
    11 & \boxstar{\aformula}{\aformulabis} \implies \lnot \size \geq \maxsize{\boxstar{\aformula}{\aformulabis}}+1
        & \mbox{PC, see above}\\
    12 & \boxstar{\aformula}{\aformulabis} \implies \big(\ALLOC \separate \underbrace{\size = \inbound_{g}}_{\GARB}\big)
        & \mbox{PC, 10, 11}
  \end{syntproof}

   \paragraph{\textbf{\em Step 2, distributing alloc and size literals}}
   In this step, we aim at showing that
   $$\prove_{\starsys} \ALLOC \separate \GARB \implies \aformula^{(1)} \separate \aformulabis^{(1)}$$
   where $\aformula^{(1)}$ and $\aformulabis^{(1)}$ are two formulae defined as follows:
   $$
      \aformula^{(1)} \egdef
      \begin{cases}
        \size  =  \maxsize{\aformula} \land
        \bigwedge \{\alloc{\avariable_i} \inside \aformula \mid i \in \interval{1}{n}\}
        & \text{if}~\maxsize{\aformula} < \bound\\[3pt]
        \size  \geq  \maxsize{\aformula} \land
        \bigwedge \{\alloc{\avariable_i} \inside \aformula \mid i \in \interval{1}{n}\}
        & \text{otherwise}
      \end{cases}
  $$

  $$
      \aformulabis^{(1)} \egdef
      \begin{cases}
        \size  =  \maxsize{\aformulabis} \land \bigwedge \{\alloc{\avariable_i} \inside \aformulabis \mid i \in \interval{1}{n}\}
        & \text{if}~\maxsize{\aformulabis} < \bound\\[3pt]
        \size  \geq  \maxsize{\aformulabis} \land \bigwedge \{\alloc{\avariable_i} \inside \aformulabis \mid i \in \interval{1}{n}\}
        & \text{otherwise}
      \end{cases}
   $$
   We use the notations $\aformula^{(1)}$ and $\aformulabis^{(1)}$  since later in the proof,
   we shall consider sequences of formulae  $\aformula^{(1)}, \ldots, \aformula^{(k)}$
   and $\aformulabis^{(1)}, \ldots, \aformulabis^{(k)}$ with increasing amount of literals. 
   That is why, using  $\aformula^{(1)}$ and $\aformulabis^{(1)}$ at this early stage is meaningful. 
   Before tackling this derivation, a few more steps are required.
   First of all, notice that,
   if there is a formula $\alloc{\avariable}$ occurring both in $\aformula$ and $\aformulabis$,
   then, by definition of $\boxstar{\aformula}{\aformulabis}$,
   $\avariable\neq\avariable$ occurs in $\boxstar{\aformula}{\aformulabis}$.
   This contradicts the fact that $\boxstar{\aformula}{\aformulabis}$ is satisfiable.
   Therefore, we derive that the set of variables $\avariable_1,\dots,\avariable_n$ can be split into
   two  disjoint subsets, the ones ``allocated''
   in $\aformula$, and the others in $\aformulabis$. Let $n_\aformula$ (resp.~$n_\aformulabis$) denote the number
   of equivalence classes of variables allocated in $\aformula$ (resp.~$\aformulabis$).
   Clearly, $n = n_\aformula + n_\aformulabis$.
   Moreover, since $\aformula$ and $\aformulabis$ are satisfiable core types in $\coretype{\asetvar}{\bound}$, where $\bound \geq \card{\asetvar}$,
   we must have $n_\aformula \leq \maxsize{\aformula}$ and $n_\aformulabis \leq \maxsize{\aformulabis}$ (see 
   the axiom~\ref{coreAx:AllocSize}).
   By~\ref{csl:star-proof-item-A}, we conclude that $n \leq \maxsize{\boxstar{\aformula}{\aformulabis}}$.
   We define the following formulae
   $$
   \begin{aligned}
    &\ALLOC(\aformula)  \egdef  \bigseparate \{\alloc{\avariable_i}\wedge\size=1\mid\alloc{\avariable_i}
     \inside \aformula,\, i \in \interval{1}{n}\}\\
    &\GARB(\aformula) \egdef
    \begin{cases}
      \size = \maxsize{\aformula} - n_\aformula
      &\text{if}~\maxsize{\aformula} \,{<}\, \bound\\[2pt]
      \size \geq \maxsize{\aformula} - n_\aformula
      &\text{otherwise}
    \end{cases}
    \end{aligned}
    $$
    Notice that, since $\maxsize{\aformula} \geq n_\aformula$, the formula $\GARB(\aformula)$ is well-defined.
    The formulae~$\ALLOC(\aformulabis)$ and~$\GARB(\aformulabis)$ are defined accordingly.
    Obviously, $\ALLOC$ is equal to
    $\ALLOC(\aformula)  \separate \ALLOC(\aformulabis)$ modulo associativity and commutativity
    of the separating conjunction $\separate$.
    Hence, by taking advantage of the axioms~\ref{starAx:Commute} and ~\ref{starAx:Assoc}, we have
    $$
    \prove_{\starsys} \ALLOC \iff \ALLOC(\aformula)  \separate \ALLOC(\aformulabis).
    $$

    \noindent Let us now look at $\GARB(\aformula)$ and $\GARB(\aformulabis)$.
    We aim at deriving
    $$\prove_{\starsys}
    \GARB\implies\GARB(\aformula) \separate \GARB(\aformulabis).$$
    Since $\aformula$ is a core type, we know that if $\maxsize{\aformula} < \bound$ then,
    by definition of $\maxsize{\aformula}$,
    $\lnot \size \geq \maxsize{\aformula} + 1 \inside \aformula$.
    A similar analysis can be done for $\aformulabis$, which leads to the two following equivalences, by definition of $\GARB(\aformula)$ and $\GARB(\aformulabis)$:
    \begin{itemize}
      \item $\lnot \size \geq \maxsize{\aformula} + 1 \inside \aformula$ if and only if
            $\GARB(\aformula) = (\size = \maxsize{\aformula} {-} n_\aformula)$,
      \item $\lnot \size \geq \maxsize{\aformulabis} + 1 \inside \aformulabis$ if and only if
            $\GARB(\aformulabis) = (\size = \maxsize{\aformulabis} {-} n_\aformulabis)$.
    \end{itemize}
    By definition of $\GARB$,~\ref{csl:star-proof-item-B}
    and~\ref{csl:star-proof-item-C},
    we know that $\GARB = (\size = \maxsize{\boxstar{\aformula}{\aformulabis}} \dotminus n)$
    holds if and only if
    $\lnot \size \geq \maxsize{\aformula} + 1 \inside \aformula$ and
    ${\lnot \size \geq \maxsize{\aformula} + 1 \inside \aformulabis}$.
    From $n \leq \maxsize{\boxstar{\aformula}{\aformulabis}}$ and
    by relying on the previous two equivalences, this allows us to conclude that:
    \begin{itemize}[align=left]
      \item[\itemlabel{\textbf{D}}{csl:star-proof-item-D}]
      $\GARB(\aformula) = (\size = \maxsize{\aformula} {-} n_\aformula)$ and  $\GARB(\aformulabis) = (\size = \maxsize{\aformulabis} {-} n_\aformulabis)$
      if and only if
      $\GARB = (\size = \maxsize{\boxstar{\aformula}{\aformulabis}} - n)$.
    \end{itemize}
    To show $\vdash_{\coresys(\separate)} \GARB \implies (\GARB(\aformula) \separate \GARB(\aformulabis))$,
    we split the proof depending on whether
    $\GARB(\aformula) = (\size = \maxsize{\aformula} {-} n_\aformula)$
    and $\GARB(\aformulabis) = (\size = \maxsize{\aformulabis} {-} n_\aformulabis)$ hold.
    \begin{description}
      \item[case: {\normalfont$\GARB(\aformula) \neq (\size = \maxsize{\aformula} {-} n_\aformula)$}
      and {\normalfont$\GARB(\aformulabis) \neq (\size = \maxsize{\aformulabis} {-} n_\aformulabis)$}]~

      \noindent
      We have $\GARB(\aformula) = (\size {\geq} \maxsize{\aformula} {-} n_\aformula)$
      and $\GARB(\aformulabis) = (\size {\geq} \maxsize{\aformulabis} {-} n_\aformulabis)$.
      By definition of $\GARB$ and \ref{csl:star-proof-item-D},
      $\GARB = (\size \geq \maxsize{\boxstar{\aformula}{\aformulabis}} - n)$.
      By $n = n_\aformula + n_\aformulabis$ and \ref{csl:star-proof-item-A},
      $\maxsize{\boxstar{\aformula}{\aformulabis}} - n = (\maxsize{\aformula} {-} n_\aformula) + (\maxsize{\aformulabis} {-} n_\aformulabis)$.
      By definition of the core formula $\size \geq \inbound$,
      $\GARB$ is already equivalent to $\GARB(\aformula) \separate \GARB(\aformulabis)$,
      modulo associativity and commutativity
      of the separating conjunction $\separate$.
      Hence, by taking advantage of the axioms~\ref{starAx:Commute} and~\ref{starAx:Assoc},
      we have
      $
      \prove_{\starsys} \GARB \implies \GARB(\aformula)  \separate \GARB(\aformulabis).
      $

      \item[case: {\normalfont$\GARB(\aformula) = (\size = \maxsize{\aformula} {-} n_\aformula)$}
      and {\normalfont$\GARB(\aformulabis) \neq (\size = \maxsize{\aformulabis} {-} n_\aformulabis)$}]~

      \noindent We have $\GARB(\aformulabis) = (\size {\geq} \maxsize{\aformulabis} {-} n_\aformulabis)$ and,
      by definition of $\GARB$ and \ref{csl:star-proof-item-D},
      along with $n = n_\aformula + n_\aformulabis$ and \ref{csl:star-proof-item-A},
      $\GARB = (\size \geq (\maxsize{\aformula} {-} n_\aformula) + (\maxsize{\aformulabis} {-} n_\aformulabis))$.
      In this case,
      $
      \GARB \implies \GARB(\aformula)  \separate \GARB(\aformulabis)
      $
      is an instantiation of the following valid formula with $\inbound_1 = \maxsize{\aformula} {-} n_\aformula$ and $\inbound_2 = \maxsize{\aformulabis} - n_\aformulabis$:
      $$
        \size \geq \inbound_1 + \inbound_2 \implies \size = \inbound_1 \separate \size \geq \inbound_2.
      $$
      The derivability of this formula in $\coresys(\separate)$ is proven by induction on $\inbound_1$ 
      (see Appendix~\ref{appendix-derivation-one}). 
      
      \item[case: {\normalfont$\GARB(\aformula) \neq (\size = \maxsize{\aformula} {-} n_\aformula)$}
      and {\normalfont$\GARB(\aformulabis) = (\size = \maxsize{\aformulabis} {-} n_\aformulabis)$}]~

      \noindent
      Analogously to the previous case, we have $\GARB(\aformula) = (\size {\geq} \maxsize{\aformula} {-} n_\aformula)$
      and
      $\GARB = (\size \geq (\maxsize{\aformula} {-} n_\aformula) + (\maxsize{\aformulabis} {-} n_\aformulabis))$.
      We instantiate the theorem
      $$
        \size \geq \inbound_1 + \inbound_2 \implies \size = \inbound_1 \separate \size \geq \inbound_2,
      $$
      shown derivable in the previous case of the proof,
      with $\inbound_1 = \maxsize{\aformulabis} {-} n_\aformulabis$ and $\inbound_2 = \maxsize{\aformula} - n_\aformula$.
      This corresponds to $\GARB \implies \GARB(\aformulabis)  \separate \GARB(\aformula)$.
      Afterwards, by commutativity of the separating conjunction (axiom~\ref{starAx:Commute}) and propositional reasoning, we conclude that $ \prove_{\starsys} \GARB \implies \GARB(\aformula) \separate \GARB(\aformulabis)$.

      \item[case: {\normalfont$\GARB(\aformula) = (\size = \maxsize{\aformula} {-} n_\aformula)$}
      and {\normalfont$\GARB(\aformulabis) = (\size = \maxsize{\aformulabis} {-} n_\aformulabis)$}]~

      \noindent
      By \ref{csl:star-proof-item-D}, $n = n_\aformula + n_\aformulabis$ and \ref{csl:star-proof-item-A},
      $\GARB = (\size = (\maxsize{\aformula} {-} n_\aformula) + (\maxsize{\aformulabis} {-} n_\aformulabis))$.
      In this case,
      $
      \GARB \implies \GARB(\aformula)  \separate \GARB(\aformulabis)
      $
      is an instantiation of the following valid formula, with $\inbound_1 = \maxsize{\aformula} {-} n_\aformula$ and $\inbound_2 = \maxsize{\aformulabis} - n_\aformulabis$:
      $$
        \size = \inbound_1 + \inbound_2 \implies \size = \inbound_1 \separate \size = \inbound_2.
      $$
      Its derivation in  $\coresys(\separate)$ can be found in Appendix~\ref{appendix-derivation-two}.
    \end{description}
    Thanks to the case analysis above, we conclude that $\prove_{\starsys} \GARB \implies \GARB(\aformula) \separate \GARB(\aformulabis)$.
    Thus, $\prove_{\starsys} \ALLOC \separate \GARB \implies (\ALLOC(\aformula) \separate \GARB(\aformula)) \separate (\ALLOC(\aformulabis) \separate \GARB(\aformulabis))$.
    Indeed,
    \begin{syntproof}
      1 & \ALLOC \implies \ALLOC(\aformula) \separate \ALLOC(\aformulabis)
        & \mbox{Previously derived}\\
      2 & \GARB \implies \GARB(\aformula) \separate \GARB(\aformulabis)
        & \mbox{Previously derived}\\
      3 & \ALLOC \separate \GARB \implies (\ALLOC(\aformula) \separate \ALLOC(\aformulabis)) \separate (\GARB(\aformula) \separate \GARB(\aformulabis))
        & \mbox{\ref{rule:starintroLR}, 1, 2}\\
      4 &  (\ALLOC(\aformula) \separate \ALLOC(\aformulabis)) \separate (\GARB(\aformula) \separate \GARB(\aformulabis)) \implies\\[-2pt]
        & (\ALLOC(\aformula) \separate \GARB(\aformula)) \separate (\ALLOC(\aformulabis) \separate \GARB(\aformulabis))
        & \mbox{\ref{starAx:Commute}, \ref{starAx:Assoc}}\\
      5 & \ALLOC \separate \GARB \implies
       (\ALLOC(\aformula) \separate \GARB(\aformula)) \separate
      (\ALLOC(\aformulabis) \separate \GARB(\aformulabis))
        & \mbox{\ref{rule:imptr}, 3, 4}
    \end{syntproof}

    \noindent
    To conclude this step of the proof,
    it is sufficient to show ${\prove_{\starsys} \ALLOC(\aformula) \separate \GARB(\aformula) \implies \aformula^{(1)}}$
    and $\prove_{\starsys} \ALLOC(\aformulabis) \separate \GARB(\aformulabis) \implies \aformulabis^{(1)}$.
    Indeed, by relying on the rule \ref{rule:starintroLR},
    we then obtain $\prove_{\starsys} \ALLOC \separate \GARB \implies \aformula^{(1)} \separate \aformulabis^{(1)}$.
    Below, we show $\prove_{\starsys} \ALLOC(\aformula) \separate \GARB(\aformula) \implies \aformula^{(1)}$.
    The developments of $\prove_{\starsys} \ALLOC(\aformulabis) \separate \GARB(\aformulabis) \implies \aformulabis^{(1)}$ are analogous.
    We recall that the formula $\ALLOC(\aformula)$ is defined as
    $$
      \ALLOC(\aformula) \, = \, \bigseparate \{\alloc{\avariable_i}\wedge\size=1\mid\alloc{\avariable_i} \inside \aformula\}.
    $$
    First of all, let us show that $ \prove_{\starsys} \ALLOC(\aformula) \separate \true \implies \bigwedge \{\alloc{\avariable_i} \inside \aformula \mid i \in \interval{1}{n}\}$.
    The proof is divided in three cases:
    \begin{description}
      \item[case: $\{\alloc{\avariable_i}\wedge\size=1\mid\alloc{\avariable_i} \inside \aformula\} = \emptyset$]
        In this case, the formula
        we want to derive is syntactically equal to
        $\true \separate \true \implies \true$,
        which is derivable by propositional reasoning.
      \item[case: $\card{\{\alloc{\avariable_i}\wedge\size=1\mid\alloc{\avariable_i} \inside \aformula\}} = 1$]
        In this case,
        the formula we want to derive is syntactically equal to
        $(\alloc{\avariable} \land \size = 1) \separate \true \implies \alloc{\avariable}$.
        Therefore, it is derivable in $\starsys$ by~\ref{starAx:StarAlloc}
        and~\ref{rule:starinference}.
      \item[case: $\card{\{\alloc{\avariable_i}\wedge\size=1\mid\alloc{\avariable_i} \inside \aformula\}} {\geq} 2$]
        In the derivation below, we write $\ALLOC(\aformula)^{-i}$ for
        $\bigseparate \{ \alloc{\avariable_j} \land \size = 1 \mid j \in \interval{1}{n} \setminus \{i\}, \alloc{\avariable_j} \inside \aformula\}$.
        Roughly speaking, $\ALLOC(\aformula)^{-i}$ is obtained from $\ALLOC(\aformula)$ by removing the
        subformula $\alloc{\avariable_i} \land \size = 1$.
        Since $\card{\{\alloc{\avariable_i}\wedge\size=1\mid\alloc{\avariable_i} \inside \aformula\}} {\geq} 2$,
        the formula
        $\ALLOC(\aformula)^{-i}$ is different from $\true$.
        We have
        \begin{syntproof}
          1 & \ALLOC(\aformula) \separate \true \implies\\[-2pt]
            & \quad (\alloc{\avariable_i} \land \size = 1) \separate (\ALLOC(\aformula)^{-i} \separate \true)
            & \mbox{\ref{starAx:Commute}, \ref{starAx:Assoc}, def. of $\ALLOC(\aformula)$}\\[-3pt]
            && \mbox{where $\alloc{\avariable_i} \inside \aformula$ and $i \in \interval{1}{n}$}\\
          2 & \ALLOC(\aformula)^{-i} \separate \true \implies \true
            & \mbox{PC}\\
          3 & \alloc{\avariable_i} \land \size = 1 \implies \alloc{\avariable_i}
            & \mbox{PC}\\
          4 & (\alloc{\avariable_i} \land \size = 1) \separate (\ALLOC(\aformula)^{-i} \separate \true)
              \implies\\[-2pt]
            & \quad \alloc{\avariable_i} \separate \true
            & \mbox{\ref{rule:starintroLR}, 2, 3}\\
          5 & \alloc{\avariable_i} \separate \true \implies \alloc{\avariable_i}
            & \mbox{\ref{starAx:StarAlloc}}\\
          6 & \ALLOC(\aformula) \separate \true  \implies \alloc{\avariable_i}
            & \mbox{\ref{rule:imptr}, 1, 4, 5}\\
          7 & \ALLOC(\aformula) \separate \true \implies \bigwedge \{\alloc{\avariable_i} \inside \aformula \mid i \in \interval{1}{n}\}
            & \mbox{PC, repeating 6}\\[-3pt]
            && \mbox{for all $i \in \interval{1}{n}$ such that~$\alloc{\avariable_i} \inside \aformula$}
        \end{syntproof}
    \end{description}
    So, we have $ \prove_{\starsys} \ALLOC(\aformula) \separate \true \implies \bigwedge \{\alloc{\avariable_i} \inside \aformula \mid i \in \interval{1}{n}\}$.

    \noindent Now, recall that $\card{\{ i \in \interval{1}{n} \mid \alloc{\avariable_i} \inside \aformula\}} = n_\aformula$.
    At the beginning of the proof, we have shown a derivation of $\vdash_{\starsys} \ALLOC \implies \size = n$,
    where
    $\ALLOC$ is defined as $\bigseparate \{\alloc{\avariable_i}\wedge\size=1\mid
    i \in \interval{1}{n}\}$.
    Replacing $\ALLOC$ by $\ALLOC(\aformula)$ and $n$ by $n_\aformula$
    in the derivation of
    $\ALLOC \implies \size = n$
    leads to a derivation in $\starsys$ of $\ALLOC(\aformula) \implies \size = n_{\aformula}$.

    To show $\prove_{\starsys} \ALLOC(\aformula) \separate \GARB(\aformula) \implies \aformula^{(1)}$, we split the proof in two cases:
    \begin{description}
      \item[case: $\maxsize{\aformula} = \bound$]
        By definition of $\aformula^{(1)}$ and $\GARB(\aformula)$, we have:
        \vspace{3pt}
        \begin{itemize}
          \setlength{\itemsep}{3pt}
          \item $\aformula^{(1)} \, = \, \size \geq \maxsize{\aformula} \land
          \bigwedge \{\alloc{\avariable_i} \inside \aformula \mid i \in \interval{1}{n}\}$,
          \item $\GARB(\aformula) \, = \, \size \geq \maxsize{\aformula} - n_\aformula$,
        \end{itemize}
        \vspace{3pt}
        Then,
        \begin{syntproof}
          1 & \ALLOC(\aformula) \separate \true \implies \bigwedge \{\alloc{\avariable_i} \inside \aformula \mid i \in \interval{1}{n}\}
            & \mbox{Previously derived}\\
          2 & \GARB(\aformula) \implies \true
            & \mbox{PC}\\
          3 & \ALLOC(\aformula) \separate \GARB(\aformula) \implies \ALLOC(\aformula) \separate \true
            & \mbox{\ref{rule:starinference}, \ref{starAx:Commute}, 2}\\
          4 & \ALLOC(\aformula) \separate \GARB(\aformula) \implies \bigwedge \{\alloc{\avariable_i} \inside \aformula \mid i \in \interval{1}{n}\}
            & \mbox{\ref{rule:imptr}, 1, 3}\\
          5 & \ALLOC(\aformula) \implies \size = n_{\aformula}
            & \mbox{See above}\\
          6 & \size = n_{\aformula} \implies \size \geq n_{\aformula}
            & \mbox{PC, def.~of~$\size = n_{\aformula}$}\\
          7 & \ALLOC(\aformula) \implies \size \geq n_{\aformula}\\
          8 & \GARB(\aformula) \implies \size \geq \maxsize{\aformula}-n_{\aformula}
            & \mbox{PC, def.~of~$\GARB(\aformula)$}\\
          9  & \ALLOC(\aformula) {\separate} \GARB(\aformula)
              \implies \size \geq n_{\aformula} \separate \size \geq \maxsize{\aformula}{-}n_{\aformula}
            & \mbox{\ref{rule:starintroLR}, 7, 8}\\
          10 & \size \geq n_{\aformula} \separate \size \geq \maxsize{\aformula}-n_{\aformula} \implies \size \geq \maxsize{\aformula}
            & \mbox{\ref{starAx:Assoc}, \ref{starAx:Commute},
            def.~of~$\size \geq \inbound$}\\
          11 &  \ALLOC(\aformula) \separate \GARB(\aformula)
              \implies  \size \geq \maxsize{\aformula}
            & \mbox{\ref{rule:imptr}, 9, 10}\\
          12 & \ALLOC(\aformula) \separate \GARB(\aformula) \implies \aformula^{(1)}
            & \mbox{PC, 4, 11, def.~of~$\aformula^{(1)}$}
        \end{syntproof}
      \item[case: $\maxsize{\aformula} \neq \bound$]
        In this case, $\maxsize{\aformula} < \bound$ and so we have:
        \vspace{3pt}
        \begin{itemize}
          \setlength{\itemsep}{3pt}
          \item $\aformula^{(1)} \, = \, \size = \maxsize{\aformula} \land
          \bigwedge \{\alloc{\avariable_i} \inside \aformula \mid i \in \interval{1}{n}\}$,
          \item $\GARB(\aformula) \, = \, \size = \maxsize{\aformula} - n_\aformula$,
        \end{itemize}
        We can rely on the previous case of the proof in order to show that
        $$
          \prove_{\starsys} \ALLOC(\aformula) \separate \GARB(\aformula) \implies \size \geq \maxsize{\aformula} \land
          \bigwedge \{\alloc{\avariable_i} \inside \aformula \mid i \in \interval{1}{n}\}.
          $$
        By propositional reasoning, we can derive
        $\prove_{\starsys} \ALLOC(\aformula) \separate \GARB(\aformula) \implies \aformula^{(1)}$
        as soon as we show that
        $\prove_{\starsys} \ALLOC(\aformula) \separate \GARB(\aformula) \implies \lnot \size \geq \maxsize{\aformula}+1$,
        as we do now:

        \begin{syntproof}
          1 &   \ALLOC(\aformula) \implies \size = n_{\aformula}
            &   \mbox{Already discussed above}\\
          2 &   \size = n_{\aformula} \implies \lnot \size \geq n_{\aformula} + 1
            &   \mbox{PC, def.~of $\size = n_{\aformula}$}\\
          3 &   \ALLOC(\aformula) \implies \lnot \size \geq n_{\aformula}+1
            &   \mbox{PC, \ref{rule:imptr}, 1, 2}\\
          4 &   \GARB(\aformula) \implies \lnot \size \geq \maxsize{\aformula} - n_{\aformula} + 1
            &   \mbox{PC, def. of $\size = \inbound$}\\
          5 &   \ALLOC(\aformula) \separate \GARB(\aformula) \implies\\[-2pt]
            &   \qquad
            \lnot \size \geq n_{\aformula} + 1
            \separate \lnot \size \geq \maxsize{\aformula} - n_{\aformula} + 1
            &   \mbox{\ref{rule:starintroLR}, 3, 4}\\
          6 &   \lnot \size \geq n_{\aformula} + 1
                \separate
                \lnot \size \geq \maxsize{\aformula} - n_{\aformula} + 1
                \implies\\[-2pt]
            &   \qquad \lnot \size \geq \maxsize{\aformula} + 1
            & \mbox{\ref{starAx:SizeNeg}}\\
          7 &  \ALLOC(\aformula) \separate \GARB(\aformula) \implies \lnot \size \geq \maxsize{\aformula} + 1
            & \mbox{\ref{rule:imptr}, 5, 6}
        \end{syntproof}
    \end{description}
    This concludes the proof of $\prove_{\starsys} \ALLOC(\aformula) \separate \GARB(\aformula) \implies \aformula^{(1)}$.
    As already stated, one can analogously show that $\prove_{\starsys} \ALLOC(\aformulabis) \separate \GARB(\aformulabis) \implies \aformulabis^{(1)}$.
    Afterwards,  by~\ref{rule:starintroLR} and from $\prove_{\starsys} \ALLOC \separate \GARB \implies (\ALLOC(\aformula) \separate \GARB(\aformula)) \separate (\ALLOC(\aformulabis) \separate \GARB(\aformulabis))$, we conclude that
    $$\prove_{\starsys} \ALLOC \separate \GARB
    \implies \aformula^{(1)} \separate \aformulabis^{(1)}.$$

   \paragraph{\textbf{\em Step 3, add the missing literals}}
   From the first and second step of the proof, and by propositional reasoning,
   $\prove_{\starsys} \boxstar{\aformula}{\aformulabis} \implies \aformula^{(1)} \separate \aformulabis^{(1)}$.
   We now rely on $\boxstar{\aformula}{\aformulabis}$ to add to $\aformula^{(1)}$ and $\aformulabis^{(1)}$ missing literals from $\aformula$ and $\aformulabis$, respectively.
   We add the literals progressively,
   building a sequence of formulae $\aformula^{(1)}\!\separate \aformulabis^{(1)}$,
   $\aformula^{(2)} \!\separate \aformulabis^{(2)}$, $\dots$, $\aformula^{(k)} \!\separate \aformulabis^{(k)}$, where for all $i \in \interval{1}{k}$,
   $\aformula^{(i)}$ and $\aformulabis^{(i)}$ are conjunctions of core formulae such that
   $\prove_{\starsys} \boxstar{\aformula}{\aformulabis} \implies \aformula^{(i)} \separate \aformulabis^{(i)}$, and for all $j \in \interval{1}{i}$, $\aformula^{(j)} \inside \aformula^{(i)}$ and $\aformulabis^{(j)} \inside \aformulabis^{(i)}$.
   Fundamentally, we obtain $\aformula = \aformula^{(k)}$
   and $\aformulabis = \aformulabis^{(k)}$
   (modulo associativity and commutativity of the classical conjunction),
   which allows us to derive $\prove_{\starsys} \boxstar{\aformula}{\aformulabis} \implies \aformula \separate \aformulabis$, ending the proof.
   Below, we focus on the formula $\aformula^{(i)}$ and $\aformula$.
   Since $\boxstar{\aformula}{\aformulabis}$ is equal to $\boxstar{\aformulabis}{\aformula}$
   (by a quick inspection of the definition)
   and the separating conjunction is commutative (axiom~\ref{starAx:Commute}), a similar analysis can be done for $\aformulabis^{(i)}$ and $\aformulabis$.
   Thus, we assume that $\prove_{\starsys} \boxstar{\aformula}{\aformulabis} \implies \aformula^{(i)} \separate \aformulabis^{(i)}$ holds,
   where in particular $\aformula^{(1)} \inside \aformula^{(i)}$ and $\aformulabis^{(1)} \inside \aformulabis^{(i)}$,
   and that there is a literal $\aliteral \inside \aformula$ that does not appear in $\aformula^{(i)}$.
   By relying on the theorems in Lemma~\ref{lemma:separate-auxiliary-stuff},
   we show that $\prove_{\starsys} \boxstar{\aformula}{\aformulabis} \implies (\aformula^{(i)} \land \aliteral) \separate \aformulabis^{(i)}$
   by a case analysis on $\aliteral$.

  \begin{description}
    \item[case: $\aliteral \, = \, \avariable \sim \avariablebis$, where $\sim \in \{=,\neq\}$]
      By definition of $\boxstar{\aformula}{\aformulabis}$, $\avariable \sim \avariablebis \inside \boxstar{\aformula}{\aformulabis}$.
      \begin{syntproof}
        1 & \boxstar{\aformula}{\aformulabis} \implies \aformula^{(i)} \separate \aformulabis^{(i)}
          & \mbox{Hypothesis}\\
        2 & \boxstar{\aformula}{\aformulabis} \implies \avariable \sim \avariablebis
          & \mbox{PC, def.~of~$\boxstar{\aformula}{\aformulabis}$, see above}\\
        3 & \boxstar{\aformula}{\aformulabis} \implies \avariable \sim \avariablebis \land (\aformula^{(i)} \separate \aformulabis^{(i)})
          & \mbox{PC, 1, 2}\\
        4 & \avariable \sim \avariablebis \land (\aformula^{(i)} \separate \aformulabis^{(i)})
            \implies
            (\aformula^{(i)} \land \avariable \sim \avariablebis) \separate \aformulabis^{(i)}
          & \mbox{\ref{starAx:auxilary-1}}\\
        5 & \boxstar{\aformula}{\aformulabis} \implies (\aformula^{(i)} \land \avariable \sim \avariablebis) \separate \aformulabis^{(i)}
          & \mbox{\ref{rule:imptr}, 3, 4}
      \end{syntproof}
    \item[case: $\aliteral \, = \, \alloc{\avariable}$]
      Since $\alloc{\avariable} \inside \aformula$, by definition,
      $\alloc{\avariable} \inside \boxstar{\aformula}{\aformulabis}$.
      By definition of $\avariable_1,\dots,\avariable_n$,
      there is $j \in \interval{1}{n}$ such that $\avariable_j = \avariable \inside \boxstar{\aformula}{\aformulabis}$.
      Since $\aformula$ is a core type, $\alloc{\avariable_j} \inside \aformula$.
      By definition of $\aformula^{(1)}$, $\alloc{\avariable_j} \inside \aformula^{(1)}$.
      From $\aformula^{(1)} \inside \aformula^{(i)}$, we have
      $\alloc{\avariable_j} \inside \aformula^{(i)}$. Afterwards,

      \begin{syntproof}
        1 & \aformula^{(i)} \implies \aformula^{(i)} \land \alloc{\avariable_j}
          & \mbox{PC, see above}\\
        2 & \boxstar{\aformula}{\aformulabis} \implies \aformula^{(i)} \separate \aformulabis^{(i)}
          & \mbox{Hypothesis}\\
        3 &  \aformula^{(i)} \separate \aformulabis^{(i)} \implies
            ( \aformula^{(i)} \land \alloc{\avariable_j}) \separate \aformulabis^{(i)}
          & \mbox{\ref{rule:starinference}, 1}\\
        4 & \boxstar{\aformula}{\aformulabis} \implies \avariable_j = \avariable
          & \mbox{PC, see above}\\
        5 & \boxstar{\aformula}{\aformulabis} \implies \avariable_j = \avariable \land
              (( \aformula^{(i)} \land \alloc{\avariable_j}) \separate \aformulabis^{(i)})
          & \mbox{PC, 2, 3, 4}\\
        6 & \avariable_j = \avariable \land
        (( \aformula^{(i)} \land \alloc{\avariable_j}) \separate \aformulabis^{(i)})
            \implies ( \aformula^{(i)} \land \alloc{\avariable}) \separate \aformulabis^{(i)}
          & \mbox{\ref{starAx:auxilary-2}}\\
        7 & \boxstar{\aformula}{\aformulabis} \implies
        (( \aformula^{(i)} \land \alloc{\avariable}) \separate \aformulabis^{(i)})
          & \mbox{\ref{rule:imptr}, 5, 6}
      \end{syntproof}
    \end{description}
    Without loss of generality,
    thanks to the derivation above dealing with $\alloc{\avariable}$ literals,
    we now assume that for all $\alloc{\avariable} \inside \aformula$ and all $\alloc{\avariablebis} \inside \aformulabis$,
    we have $\alloc{\avariable} \inside \aformula^{(i)}$ and
    $\alloc{\avariablebis} \inside \aformulabis^{(i)}$.
    \begin{description}
    \item[case: $\aliteral \, = \, \lnot \alloc{\avariable}$]
        We distinguish two main subcases.
        \begin{itemize}
          \item First, assume
            $\lnot \alloc{\avariable} \inside \aformulabis$.
            By definition of $\boxstar{\aformula}{\aformulabis}$, $\lnot \alloc{\avariable} \inside \boxstar{\aformula}{\aformulabis}$.
            \begin{syntproof}
              1 &  \boxstar{\aformula}{\aformulabis} \implies \aformula^{(i)} \separate \aformulabis^{(i)}
              & \mbox{Hypothesis}\\
              2 & \boxstar{\aformula}{\aformulabis} \implies \lnot \alloc{\avariable}
                & \mbox{PC, def.~of~$\boxstar{\aformula}{\aformulabis}$, see above}\\
              3 & \boxstar{\aformula}{\aformulabis} \implies \lnot \alloc{\avariable}
                  \land (\aformula^{(i)} \separate \aformulabis^{(i)})
                & \mbox{PC, 1, 2}\\
              4 & \lnot \alloc{\avariable}
              \land (\aformula^{(i)} \separate \aformulabis^{(i)})
              \implies (\aformula^{(i)}\land \lnot \alloc{\avariable}) \separate \aformulabis^{(i)}
                & \mbox{\ref{starAx:auxilary-4}}\\
              5 & \boxstar{\aformula}{\aformulabis} \implies (\aformula^{(i)}
                  \land \lnot \alloc {\avariable}) \separate \aformulabis^{(i)}
                & \mbox{\ref{rule:imptr}, 3, 4}
            \end{syntproof}

          \item Otherwise, $\alloc{\avariable} \inside \aformulabis$.
                By assumption, $\alloc{\avariable} \inside \aformulabis^{(i)}$.
                \begin{syntproof}
                  1 & \aformulabis^{(i)} \implies \aformulabis^{(i)} \land \alloc{\avariable}
                    & \mbox{PC, see above}\\
                  2 & \boxstar{\aformula}{\aformulabis} \implies \aformula^{(i)} \separate \aformulabis^{(i)}
                    & \mbox{Hypothesis}\\
                  3 & \aformula^{(i)} \separate \aformulabis^{(i)} \implies (\aformulabis^{(i)} \land \alloc{\avariable}) \separate \aformula^{(i)}
                    & \mbox{\ref{starAx:Commute}, \ref{rule:starinference}, 1}\\
                  4 & (\aformulabis^{(i)} \land \alloc{\avariable}) \separate \aformula^{(i)}
                      \implies \aformulabis^{(i)} \separate (\aformula^{(i)} \land \lnot \alloc{\avariable})
                    & \mbox{\ref{starAx:auxilary-3}}\\
                  5 & \aformulabis^{(i)} \separate (\aformula^{(i)} \land \lnot \alloc{\avariable}) \implies (\aformula^{(i)} \land \lnot \alloc{\avariable}) \separate \aformulabis^{(i)}
                    & \mbox{\ref{starAx:Commute}}\\
                  6 & \boxstar{\aformula}{\aformulabis} \implies (\aformula^{(i)} \land \lnot \alloc{\avariable}) \separate \aformulabis^{(i)}
                    & \mbox{\ref{rule:imptr}, 2, 3, 4, 5}
                \end{syntproof}
        \end{itemize}
        \item[case: $\aliteral \, = \, \avariable \Ipto \avariablebis$]
        Similar to the case $\aliteral = \alloc{\avariable}$.
        Since $\aformula$ is a satisfiable core type, we have $\alloc{\avariable} \inside \aformula$ (see axiom~\ref{coreAx:PointAlloc}).
        By assumption, $\alloc{\avariable} \inside \aformula^{(i)}$.
        By definition of~$\boxstar{\aformula}{\aformulabis}$, we have
        ${\avariable \Ipto \avariablebis} \inside \boxstar{\aformula}{\aformulabis}$.
        \begin{syntproof}
          1 & \aformula^{(i)} \implies \aformula^{(i)} \land \alloc{\avariable}
            & \mbox{PC, see above}\\
          2 & \boxstar{\aformula}{\aformulabis} \implies \aformula^{(i)} \separate \aformulabis^{(i)}
            & \mbox{Hypothesis}\\
          3 & \boxstar{\aformula}{\aformulabis} \implies \avariable \Ipto \avariablebis
            & \mbox{PC, see above}\\
          4 & \aformula^{(i)} \separate \aformulabis^{(i)} \implies
              (\aformula^{(i)} \land \alloc{\avariable}) \separate \aformulabis^{(i)}
            & \mbox{\ref{rule:starinference}, 1}\\
          5 & \boxstar{\aformula}{\aformulabis} \implies
              \avariable \Ipto \avariablebis \land ((\aformula^{(i)} \land \alloc{\avariable}) \separate \aformulabis^{(i)})
            & \mbox{PC, 3, 4}\\
          6 & \avariable \Ipto \avariablebis \land ((\aformula^{(i)} \land \alloc{\avariable}) \separate \aformulabis^{(i)})
            \implies (\aformula^{(i)} \land \avariable \Ipto \avariablebis) \separate \aformulabis^{(i)}
            & \mbox{\ref{starAx:auxilary-5}}\\
          7 & \boxstar{\aformula}{\aformulabis} \implies (\aformula^{(i)} \land \avariable \Ipto \avariablebis) \separate \aformulabis^{(i)}
            & \mbox{\ref{rule:starinference}, 5, 6}
        \end{syntproof}
      \end{description}
      Without loss of generality,
      thanks to the previous cases dealing with $\lnot \alloc{\avariable}$ literals,
      below we assume that for every $\lnot \alloc{\avariable} \inside \aformula$ and every $\lnot \alloc{\avariablebis} \inside \aformulabis$,
      we have $\lnot \alloc{\avariable} \inside \aformula^{(i)}$ and
      $\lnot \alloc{\avariablebis} \inside \aformulabis^{(i)}$.
      \begin{description}
      \item[case: $\aliteral \, = \, \lnot \avariable \Ipto \avariablebis$]
        We distinguish two main subcases
        \begin{itemize}
          \item First, suppose $\alloc{\avariable} \inside \aformula$. In this case, by definition of $\boxstar{\aformula}{\aformulabis}$, we have ${\lnot \avariable \Ipto \avariablebis} \inside \boxstar{\aformula}{\aformulabis}$. Therefore,
          \begin{syntproof}
            1 & \boxstar{\aformula}{\aformulabis} \implies \lnot \avariable \Ipto \avariablebis
              & \mbox{PC, see above}\\
            2 & \boxstar{\aformula}{\aformulabis} \implies \aformula^{(i)} \separate \aformulabis^{(i)}
            & \mbox{Hypothesis}\\
            3 & \boxstar{\aformula}{\aformulabis} \implies
            \lnot \avariable \Ipto \avariablebis \land (\aformula^{(i)} \separate \aformulabis^{(i)})
            & \mbox{PC, 1, 2}\\
            4 & \lnot \avariable \Ipto \avariablebis \land (\aformula^{(i)} \separate \aformulabis^{(i)}) \implies (\aformula^{(i)} \land \lnot \avariable \Ipto \avariablebis) \separate \aformulabis^{(i)}
              & \mbox{\ref{starAx:auxilary-6}}
          \end{syntproof}

          \item Otherwise, we have $\lnot \alloc{\avariable} \inside \aformula$. By assumption,
          $\lnot \alloc{\avariable} \inside \aformula^{(i)}$, and thus
          \begin{syntproof}
            1 &  \aformula^{(i)} \implies \lnot \alloc{\avariable}
              & \mbox{PC, see above}\\
            2 & \lnot \alloc{\avariable} \implies \lnot \avariable \Ipto \avariablebis
              & \mbox{\ref{coreAx:PointAlloc}, PC}\\
            3 & \aformula^{(i)} \implies \lnot \avariable \Ipto \avariablebis
              & \mbox{\ref{rule:imptr}, 1, 2}\\
            4 & \aformula^{(i)} \implies \aformula^{(i)} \land \lnot \avariable \Ipto \avariablebis
              & \mbox{PC, 3}\\
            5 & \boxstar{\aformula}{\aformulabis} \implies \aformula^{(i)} \separate \aformulabis^{(i)}
              & \mbox{Hypothesis}\\
            6 & \aformula^{(i)} \separate \aformulabis^{(i)} \implies
                (\aformula^{(i)} \land \lnot \avariable \Ipto \avariablebis) \separate \aformulabis^{(i)}
              & \mbox{\ref{rule:starinference}, 4}\\
            7 & \boxstar{\aformula}{\aformulabis} \implies
                (\aformula^{(i)} \land \lnot \avariable \Ipto \avariablebis) \separate \aformulabis^{(i)}
              & \mbox{\ref{rule:imptr}, 5, 6}
          \end{syntproof}
        \end{itemize}
      \item[case: $\aliteral = \size \geq \inbound$]
          By definition of $\maxsize{.}$, $\inbound \leq \maxsize{\aformula}$.
          By definition of $\aformula^{(1)}$, $\size \geq \maxsize{\aformula} \inside \aformula^{(1)}$. From $\aformula^{(1)} \inside \aformula^{(i)}$, we get $\size \geq \maxsize{\aformula} \inside \aformula^{(i)}$.
          \begin{syntproof}
            1 & \aformula^{(i)} \implies \size \geq \maxsize{\aformula}
              & \mbox{PC, see above}\\
            2 & \size \geq \maxsize{\aformula} \implies \size \geq \inbound
              & \mbox{repeated~\ref{coreAx:Size}, PC, as $\inbound \leq \maxsize{\aformula}$}\\
            3 & \aformula^{(i)} \implies \aformula^{(i)} \land \size \geq \inbound
              & \mbox{PC, 1, 2}\\
            4 & \boxstar{\aformula}{\aformulabis} \implies \aformula^{(i)} \separate \aformulabis^{(i)}
              & \mbox{Hypothesis}\\
            5 & \aformula^{(i)} \separate \aformulabis^{(i)} \implies
                (\aformula^{(i)} \land \size \geq \inbound) \separate \aformulabis^{(i)}
              & \mbox{\ref{rule:starinference}, 3}\\
            6 & \boxstar{\aformula}{\aformulabis} \implies
                (\aformula^{(i)} \land \size \geq \inbound) \separate \aformulabis^{(i)}
              & \mbox{\ref{rule:imptr}, 4, 5}
          \end{syntproof}

      \item[case: $\aliteral = \lnot \size \geq \inbound$]
          In this case, $\maxsize{\aformula} < \bound$.
          Since $\aformula$ is a satisfiable core type, we have $\inbound > \maxsize{\aformula}$.
          Moreover, by definition of $\aformula^{(1)}$,
          $\lnot \size \geq \maxsize{\aformula}+1 \inside \aformula^{(1)}$.
          From $\aformula^{(1)} \inside \aformula^{(i)}$,
          we have
          $\lnot \size \geq \maxsize{\aformula}+1 \inside \aformula^{(i)}$.
          \begin{syntproof}
            1 & \aformula^{(i)} \implies \lnot \size \geq \maxsize{\aformula}{+}1
              & \mbox{PC, see above}\\
            2 & \lnot \size \geq \maxsize{\aformula}{+}1 \implies \lnot \size \geq \inbound
              & \mbox{repeated~\ref{coreAx:Size}, PC, as $\inbound > \maxsize{\aformula}$}\\[-3pt]
              && \mbox{by PC, the contrapositive of~\ref{coreAx:Size} is derivable}\\
            3 & \aformula^{(i)} \implies \aformula^{(i)} \land \lnot \size \geq \inbound
              & \mbox{PC, 1, 2}\\
            4 & \boxstar{\aformula}{\aformulabis} \implies \aformula^{(i)} \separate \aformulabis^{(i)}
              & \mbox{Hypothesis}\\
            5 & \aformula^{(i)} \separate \aformulabis^{(i)} \implies
                (\aformula^{(i)} \land \lnot \size \geq \inbound) \separate \aformulabis^{(i)}
              & \mbox{\ref{rule:starinference}, 3}\\
            6 & \boxstar{\aformula}{\aformulabis} \implies
                (\aformula^{(i)} \land \lnot \size \geq \inbound) \separate \aformulabis^{(i)}
              & \mbox{\ref{rule:imptr}, 4, 5} \hfill\qedhere
          \end{syntproof}
  \end{description}

\end{proof}

\begin{cor}[Star elimination]\label{lemma:starPSLelim}
    Let $\asetvar \subseteq_{\fin} \PVAR$ and $\bound \geq \card{\asetvar}$. 
    Let $\aformula$ and $\aformulabis$ 
    in $\coretype{\asetvar}{\bound}$.
   There is
   $\aformulater$ in $\conjcomb{\coreformulae{\asetvar}{2\bound}}$ such that
   $\prove_{\starsys} \aformula \separate \aformulabis \iff \aformulater$.
\end{cor}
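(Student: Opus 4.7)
The plan is to reduce the corollary to Lemma~\ref{lemma:starPSLelim-sat} by a case analysis on whether the two core types $\aformula$ and $\aformulabis$ are satisfiable. The hard work has already been done in Lemma~\ref{lemma:starPSLelim-sat}; what remains is only to handle the degenerate case in which at least one of the arguments is unsatisfiable (and to verify the size bound on the resulting conjunction of literals).

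In the first case, both $\aformula$ and $\aformulabis$ are satisfiable, and I would simply set $\aformulater \egdef \boxstar{\aformula}{\aformulabis}$, invoking Lemma~\ref{lemma:starPSLelim-sat} to obtain $\vdash_{\starsys} \aformula \separate \aformulabis \iff \aformulater$. The only verification required is that $\aformulater \in \conjcomb{\coreformulae{\asetvar}{2\bound}}$: by inspection of Figure~\ref{figure:boxstar}, every non-size literal of $\boxstar{\aformula}{\aformulabis}$ already belongs to $\coreformulae{\asetvar}{\bound}$, while every size literal has the form $\size \geq \inbound_1 + \inbound_2$ or $\lnot\size \geq \inbound_1 + \inbound_2 \dotminus 1$ with $\inbound_1,\inbound_2 \in \interval{0}{\bound}$, which fits into $\coreformulae{\asetvar}{2\bound}$.

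In the second case, assume without loss of generality that $\aformula$ is unsatisfiable (otherwise, apply axiom~\ref{starAx:Commute} and the rule~\ref{rule:starinference} to swap the two operands). By soundness of $\coresys$, the formula $\neg \aformula$ is valid, so by the adequacy statement of Theorem~\ref{theo:corePSLcompl} we have $\vdash_{\coresys} \aformula \implies \bot$, hence also $\vdash_{\starsys} \aformula \implies \bot$ since $\coresys \subseteq \starsys$. Applying rule~\ref{rule:starinference} yields $\vdash_{\starsys} \aformula \separate \aformulabis \implies \bot \separate \aformulabis$, and then axiom~\ref{starAx:False} gives $\vdash_{\starsys} \aformula \separate \aformulabis \implies \bot$. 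It then suffices to pick any propositionally inconsistent $\aformulater \in \conjcomb{\coreformulae{\asetvar}{2\bound}}$, for instance $\aformulater \egdef \size \geq 0 \land \lnot \size \geq 0$, which is available regardless of whether $\asetvar$ is empty; the equivalence $\vdash_{\starsys} \aformula \separate \aformulabis \iff \aformulater$ then follows by propositional reasoning.

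I do not expect any real obstacle: Lemma~\ref{lemma:starPSLelim-sat} carries the entire technical burden, and the intermediate axiom~\ref{starAx:False} was included in $\starsys$ precisely to absorb the unsatisfiable case cleanly. The only minor bookkeeping is confirming the bound $2\bound$ on the size atoms of $\boxstar{\aformula}{\aformulabis}$, which is immediate from its definition.
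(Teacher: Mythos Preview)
Your proposal is correct and follows essentially the same route as the paper: invoke Lemma~\ref{lemma:starPSLelim-sat} when both core types are satisfiable, and otherwise use completeness of $\coresys$ together with~\ref{rule:starinference} and~\ref{starAx:False} to collapse $\aformula \separate \aformulabis$ to $\bot$ and pick an inconsistent conjunction of core literals. Your choice of $\aformulater = \size \geq 0 \land \lnot \size \geq 0$ is arguably slightly cleaner than the paper's $\lnot(\avariable = \avariable)$, since it does not presuppose $\asetvar \neq \emptyset$.
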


\begin{proof}
    If both $\aformula$ and $\aformulabis$ are satisfiable, the results holds directly 
    by~Lemma~\ref{lemma:starPSLelim-sat}, 
    as $\boxstar{\aformula}{\aformulabis}$ is 
    in $\conjcomb{\coreformulae{\asetvar}{\bound+\bound}}$.
    Otherwise, 
    let us treat the case where one of the two formulas is unsatisfiable.
    For instance, assume that~$\aformula$ is unsatisfiable. 
    Then~$\prove_{\coresys} \aformula\implies \bot$ by completeness of $\coresys$ 
    (Lemma~\ref{lemma:corePSLtwo}) and, $\coresys(\separate)$ includes $\coresys$,~$\prove_{\coresys(\separate)} \aformula\implies \bot$.  
    By 
    the rule~\ref{rule:starinference} and by the axiom~\ref{starAx:False}, we get $\prove_{\starsys}\aformula*\aformulabis\implies\bot$.
    Thus $\aformulater$ can take the value $\lnot (\avariable = \avariable)$. 
    The case where $\aformulabis$ is not satisfiable is analogous, thanks to~\ref{starAx:Commute}.
\end{proof}

By the distributivity axiom~\ref{starAx:DistrOr}, Corollary~\ref{lemma:starPSLelim}
is extended from core types to arbitrary Boolean combinations of core formulae.
$\starsys$ is therefore complete for \slSA.
In order to derive  a valid formula $\aformula\in\slSA$, we repeatedly apply
the elimination of $\separate$  in a bottom-up fashion, starting from the leaves of $\aformula$ (which
are Boolean combinations of core formulae) and obtaining a Boolean combination of core formulae $\aformulabis$ that is equivalent to $\aformula$.
Then, we rely on the completeness of $\coresys$ (Theorem~\ref{theo:corePSLcompl}) to prove that $\aformulabis$ is derivable.

\begin{thm}\label{theo:starCompleteness}
A formula $\aformula$ in \slSA{} is valid iff $\prove_{\starsys} \aformula$.
\end{thm}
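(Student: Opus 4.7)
The plan is as follows. Soundness is immediate from Lemma~\ref{lemma:axiomatisation-star-sound}. For completeness, I will prove the stronger statement that every formula $\aformula \in \slSA$ is provably equivalent in $\starsys$ to a Boolean combination of core formulae $\widetilde{\aformula}$. Once this is established, if $\aformula$ is valid then $\widetilde{\aformula}$ is valid too (by soundness of $\starsys$ and the equivalence), so by Theorem~\ref{theo:corePSLcompl} we have $\prove_{\coresys} \widetilde{\aformula}$, hence $\prove_{\starsys} \widetilde{\aformula}$ since $\starsys$ extends $\coresys$, and finally $\prove_{\starsys} \aformula$ by propositional reasoning from the equivalence.

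I will prove this stronger statement by structural induction on $\aformula$. For the base cases, the atomic formulae $\avariable = \avariablebis$, $\avariable \Ipto \avariablebis$ and $\alloc{\avariable}$ are themselves core formulae, and $\emp$ is by definition the same formula as $\lnot \size \geq 1$, which is already a Boolean combination of core formulae. The Boolean cases $\lnot \aformula'$ and $\aformula_1 \land \aformula_2$ are straightforward: the induction hypothesis provides equivalent Boolean combinations of core formulae, and replacement of provable equivalents under $\lnot$ and $\land$ is admissible in $\starsys$ by pure propositional reasoning.

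The interesting case is $\aformula = \aformula_1 \separate \aformula_2$. By the induction hypothesis, there are Boolean combinations of core formulae $\widetilde{\aformula_1}, \widetilde{\aformula_2}$ with $\prove_{\starsys} \aformula_i \iff \widetilde{\aformula_i}$. Using the rule~\ref{rule:starinference} together with~\ref{starAx:Commute} (i.e.~the derived rule~\ref{rule:starintroLR}), I obtain $\prove_{\starsys} \aformula_1 \separate \aformula_2 \iff \widetilde{\aformula_1} \separate \widetilde{\aformula_2}$. By the Core Types Lemma~\ref{prop:corePSLone}, each $\widetilde{\aformula_i}$ is provably equivalent (already in $\coresys$, hence a fortiori in $\starsys$) to a disjunction of core types $\bigvee_j \atheta_{i,j}$, all drawn from some $\coretype{\asetvar}{\bound}$ with $\bound \geq \card{\asetvar}$. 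Using~\ref{starAx:DistrOr} together with~\ref{starAx:Commute} to distribute $\separate$ over $\vee$ on both sides, I reduce the problem to showing that each conjunct $\atheta_{1,j} \separate \atheta_{2,k}$ is provably equivalent to a Boolean combination of core formulae; but this is exactly the content of Corollary~\ref{lemma:starPSLelim}. Gluing these equivalences back together by propositional reasoning and~\ref{rule:starinference} yields a Boolean combination of core formulae equivalent to $\aformula_1 \separate \aformula_2$ in $\starsys$.

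The main technical work has already been carried out in Corollary~\ref{lemma:starPSLelim}, so no genuine obstacle remains at this stage; the step that requires the most care is simply making sure that the distributivity rewriting over the disjunctive normal form is performed entirely inside $\starsys$ (it is, since~\ref{starAx:DistrOr} and~\ref{starAx:Commute} together with propositional reasoning suffice) and that the applications of~\ref{rule:starinference} inside larger contexts are justified by iterated use of the derived rule~\ref{rule:starintroLR}.
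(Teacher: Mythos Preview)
Your proposal is correct and follows essentially the same route as the paper: reduce to a Boolean combination of core formulae, invoke Theorem~\ref{theo:corePSLcompl}, and handle the $\separate$-case by putting both arguments into disjunctions of core types (Lemma~\ref{prop:corePSLone}), distributing via~\ref{starAx:DistrOr} and~\ref{starAx:Commute}, and then applying Corollary~\ref{lemma:starPSLelim}. The only cosmetic difference is that the paper phrases the induction on the number of occurrences of $\separate$ not involved in the abbreviation $\size \geq \inbound$ (and accordingly states an explicit replacement-of-equivalents property \textbf{R0} to substitute inside arbitrary contexts), whereas you use plain structural induction; both organisations work and rely on the same lemmas.
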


\begin{proof}
Soundness of the proof system $\starsys$ has been already established earlier.

As far as the completeness proof is concerned, we need to show that for every formula $\aformula$
in \slSA{}, there is a Boolean combination of core formulae $\aformulabis$ such that
$\prove_{\starsys} \aformula \Leftrightarrow \aformulabis$. In order to conclude the proof,
when $\aformula$ is valid for \slSA{}, by soundness of $\starsys$, we obtain that
$\aformulabis$ is valid too and therefore  $\prove_{\starsys} \aformulabis$
as $\coresys$ is a subsystem of $\starsys$ and $\coresys$ is complete by
Theorem~\ref{theo:corePSLcompl}. By propositional reasoning, we get that
$\prove_{\starsys} \aformula$.

To show that every formula $\aformula$ has a provably equivalent Boolean combination
of core formulae, we heavily rely on Corollary~\ref{lemma:starPSLelim}. The proof is by simple induction
on the number of occurrences of $\separate$ in $\aformula$ that are not involved in the definition of some core
formula of the form $\size \geq \inbound$. For the base case, when $\aformula$
has no occurrence of the separating conjunction, $\avariable = \avariablebis$ and $\avariable \Ipto
\avariablebis$ are already core formulae, and $\emp$ is logically equivalent to $\neg \size \geq 1$.

Before performing the induction step, let us observe that in $\starsys$, the replacement of provably equivalent
formulae holds true, which is stated as follows:

\begin{enumerate}[label=\textbf{R\arabic*}]
\setcounter{enumi}{-1}
\item\label{SC-auxlemmaR0} Let $\aformula, \aformula'$ and $\aformulabis$ be
formulae of \slSA{} such that
$\prove_{\starsys} \aformula \Leftrightarrow \aformula'$. Then,
$$\prove_{\starsys}  \aformulabis[\aformula]_{\rho} \Rightarrow
\aformulabis[\aformula']_{\rho}$$
\end{enumerate}

\noindent 
Above, $\aformulabis[\aformula]_{\rho}$ refers to the formula $\aformulabis$ in which the subformula
at the occurrence $\rho$ (in the standard sense) is replaced by $\aformula$.
($\aformula$ and $\aformula'$ are therefore placed at the same occurrence.)

To prove~\ref{SC-auxlemmaR0}, we first note that the following rules can be shown admissible in $\starsys$:
$$
\inference{\aformula \iff \aformula'}{\neg \aformula \iff \neg \aformula'} \ \ \
\inference{\aformula \iff \aformula'}{\aformula \vee \aformulabis \iff \aformula' \vee \aformulabis} \ \ \
\inference{\aformula \iff \aformula'}{\aformula \wedge \aformulabis \iff \aformula' \wedge \aformulabis}
$$
Admissibility of such rules is a direct consequence of the presence of axioms and modus ponens for the propositional calculus.
As a consequence of the presence of the rule~\ref{rule:starinference} in $\starsys$,  the rule below is also admissible:
$$\inference{\aformula \Leftrightarrow \aformula'}{\aformula \separate \aformulabis \Leftrightarrow \aformula'
\separate \aformulabis}
$$
Consequently, by structural induction on $\aformulabis$, one can conclude that
$\prove_{\starsys} \aformula \Leftrightarrow \aformula'$
 implies $\prove_{\starsys}  \aformulabis[\aformula]_{\rho} \Rightarrow
\aformulabis[\aformula']_{\rho}$
(the axiom~\ref{starAx:Commute} needs to be used here).

Assume that $\aformula$ is a formula in \slSA{} with $n+1$ occurrences of the separating conjunction not involved in the definition of some
$\size \geq \inbound$ ($n \geq 0$).
Let $\aformulabis$ be a subformula of $\aformula$ (at the occurrence $\rho$) of the form $\aformulabis_1 \separate \aformulabis_2$ such that
$\aformulabis_1$ and $\aformulabis_2$ are  Boolean combinations of
core formulae, in $\boolcomb{\coreformulae{\asetvar}{\bound_1}}$ and $\boolcomb{\coreformulae{\asetvar}{\bound_2}}$.
By pure propositional reasoning, one can show that there are formulae in disjunctive normal form
$\aformulabis_1^1 \vee \cdots \vee \aformulabis_1^{n_1}$ and ${\aformulabis_2^1 \vee \cdots \vee \aformulabis_2^{n_2}}$
such that $\vdash_{\coresys} \aformulabis_i \Leftrightarrow \aformulabis_i^1 \vee \cdots \vee \aformulabis_i^{n_i}$
for $i \in \set{1,2}$ and moreover, all the $\aformulabis_i^j$'s are core types in $\coretype{\asetvar}{\max(\card{\asetvar},\bound_1,\bound_2)}$.
Again, by using propositional reasoning but this time  using also the axiom~\ref{starAx:DistrOr} for distributivity, we have
$$
\vdash_{\starsys} \aformulabis_1 \separate \aformulabis_2 \Leftrightarrow
\bigvee_{j_1 \in \interval{1}{n_1}, j_2 \in \interval{1}{n_2}} \aformulabis_1^{j_1} \separate \aformulabis_2^{j_2}.
$$
\noindent
We now rely on Corollary~\ref{lemma:starPSLelim} and derive that there is a 
conjunction of core formulae $\aformulabis^{j_1,j_2}$ in $\conjcomb{\coreformulae{\asetvar}{2\max(\card{\asetvar},\bound_1,\bound_2)}}$ such that 
$\vdash_{\starsys} \aformulabis_1^{j_1} \separate \aformulabis_2^{j_2} \Leftrightarrow \aformulabis^{j_1,j_2}$.
By propositional reasoning, we get
$$
\vdash_{\starsys} \aformulabis_1 \separate \aformulabis_2 \Leftrightarrow
\bigvee_{j_1 \in \interval{1}{n_1}, j_2 \in \interval{1}{n_2}} \aformulabis^{j_1,j_2}.
$$
Consequently (thanks to the property~\ref{SC-auxlemmaR0}), we obtain
$$
\vdash_{\starsys} \aformula \Leftrightarrow \aformula[\bigvee_{j_1 \in \interval{1}{n_1}, j_2 \in \interval{1}{n_2}} \aformulabis^{j_1,j_2}]_{\rho}
$$
Note that the right-hand side formula has  $n$ occurrences of the separating
conjunction that are not involved in the definition of some core
formula of the form $\size \geq \inbound$. The induction hypothesis applies, which concludes the proof.

\end{proof}

\section{A constructive elimination of $\magicwand$ leading to full completeness}\label{section:magicwandelimination}
\begin{figure}
  \begin{footnotesize}

  \fbox{
    \begin{minipage}{0.95\linewidth}
    \vspace{2pt}
    \begin{enumerate}[align=left,leftmargin=*]
    \setlength\itemsep{4pt}
    \addtocounter{enumi}{20}
    \item[\axlab{A^{\magicwand}}{wandAx:Size}]$(\size = 1 \land \bigwedge_{\avariable \in \asetvar}\lnot \alloc{\avariable}) \septraction \true\!\assuming{\asetvar \subseteq_{\fin} \PVAR}$
    \item[\axlab{A^{\magicwand}}{wandAx:PointsTo}] $\lnot \alloc{\avariable} \implies ((\avariable \Ipto \avariablebis \land \size = 1) \septraction \true)$
    \item[\axlab{A^{\magicwand}}{wandAx:Alloc}] $\lnot \alloc{\avariable} \implies ((\alloc{\avariable} \land \size = 1 \land \bigwedge_{\avariablebis \in \asetvar}\lnot \avariable \Ipto \avariablebis ) \septraction \true) \assuming{\asetvar \subseteq_{\fin} \PVAR}$\\ 
    \end{enumerate}
    \hfill
    \rulelab{\textbf{$\separate$-Adj}}{rule:staradj}
      $\inference{\aformula \separate \aformulabis \implies \aformulater}{\aformula \implies (\aformulabis \magicwand \aformulater)}{}$
    \hfill
    \rulelab{\textbf{$\magicwand$-Adj}}{rule:magicwandadj}
      $\inference{\aformula \implies (\aformulabis \magicwand \aformulater)}{\aformula \separate \aformulabis \implies \aformulater}{}$
    \hfill\,
    \end{minipage}
  }
  \end{footnotesize}
  \caption{Additional axioms and rules for handling the separating implication.}
  \label{figure-proof-system-magicwand}
\end{figure}

In order to obtain the final proof system $\magicwandsys$, we add the axioms and rules
from Figure~\ref{figure-proof-system-magicwand} to the proof system $\starsys$. These new axioms
and rules are dedicated to the separating implication. 
The axioms involving~$\septraction$ (kind of dual of $\magicwand$, introduced in Section~\ref{section:preliminaries}) express that it is always
possible to extend a given heap with an extra cell, and that the address and the content of this cell can be fixed arbitrarily
(provided it is not already allocated). The adjunction rules~\ref{rule:staradj} and~\ref{rule:magicwandadj} are from the Hilbert-style axiomatisation
of Boolean \mbox{BI~\cite[Section 2]{Galmiche&Larchey06}}.
One can observe that, in $\magicwandsys$, 
the axioms~\ref{starAx:DistrOr},~\ref{starAx:False} and~\ref{starAx:StarAlloc} 
of~$\coresys(\separate)$
are derivable.

\begin{lem} \label{lemma:admissible-axioms-2}
The axioms~\ref{starAx:DistrOr},~\ref{starAx:False} and~\ref{starAx:StarAlloc} 
are derivable
in $\coresys(\separate,\magicwand)$.
\end{lem}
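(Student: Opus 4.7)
The plan is to derive the three axioms in a specific order, because the derivation of \ref{starAx:StarAlloc} will itself invoke \ref{starAx:False}. All three proofs rely essentially on the adjunction rules \ref{rule:FS:staradj} and \ref{rule:FS:magicwandadj}, which are the only genuinely new ingredients of $\magicwandsys$ over $\coresys(\separate)$.

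First I would derive \ref{starAx:False}. The direction $\bot \implies (\bot \separate \aformula)$ holds by propositional reasoning. For the converse, $\bot \implies (\aformula \magicwand \bot)$ is a propositional tautology, and a single application of \ref{rule:FS:magicwandadj} yields $\bot \separate \aformula \implies \bot$.

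Next I would derive \ref{starAx:DistrOr} by a standard adjunction argument. Set $\psi \egdef (\aformula \separate \aformulater) \lor (\aformulabis \separate \aformulater)$. From $\aformula \separate \aformulater \implies \psi$ (propositional calculus), \ref{rule:FS:staradj} gives $\aformula \implies (\aformulater \magicwand \psi)$; symmetrically $\aformulabis \implies (\aformulater \magicwand \psi)$. Combining by propositional reasoning produces $\aformula \lor \aformulabis \implies (\aformulater \magicwand \psi)$, and then \ref{rule:FS:magicwandadj} delivers $(\aformula \lor \aformulabis) \separate \aformulater \implies \psi$, which is exactly \ref{starAx:DistrOr}.

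Finally, for \ref{starAx:StarAlloc} I would unfold the shortcut $\alloc{\avariable} \egdef (\avariable \Ipto \avariable) \magicwand \bot$. From the tautology $\alloc{\avariable} \implies (\avariable \Ipto \avariable) \magicwand \bot$, \ref{rule:FS:magicwandadj} yields $\alloc{\avariable} \separate (\avariable \Ipto \avariable) \implies \bot$. Framing with $\true$ using \ref{rule:FS:starinference} gives $\bigl(\alloc{\avariable} \separate (\avariable \Ipto \avariable)\bigr) \separate \true \implies \bot \separate \true$, and the already derived \ref{starAx:False} (together with propositional reasoning) collapses the right-hand side to $\bot$. Rearranging the left-hand side via \ref{starAx:FS:Assoc} and \ref{starAx:FS:Commute}, I obtain $(\alloc{\avariable} \separate \true) \separate (\avariable \Ipto \avariable) \implies \bot$, to which a final application of \ref{rule:FS:staradj} supplies $\alloc{\avariable} \separate \true \implies (\avariable \Ipto \avariable) \magicwand \bot$, i.e.\ $\alloc{\avariable} \separate \true \implies \alloc{\avariable}$.

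None of these derivations is particularly difficult; the only subtlety — and the step that I expect to look least routine — is the treatment of \ref{starAx:StarAlloc}, since it crucially exploits the fact that in $\magicwandsys$ the symbol $\alloc{\avariable}$ is a macro for a formula involving $\magicwand$, so that the adjunction rules become applicable; this is precisely the reason the axiom can be dropped from $\magicwandsys$ even though it had to be postulated in the $\magicwand$-free subsystem $\coresys(\separate)$.
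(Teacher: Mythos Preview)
Your proposal is correct and matches the paper's own derivations essentially step for step: \ref{starAx:False} via the tautology $\bot \implies (\aformula \magicwand \bot)$ and one adjunction, \ref{starAx:DistrOr} by the standard two-sided adjunction argument, and \ref{starAx:StarAlloc} by unfolding $\alloc{\avariable}$, framing with $\top$, using the already-derived \ref{starAx:False}, reassociating, and closing with \ref{rule:FS:staradj}. The only cosmetic difference is the order in which the paper presents the three derivations; your observation that \ref{starAx:StarAlloc} relies on \ref{starAx:False} is exactly the dependency the paper's proof exhibits.
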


The derivations of~\ref{starAx:DistrOr},~\ref{starAx:False} 
and~\ref{starAx:StarAlloc} that lead to Lemma~\ref{lemma:admissible-axioms-2} are given in~Appendix~\ref{appendix-admissible-axioms-2}.


Fundamentally, $\magicwandsys$ enjoys the $\magicwand$ elimination property, as shown
below. Actually, we state the property with the help of $\septraction$ as we find the related statements and developments more intuitive. 
\begin{lem}\label{lemma:magicwandPSLelim}
  Let $\asetvar \subseteq_{\fin} \PVAR$ and $\bound \geq \card{\asetvar}$.
  Let $\aformula$ and $\aformulabis$ in $\coretype{\asetvar}{\bound}$. There is a conjunction 
  $\aformulater \in \conjcomb{\coreformulae{\asetvar}{\bound}}$
  such that $\prove_{\magicwandsys} (\aformula\septraction\aformulabis) \iff \aformulater$.
\end{lem}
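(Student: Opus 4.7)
The plan is to mirror the structure of Lemma~\ref{lemma:starPSLelim-sat}: introduce a canonical conjunction $\aformulater \in \conjcomb{\coreformulae{\asetvar}{\bound}}$ that captures, via core formulae, exactly when the current memory state admits a disjoint extension satisfying $\aformula$ whose combined heap satisfies $\aformulabis$, and then syntactically derive both implications in $\magicwandsys$. The degenerate cases are handled first: if $\aformula$ is unsatisfiable (detectable in $\coresys$ by Theorem~\ref{theo:corePSLcompl}), or if $\aformula$ and $\aformulabis$ disagree on the (in)equalities among variables of $\asetvar$, or if $\alloc{\avariable}$ occurs positively in both $\aformula$ and $\aformulabis$ for some $\avariable$ while $\lnot\alloc{\avariable}$ is missing from $\aformula$ — then $\aformula\septraction\aformulabis$ is unsatisfiable, so $\aformulater$ can be taken to be $\avariable\neq\avariable$ and provability follows by unfolding $\septraction$ into $\lnot(\aformula\magicwand\lnot\aformulabis)$ and invoking the adjunction rule~\ref{rule:FS:staradj}.

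Assuming the nondegenerate regime, $\aformulater$ is designed to assemble: the common (in)equalities of $\aformula$ and $\aformulabis$; the literal $\lnot\alloc{\avariable}$ for each $\alloc{\avariable}\inside\aformula$ (so the witness extension can allocate $\avariable$ disjointly); the literal $\alloc{\avariable}$ for each $\alloc{\avariable}\inside\aformulabis$ with $\lnot\alloc{\avariable}\inside\aformula$ (the current heap must supply the allocation); the literal $\avariable\Ipto\avariablebis$ for each $\avariable\Ipto\avariablebis\inside\aformulabis$ with $\lnot\alloc{\avariable}\inside\aformula$; the symmetric negative points-to literals; and size literals linking $\maxsize{\aformulabis}$ to $\maxsize{\aformula}$ and to the number of variables that must be allocated in the current heap. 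For the forward direction $\aformula\septraction\aformulabis \Rightarrow \aformulater$, each literal of $\aformulater$ is verified mechanically: unfolding $\septraction$ and applying~\ref{rule:FS:magicwandadj} reduces the task to showing, for each candidate literal $\aliteral$, that its negation together with $\aformula\separate\true$-style contexts yields a contradiction; this uses~\ref{starAx:FS:DoubleAlloc} to reject double allocations, \ref{starAx:FS:MonoCore} to propagate literals through $\separate$, and~\ref{starAx:FS:SizeNeg} (contrapositively) to extract size bounds.

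The main obstacle is the backward direction $\aformulater \Rightarrow \aformula\septraction\aformulabis$: I need to produce, syntactically, a witness extension. The plan is to build it as a separating conjunction of single-cell fragments supplied by the wand axioms. For each $\avariable\Ipto\avariablebis \inside \aformula$, the axiom~\ref{wandAx:FS:PointsTo} furnishes the fragment $\avariable\Ipto\avariablebis \land \size{=}1$ via $\septraction$; for each $\alloc{\avariable}\inside\aformula$ without an associated points-to literal, the axiom~\ref{wandAx:FS:Alloc} furnishes a single allocating cell avoiding the unwanted targets; and for each remaining cell needed to match $\maxsize{\aformula}$, the axiom~\ref{wandAx:FS:Size} furnishes a fresh single cell allocating no variable of $\asetvar$. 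Chaining these individual $\septraction$-statements using the adjunctions~\ref{rule:FS:staradj}/\ref{rule:FS:magicwandadj}, and applying the star-elimination of Corollary~\ref{lemma:starPSLelim} (available since $\starsys \subseteq \magicwandsys$) to the resulting $\separate$-combinations, one shows that the compound witness derives $\aformula$ on the extension part and $\aformulabis$ on the combined heap; a final use of~\ref{rule:FS:staradj} then packages this into $\aformula\septraction\aformulabis$. The delicate bookkeeping here is matching sizes: one must argue, using the size axioms, that the total number of cells introduced exactly realises $\maxsize{\aformula}$, and that its sum with the cells already present in the current heap (as prescribed by $\aformulater$) realises $\maxsize{\aformulabis}$ — this is the proof's technical core.
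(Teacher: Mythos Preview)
Your construction of $\aformulater$ and the general architecture match the paper's: define a canonical conjunction (the paper calls it $\boxseptra{\aformula}{\aformulabis}$), prove both implications separately, and for the hard direction build the extension heap one cell at a time via the three $\magicwand$-axioms. The forward direction can indeed be done essentially as you sketch, though the paper takes a shortcut: it proves \emph{semantically} that $(\lnot\aformulater)\separate\aformula\Rightarrow\lnot\aformulabis$ is valid, invokes the already-established completeness of $\starsys$ (Theorem~\ref{theo:starCompleteness}) to make it a theorem, and then a single application of \ref{rule:FS:staradj} plus contraposition yields $(\aformula\septraction\aformulabis)\Rightarrow\aformulater$. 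This avoids the literal-by-literal work entirely.

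There is, however, a real gap in your backward direction. The phrase ``a final use of \ref{rule:FS:staradj} then packages this into $\aformula\septraction\aformulabis$'' cannot be right: \ref{rule:FS:staradj} produces an implication whose conclusion is a \emph{magic wand}, not a septraction. What the paper does instead is split the target $\aformulater\Rightarrow(\aformula\septraction\aformulabis)$ into two pieces via the admissible axiom $(\aformula\septraction\top)\wedge(\aformula\magicwand\aformulabis)\Rightarrow(\aformula\septraction\aformulabis)$ (called \ref{mwAx:Mix} in Lemma~\ref{lemma:septractionadmissible}). The piece $\aformulater\Rightarrow(\aformula\magicwand\aformulabis)$ is trivial: semantically $\aformulater\separate\aformula\Rightarrow\aformulabis$ is valid, hence derivable in $\starsys$, and \ref{rule:FS:staradj} finishes. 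All the work then goes into $\aformulater\Rightarrow(\aformula\septraction\top)$, where you no longer have to track $\aformulabis$ during the witness construction---only that a disjoint extension satisfying $\aformula$ exists.

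Your chaining sketch is also too loose for that remaining step. After you use \ref{wandAx:FS:PointsTo} or \ref{wandAx:FS:Alloc} to add the cell for $\avariable_i$, the intermediate state no longer satisfies $\aformulater$ (it now allocates $\avariable_i$), so you cannot simply re-apply the same axioms with the same hypothesis. The paper handles this by an induction on the number of $\avariable\in\asetvar$ with $\alloc{\avariable}\inside\aformula$: at each step it defines a modified core type $\aformula'$ (with $\alloc{\avariable_i}$ flipped to $\lnot\alloc{\avariable_i}$ and the size decremented), proves semantically both that $\ATOM{\avariable_i}\separate\aformula'\Rightarrow\aformula$ and that $(\aformulater\separate\ATOM{\avariable_i})\Rightarrow\boxseptra{\aformula'}{\aformulabis}$, lifts these to theorems via $\starsys$-completeness, and then applies the induction hypothesis to $\aformula'$. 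The admissible axiom \ref{mwAx:Curry}, namely $\aformula\septraction(\aformulabis\septraction\aformulater)\Leftrightarrow(\aformula\separate\aformulabis)\septraction\aformulater$, is what lets you collapse the nested septractions at the end. Without naming this intermediate $\aformula'$ and the induction explicitly, your ``chaining via adjunctions'' does not go through.
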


\begin{proof}[Structure of the proof of~Lemma~\ref{lemma:magicwandPSLelim}]
  \renewcommand{\qedsymbol}{}
In the proof of Lemma~\ref{lemma:magicwandPSLelim}, the formula $\aformulater$
is explicitly constructed from $\aformula$ and $\aformulabis$, following a pattern analogous
to the construction of $\boxstar{.\,}{.}$ in~Figure~\ref{figure:boxstar} (see forthcoming 
Figure~\ref{figure:csl:boxseptra-formula}).
The derivation of the equivalence  $(\aformula\septraction\aformulabis) \iff \aformulater$
is shown as follows. First, the formulae 
$\aformulater \separate \aformula \implies \aformulabis$ and 
$\neg \aformulater \separate \aformula \implies \neg \aformulabis$ are shown valid 
(by using semantical means). As $\starsys$ is complete for $\slSA$, it is a subsystem of
$\magicwandsys$, and the formulae $\aformula$, $\aformulabis$ and $\aformulater$ are Boolean combinations
of core formulae, we get $\prove_{\magicwandsys} \aformulater \separate \aformula \implies \aformulabis$
and $\prove_{\magicwandsys} \neg \aformulater \separate \aformula \implies \neg \aformulabis$.
The latter theorem leads 
to  $\prove_{\magicwandsys} (\aformula\septraction\aformulabis) \implies \aformulater$
by using the definition of $\septraction$ and the rule~\ref{rule:staradj}. 
For the other direction, in order to show that $\prove_{\magicwandsys} \aformulater \implies (\aformula\septraction\aformulabis)$
holds, we take advantage of the admissibility of the theorem~\ref{mwAx:Mix}
(see Lemma~\ref{lemma:septractionadmissible} below)
for which an instance is 
 $(\aformula\septraction \top)\wedge (\aformula\magicwand\aformulabis) \implies 
(\aformula\septraction(\top\wedge\aformulabis))$.
From $\prove_{\magicwandsys} 
\aformulater \separate \aformula \implies \aformulabis$ 
and by~\ref{rule:staradj} we have
 $\prove_{\magicwandsys} 
\aformulater \implies (\aformula \magicwand \aformulabis)$.
Therefore, the main technical development lies in the proof of  
$\prove_{\magicwandsys} \aformulater \implies (\aformula \septraction \top)$,
which allows us to take advantage of~\ref{mwAx:Mix},
and leads to $\prove_{\magicwandsys} \aformulater \implies (\aformula \septraction \aformulabis)$ by propositional reasoning.
\end{proof}

In order to formalise the proof of Lemma~\ref{lemma:magicwandPSLelim} sketched above, we start by establishing 
several admissible axioms and rules (Lemma~\ref{lemma:septractionadmissible}).
Afterwards, we define the formula $\aformulater$ and show
the validity of $\aformulater \separate \aformula \implies \aformulabis$ and 
$\neg \aformulater \separate \aformula \implies \neg \aformulabis$ (Lemma~\ref{lemma:coreformulaseptraction}).
Then, come the final bits of the proof of Lemma~\ref{lemma:magicwandPSLelim} (see page~\pageref{proof-lemma-magicwandPSLelim}). 
 
\cut{
Actually, the definition of the formula $\aformulater$ is provided in the statement of Lemma~\ref{lemma:coreformulaseptraction}. 
In order to prove  Lemma~\ref{lemma:magicwandPSLelim}, we use the auxiliary lemma below.
}

\begin{lem}\label{lemma:septractionadmissible}
The following rules and axioms are admissible in $\magicwandsys$:

\vspace{2pt}

\begin{enumerate}[align=left,leftmargin=*,before=\vspace{3pt},after=\vspace{3pt}]
  \setlength{\itemsep}{4pt}
  \begin{minipage}[t]{0.47\linewidth}
    \item[\indexedlab{I^{\magicwand}}{\ref{lemma:septractionadmissible}}{mwAx:BotL}] $\bot\septraction\aformula\implies\bot$
    \item[\indexedlab{I^{\magicwand}}{\ref{lemma:septractionadmissible}}{mwAx:BotR}] $\aformula\septraction\bot\implies\bot$
    \item[\indexedlab{I^{\magicwand}}{\ref{lemma:septractionadmissible}}{mwAx:Cut}] $\aformula \separate (\aformula\magicwand \aformulabis)\implies \aformulabis$
    \item[\indexedlab{I^{\magicwand}}{\ref{lemma:septractionadmissible}}{mwAx:ImpL}] $\inference{\aformula \implies \aformulabis}{\aformula \septraction \aformulater \implies \aformulabis\septraction \aformulater}{}$
    \item[\indexedlab{I^{\magicwand}}{\ref{lemma:septractionadmissible}}{mwAx:ImpR}] $\inference{\aformula \implies \aformulabis}{\aformulater \septraction \aformula \implies \aformulater\septraction \aformulabis}{}$
    \item[\indexedlab{I^{\magicwand}}{\ref{lemma:septractionadmissible}}{mwAx:Curry}] $\aformula\septraction(\aformulabis\septraction\aformulater) \ \iff\  (\aformula*\aformulabis)\septraction\aformulater$
  \end{minipage}
  \begin{minipage}[t]{0.55\linewidth}
    \item[\indexedlab{I^{\magicwand}}{\ref{lemma:septractionadmissible}}{mwAx:OrL}] $(\aformula\vee\aformulabis)\septraction\aformulater \ \iff\  (\aformula\septraction\aformulater) \vee (\aformulabis\septraction\aformulater)$
    \item[\indexedlab{I^{\magicwand}}{\ref{lemma:septractionadmissible}}{mwAx:OrR}] $\aformulater\septraction(\aformula\vee\aformulabis) \ \iff\  (\aformulater\septraction\aformula) \vee (\aformulater\septraction\aformulabis)$
    \item[\indexedlab{I^{\magicwand}}{\ref{lemma:septractionadmissible}}{mwAx:Mix}] $(\aformula\septraction\aformulabis)\wedge (\aformula\magicwand\aformulater) \implies  (\aformula\septraction\aformulabis\wedge\aformulater)$
    \item[\indexedlab{I^{\magicwand}}{\ref{lemma:septractionadmissible}}{mwAx:SeptEq}] 
    $\avariable = \avariablebis \land (\aformula \septraction \aformulabis)
    \implies (\aformula \land  \avariable = \avariablebis  \septraction \aformulabis)$
    \item[\indexedlab{I^{\magicwand}}{\ref{lemma:septractionadmissible}}{mwAx:SeptIneq}] 
    $\avariable \neq \avariablebis \land (\aformula \septraction \aformulabis)
    \implies (\aformula \land  \avariable \neq \avariablebis \septraction \aformulabis)$
    \item[\indexedlab{I^{\magicwand}}{\ref{lemma:septractionadmissible}}{mwAx:SizeLiterals}] $(\aformula_{\size} 
      \land \bigwedge_{\avariable \in \asetvar}\lnot \alloc{\avariable}) 
  \septraction \true$,
  \end{minipage}

  \vspace{7pt}

  \noindent
  where, in~axiom~{\rm\ref{mwAx:SizeLiterals}},
  $\asetvar \subseteq_\fin \PVAR$ and $\aformula_{\size}$ is a satisfiable conjunction of literals of the form $\size \geq \inbound_1$ or $\lnot \size \geq \inbound_2$.





\end{enumerate}
\end{lem}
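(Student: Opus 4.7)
The plan is to derive the eleven items in the order listed, leaning systematically on the adjoint rules \ref{rule:staradj} and \ref{rule:magicwandadj}, on the $\separate$-axioms \ref{starAx:Commute}, \ref{starAx:Assoc}, \ref{starAx:Emp} and \ref{starAx:MonoCore}, on the monotonicity rule \ref{rule:starinference}, on propositional calculus, and on the results already established in the paper (in particular Lemma~\ref{lemma:admissible-axioms-2}, which provides \ref{starAx:False}, \ref{starAx:DistrOr} and \ref{starAx:StarAlloc}, Lemma~\ref{lemma:separate-auxiliary-stuff} for \ref{starAx:auxilary-1}, and the theorem \ref{axiom:magicsep} established in the examples). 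Whenever useful, septraction statements are translated into statements about $\magicwand$ via the defining equation $\aformula\septraction\aformulabis \equiv \neg(\aformula \magicwand \neg\aformulabis)$, after which the argument reduces to contrapositive reasoning about $\magicwand$.

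The first block \ref{mwAx:BotL}--\ref{mwAx:Cut} is routine. For \ref{mwAx:BotL} and \ref{mwAx:BotR} one reduces by unfolding to showing that $\bot \magicwand \neg\aformula$ and $\aformula \magicwand \top$ are theorems, which follow by \ref{rule:staradj} applied respectively to the consequence of \ref{starAx:False} (via \ref{starAx:Commute}) and to the propositional tautology $\top \separate \aformula \implies \top$. For \ref{mwAx:Cut}, apply \ref{rule:magicwandadj} to the identity $(\aformula \magicwand \aformulabis) \implies (\aformula \magicwand \aformulabis)$ and conclude with \ref{starAx:Commute}. The monotonicity rules \ref{mwAx:ImpL} and \ref{mwAx:ImpR} follow by taking contrapositives of the two ``directions'' of monotonicity of $\magicwand$ in its arguments; each of these is itself a one-line argument combining \ref{rule:magicwandadj}, \ref{rule:starinference} and \ref{rule:staradj}. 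The currying law \ref{mwAx:Curry} reduces through the $\septraction/\magicwand$ translation to the standard equivalence $(\aformula \separate \aformulabis) \magicwand \chi \iff \aformula \magicwand (\aformulabis \magicwand \chi)$, whose two directions are proved by applying the adjoint rules twice and interleaving them with \ref{starAx:Assoc}.

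The next block requires a bit more combinatorics but no new ingredients. For \ref{mwAx:OrL}, the $\impliedby$ direction comes from \ref{mwAx:ImpL} applied to $\aformula \implies \aformula \vee \aformulabis$ and $\aformulabis \implies \aformula \vee \aformulabis$, while the $\implies$ direction reduces by contraposition to $(\aformula \magicwand \chi) \wedge (\aformulabis \magicwand \chi) \implies (\aformula \vee \aformulabis) \magicwand \chi$, which is proved by \ref{rule:staradj} after distributing $\separate$ over $\vee$ via \ref{starAx:DistrOr} and applying \ref{mwAx:Cut} to each disjunct. For \ref{mwAx:OrR}, the two directions reduce to the distributivity of $\magicwand$ over $\wedge$ on the right, namely to \ref{axiom:magicsep} and its converse (the latter is just right monotonicity of $\magicwand$). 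For \ref{mwAx:Mix}, combine the two hypotheses via \ref{axiom:magicsep} to obtain $\aformula \magicwand (\aformulater \wedge \neg(\aformulabis \wedge \aformulater))$, simplify propositionally using $\aformulater \wedge \neg(\aformulabis \wedge \aformulater) \iff \aformulater \wedge \neg\aformulabis$, and apply right monotonicity to get $\aformula \magicwand \neg\aformulabis$, contradicting $\aformula \septraction \aformulabis$; the conclusion follows by propositional reasoning. Finally, \ref{mwAx:SeptEq} and \ref{mwAx:SeptIneq} share a uniform argument: assuming the negation of the conclusion gives $(\aformula \wedge \avariable \sim \avariablebis) \magicwand \neg\aformulabis$, and it suffices to transfer the pure literal $\avariable \sim \avariablebis$ across a $\separate$ using \ref{starAx:auxilary-1} and \ref{starAx:Commute} and then to invoke \ref{mwAx:Cut}, thereby producing $\aformula \magicwand \neg\aformulabis$, which contradicts $\aformula \septraction \aformulabis$.

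The main obstacle is \ref{mwAx:SizeLiterals}. Write the satisfiable formula $\aformula_{\size}$ equivalently as $\size \geq m \wedge \neg \size \geq M$, where $m = \maxsize{\aformula_{\size}}$ and $M \in \Nat^+ \cup \{\infty\}$ with $m < M$. When $m = 0$, I would first derive $\emp \septraction \top$ by observing that $(\emp \magicwand \bot) \iff \emp \separate (\emp \magicwand \bot)$ (an instance of \ref{starAx:Emp} combined with \ref{starAx:Commute}) together with \ref{rule:magicwandadj} forces $(\emp \magicwand \bot) \implies \bot$; then show $\emp \implies \neg \alloc{\avariable}$ using \ref{starAx:AllocSizeOne}, \ref{starAx:MonoCore} and propositional reasoning, check $\emp \implies \aformula_{\size}$ by contraposition of \ref{coreAx:Size} (derivable in $\magicwandsys$ by Lemma~\ref{lemma:admissible-axioms-1}), and conclude by \ref{mwAx:ImpL}. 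When $m \geq 1$, iterate \ref{wandAx:Size}: starting from the theorem $(\size = 1 \wedge \bigwedge_{\avariable \in \asetvar} \neg \alloc{\avariable}) \septraction \top$ and nesting it via \ref{mwAx:ImpR} inside another septraction of the same shape, collapse via \ref{mwAx:Curry} to $((\size = 1 \wedge \bigwedge \neg\alloc{\avariable}) \separate (\size = 1 \wedge \bigwedge \neg\alloc{\avariable})) \septraction \top$, and repeat $m$ times to reach $\bigl(\bigseparate_{i=1}^{m}(\size = 1 \wedge \bigwedge \neg\alloc{\avariable})\bigr) \septraction \top$. The $m$-fold separating conjunction implies $\size = m \wedge \bigwedge_{\avariable \in \asetvar} \neg \alloc{\avariable}$ by iterated applications of \ref{starAx:AllocNeg} for the $\neg \alloc{\avariable}$ literals, of the size-additivity derivations from Step~2 of the proof of Lemma~\ref{lemma:starPSLelim-sat} for $\size \geq m$, and of \ref{starAx:SizeNeg} for $\neg \size \geq m+1$; a final \ref{mwAx:ImpL} using $\size = m \implies \aformula_{\size}$ (which holds because $m < M$ and by contraposition of \ref{coreAx:Size}) transfers the statement to the desired target. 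The bookkeeping needed to interleave currying, monotonicity of $\septraction$, and the encoding of $\size = m$ as an $m$-fold separating conjunction is what makes this item the most technically demanding of the lemma.
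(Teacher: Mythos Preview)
Your plan is correct and largely parallels the paper's proof in Appendix~\ref{appendix-septractionadmissible}. Items \ref{mwAx:BotL}--\ref{mwAx:OrR} and \ref{mwAx:Mix} are handled essentially as the paper does (your route for \ref{mwAx:Mix} via \ref{axiom:magicsep} and propositional simplification is a mild rephrasing of the paper's two applications of \ref{mwAx:Cut}). Two points are worth flagging as genuine, if minor, divergences.

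For \ref{mwAx:SeptEq}/\ref{mwAx:SeptIneq}, the paper does a case split on $\aformula$ into $(\aformula\wedge\avariable\sim\avariablebis)\vee(\aformula\wedge\neg\,\avariable\sim\avariablebis)$, applies \ref{mwAx:OrL}, and kills the ``wrong'' disjunct using \ref{starAx:MonoCore}. Your route via \ref{starAx:auxilary-1} is equally valid and arguably more direct: after contraposition you need $\avariable\sim\avariablebis \wedge ((\aformula\wedge\avariable\sim\avariablebis)\magicwand\neg\aformulabis)\implies(\aformula\magicwand\neg\aformulabis)$, which follows by \ref{rule:staradj} once you push the pure literal across $\separate$ with \ref{starAx:auxilary-1} (after first pulling it out via \ref{starAx:MonoCore}) and then apply \ref{mwAx:Cut}. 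Both arguments rely on the same underlying fact that (in)equalities are invariant under heap splitting.

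For \ref{mwAx:SizeLiterals}, the paper's inductive step invokes completeness of $\coresys(\separate)$ (Theorem~\ref{theo:starCompleteness}) to obtain $(\size=1\wedge\UNALLOC(\asetvar))\separate(\size=\inbound{-}1\wedge\UNALLOC(\asetvar))\implies\size=\inbound\wedge\UNALLOC(\asetvar)$ as a black box. Your approach of building the $m$-fold separating conjunction and then deriving $\size=m\wedge\bigwedge\neg\alloc{\avariable}$ in-calculus via \ref{starAx:AllocNeg}, the size encodings, and \ref{starAx:SizeNeg} is more elementary: it avoids the appeal to completeness and mirrors the derivation of $\ALLOC\implies\size=n$ already carried out in the proof of Lemma~\ref{lemma:starPSLelim-sat}. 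The paper's shortcut is cleaner to write; yours is self-contained.
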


The proof of Lemma~\ref{lemma:septractionadmissible} can be found in Appendix~\ref{appendix-septractionadmissible}.

Let  $\aformula$ and $\aformulabis$ be two satisfiable core types
in $\conjcomb{\coreformulae{\asetvar}{\bound}}$.
Following the developments of Section~\ref{section:starelimination}, 
we define a formula
$\boxseptra{\aformula}{\aformulabis}$ in $\conjcomb{\coreformulae{\asetvar}{\bound}}$, 
for which we show that
$\aformula \septraction \aformulabis \iff \boxseptra{\aformula}{\aformulabis}$
is provable in $\coresys(\separate,\magicwand)$.
The formula $\boxseptra{\aformula}{\aformulabis}$ is 
defined in~Figure~\ref{figure:csl:boxseptra-formula}.

\begin{lem}\label{lemma:coreformulaseptraction}
Let $\asetvar \subseteq_{\fin} \PVAR$, $\bound \geq \card{\asetvar}$ and
$\aformula$, $\aformulabis$ be satisfiable core types in $\coretype{\asetvar}{\bound}$.
The formulae $\boxseptra{\aformula}{\aformulabis} \separate \aformula \implies \aformulabis$ and 
$(\lnot \boxseptra{\aformula}{\aformulabis}) \separate \aformula \implies \lnot \aformulabis$ are valid.
\end{lem}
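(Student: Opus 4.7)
The plan is to prove both validities by direct semantic reasoning, exploiting the fact that $\boxseptra{\aformula}{\aformulabis}$ is designed (as the septractive analogue of $\boxstar{\aformula}{\aformulabis}$ from Figure~\ref{figure:boxstar}) so that, for every decomposition $\aheap = \aheap_1 + \aheap_2$ with $\pair{\astore}{\aheap_2} \models \aformula$, the literals of $\boxseptra{\aformula}{\aformulabis}$ record exactly the constraints that $\aheap_1$ must meet in order for $\aheap \models \aformulabis$ to hold. In other words, $\boxseptra{\aformula}{\aformulabis}$ plays the role of the ``residue'' of the core type $\aformulabis$ modulo the information supplied by $\aformula$.

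For the first implication, I would take a memory state $\pair{\astore}{\aheap}$ satisfying the antecedent, unfold the semantics of $\separate$ to extract $\aheap_1 \hdisjoint \aheap_2$ with $\aheap_1 + \aheap_2 = \aheap$, $\pair{\astore}{\aheap_1} \models \boxseptra{\aformula}{\aformulabis}$ and $\pair{\astore}{\aheap_2} \models \aformula$, and then verify $\pair{\astore}{\aheap} \models \aliteral$ for every literal $\aliteral$ of the core type $\aformulabis$, by a case analysis on the shape of $\aliteral$. (In)equalities are handled trivially, since their truth is store-only and $\aformula$, $\aformulabis$ and $\boxseptra{\aformula}{\aformulabis}$ agree on them by satisfiability of $\aformula \septraction \aformulabis$; an $\alloc{\avariable}$ literal of $\aformulabis$ is supplied either by $\aformula$ (when $\alloc{\avariable} \inside \aformula$, giving $\astore(\avariable) \in \domain{\aheap_2}$) or by $\boxseptra{\aformula}{\aformulabis}$ (when $\lnot \alloc{\avariable} \inside \aformula$, giving $\astore(\avariable) \in \domain{\aheap_1}$); $\lnot \alloc{\avariable} \inside \aformulabis$ forces $\lnot \alloc{\avariable}$ to sit in both $\aformula$ and $\boxseptra{\aformula}{\aformulabis}$, so $\astore(\avariable)$ lies in neither $\domain{\aheap_1}$ nor $\domain{\aheap_2}$; the pointer literals $\avariable \Ipto \avariablebis$ and $\lnot \avariable \Ipto \avariablebis$ are treated analogously, using preservation of pointers under subheap inclusion; and size literals propagate via the additive identity $\card{\domain{\aheap}} = \card{\domain{\aheap_1}} + \card{\domain{\aheap_2}}$ that comes from $\aheap_1 \hdisjoint \aheap_2$.

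For the second implication, I would fix a decomposition as above but now with $\pair{\astore}{\aheap_1} \not\models \boxseptra{\aformula}{\aformulabis}$. Since $\boxseptra{\aformula}{\aformulabis}$ is a conjunction of literals, some literal $\aliteral \inside \boxseptra{\aformula}{\aformulabis}$ is falsified by $\aheap_1$. A case analysis on $\aliteral$ that mirrors the definition of $\boxseptra{\aformula}{\aformulabis}$ then produces a matching literal of $\aformulabis$ that is falsified by $\aheap$: a missing allocation in $\aheap_1$ combined with $\lnot \alloc{\avariable} \inside \aformula$ falsifies an $\alloc{\avariable}$ in $\aformulabis$, a spurious allocation in $\aheap_1$ falsifies a $\lnot \alloc{\avariable}$ in $\aformulabis$, a wrong pointer image in $\aheap_1$ contradicts a $\avariable \Ipto \avariablebis$ or $\lnot \avariable \Ipto \avariablebis$ literal of $\aformulabis$, and a violated size bound on $\aheap_1$ contradicts the corresponding size bound of $\aformulabis$ after adding in the contribution of $\aheap_2$.

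The main obstacle will be the size literals. Their analysis must juggle the truncated subtraction $\dotminus$ appearing in axiom~\ref{starAx:SizeNeg}, the additive behaviour of positive size bounds, and the boundary cases in which $\maxsize{\aformula}$, $\maxsize{\aformulabis}$ or $\bound$ are tight (where a positive or negative $\size \geq \inbound$ literal of $\aformulabis$ may not have an independent analogue in $\boxseptra{\aformula}{\aformulabis}$ and must instead be inferred jointly from $\aformula$ and the fragments of $\boxseptra{\aformula}{\aformulabis}$ that are present). Once this correspondence between the literals of $\boxseptra{\aformula}{\aformulabis}$ and those of $\aformulabis$ is shown exhaustive in both directions, both implications follow from the case analyses sketched above.
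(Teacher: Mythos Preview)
Your proposal is correct and follows essentially the same approach as the paper: for the first implication, a direct case analysis on each literal $\aliteral \inside \aformulabis$ (using satisfiability of $\boxseptra{\aformula}{\aformulabis} \separate \aformula$ to rule out the $\avariable \neq \avariable$ clashes built into $\boxseptra{\aformula}{\aformulabis}$), and for the second implication, a case analysis on a literal of $\boxseptra{\aformula}{\aformulabis}$ falsified by $\aheap_1$. The paper's treatment of size literals indeed hinges on the same arithmetic with $\dotminus$ and $\maxsize{\aformula}$ that you anticipate; one small point your sketch glosses over is that, in the second implication, a falsified $\lnot\alloc{\avariable}$ literal of $\boxseptra{\aformula}{\aformulabis}$ may originate from $\alloc{\avariable} \inside \aformula$ rather than from $\lnot\alloc{\avariable} \inside \aformulabis$, and in that sub-case the contradiction is with disjointness of $\aheap_1$ and $\aheap_2$ rather than with a literal of $\aformulabis$.
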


\noindent 
Before presenting the proof for Lemma~\ref{lemma:coreformulaseptraction}, 
let us observe that since we aim at proving the derivability of 
$\aformula \septraction \aformulabis \iff \boxseptra{\aformula}{\aformulabis}$ in $\coresys(\separate,\magicwand)$, 
the validity of~${(\lnot \boxseptra{\aformula}{\aformulabis}) \separate \aformula \implies \lnot \aformulabis}$ should not surprise the reader. 
Indeed, by replacing $\boxseptra{\aformula}{\aformulabis}$ with $\aformula \septraction \aformulabis$ we obtain $(\lnot (\aformula \septraction \aformulabis)) \separate \aformula \implies \lnot \aformulabis$ which, unfolding the definition of $\septraction$, is equivalent to the valid formula $(\aformula \magicwand \lnot \aformulabis) \separate \aformula \implies \lnot \aformulabis$ (see~\ref{mwAx:Cut} in Lemma~\ref{lemma:septractionadmissible}).
On the other hand, the fact that $\boxseptra{\aformula}{\aformulabis} \separate \aformula \implies \aformulabis$ is valid can be puzzling at first, as the formula $(\aformula \septraction \aformulabis) \separate \aformula \implies \aformulabis$ is not valid (in general). In its essence, Lemma~\ref{lemma:coreformulaseptraction} shows that $(\aformula \septraction \aformulabis) \separate \aformula \implies \aformulabis$ is valid whenever $\aformula$ and $\aformulabis$ are restricted to core types.

\begin{figure}
  \arraycolsep=1.4pt
  {\Large
  $$
  \begin{array}{rll}
  &\bigwedge\formulasubset{\avariable \sim \avariablebis \inside \orliterals{\aformula}{\aformulabis}}{\bmat[\sim \in \{=,\neq\}]}
  & \land
  \bigwedge \formulasubset{\alloc{\avariable}}{\bmat[\lnot\alloc{\avariable}\inside\aformula\\ \alloc{\avariable}\inside\aformulabis]}
  \\ 
  \land&
  \bigwedge \aformulasubset{\lnot\alloc{\avariable} \inside \aformulabis}
  &\land
  \bigwedge \formulasubset{\lnot\alloc{\avariable}}{\bmat[\alloc{\avariable}\inside\aformula]}
  \\ 
  \land&
  \bigwedge \aformulasubset{\lnot \avariable {\Ipto} \avariablebis \inside \aformulabis}
   & \land
   \bigwedge \formulasubset{\avariable \Ipto \avariablebis}{\bmat[\lnot\alloc{\avariable}\inside\aformula\\ \avariable \Ipto \avariablebis \inside \aformulabis]}
  \\ 
  \land&
  \bigwedge \formulasubset{\avariable \neq \avariable}{\bmat[\alloc{\avariable}\land\lnot\avariable\Ipto\avariablebis\inside\aformula\\ \avariable \Ipto \avariablebis\inside\aformulabis]}
  & \land
  \bigwedge \formulasubset{\size\geq\inbound_2{+}1{\dotminus}\inbound_1}{\bmat[\lnot\size\geq\inbound_1\inside\aformula\\\size\geq\inbound_2\inside\aformulabis]}
  \\ 
  \land&
  \bigwedge \formulasubset{\avariable \neq \avariable}{\bmat[\avariable\Ipto\avariablebis\inside\aformula\\ \lnot\avariable \Ipto \avariablebis\inside\aformulabis]}
  & \land
  \bigwedge \formulasubset{\lnot\size\geq\inbound_2{\dotminus}\inbound_1}{\bmat[\size\geq\inbound_1\inside\aformula\\\lnot\size\geq\inbound_2\inside\aformulabis]}
  \\ 
  \land&
  \bigwedge \formulasubset{\avariable \neq \avariable}{\bmat[\alloc{\avariable}\inside\aformula\\\lnot\alloc{\avariable}\inside\aformulabis]}
  \end{array}
  $$
}
\caption{The formula $\boxseptra{\aformula}{\aformulabis}$.}
\label{figure:csl:boxseptra-formula}
\end{figure}

Below, we prove that $\boxseptra{\aformula}{\aformulabis} \separate \aformula \implies \aformulabis$ and 
$(\lnot \boxseptra{\aformula}{\aformulabis}) \separate \aformula \implies \lnot \aformulabis$ are valid, thus establishing~Lemma~\ref{lemma:coreformulaseptraction}.
Notice that the proof is carried out through semantical arguments.
Since $\aformula$, $\aformulabis$ and $\boxseptra{\aformula}{\aformulabis}$ are 
conjunctions of literals built from core formulae, derivability of these two tautologies in $\coresys(\separate,\magicwand)$ follows from the completeness of $\coresys(\separate)$ (Theorem~\ref{theo:starCompleteness}). 

\begin{proof}[Validity of $\boxseptra{\aformula}{\aformulabis} \separate \aformula \implies \aformulabis$] 
If $\boxseptra{\aformula}{\aformulabis} \separate \aformula$ is inconsistent, then $\boxseptra{\aformula}{\aformulabis} \separate \aformula \implies \aformulabis$ is straightforwardly valid.
Below, we assume that $\boxseptra{\aformula}{\aformulabis} \separate \aformula$ is
satisfiable. In particular, none of the conditions depicted in~Figure~\ref{figure:csl:boxseptra-formula} that result in $\boxseptra{\aformula}{\aformulabis}$ having a literal $\avariable \neq \avariable$ applies.
Let $\pair{\astore}{\aheap} \models \boxseptra{\aformula}{\aformulabis} \separate \aformula$.
Therefore, there are two disjoint heaps $\aheap_1$ and $\aheap_2$ such that $\aheap = \aheap_1 \heapsum \aheap_2$, $\pair{\astore}{\aheap_1} \models \boxseptra{\aformula}{\aformulabis}$
and $\pair{\astore}{\aheap_2} \models \aformula$.
We show that $\pair{\astore}{\aheap}$ satisfies each literal $\aliteral$ in $\aformulabis$.
We perform a simple case analysis on the shape of $\aliteral$.
Notice that, below, we have $\avariable,\avariablebis \in \asetvar$ and $\inbound_2 \in \interval{0}{\bound}$, as $\aformulabis$ is a core type in $\coretype{\asetvar}{\bound}$.
\begin{description}
      \item[case: $\aliteral \ = \ \avariable \sim \avariablebis$, where $\sim \in \{=,\neq\}$]
            By definition of $\boxseptra{\aformula}{\aformulabis}$, $\avariable \sim \avariablebis \inside \boxseptra{\aformula}{\aformulabis}$ and so $\pair{\astore}{\aheap_1} \models \avariable \sim \avariablebis$. We conclude that $\astore(\avariable) \sim \astore(\avariablebis)$, and thus $\pair{\astore}{\aheap} \models \avariable \sim \avariablebis$.
      \item[case: $\aliteral \ = \ \alloc{\avariable}$]
            If $\alloc{\avariable} \inside \aformula$, then $\pair{\astore}{\aheap_2} \models \alloc{\avariable}$, which implies $\astore(\avariable) \in \domain{\aheap}$ directly from $\aheap_2 \subheap \aheap$. Thus, $\pair{\astore}{\aheap} \models \alloc{\avariable}$.
            Otherwise, if  $\alloc{\avariable} \not\inside \aformula$ then, since $\aformula$ is a core type in $\coretype{\asetvar}{\bound}$, 
            we have $\lnot \alloc{\avariable} \inside \aformula$.
            By definition of $\boxseptra{\aformula}{\aformulabis}$, we derive that $\alloc{\avariable} \inside \boxseptra{\aformula}{\aformulabis}$. 
            So, $\pair{\astore}{\aheap_1} \models \alloc{\avariable}$ and thus, by $\aheap_1 \subheap \aheap$, $\astore(\avariable) \in \domain{\aheap}$.
            We conclude that $\pair{\astore}{\aheap} \models \alloc{\avariable}$.
      \item[case: $\aliteral \ = \ \lnot \alloc{\avariable}$]
            In this case, by definition of $\boxseptra{\aformula}{\aformulabis}$, we have 
            $\lnot \alloc{\avariable} \inside \boxseptra{\aformula}{\aformulabis}$, 
            which implies $\pair{\astore}{\aheap_1} \models \lnot \alloc{\avariable}$.
            \emph{Ad absurdum}, suppose $\pair{\astore}{\aheap_2} \models \alloc{\avariable}$. 
            Since $\aformula$ is a core type in $\coretype{\asetvar}{\bound}$, 
            we conclude that $\alloc{\avariable} \inside \aformula$.
            However, by definition of $\boxseptra{\aformula}{\aformulabis}$, 
            this implies $\avariable \neq \avariable \inside \boxseptra{\aformula}{\aformulabis}$, 
            which contradicts the fact that $\boxseptra{\aformula}{\aformulabis}$ is satisfiable.
            Thus, $\pair{\astore}{\aheap_2} \models  \lnot \alloc{\avariable}$, 
            which implies $\astore(\avariable) \not \in \domain{\aheap_2}$.
            From $\aheap = \aheap_1 \heapsum \aheap_2$ and $\astore(\avariable) \not \in \domain{\aheap_1}$ we conclude that $\astore(\avariable) \not \in \domain{\aheap}$.
            So, $\pair{\astore}{\aheap} \models \lnot \alloc{\avariable}$.
      \item[case: $\aliteral \ = \ \avariable \Ipto \avariablebis$]
            If $\lnot \alloc{\avariable} \inside \aformula$, then ${\avariable \Ipto \avariablebis \inside} \boxseptra{\aformula}{\aformulabis}$ holds by definition of $\boxseptra{\aformula}{\aformulabis}$.
            So, $\aheap_1(\astore(\avariable)) = \astore(\avariablebis)$ and,
            from $\aheap_1 \subheap \aheap$ we conclude that $\pair{\astore}{\aheap} \models \avariable \Ipto \avariablebis$.
            Otherwise, let us assume that $\alloc{\avariable} \inside \aformula$. 
            \emph{Ad absurdum}, suppose $\lnot \avariable \Ipto \avariablebis \inside \aformula$. 
            Then, by definition of $\boxseptra{\aformula}{\aformulabis}$, 
            we derive $\avariable \neq \avariable \inside \boxseptra{\aformula}{\aformulabis}$. 
            However, this contradicts the satisfiability of $\boxseptra{\aformula}{\aformulabis}$.
            Therefore, $\lnot \avariable \Ipto \avariablebis \not\inside \aformula$. 
            Since $\aformula$ is a core type, this implies 
            $\avariable \Ipto \avariablebis \inside \aformula$, 
            and therefore~$\aheap_2(\astore(\avariable)) = \astore(\avariablebis)$.
            From $\aheap_2 \subheap \aheap$ 
            we conclude that $\pair{\astore}{\aheap} \models \avariable \Ipto \avariablebis$.
      \item[case: $\aliteral \ = \ \lnot \avariable \Ipto \avariablebis$]
            By definition of $\boxseptra{\avariable}{\avariablebis}$, 
            we have $\lnot \avariable \Ipto \avariablebis \inside \boxseptra{\avariable}{\avariablebis}$, 
            which implies that if $\astore(\avariable) \in \domain{\aheap_1}$ then 
            $\aheap_1(\astore(\avariable)) \neq \astore(\avariablebis)$.
            \emph{Ad absurdum}, suppose $\avariable \Ipto \avariablebis \inside \aformula$.
            Then, by definition of $\boxseptra{\aformula}{\aformulabis}$,
            we derive $\avariable \neq \avariable \inside \boxseptra{\aformula}{\aformulabis}$. 
            However, this contradicts the satisfiability of $\boxseptra{\aformula}{\aformulabis}$.
            Therefore $\avariable \Ipto \avariablebis \not \inside \aformula$ and, 
            since $\aformula$ is a core type, $\lnot \avariable \Ipto \avariablebis \inside \aformula$. 
            So, if $\astore(\avariable) \in \domain{\aheap_2}$ then 
            $\aheap_2(\astore(\avariable)) \neq \astore(\avariablebis)$.
            By $\aheap = \aheap_1 + \aheap_2$ and the fact that $\aheap_1(\astore(\avariable)) \neq \astore(\avariablebis)$, we conclude that $\pair{\astore}{\aheap} \models \lnot \avariable \Ipto \avariablebis$.
      \item[case: $\aliteral \ = \ \size \geq \inbound_2$]  
            If $\size \geq \bound \inside \aformula$, then $\card{\domain{\aheap}} \geq \card{\domain{\aheap_2}} \geq \bound$, by $\aheap_2 \subheap \aheap$.
            As $\inbound_2 \in \interval{0}{\bound}$, this implies $\pair{\astore}{\aheap} \models \size \geq \inbound_2$.
            Otherwise, assume $\size \geq \bound \not\inside \aformula$.
            In particular, since $\aformula$ is in $\coretype{\asetvar}{\bound}$, 
            this implies that $\maxsize{\aformula} < \bound$ and 
            $$
                  \textstyle\size \geq \maxsize{\aformula} \land \lnot \size \geq \maxsize{\aformula}+1 \ \inside \ \aformula.
            $$
            We have $\card{\domain{\aheap_2}} = \maxsize{\aformula}$.
            If $\maxsize{\aformula} \geq \inbound_2$, then from $\aheap_2 \subheap \aheap$ 
            we conclude that $\pair{\astore}{\aheap} \models \size \geq \inbound_2$.
            Otherwise, let us assume $\inbound_2 > \maxsize{\aformula}$.
            By definition of $\boxseptra{\aformula}{\aformulabis}$, 
            we conclude that
            $\size \geq \inbound_2 + 1 \dotminus (\maxsize{\aformula}+1) \inside \boxseptra{\aformula}{\aformulabis}$.
            Together with $\inbound_2 > \maxsize{\aformula}$, this implies
            $\card{\domain{\aheap_1}} \geq \inbound_2 - \maxsize{\aformula}$.
            With $\card{\domain{\aheap_2}} = \maxsize{\aformula}$ and $\aheap  = \aheap_1 \heapsum \aheap_2$, 
            this implies $\pair{\astore}{\aheap} \models \size \geq \inbound_2$.
      \item[case: $\aliteral \ = \ \lnot \size \geq \inbound_2$]
            \emph{Ad absurdum}, suppose 
            that $\size \geq \bound \inside \aformula$. Then, by definition of 
            $\boxseptra{\aformula}{\aformulabis}$ we have $\lnot \size \geq \inbound_2 \dotminus \bound \inside \boxseptra{\aformula}{\aformulabis}$.
            However, since $\inbound_2 \in \interval{0}{\bound}$, 
            this means that $\lnot \size \geq 0 \inside \boxseptra{\aformula}{\aformulabis}$, 
            which contradicts the satisfiability of $\boxseptra{\aformula}{\aformulabis}$.
            Therefore, $\size \geq \bound \not\inside \aformula$.
            As $\aformula$ is in $\coretype{\asetvar}{\bound}$, 
            we derive $\maxsize{\aformula} < \bound$ and 
            $$
                  \textstyle\size \geq \maxsize{\aformula} \land \lnot \size \geq \maxsize{\aformula}+1 \ \inside \ \aformula.
            $$
            We conclude that $\card{\domain{\aheap_2}} \leq \maxsize{\aformula}$.
            From $\size \geq \maxsize{\aformula} \inside \aformula$
            and by definition of $\boxseptra{\aformula}{\aformulabis}$,
            we conclude that 
            $$ \textstyle\lnot \size \geq \inbound_2 \dotminus \maxsize{\aformula} \inside \boxseptra{\aformula}{\aformulabis}.$$
            If $\inbound_2 \leq \maxsize{\aformula}$, then $\lnot \size \geq 0 \inside \boxseptra{\aformula}{\aformulabis}$, which contradicts the satisfiability of $\boxseptra{\aformula}{\aformulabis}$. 
            Therefore, $\inbound_2 > \maxsize{\aformula}$.
            So, $\card{\domain{\aheap_1}} < \inbound_2 - \maxsize{\aformula}$.
            Together with $\card{\domain{\aheap_2}} \leq \maxsize{\aformula}$ and
            $\aheap = \aheap_1 \heapsum \aheap$, we conclude that 
            $\card{\domain{\aheap}} < \inbound_2$, and thus 
            $\pair{\astore}{\aheap} \models \lnot \size \geq \inbound_2$.
            \qedhere
\end{description}
\end{proof}

\begin{proof}[Validity of $(\lnot \boxseptra{\aformula}{\aformulabis}) \separate \aformula \implies \lnot \aformulabis$]
Let us assume
$ \pair{\astore}{\aheap} \models (\neg \boxseptra{\aformula}{\aformulabis}) \separate \aformula$. Consequently, there is a literal $\aliteral$ of $\boxseptra{\aformula}{\aformulabis}$ such that
$ \pair{\astore}{\aheap} \models  (\neg\aliteral) \separate \aformula$ holds. 
We show that $\pair{\astore}{\aheap} \models \lnot \aformulabis$.
Let $\aheap_1$ and $\aheap_2$ be two disjoint heaps such that $\aheap = \aheap_1 \heapsum \aheap_2$, 
$\pair{\astore}{\aheap_1} \models \lnot \aliteral$ and $\pair{\astore}{\aheap_2} \models \aformula$.
We perform a case analysis on the shape of $\aliteral$. As in the previous part of the proof, recall that $\avariable,\avariablebis \in \asetvar$ and $\inbound_1,\inbound_2 \in \interval{0}{\bound}$.
\begin{description}
\item[case: $\aliteral \ = \ \avariable\neq\avariable$]
Since $\aformula$ and $\aformulabis$ are satisfiable, by definition of $\boxseptra{\aformula}{\aformulabis}$, the fact that 
$\avariable \neq \avariable \inside \boxseptra{\aformula}{\aformulabis}$ implies that one of the following three cases holds: 
\begin{itemize}[align = left]
\item[1:] $\alloc{\avariable}\wedge \neg \avariable\Ipto\avariablebis\inside\aformula$ and $\avariable\Ipto\avariablebis\inside\aformulabis$.
\item[] From  $\alloc{\avariable}\wedge \neg \avariable\Ipto\avariablebis\inside\aformula$ and $\aheap_2 \subheap \aheap$, we have $\astore(\avariable) \in \domain{\aheap}$ 
and $\aheap(\astore(\avariable)) \neq \astore(\avariablebis)$.
Thus $\pair{\astore}{\aheap} \not \models \avariable \Ipto \avariablebis$, and so, by $\avariable\Ipto\avariablebis\inside\aformulabis$, 
$\pair{\astore}{\aheap} \models \lnot \aformulabis$.
\item[2:] $\avariable\Ipto\avariablebis\inside\aformula$ and $\neg\avariable\Ipto\avariablebis\inside\aformulabis$. 
\item[] From $\avariable \Ipto \avariablebis \inside \aformula$ and $\aheap_2 \subheap \aheap$, $\aheap(\astore(\avariable)) = \astore(\avariablebis)$. 
Thus $\pair{\astore}{\aheap} \models \avariable \Ipto \avariablebis$ and so, by  
$\neg\avariable\Ipto\avariablebis\inside\aformulabis$, $\pair{\astore}{\aheap} \models \lnot \aformulabis$.
\item[3:] $\alloc{\avariable}\inside\aformula$ and $\neg\alloc{\avariable}\inside\aformulabis$.
\item[] From $\alloc{\avariable}\inside\aformula$ and $\aheap_2 \subheap \aheap$, 
      $\astore(\avariable) \in \domain{\aheap}$.
      Thus $\pair{\astore}{\aheap} \models \alloc{\avariable}$ and so, 
      by $\neg\alloc{\avariable}\inside\aformulabis$,
      $\pair{\astore}{\aheap} \models \lnot \aformulabis$.
\end{itemize}

\item[case: $\aliteral \  = \ \avariable \sim \avariablebis$, where $\sim \in \{=,\neq\}$] 
      In this case, since $\pair{\astore}{\aheap_1} \models \lnot \aliteral$, 
      then we have $\pair{\astore}{\aheap} \models \lnot \aliteral$.
      Now, it cannot be that $\aliteral \inside \aformula$, 
      as it would imply $\pair{\astore}{\aheap} \models \aliteral$, which is contradictory.
      Therefore, by definition of $\boxseptra{\aformula}{\aformulabis}$,
      we must have $\aliteral\inside\aformulabis$.
      This implies $\pair{\astore}{\aheap} \models \lnot \aformulabis$. 

\item[case: $\aliteral \  = \  \alloc{\avariable}$]
      By definition of $\boxseptra{\aformula}{\aformulabis}$, 
      $\neg \alloc{\avariable}\inside\aformula$ and
      $\alloc{\avariable}\inside\aformulabis$. 
      From $\pair{\astore}{\aheap_1} \models \lnot \alloc{\avariable}$ 
      we conclude that $\astore(\avariable) \not \in \domain{\aheap_1}$.
      By $\neg \alloc{\avariable}\inside\aformula$, $\astore(\avariable) \not \in \domain{\aheap_2}$.
      By $\aheap = \aheap_1 \heapsum \aheap_2$, $\astore(\avariable) \not \in \domain{\aheap}$.
      As $\alloc{\avariable}\inside\aformulabis$, 
      ${\pair{\astore}{\aheap} \models \lnot \aformulabis}$. 
      
\item[case: $\aliteral \  = \  \neg\alloc{\avariable}$] 
      As $\pair{\astore}{\aheap_1} \models \lnot \aliteral$, we have $\astore(\avariable) \in \domain{\aheap_1}$.
      According to the definition of $\boxseptra{\aformula}{\aformulabis}$, 
      either $\alloc{\avariable} \inside \aformula$ or $\lnot \alloc{\avariable} \inside \aformulabis$. The first case cannot hold, as it implies  $\astore(\avariable) \in \domain{\aheap_2}$
      which contradicts the fact that $\aheap_1$ and $\aheap_2$ are disjoint.
      In the second case, from~$\astore(\avariable) \in \domain{\aheap_1}$ and $\aheap_1 \subheap \aheap$, we have $\pair{\astore}{\aheap} \models \alloc{\avariable}$. So, $\pair{\astore}{\aheap} \models \lnot \aformulabis$.

\item[case: $\aliteral \  = \ \avariable\Ipto\avariablebis$]
Then by definition of $\boxseptra{\aformula}{\aformulabis}$, 
$\neg\alloc{\avariable}\inside\aformula$ and $\avariable \Ipto \avariablebis \inside \aformulabis$.
From $\pair{\astore}{\aheap_1} \models \lnot \aliteral$, 
if $\astore(\avariable) \in \domain{\aheap_1}$ then $\aheap_1(\astore(\avariable)) \neq \astore(\avariablebis)$.
As $\neg\alloc{\avariable}\inside\aformula$, $\astore(\avariable) \not\in \domain{\aheap_2}$ and therefore, by $\aheap = \aheap_1 \heapsum \aheap_2$, $\aheap(\astore(\avariable)) \neq \astore(\avariablebis)$.
From $\avariable \Ipto \avariablebis \inside \aformulabis$, 
we conclude that $\pair{\astore}{\aheap} \models \lnot \aformulabis$.

\item[case: $\aliteral \  = \ \neg\avariable\Ipto\avariablebis$]
Then, by definition of $\boxseptra{\aformula}{\aformulabis}$, $\neg\avariable\Ipto\avariablebis\inside\aformulabis$.
From $\pair{\astore}{\aheap_1} \models \lnot \aliteral$ and $\aheap_1 \subheap \aheap$, 
we derive $\aheap(\astore(\avariable)) = \astore(\avariablebis)$.
From $\neg\avariable\Ipto\avariablebis\inside\aformulabis$, 
we derive $\pair{\astore}{\aheap} \models \lnot \aformulabis$.

\item[case: $\aliteral \  = \ \size\geq\inbound_2+1\dotminus\inbound_1$, where
$\size\geq\inbound_2\inside\aformulabis$ and $\neg\size\geq\inbound_1\inside\aformula$]
Since it holds that $\pair{\astore}{\aheap_1} \models \lnot \aliteral$
and $\pair{\astore}{\aheap_2} \models \aformula$, we derive (respectively)
$\card{\domain{\aheap_1}} \leq \inbound_2 \dotminus \inbound_1$ 
and ${\card{\domain{\aheap_2}} < \inbound_1}$. 
From $\aheap = \aheap_1 \heapsum \aheap_2$, we conclude that $\card{\domain{\aheap}} < \inbound_2$.
From $\size\geq\inbound_2\inside\aformulabis$, 
we derive $\pair{\astore}{\aheap} \models \lnot \aformulabis$.

\item[case: $\aliteral \ = \ \neg \size\geq\inbound_2\dotminus\inbound_1$, where
$\neg\size\geq\inbound_2\inside\aformulabis$ and $\size\geq\inbound_1\inside\aformula$]
Since we have 
$\pair{\astore}{\aheap_1} \models \lnot \aliteral$ and 
$\pair{\astore}{\aheap_2} \models \aformula$, we conclude that
$\card{\domain{\aheap_1}} \geq \inbound_2 \dotminus \inbound_1$
and $\card{\domain{\aheap_2}} \geq \inbound_1$.
So, $\aheap = \aheap_1 \heapsum \aheap_2$
implies  $\card{\domain{\aheap}} \geq \inbound_2$.
By $\neg\size\geq\inbound_2\inside\aformulabis$, 
we derive $\pair{\astore}{\aheap} \models \lnot \aformulabis$.
\qedhere
\end{description}
\end{proof}

We are now ready to tackle the proof of~Lemma~\ref{lemma:magicwandPSLelim}.

\begin{proof}[Proof~of Lemma~\ref{lemma:magicwandPSLelim}]
  \label{proof-lemma-magicwandPSLelim}
  As in the statement of the lemma, let us consider 
  $\asetvar \subseteq_{\fin} \PVAR$ and $\bound \geq \card{\asetvar}$, 
  and two core types
  $\aformula$ and $\aformulabis$ in $\coretype{\asetvar}{\bound}$. 
  We want to show that there is a conjunction 
  $\aformulater \in \conjcomb{\coreformulae{\asetvar}{\bound}}$
  such that $\prove_{\magicwandsys} (\aformula\septraction\aformulabis) \iff \aformulater$.

First of all, 
if $\aformula$ or $\aformulabis$ is unsatisfiable, then $\prove_{\magicwandsys}
\aformula\septraction\aformulabis \implies \bot$ by using Lemma~\ref{lemma:corePSLtwo}
and the admissible axioms~\ref{mwAx:ImpL} and~\ref{mwAx:ImpR} from Lemma~\ref{lemma:septractionadmissible}.
Therefore, in this case, it is enough to take $\aformulater$ equal to $\lnot \avariable = \avariable$ to complete the proof. 
Otherwise, let us assume that $\aformula$ and $\aformulabis$ are satisfiable. 
We consider $\aformulater \egdef \boxseptra{\aformula}{\aformulabis}$ (see~Figure~\ref{figure:csl:boxseptra-formula}), and show that 
$\prove_{\magicwandsys} (\aformula\septraction\aformulabis) \iff \boxseptra{\aformula}{\aformulabis}$. We derive each implication separately.

\paragraph{($\Rightarrow$):} Given Lemma~\ref{lemma:coreformulaseptraction}, the proof of $\prove_{\magicwandsys} \aformula\septraction\aformulabis\implies\boxseptra{\aformula}{\aformulabis}$ is straightforward:
    \begin{syntproof}
    1 & \neg \boxseptra{\aformula}{\aformulabis} \separate \aformula \implies \neg \aformulabis 
    & \mbox{Lemma~\ref{lemma:coreformulaseptraction}, Theorem~\ref{theo:starCompleteness}} \\
    2 & \neg \boxseptra{\aformula}{\aformulabis} \implies (\aformula \magicwand \neg \aformulabis)
    & \mbox{\ref{rule:staradj}, 1} \\
    3 & \neg (\aformula \magicwand \neg \aformulabis) \implies \boxseptra{\aformula}{\aformulabis}
    & \mbox{PC, 2} \\
    4 & (\aformula \septraction \aformulabis) \implies \boxseptra{\aformula}{\aformulabis}
    & \mbox{Def. of~$\septraction$, 3} 
    \end{syntproof}


\paragraph{($\Leftarrow$):} 
  Let us now show that $\prove_{\magicwandsys} \boxseptra{\aformula}{\aformulabis}\implies \aformula\septraction\aformulabis$.
    First, let us  note that, since $\boxseptra{\aformula}{\aformulabis} \separate \aformula \implies \aformulabis$
    is valid (Lemma~\ref{lemma:coreformulaseptraction}),
    it is derivable in $\starsys{}$ (Theorem~\ref{theo:starCompleteness}), and therefore, by the rule~\ref{rule:staradj},
    $\prove_{\magicwandsys}\boxseptra{\aformula}{\aformulabis}\implies\aformula\magicwand\aformulabis$. From that, it follows that it is enough to show that
    $\boxseptra{\aformula}{\aformulabis}\implies\aformula\septraction\top$ is derivable in $\magicwandsys$. Indeed,
    from $\boxseptra{\aformula}{\aformulabis}\implies \aformula\septraction\top$ and $\boxseptra{\aformula}{\aformulabis}\implies\aformula\magicwand\aformulabis$, we get, by \ref{mwAx:Mix},
    that $\boxseptra{\aformula}{\aformulabis}\implies\aformula\septraction\aformulabis$ is derivable too.

    Thus, let us prove that $\boxseptra{\aformula}{\aformulabis}\implies\aformula\septraction\top$ is derivable.
    If $\boxseptra{\aformula}{\aformulabis}$ is unsatisfiable, then
    from the completeness of $\coresys$ with respect to Boolean combinations of core formulae (Theorem~\ref{theo:corePSLcompl}),
    we conclude that $\vdash_{\coresys} \boxseptra{\aformula}{\aformulabis} \implies \false$. 
    Since $\coresys(\separate,\magicwand)$ extends $\coresys$, we have
    $\vdash_{\coresys(\separate,\magicwand)} \boxseptra{\aformula}{\aformulabis} \implies \false$.
    By propositional reasoning,  $\vdash_{\coresys(\separate,\magicwand)} \boxseptra{\aformula}{\aformulabis}\implies \aformula\septraction\top$.
    Otherwise, let us assume that $\boxseptra{\aformula}{\aformulabis}$ is satisfiable.

\begin{proof}[Structure of the remaining part of the proof]
\renewcommand{\qedsymbol}{}
Before presenting the technical arguments 
for the derivation of $\vdash_{\coresys(\separate,\magicwand)} \boxseptra{\aformula}{\aformulabis}\implies \aformula\septraction\top$ when $\boxseptra{\aformula}{\aformulabis}$ is  satisfiable, 
let us explain what are the main
ingredients. The proof establishing that 
$\vdash_{\coresys(\separate,\magicwand)} \boxseptra{\aformula}{\aformulabis} \implies \aformula \septraction \true$ 
is by induction on the number $j$  of variables $\avariable \in \asetvar$ for which 
$\alloc{\avariable} \inside \aformula$ holds.
As $\aformula$, $\aformulabis$ and $\boxseptra{\aformula}{\aformulabis}$  are currently assumed to be
 satisfiable, they
have exactly the same equalities and inequalities  and this is used in the proof. 
The base case $j = 0$ can be handled using several derivations taking advantage
of Lemma~\ref{lemma:septractionadmissible}. 
For the induction step $j > 0$, some more substantial work is needed and this is briefly described below.
We distinguish the case $\maxsize{\aformula} < \bound$ from the case  $\maxsize{\aformula} = \bound$.
Both cases, we introduce the formula $\ATOM{\avariable_i}$ where 
 $\alloc{\avariable_i} \inside \aformula$. 
    $$
    \ATOM{\avariable_i} \egdef 
      \begin{cases}
        \avariable_i \Ipto \avariablebis \land \size = 1 &\text{if}~\avariable_i \Ipto \avariablebis \inside \aformula, \text{ for some } \avariablebis \in \asetvar\\
        \alloc{\avariable_i} \land  \size  = 1 \land \bigwedge_{\avariablebis \in \asetvar} \lnot \avariable_i \Ipto \avariablebis &\text{otherwise}
      \end{cases}
    $$
In the case $\maxsize{\aformula} < \bound$, we introduce a formula $\aformula'$ as a very slight variant
of $\aformula$ such that $\aformula'$ enjoys the following essential properties. 
    \begin{itemize}[align = left]
      \item[\ref{csl:lemma62:prime-formula:prop1}] 
        $\aformula'$ is a satisfiable core type in $\coretype{\asetvar}{\bound}$.
      \item[\ref{csl:lemma62:prime-formula:prop2}] 
        $( \ATOM{\avariable_i} \separate \aformula') \implies \aformula$ is valid.
       \item[\ref{csl:lemma62:prime-formula:prop3}] 
        $(\boxseptra{\aformula}{\aformulabis} \separate \ATOM{\avariable_i}) \implies \boxseptra{\aformula'}{\aformulabis}$ is valid.
    \end{itemize}
In order to conclude  $\vdash_{\coresys(\separate,\magicwand)} 
\boxseptra{\aformula}{\aformulabis} \implies (\aformula \septraction \true)$, 
we take advantage of the  completeness of $\coresys(\separate)$ 
to derive the tautologies in~\ref{csl:lemma62:prime-formula:prop2}
    and~\ref{csl:lemma62:prime-formula:prop3}.
    Moreover, as by construction of $\aformula'$, we have  $\lnot \alloc{\avariable_i} \inside \aformula'$ and, 
    for every $\avariablebis \in \asetvar$, $\lnot \alloc{\avariablebis} \inside \aformula$ implies 
    $\lnot \alloc{\avariablebis} \inside \aformula'$, we shall be able to apply the 
     induction hypothesis on $\aformula'$ to
    get 
    $\vdash_{\coresys(\separate,\magicwand)} \boxseptra{\aformula'}{\aformulabis} \implies 
    (\aformula' \septraction \true)$, which will be essential in the final derivation for 
   $\boxseptra{\aformula}{\aformulabis} \implies (\aformula \septraction \true)$.

In the remaining case $\maxsize{\aformula} = \bound$,
we are still looking for some formula $\aformula'$
such that  $\aformula' \separate \ATOM{\avariable_i} \Rightarrow \aformula$ is valid
but we cannot hope for $\aformula'$ to be a core type in $\coretype{\asetvar}{\bound}$. 
Instead, we introduce two core types $\aformula_\bound'$ and $\aformula_{\bound-1}'$, 
      and define $\aformula'$ as $\aformula_\bound' \lor \aformula_{\bound-1}'$. 
The only difference between $\aformula_\bound'$ and $\aformula_{\bound-1}'$ rests on the fact
that $\size \geq \bound \inside \aformula_\bound'$ whereas $\neg \size \geq \bound \inside \aformula_{\bound-1}'$
(both formulae contain $\size \geq \bound -1$). 
Similarly to the previous case, the  properties below shall be shown. 
\begin{itemize}[align=left]
        \item[\ref{csl:lemma62:prime-formula:prop4}]
          $\aformula_\bound'$ and $\aformula_{\bound-1}'$ are satisfiable core types in $\coretype{\asetvar}{\bound}$
        \item[\ref{csl:lemma62:prime-formula:prop5}]
          $( \ATOM{\avariable_i} \separate (\aformula_\bound' \lor \aformula_{\bound-1}')) \implies \aformula$ is valid.       \item[\ref{csl:lemma62:prime-formula:prop6}]
          $(\boxseptra{\aformula}{\aformulabis} \separate \ATOM{\avariable_i}) \implies \boxseptra{\aformula_{\bound}'}{\aformulabis} \lor \boxseptra{\aformula_{\bound-1}'}{\aformulabis}$ is valid.
      \end{itemize}
The derivation of $\vdash_{\coresys(\separate,\magicwand)} 
\boxseptra{\aformula}{\aformulabis} \implies (\aformula \septraction \true)$ follows then
a principle similar to one for the case $\maxsize{\aformula} < \bound$.
\end{proof}


Now, let us present the technical developments. 
Directly from the definition of $\boxseptra{\aformula}{\aformulabis}$, the following simple facts hold.
    \begin{itemize}[align = left]
    \item[\itemlabel{1}{csl:lemma62:prop1}]  $\aformula$, $\aformulabis$ and $\boxseptra{\aformula}{\aformulabis}$ have exactly the same equalities and inequalities. 
    \item[\itemlabel{2}{csl:lemma62:prop2}]   $\neg \size \ge 0$ is not part of $\boxseptra{\aformula}{\aformulabis}$, 
    and therefore, following the definition of $\boxseptra{\aformula}{\aformulabis}$, 
      there are no $\size \geq \inbound_1 \inside \aformula$ and $\neg \size \geq \inbound_2 \inside \aformulabis$
           with $\inbound_1 \geq \inbound_2$. 
    \item[\itemlabel{3}{csl:lemma62:prop3}]   $\avariable \neq \avariable$ does not belong to $\boxseptra{\aformula}{\aformulabis}$. 
    In particular, by definition of $\boxseptra{\aformula}{\aformulabis}$, none of the following conditions apply: 
    \begin{itemize}[align = left]
      \item there is $\avariable \in \asetvar$ such that $\alloc{\avariable} \inside \aformula$ and $\lnot \alloc{\avariable} \inside \aformulabis$,
      \item there are $\avariable,\avariablebis \in \asetvar$ such that $\avariable \Ipto \avariablebis \in \aformula$ and $\lnot \avariable \Ipto \avariablebis \inside \aformulabis$,
      \item there are $\avariable,\avariablebis \in \asetvar$ such that $\alloc{\avariable} \land \lnot \avariable \Ipto \avariablebis \inside \aformula$ and $\avariable \Ipto \avariablebis \inside \aformulabis$.
    \end{itemize} 
    \end{itemize}

    From~\ref{csl:lemma62:prop1}, we know that $\boxseptra{\aformula}{\aformulabis}$ and $\aformula$ satisfy the same (in)equalities. 
    Similarly to the proof of Lemma~\ref{lemma:starPSLelim-sat},
    let $\avariable_1,\ldots\avariable_n$ be a maximal enumeration of representatives of the
    equivalence classes (one per equivalence class) such that $\alloc{\avariable_i}$ occurs in $\aformula$.
    As it is maximal, for every $\alloc{\avariable}$ in $\literals{\aformula}$ there is $i \in \interval{1}{n}$ such that $\avariable_i$ is syntactically equal to~$\avariable$. 
    Moreover, by definition of $\boxseptra{\aformula}{\aformulabis}$, for every $i \in \interval{1}{n}$, 
    $\lnot \alloc{\avariable_i} \inside \boxseptra{\aformula}{\aformulabis}$.
    The proof of $\vdash_{\coresys(\separate,\magicwand)} \boxseptra{\aformula}{\aformulabis} \implies \aformula \septraction \true$ is by induction on the number $j$ of variables $\avariable \in \asetvar$ for which $\alloc{\avariable} \inside \aformula$ holds.
    \begin{description}
      \item[base case: $j = 0$] In the base case, no formula $\alloc{\avariable}$ occurs positively in $\aformula$. Since $\aformula$ is a core type, this implies that for every $\avariable \in \asetvar$, $\lnot \alloc{\avariable} \inside \aformula$. Moreover, 
      since $\aformula$ is satisfiable, for every $\avariable,\avariablebis \in \asetvar$, 
      $\lnot \avariable \Ipto \avariablebis \inside \aformula$ (see the axiom~\ref{coreAx:PointAlloc}).
      Therefore, the core type $\aformula$ is syntactically equivalent (up to associativity and commutativity of conjunction) to the formula \,
      $
        \aformula_{\size} \land \aforUNALLOC \land \aformula_{\not\Ipto} \land \aforEQ,
      $
      \,
      where 
      \begin{itemize}
        \item $\aformula_{\size} \egdef \bigwedge \big( \{\size \geq \inbound \inside \aformula\} \cup \{ \lnot \size \geq \inbound \inside \aformula\}\big)$,
        \item $\aformula_{\lnot\mathtt{alloc}} \egdef \bigwedge_{\avariable \in \asetvar} \lnot \alloc{\avariable}$,
        \item $\aformula_{\not\Ipto} \egdef \bigwedge_{\avariable,\avariablebis \in \asetvar}\lnot \avariable \Ipto \avariablebis$,
        \item $\aforEQ \egdef \bigwedge \{ \avariable \sim \avariablebis \inside \aformula \mid \sim \in \{=,\neq\}\}$.
      \end{itemize}
      \vspace{3pt}
      Since $\aformula$ is satisfiable, so is $\aformula_{\size}$. We show that 
      $\vdash_{\coresys(\separate,\magicwand)} (\aformula_{\size} \land \aforUNALLOC \land  \aformula_{\not\Ipto}) \septraction \true$:
      \begin{syntproof}
        1   & \aformula_{\size} \land \aforUNALLOC \septraction \true 
            & \mbox{\ref{mwAx:SizeLiterals}}\\
        2   & \lnot \alloc{\avariable} \implies \lnot \avariable \Ipto \avariablebis 
            & \mbox{\ref{coreAx:PointAlloc}, PC}\\
        3   & \aforUNALLOC \implies \aformula_{\not\Ipto}
            & \mbox{PC, repeated 2}\\
        4   & \aformula_{\size} \land \aforUNALLOC \implies \aformula_{\size} \land \aforUNALLOC \land \aformula_{\not\Ipto}
            & \mbox{PC, 3}\\
        5   & (\aformula_{\size} \land \aforUNALLOC \septraction \true )
              \implies (\aformula_{\size} \land \aforUNALLOC \land \aformula_{\not\Ipto} \septraction \true) 
            & \mbox{\ref{mwAx:ImpL}, 4}\\
        6   & \aformula_{\size} \land \aforUNALLOC \land \aformula_{\not\Ipto} \septraction \true
            & \mbox{Modus Ponens, 1, 5}
      \end{syntproof}
      \noindent Now, let us treat the formula $\aforEQ$. From the definition of $\boxseptra{\aformula}{\aformulabis}$, we have $\aforEQ \inside \boxseptra{\aformula}{\aformulabis}$, and so
      by propositional reasoning, $\vdash_{\coresys(\separate,\magicwand)} \boxseptra{\aformula}{\aformulabis} \implies \aforEQ$.
      This allows us to conclude that 
      \begin{equation}
      \vdash_{\coresys(\separate,\magicwand)} \boxseptra{\aformula}{\aformulabis} \implies 
      \big((\aformula_{\size} \land \aforUNALLOC \land  \aformula_{\not\Ipto} \land \aforEQ) \septraction \true\big),
        \tag{$\dagger$}\label{csl:lemma62:dagger}
      \end{equation}
      by induction on the number of literals $\avariable \sim \avariablebis$  appearing in $\aforEQ$, 
      and by relying on the two theorems \ref{mwAx:SeptEq} and~\ref{mwAx:SeptIneq}. In the base case, $\aforEQ = \true$, and so 
      \begin{syntproof}
        7 & \aformula_{\size} \land \aforUNALLOC \land \aformula_{\not\Ipto} \implies \aformula_{\size} \land \aforUNALLOC \land \aformula_{\not\Ipto} \land \aforEQ 
          & \mbox{PC}\\
        8 & (\aformula_{\size} \land \aforUNALLOC \land \aformula_{\not\Ipto} \septraction \true ) 
        \implies (\aformula_{\size} \land \aforUNALLOC \land \aformula_{\not\Ipto} \land \aforEQ \septraction \true) 
          & \mbox{\ref{mwAx:ImpL}, 7}\\
        9 & \aformula_{\size} \land \aforUNALLOC \land \aformula_{\not\Ipto} \land \aforEQ \septraction \true
          & \mbox{Modus Ponens, 6, 8}\\
        10 & \boxseptra{\aformula}{\aformulabis} \implies 
        (\aformula_{\size} \land \aforUNALLOC \land  \aformula_{\not\Ipto} \land \aforEQ \septraction \true) 
          & \mbox{PC, 9}
      \end{syntproof}
      \noindent In the induction step, let $\aforEQ = \aforEQ' \land \avariable \sim \avariablebis$, where $\avariable \sim \avariablebis \not\inside \aforEQ'$.
      We have,
      \begin{syntproof}
        1 & \boxseptra{\aformula}{\aformulabis} \implies 
        (\aformula_{\size} \land \aforUNALLOC \land  \aformula_{\not\Ipto} \land \aforEQ' \septraction \true )
          & \mbox{Induction Hypothesis}\\
        2 & \boxseptra{\aformula}{\aformulabis} \implies \avariable \sim \avariablebis
          & \mbox{PC, as $\aforEQ \inside \boxseptra{\aformula}{\aformulabis}$}\\
        3 & \avariable \sim \avariablebis \land (\aformula_{\size} \land \aforUNALLOC \land  \aformula_{\not\Ipto} \land \aforEQ' \septraction \true) \implies\\[-2pt]
          & \qquad (\aformula_{\size} \land \aforUNALLOC \land  \aformula_{\not\Ipto} \land \aforEQ' \land \avariable \sim \avariablebis \septraction \true)
          & \mbox{\ref{mwAx:SeptEq}/\ref{mwAx:SeptIneq}}\\
        4 & \boxseptra{\aformula}{\aformulabis} \implies (\aformula_{\size} \land \aforUNALLOC \land  \aformula_{\not\Ipto} \land \aforEQ' \land \avariable \sim \avariablebis \septraction \true)
          & \mbox{PC, 1, 2, 3}\\ 
        5 & \boxseptra{\aformula}{\aformulabis} \implies (\aformula_{\size} \land 
            \aforUNALLOC \land  \aformula_{\not\Ipto} \land \aforEQ \septraction \true)
          & \mbox{Def. of $\aforEQ'$, 4}
      \end{syntproof} 
      \noindent Since $\aformula_{\size} \land 
      \aforUNALLOC \land  \aformula_{\not\Ipto} \land \aforEQ$ is equivalent to $\aformula$, 
      from~(\ref{csl:lemma62:dagger}) and by~\ref{mwAx:ImpL}, we conclude
      that $\vdash_{\coresys(\separate,\magicwand)} \boxseptra{\aformula}{\aformulabis} \implies 
      \aformula \septraction \true$.
      
    \item[induction step: $j \geq 1$] In this case, let $i \in \interval{1}{n}$ such that $\alloc{\avariable_i} \inside \aformula$ and thus,
    by definition of $\boxseptra{\aformula}{\aformulabis}$, 
    $\lnot \alloc{\avariable_i} \inside \boxseptra{\aformula}{\aformulabis}$.
    As announced earlier, we define the formula:
    $$
    \ATOM{\avariable_i} \egdef 
      \begin{cases}
        \avariable_i \Ipto \avariablebis \land \size = 1 &\text{if}~\avariable_i \Ipto \avariablebis \inside \aformula, \text{ for some } \avariablebis \in \asetvar\\
        \alloc{\avariable_i} \land  \size  = 1 \land \bigwedge_{\avariablebis \in \asetvar} \lnot \avariable_i \Ipto \avariablebis &\text{otherwise}
      \end{cases}
    $$
    Notice that, if there is $\avariablebis \in \asetvar$ such that $\avariable_i \Ipto \avariablebis \inside \aformula$, then the axiom schema \ref{wandAx:PointsTo} can be instantiated to
    $\lnot \alloc{\avariable_i} \implies (\ATOM{\avariable_i} \septraction \true)$.
    Otherwise (for all $\avariablebis \in \asetvar$, $\avariable_i \Ipto \avariablebis \not\inside \aformula$) this formula is an instantiation of the axiom schema \ref{wandAx:Alloc}. This allows us to show the following theorem:
    \begin{equation}
      \boxseptra{\aformula}{\aformulabis} \implies \big(\ATOM{\avariable_i}  \septraction
      (\boxseptra{\aformula}{\aformulabis} \separate \ATOM{\avariable_i})\big)
      \tag{$\ddagger$}\label{csl:lemma62:ddagger}
    \end{equation}
    \begin{syntproof}
      1 & \neg \alloc{\avariable_i} \implies (\ATOM{\avariable_i}  \septraction \top)
      & \mbox{\ref{wandAx:PointsTo}/\ref{wandAx:Alloc}} \\
      2 & \boxseptra{\aformula}{\aformulabis} \implies  \neg \alloc{\avariable_i} 
      & \mbox{Def. of $\boxseptra{\aformula}{\aformulabis}$, PC} \\
      3 & \boxseptra{\aformula}{\aformulabis}  \implies  (\ATOM{\avariable_i}  \septraction \top)
      & \mbox{\ref{rule:imptr}, 1, 2} \\ 
      4 & \boxseptra{\aformula}{\aformulabis} \separate \ATOM{\avariable_i} \implies
          \boxseptra{\aformula}{\aformulabis} \separate \ATOM{\avariable_i} 
      & \mbox{PC} \\
      5 & \boxseptra{\aformula}{\aformulabis}
        \implies (\ATOM{\avariable_i} \magicwand 
        \boxseptra{\aformula}{\aformulabis} \separate \ATOM{\avariable_i})
      & \mbox{\ref{rule:staradj}, 4} \\
      6 & \boxseptra{\aformula}{\aformulabis} \implies (\ATOM{\avariable_i}  \septraction
      \boxseptra{\aformula}{\aformulabis} \separate \ATOM{\avariable_i})
      & \mbox{\ref{mwAx:Mix}, 3, 5, PC} 
    \end{syntproof}
    \noindent 
    From the hypothesis $\card{\asetvar} \leq \bound$, together with $\alloc{\avariable_i} \inside \aformula$ and the fact that $\aformula$ is satisfiable, 
    we have $\maxsize{\aformula} \geq 1$ (see~\ref{coreAx:AllocSize}, instantiated with~$\asetvar = \{\avariable_i\}$).
    In order to show that $\vdash_{\coresys{\separate,\magicwand}} \boxseptra{\aformula}{\aformulabis} \implies (\aformula \septraction \true)$,
    we split the proof depending on whether $\maxsize{\aformula} < \bound$ holds.
    \begin{description}
      \item[case: $\maxsize{\aformula} < \bound$] Since $\aformula$ is a satisfiable core type in $\coretype{\asetvar}{\bound}$, by definition of $\maxsize{.}$, we have $\size \geq \maxsize{\aformula} \land \lnot \size \geq \maxsize{\aformula}+1 \inside \aformula$.
      Below, we consider the formula $\aformula'$ obtained from $\aformula$ by: 
      \begin{itemize}
        \item replacing $\size \geq \maxsize{\aformula} \inside \aformula$ with $\lnot \size \geq \maxsize{\aformula}$,
        \item for every $\avariable \in \asetvar$ such that $\avariable = \avariable_i \inside \aformula$, replacing every literal $\alloc{\avariable} \inside \aformula$ with $\lnot \alloc{\avariable}$, and every literal $\avariable \Ipto \avariablebis \inside \aformula$ with 
        $\lnot \avariable \Ipto \avariablebis$, where $\avariablebis \in \asetvar$.
      \end{itemize}
      Explicitly, 
      {\small$$\aformula' \egdef 
      \begin{aligned}[t]
        &\bigwedge \{ \avariable \sim \avariablebis \inside \aformula \mid \sim \in \{=,\neq\} \} 
        \land \bigwedge \{ \alloc{\avariable} \inside \aformula \mid \avariable \neq \avariable_i \inside \aformula\}
        \land \\ 
        & \bigwedge \{ \lnot \alloc{\avariable} \inside \aformula \} \land
        \bigwedge \{ \lnot \alloc{\avariable} \mid \avariable = \avariable_i \inside \aformula \}
        \land \bigwedge \{ \avariable \Ipto \avariablebis \inside \aformula \mid \avariable \neq \avariable_i \inside \aformula \}
        \land \\ 
        &
        \bigwedge \{ \lnot \avariable \Ipto \avariablebis \inside \aformula \}
        \land \bigwedge \{ \lnot \avariable \Ipto \avariablebis \mid \avariable = \avariable_i \land \avariable \Ipto \avariablebis \inside \aformula \} \land \lnot \size \geq {\textstyle\maxsize{\aformula}} \land \\
        & \bigwedge \{ \size \geq \inbound \inside \aformula \mid \inbound < {\textstyle\maxsize{\aformula}}\} 
        \land \bigwedge \{\lnot \size \geq \inbound \inside \aformula\}.
      \end{aligned}$$}

    The formula $\aformula'$ enjoys the following properties (to be shown below):

    \vspace{3pt}

    \begin{itemize}[align=left]
      \setlength{\itemsep}{3pt}
      \item[\itemlabel{\textbf{A}}{csl:lemma62:prime-formula:prop1}] 
        $\aformula'$ is a satisfiable core type in $\coretype{\asetvar}{\bound}$.
      \item[\itemlabel{\textbf{B}}{csl:lemma62:prime-formula:prop2}] 
        $( \ATOM{\avariable_i} \separate \aformula') \implies \aformula$ is valid.
      \item[\itemlabel{\textbf{C}}{csl:lemma62:prime-formula:prop3}] 
          $(\boxseptra{\aformula}{\aformulabis} \separate \ATOM{\avariable_i}) \implies \boxseptra{\aformula'}{\aformulabis}$ is valid.
    \end{itemize}

    \vspace{5pt}

    \noindent
    Fundamentally, $\aformula'$ enjoys the induction hypothesis, which reveals to be useful
    later~on. 

    \begin{proof}[Proof of~\rm\ref{csl:lemma62:prime-formula:prop1}]
        Since $\aformula'$ is obtained from $\aformula$ simply by changing the polarity of some of the literals in $\literals{\aformula}$, clearly $\aformula'$ is in $\coretype{\asetvar}{\bound}$. To show that $\aformula'$ is satisfiable, we rely on the fact that $\aformula$ is satisfiable. Let $\pair{\astore}{\aheap}$ be a memory state satisfying~$\aformula$. 
        Since $\alloc{\avariable_i} \inside \aformula$, we conclude that $\astore(\avariable_i) \in \domain{\aheap}$. 
        Let us consider the disjoint heaps $\aheap_1$ and $\aheap_2$ such that 
        $\aheap = \aheap_1 + \aheap_2$ and $\domain{\aheap_1} = \{\astore(\avariable_i)\}$.
        We show that $\pair{\astore}{\aheap_2} \models \aformula'$ by 
        considering every $\aliteral \in \literals{\aformula'}$ and showing that 
        $\pair{\astore}{\aheap_2} \models \aliteral$.
        \begin{description}
          \item[case: $\aliteral \, = \, \avariable \sim \avariablebis$, where $\sim \in \{=,\neq\}$] 
          By definition of $\aformula'$, $\pair{\astore}{\aheap} \models \aliteral$ and therefore $\astore(\avariable) \sim \astore(\avariablebis)$. 
          Thus, $\pair{\astore}{\aheap_2} \models \aliteral$.
          \item[case: $\aliteral \, = \, \lnot \alloc{\avariable}$]
            If $\avariable = \avariable_i \inside \aformula$ then $\astore(\avariable) \in \domain{\aheap_1}$, and therefore, by $\aheap_1 \disjoint \aheap_2$, $\astore(\avariable) \not \in \domain{\aheap_2}$. So, $\pair{\astore}{\aheap_2} \models \lnot \alloc{\avariable}$.
            Otherwise ($\avariable \neq \avariable_i \inside \aformula$), by definition of $\aformula'$, we have $\lnot \alloc{\avariable} \inside \aformula$. 
            So $\astore(\avariable) \not\in \domain{\aheap}$ and, from $\aheap_2 \subheap \aheap$, we conclude that $\pair{\astore}{\aheap_2} \models \lnot \alloc{\avariable}$.
          \item[case: $\aliteral \, = \, \lnot \avariable \Ipto \avariablebis$] 
            Similar to the previous case. Briefly, if $\avariable = \avariable_i \inside \aformula$ then, by definition of $\ATOM{\avariable_i}$, $\pair{\astore}{\aheap_2} \not \models \alloc{\avariable}$, which implies $\pair{\astore}{\aheap_2} \models \lnot \avariable \Ipto \avariablebis$.
            Otherwise, by definition of $\aformula'$, $\lnot \avariable \Ipto \avariablebis \inside \aformula$ and thus $\pair{\astore}{\aheap} \models \lnot \avariable \Ipto \avariablebis$.
            From $\aheap_2 \subheap \aheap$, we conclude that $\pair{\astore}{\aheap_2} \models \lnot \avariable \Ipto \avariablebis$. 
          \item[case: $\aliteral \, = \, \alloc{\avariable}$]
            By definition of $\aformula'$, $\alloc{\avariable} \land \avariable \neq \avariable_i \inside \aformula$. 
            Therefore $\astore(\avariable) \in \domain{\aheap}$ and, by definition of $\ATOM{\avariable_i}$, $\astore(\avariable) \not \in \domain{\aheap_1}$.
            Since $\aheap = \aheap_1 \heapsum \aheap_2$, we conclude that $\pair{\astore}{\aheap_2} \models \alloc{\avariable}$.
          \item[case: $\aliteral \, = \, \avariable \Ipto \avariablebis$]
            Similar to the previous case. By definition of $\aformula'$, we have
            ${\avariable \Ipto \avariablebis} \land \avariable \neq \avariable_i \inside \aformula$.
            Thus, $\aheap(\astore(\avariable)) = \astore(\avariablebis)$.
            By definition of $\ATOM{\avariable_i}$, 
            $\astore(\avariable) \in \domain{\aheap_2}$ and thus 
            $\aheap_2(\astore(\avariable)) = \astore(\avariablebis)$.
            So, $\pair{\astore}{\aheap_2} \models \avariable \Ipto \avariablebis$. 
          \item[case: $\aliteral \, = \, \size \geq \inbound$]
            By definition of $\aformula'$, $\inbound < \maxsize{\aformula}$.
            Since $\pair{\astore}{\aheap} \models \aformula$, 
            we have $\card{\domain{\aheap}} \geq \maxsize{\aformula}$. 
            By definition of $\ATOM{\avariable_i}$ and from $\aheap = \aheap_1 \heapsum \aheap_2$, 
            we have $\card{\domain{\aheap_2}} = \card{\domain{\aheap}} {-} 1 \geq \maxsize{\aformula} {-} 1 \geq \inbound$. 
            So, ${\pair{\astore}{\aheap_2} \models \size \geq \inbound}$.
          \item[case: $\aliteral \, = \, \lnot \size \geq \inbound$]
            By definition of $\aformula'$,
            ${\lnot \size \geq \inbound \inside \aformula}$ 
            or
            ${\inbound = \maxsize{\aformula}}$. 
            In the former case, since $\aformula$ is satisfiable, we know that $\inbound > \maxsize{\aformula}$. Hence, in both cases we have $\inbound \geq \maxsize{\aformula}$. 
            As $\pair{\astore}{\aheap} \models \aformula$ and $\lnot \size \geq \maxsize{\aformula}+1 \inside \aformula$, we have 
            $\card{\domain{\aheap}} \leq \maxsize{\aformula}$.
            Since $\card{\domain{\aheap_1}} = 1$, by $\aheap = \aheap_1 \heapsum \aheap_2$ we derive $\card{\domain{\aheap_2}} < \maxsize{\aformula} \leq \inbound$. 
            Therefore, $\pair{\astore}{\aheap_2} \models \lnot \size \geq \inbound$.
            \qedhere
        \end{description}
      \end{proof}

      \begin{proof}[Proof of~\rm\ref{csl:lemma62:prime-formula:prop2}]
        Let $\pair{\astore}{\aheap} \models \ATOM{\avariable_i} \separate \aformula'$.
        So, there are $\aheap_1$ and $\aheap_2$ such that $\aheap = \aheap_1 \heapsum \aheap_2$, 
        $\pair{\astore}{\aheap_1} \models \ATOM{\avariable_i}$ and $\pair{\astore}{\aheap_2} \models \aformula'$.
        By definition of $\ATOM{\avariable_i}$, $\domain{\aheap_1} = \{\astore(\avariable_i)\}$.
        In order to prove~\ref{csl:lemma62:prime-formula:prop2},
        we show that $\pair{\astore}{\aheap} \models \aliteral$, for every literal $\aliteral \in \literals{\aformula}$.
        \begin{description}
          \item[case: $\aliteral \, = \, \avariable \sim \avariablebis$, where $\sim \in \{=,\neq\}$] 
          By definition of $\aformula'$, $\pair{\astore}{\aheap_2} \models \aliteral$ and therefore $\astore(\avariable) \sim \astore(\avariablebis)$. 
          Hence, $\pair{\astore}{\aheap} \models \aliteral$. 

          \item[case: $\aliteral \, = \, \lnot \alloc{\avariable}$] By definition of $\ATOM{\avariable_i}$, $\alloc{\avariable_i} \inside \aformula$ and therefore $\astore(\avariable) \not\in \domain{\aheap_1}$.  
          By definition of $\aformula'$, for every $\avariablebis \in \asetvar$, $\alloc{\avariablebis} \inside \aformula'$ implies $\alloc{\avariablebis} \inside \aformula$. 
          Therefore, $\astore(\avariable) \not \in \domain{\aheap_2}$. 
          We conclude that $\astore(\avariable) \not \in \domain{\aheap}$, and so 
          $\pair{\astore}{\aheap} \models \lnot \alloc{\avariable}$.

          \item[case: $\aliteral \, = \, \lnot \avariable \Ipto \avariablebis$] Similar to the previous case. Briefly, by definition of $\ATOM{\avariable_i}$, $\pair{\astore}{\aheap_1} \models \lnot \avariable \Ipto \avariablebis$. 
          By definition of $\aformula'$, $\pair{\astore}{\aheap_2} \models \lnot \avariable \Ipto \avariablebis$. So, $\pair{\astore}{\aheap} \models \lnot \avariable \Ipto \avariablebis$.  

          \item[case: $\aliteral \, = \, \alloc{\avariable}$] If $\avariable = \avariable_i \inside \aformula$, then 
          $\astore(\avariable) = \astore(\avariable_i)$ (first case of the proof),
          and by definition of $\ATOM{\avariable_i}$, $\astore(\avariable) \in \domain{\aheap_1}$. 
          As $\aheap_1 \subheap \aheap$, we conclude that $\pair{\astore}{\aheap} \models \alloc{\avariable}$.
          Otherwise, if $\avariable \neq \avariable_i \inside \aformula$, then by definition of $\aformula'$ we have $\alloc{\avariable} \inside \aformula'$. 
          This implies that $\astore(\avariable) \in \domain{\aheap_2}$ and so, from $\aheap_2 \subheap \aheap$, we conclude that $\pair{\astore}{\aheap} \models \alloc{\avariable}$.

          \item[case: $\aliteral \, = \, \avariable \Ipto \avariablebis$] 
            Similar to the previous case. Briefly, if $\avariable = \avariable_i \inside \aformula$ then, by definition of $\ATOM{\avariable_i}$, 
            $\pair{\astore}{\aheap_1} \models \avariable \Ipto \avariablebis$ and so 
            $\pair{\astore}{\aheap} \models \avariable \Ipto \avariablebis$.
            Else (${\avariable \neq \avariable_i} \inside \aformula$), 
            $\avariable \Ipto \avariablebis \inside \aformula'$ and therefore 
            $\pair{\astore}{\aheap_2} \models \avariable \Ipto \avariablebis$. 
            So, $\pair{\astore}{\aheap} \models \avariable \Ipto \avariablebis$.

          \item[case: $\aliteral \, = \, \size \geq \inbound$] 
            If $\inbound < \maxsize{\aformula}$, then directly by definition of $\aformula'$, we have $\pair{\astore}{\aheap_2} \models \size \geq \inbound$. From $\aheap_2 \subheap \aheap$, we conclude that  
            $\pair{\astore}{\aheap} \models \size \geq \inbound$.
            Otherwise, $\inbound = \maxsize{\aformula}$. Recall that $\maxsize{\aformula} \geq 1$ and so,
            by definition of~$\aformula'$, $\size \geq \maxsize{\aformula}-1 \inside \aformula'$.
            Thus, $\card{\domain{\aheap_2}} \geq \maxsize{\aformula}-1$.
            By definition of $\ATOM{\avariable_i}$ we have $\card{\domain{\aheap_1}} = 1$.
            As $\aheap = \aheap_1 \heapsum \aheap_2$, we conclude that 
            $\pair{\astore}{\aheap} \models \size \geq \maxsize{\aformula}$.

          \item[case: $\aliteral \, = \, \lnot \size \geq \inbound$]
            As~$\aformula$ is satisfiable, $\inbound > \maxsize{\aformula}$.
            By definition of the formula~$\aformula'$,
            ${\lnot \size \geq \maxsize{\aformula} \inside \aformula'}$
            and thus $\card{\domain{\aheap_2}} < \maxsize{\aformula}$. 
            From ${\card{\domain{\aheap_1}} = 1}$ we can derive $\card{\domain{\aheap}} \leq \maxsize{\aformula} < \inbound$,
            which allows us to conclude that
            ${\pair{\astore}{\aheap} \models \lnot \size \geq \inbound}$.
            \qedhere
        \end{description}
    \end{proof}

    \vspace{3pt}

    \begin{figure}
      \arraycolsep=1.4pt
      {\Large
      $$
      \begin{array}{rll}
      &\bigwedge\formulasubset{\avariable \sim \avariablebis \inside \orliterals{\aformula'}{\aformulabis}}{\bmat[\sim \in \{=,\neq\}]}
      & \land
      \bigwedge \formulasubset{\alloc{\avariable}}{\bmat[\lnot\alloc{\avariable}\inside\aformula'\\ \alloc{\avariable}\inside\aformulabis]}
      \\ 
      \land&
      \bigwedge \aformulasubset{\lnot\alloc{\avariable} \inside \aformulabis}
      &\land
      \bigwedge \formulasubset{\lnot\alloc{\avariable}}{\bmat[\alloc{\avariable}\inside\aformula']}
      \\ 
      \land&
      \bigwedge \aformulasubset{\lnot \avariable {\Ipto} \avariablebis \inside \aformulabis}
       & \land
       \bigwedge \formulasubset{\avariable \Ipto \avariablebis}{\bmat[\lnot\alloc{\avariable}\inside\aformula'\\ \avariable \Ipto \avariablebis \inside \aformulabis]}
      \\ 
      \land&
      \bigwedge \formulasubset{\avariable \neq \avariable}{\bmat[\alloc{\avariable}\land\lnot\avariable\Ipto\avariablebis\inside\aformula'\\ \avariable \Ipto \avariablebis\inside\aformulabis]}
      & \land
      \bigwedge \formulasubset{\size\geq\inbound_2{+}1{\dotminus}\inbound_1}{\bmat[\lnot\size\geq\inbound_1\inside\aformula'\\\size\geq\inbound_2\inside\aformulabis]}
      \\ 
      \land&
      \bigwedge \formulasubset{\avariable \neq \avariable}{\bmat[\avariable\Ipto\avariablebis\inside\aformula'\\ \lnot\avariable \Ipto \avariablebis\inside\aformulabis]}
      & \land
      \bigwedge \formulasubset{\lnot\size\geq\inbound_2{\dotminus}\inbound_1}{\bmat[\size\geq\inbound_1\inside\aformula'\\\lnot\size\geq\inbound_2\inside\aformulabis]}
      \\ 
      \land&
      \bigwedge \formulasubset{\avariable \neq \avariable}{\bmat[\alloc{\avariable}\inside\aformula'\\\lnot\alloc{\avariable}\inside\aformulabis]}
      \end{array}
      $$
    }
    \caption{The formula $\boxseptra{\aformula'}{\aformulabis}$.}
    \label{figure:csl:boxseptra-formula-again}
    \end{figure}

    \begin{itemize}[align=left]
    \item[\textit{Proof of~\ref{csl:lemma62:prime-formula:prop3}}.]
        Figure~\ref{figure:csl:boxseptra-formula-again} recalls the definition of $\boxseptra{\aformula'}{\aformulabis}$.
        First of all, notice that
        it cannot be that there is $\avariable \in \asetvar$ such that 
        $\avariable \neq \avariable \inside \boxseptra{\aformula'}{\aformulabis}$. 
        Indeed, \emph{ad absurdum}, suppose the opposite. 
        By definition of $\boxseptra{\aformula'}{\aformulabis}$, this implies that 
        (1) $\alloc{\avariable} \land \lnot \avariable \Ipto \avariablebis \inside \aformula'$ 
        and $\avariable \Ipto \avariablebis \inside \aformulabis$, 
        (2) $\avariable \Ipto \avariablebis \inside \aformula'$ and $\lnot \avariable \Ipto \avariablebis \inside \aformulabis$, or (3) $\alloc{\avariable} \inside \aformula'$ and $\lnot \alloc{\avariable} \inside \aformulabis$. 
        By definition of $\aformula'$, this implies that 
        (1) $\alloc{\avariable} \land \lnot \avariable \Ipto \avariablebis \inside \aformula$, 
        (2) $\avariable \Ipto \avariablebis \inside \aformula$ 
        or (3) $\alloc{\avariable} \inside \aformula$.
        However, by definition of $\boxseptra{\aformula}{\aformulabis}$, 
        this implies that $\avariable \neq \avariable \inside \boxseptra{\aformula}{\aformulabis}$, 
        in contradiction with the satisfiability of $\boxseptra{\aformula}{\aformulabis}$. 
        Therefore, below we assume that for all $\avariable \in \asetvar$, 
        $\avariable \neq \avariable \not\inside \boxseptra{\aformula'}{\aformulabis}$.

        Let $\pair{\astore}{\aheap} \models \boxseptra{\aformula}{\aformulabis} \separate \ATOM{\avariable_i}$.
        There are $\aheap_1$ and $\aheap_2$ such that $\aheap = \aheap_1 \heapsum \aheap_2$, 
        $\pair{\astore}{\aheap_1} \models \boxseptra{\aformula}{\aformulabis}$ and 
        $\pair{\astore}{\aheap_2} \models \ATOM{\avariable_i}$.
        By definition of $\ATOM{\avariable_i}$, $\domain{\aheap_2} = \{\astore(\avariable_i)\}$.
        To prove~\ref{csl:lemma62:prime-formula:prop3},
        we show that $\pair{\astore}{\aheap} \models \aliteral$,
        for every literal $\aliteral \in \literals{\boxseptra{\aformula'}{\aformulabis}}$.

        \begin{description}
          \item[case: $\aliteral \, = \, \avariable \sim \avariablebis$, where $\sim \in \{=,\neq\}$] 
          By definition of $\boxseptra{\aformula'}{\aformulabis}$,
          $\aliteral \inside \orliterals{\aformula'}{\aformulabis}$ and so, 
          by definition of $\aformula'$, $\aliteral \inside \orliterals{\aformula}{\aformulabis}$.
          By definition of $\boxseptra{\aformula}{\aformulabis}$,
          $\aliteral \inside \boxseptra{\aformula}{\aformulabis}$. 
          From $\pair{\astore}{\aheap_1} \models \boxseptra{\aformula}{\aformulabis}$ we derive $\astore(\avariable) \sim \astore(\avariablebis)$.
          So, $\pair{\astore}{\aheap} \models \aliteral$.

          \item[case: $\aliteral \, = \, \lnot \alloc{\avariable}$] 
            From the definition of $\boxseptra{\aformula'}{\aformulabis}$, 
            either we have $\lnot \alloc{\avariable} \inside \aformulabis$ or we
            have $\alloc{\avariable} \inside \aformula'$. 
            In the first case, by definition of $\boxseptra{\aformula}{\aformulabis}$, 
            $\lnot \alloc{\avariable} \inside \boxseptra{\aformula}{\aformulabis}$, and 
            therefore $\astore(\avariable) \not\in \domain{\aheap_1}$.
            Moreover, since 
            $\boxseptra{\aformula}{\aformulabis}$ is satisfiable, $\alloc{\avariable} \not \inside \aformula$ (otherwise we would have $\avariable \neq \avariable \inside \boxseptra{\aformula}{\aformulabis}$). 
            Therefore, by definition of $\ATOM{\avariable_i}$,
            we conclude that $\astore(\avariable) \not \in \domain{\aheap_2}$.
            From $\aheap = \aheap_1 \heapsum \aheap_2$, we derive 
            $\astore(\avariable) \not \in \domain{\aheap}$, and thus $\pair{\astore}{\aheap} \models \lnot \alloc{\avariable}$.

            In the second case, ($\alloc{\avariable} \inside \aformula'$), by definition of $\aformula'$ we have $\alloc{\avariable} \inside \aformula$ and $\avariable \neq \avariable_i \inside \aformula$. 
            By definition of $\ATOM{\avariable_i}$, $\astore(\avariable) \not\in \domain{\aheap_2}$.
            By definition of $\boxseptra{\aformula}{\aformulabis}$, 
            $\lnot \alloc{\avariable} \inside \boxseptra{\aformula}{\aformulabis}$, 
            and therefore $\astore(\avariable) \not \in \domain{\aheap_1}$.
            Again, by $\aheap = \aheap_1 \heapsum \aheap_2$, 
            we have $\pair{\astore}{\aheap} \models \lnot \alloc{\avariable}$.

          \item[case: $\aliteral \, = \, \lnot \avariable \Ipto \avariablebis$] 
            Following the definition of $\boxseptra{\aformula'}{\aformulabis}$,
            $\lnot \avariable \Ipto \avariablebis \inside \aformulabis$ and therefore 
            $\lnot \avariable \Ipto \avariablebis \inside \boxseptra{\aformula}{\aformulabis}$.
            Therefore, $\pair{\astore}{\aheap_1} \models \lnot \avariable \Ipto \avariablebis$.
            Since $\boxseptra{\aformula}{\aformulabis}$ is satisfiable, $\lnot \avariable \Ipto \avariablebis \inside \aformula$.
            By definition of $\ATOM{\avariable_i}$, we derive $\pair{\astore}{\aheap_2} \models \lnot \avariable \Ipto \avariablebis$. 
            From $\aheap = \aheap_1 \heapsum \aheap_2$, 
            $\pair{\astore}{\aheap} \models \lnot \avariable \Ipto \avariablebis$.

          \item[case: $\aliteral \, = \, \alloc{\avariable}$] 
            By definition of $\boxseptra{\aformula'}{\aformulabis}$, 
            we have
            $\lnot \alloc{\avariable} \inside \aformula'$ and $\alloc{\avariable} \inside \aformulabis$.
            First, let us suppose $\alloc{\avariable} \inside \aformula$. 
            By definition of $\aformula'$, 
            $\avariable = \avariable_i \inside \aformula$ and so, by definition of 
            $\ATOM{\avariable_i}$, $\astore(\avariable) \in \domain{\aheap_2}$. 
            From~$\aheap_2 \subheap \aheap$, $\pair{\astore}{\aheap} \models \alloc{\avariable}$. 
            Otherwise ($\lnot \alloc{\avariable} \inside \aformula$), 
            by definition of $\boxseptra{\aformula}{\aformulabis}$, 
            $\alloc{\avariable} \inside \boxseptra{\aformula}{\aformulabis}$.
            So, $\astore(\avariable) \in \domain{\aheap_1}$,
            and by $\aheap_1 \subheap \aheap$, $\pair{\astore}{\aheap} \models \alloc{\avariable}$.

          \item[case: $\aliteral \, = \, \avariable \Ipto \avariablebis$]
            By definition of $\boxseptra{\aformula'}{\aformulabis}$,  we have
            $\lnot \alloc{\avariable} \inside \aformula'$ and 
            ${\avariable \Ipto \avariablebis} \inside \aformulabis$.
            First, suppose ${\alloc{\avariable} \inside \aformula}$.
            By definition of $\aformula'$, 
            ${\avariable = \avariable_i} \inside \aformula$. By definition of 
            $\ATOM{\avariable_i}$, $\astore(\avariable) \in \domain{\aheap_2}$. 
            \emph{Ad absurdum}, suppose $\aheap(\astore(\avariable)) \neq \astore(\avariablebis)$.
            By definition of $\ATOM{\avariable_i}$, we have that
            ${\alloc{\avariable} \land \lnot \avariable \Ipto \avariablebis \inside \aformula}$.
            However, from $\avariable \Ipto \avariablebis \inside \aformulabis$, 
            this implies $\avariable \neq \avariable \inside \boxseptra{\aformula}{\aformulabis}$, which contradicts the satisfiability of $\boxseptra{\aformula}{\aformulabis}$. 
            Therefore, $\aheap(\astore(\avariable)) = \astore(\avariablebis)$ and, from $\aheap_2 \subheap \aheap$, we conclude that $\pair{\astore}{\aheap} \models \avariable \Ipto \avariablebis$.
            Otherwise ($\lnot \alloc{\avariable} \inside \aformula$), 
            by definition of $\boxseptra{\aformula}{\aformulabis}$, 
            $\avariable \Ipto \avariablebis \inside \boxseptra{\aformula}{\aformulabis}$.
            So, $\aheap_1(\astore(\avariable)) = \astore(\avariablebis)$,
            and by $\aheap_1 \subheap \aheap$, we derive $\pair{\astore}{\aheap} \models \avariable \Ipto \avariablebis$.

          \item[{\parbox[b]{1.07\linewidth}{case: $\aliteral \, = \, \size \geq
            \inbound_2 {+} 1 \dotminus \inbound_1$, where   $\lnot \size \geq
            \inbound_1 \inside \aformula'$\\
            \null \hfill and 
          $\size \geq \inbound_2 \inside \aformulabis$}}]
            By definition of $\aformula'$, 
            $\lnot \size \geq \inbound_1 \inside \aformula$, and so $\inbound_1 > \maxsize{\aformula}$, since $\aformula$ is satisfiable.
            By definition of $\boxseptra{\aformula}{\aformulabis}$ and as $\lnot \size \geq \maxsize{\aformula}+1 \inside \aformula$, we have
            $\size \geq \inbound_2 + 1 \dotminus (\maxsize{\aformula}+1)\inside \boxseptra{\aformula}{\aformulabis}$, which in turn implies
            $\card{\domain{\aheap_1}} \geq \inbound_2 \dotminus \maxsize{\aformula}$.
            By definition of $\ATOM{\avariable_i}$, ${\card{\domain{\aheap_2}} \geq 1}$.
            By ${\aheap = \aheap_1 \heapsum \aheap_2}$, 
            $\card{\domain{\aheap}} \geq (\inbound_2 \dotminus \maxsize{\aformula}) + 1 \geq (\inbound_2 + 1) \dotminus \maxsize{\aformula}$.
            As $\inbound_1 > \maxsize{\aformula}$,
            $\pair{\astore}{\aheap} \models \size \geq \inbound_2 {+} 1 \dotminus \inbound_1$.

          \item[{\parbox[b]{1.07\linewidth}{case: $\aliteral \, = \, \lnot \size
            \geq \inbound_2 \dotminus \inbound_1$, where $\size \geq \inbound_1
            \inside \aformula'$\\\null\hfill and $\lnot \size \geq \inbound_2 \inside
            \aformulabis$}}]
            By definition of $\aformula'$, $\inbound_1 < \maxsize{\aformula}$.
            By definition of $\boxseptra{\aformula}{\aformulabis}$, 
            we have
            $\lnot \size \geq \inbound_2 \dotminus \maxsize{\aformula} \inside \boxseptra{\aformula}{\aformulabis}$. 
            Notice that, since $\boxseptra{\aformula}{\aformulabis}$ is satisfiable, 
            $\inbound_2 > \maxsize{\aformula}$.
            Thus, $\card{\domain{\aheap_1}} < \inbound_2 - \maxsize{\aformula}$.
            By definition of $\ATOM{\avariable_i}$, $\card{\domain{\aheap_2}} \leq 1$. 
            From $\aheap = \aheap_1 \heapsum \aheap_2$, 
            we conclude that
            $\card{\domain{\aheap}} < (\inbound_2 - \maxsize{\aformula}) + 1$.
            As $\inbound_1 < \maxsize{\aformula}$, we have ${\inbound_2 - \maxsize{\aformula} + 1} \leq \inbound_2 \dotminus \inbound_1$. 
            Therefore, $\pair{\astore}{\aheap} \models \lnot \size \geq \inbound_2 \dotminus \inbound_1$.
        \end{description}
    \end{itemize}

    Continuing the proof of~Lemma~\ref{lemma:magicwandPSLelim},
    we prove ${\vdash_{\coresys(\separate,\magicwand)} \boxseptra{\aformula}{\aformulabis} \implies (\aformula \septraction \true)}$.
    Notice that, by the completeness of $\coresys(\separate)$ (Theorem~\ref{theo:starCompleteness}),
    we conclude that the tautologies in~\ref{csl:lemma62:prime-formula:prop2}
    and~\ref{csl:lemma62:prime-formula:prop3} are derivable in $\coresys(\separate,\magicwand)$.
    Moreover, notice that $\lnot \alloc{\avariable_i} \inside \aformula'$ and, for every $\avariablebis \in \asetvar$, $\lnot \alloc{\avariablebis} \inside \aformula$ implies $\lnot \alloc{\avariablebis} \inside \aformula'$. 
    This allows us to 
    rely on the induction hypothesis, and conclude that $\vdash_{\coresys(\separate,\magicwand)} \boxseptra{\aformula'}{\aformulabis} \implies (\aformula' \septraction \true)$.
    The derivation of $\boxseptra{\aformula}{\aformulabis} \implies (\aformula \septraction \true)$
    is given below:

    \begin{syntproof}
      1 & \boxseptra{\aformula'}{\aformulabis} \implies (\aformula' \septraction \true)
      & \mbox{Induction hypothesis} \\
      2 &  \ATOM{\avariable_i} \separate \aformula' \implies \aformula
      & \mbox{\ref{csl:lemma62:prime-formula:prop2}, Theorem~\ref{theo:starCompleteness}} \\
      3 & \boxseptra{\aformula}{\aformulabis} \separate \ATOM{\avariable_i}\implies \boxseptra{\aformula'}{\aformulabis}
      & \mbox{\ref{csl:lemma62:prime-formula:prop3}, Theorem~\ref{theo:starCompleteness}} \\
      4 & (\boxseptra{\aformula}{\aformulabis} \separate \ATOM{\avariable_i}) \implies (\aformula' \septraction \top)
      & \mbox{\ref{rule:imptr}, 1, 3} \\ 
      5 & \boxseptra{\aformula}{\aformulabis} \implies (\ATOM{\avariable_i}  \septraction
      \boxseptra{\aformula}{\aformulabis} \separate \ATOM{\avariable_i})
      & \mbox{(\ref{csl:lemma62:ddagger})} \\ 
      6 & (\ATOM{\avariable_i}  \septraction
      \boxseptra{\aformula}{\aformulabis} \separate \ATOM{\avariable_i})
      \implies \big(\ATOM{\avariable_i}  \septraction
      (\aformula' \septraction \true )\big)
         & \mbox{\ref{mwAx:ImpR}, 4}\\
      7 & \big(\ATOM{\avariable_i}  \septraction
      (\aformula' \septraction \true )\big) \implies 
      (\ATOM{\avariable_i} \separate \aformula' \septraction \true)
      & \mbox{\ref{mwAx:Curry}} \\
      8 & (\ATOM{\avariable_i} \separate \aformula' \septraction \true) \implies (\aformula \septraction \true)
         & \mbox{\ref{mwAx:ImpL}, 2} \\
      9 & \boxseptra{\aformula}{\aformulabis} \implies (\aformula \septraction \true)
      & \mbox{\ref{rule:imptr}, 5, 6, 7, 8}
    \end{syntproof}

    \item[case: $\maxsize{\aformula} = \bound$]
      In this case, we have $\size \geq \bound \inside \aformula$, where we recall that $\bound = \maxsize{\aformula} \geq 1$.
      Following the developments of the previous case, we would like to 
      define a formula $\aformula'$ for which the formula $\aformula' \separate \ATOM{\avariable_i} \Rightarrow 
      \aformula$ is valid. 
      However, since $\aformula$ is in $\coretype{\asetvar}{\bound}$, we cannot hope for $\aformula'$ to be a core type in $\coretype{\asetvar}{\bound}$. 
      Indeed, because of $\size \geq \bound \inside \aformula$, in order to achieve the valid formula above we must differentiate between the case where $\aformula$ is satisfied by a memory state $\pair{\astore}{\aheap}$ such that $\card{\domain{\aheap}} > \bound$, to the case where $\card{\domain{\aheap}} = \bound$. 
      Therefore, below we introduce two core types $\aformula_\bound'$ and $\aformula_{\bound-1}'$, 
      and define $\aformula'$ as $\aformula_\bound' \lor \aformula_{\bound-1}'$. 
      Since the separating conjunction distributes over disjunctions,
      after defining these two core types, we can easily adapt the arguments of the previous case to prove that $\boxseptra{\aformula}{\aformulabis} \implies (\aformula \septraction \true)$.

      The formula $\aformula_\bound'$ is obtained from $\aformula$ by replacing,  
      for every $\avariable \in \asetvar$ such that $\avariable = \avariable_i \inside \aformula$,
      every literal $\alloc{\avariable} \inside \aformula$ with $\lnot \alloc{\avariable}$, and every $\avariable \Ipto \avariablebis \inside \aformula$ with 
        $\lnot \avariable \Ipto \avariablebis$, where $\avariablebis \in \asetvar$.
      Notice that $\aformula_{\bound}'$ is defined similarly to $\aformula'$ (in the previous case of the proof), with the exception that we do not modify the polarity of size literals.
      Explicitly, $\aformula_{\bound}'$ is defined as follows.
      {\small$$\aformula_{\bound}' \egdef 
      \begin{aligned}[t]
        &\bigwedge \{ \avariable \sim \avariablebis \inside \aformula \mid \sim \in \{=,\neq\} \} 
        \land \bigwedge \{ \alloc{\avariable} \inside \aformula \mid \avariable \neq \avariable_i \inside \aformula\}
        \land \bigwedge \{ \lnot \alloc{\avariable} \inside \aformula \} \land\\ 
        & \bigwedge \{ \lnot \alloc{\avariable} \mid \avariable = \avariable_i \inside \aformula \}
        \land \bigwedge \{ \avariable \Ipto \avariablebis \inside \aformula \mid \avariable \neq \avariable_i \inside \aformula \}
        \land \bigwedge \{ \lnot \avariable \Ipto \avariablebis \inside \aformula \}
        \land\\ 
        &
         \bigwedge \{ \lnot \avariable \Ipto \avariablebis \mid \avariable = \avariable_i \land \avariable \Ipto \avariablebis \inside \aformula \} \land 
        \bigwedge \{ \size \geq \inbound \mid \inbound \in \interval{0}{\bound-1} \} \land \underline{\size \geq \bound}.
      \end{aligned}$$}
      The formula $\aformula_{\bound-1}'$ is obtained from $\aformula_{\bound}'$ by replacing $\size \geq \bound$ (highlighted in the definition of $\aformula_{\bound}'$ above), by $\lnot \size \geq \bound$.
      The  following properties are satisfied:
      \vspace{3pt}
      \begin{itemize}[align=left]
        \setlength{\itemsep}{3pt}
        \item[\itemlabel{\textbf{D}}{csl:lemma62:prime-formula:prop4}]
          $\aformula_\bound'$ and $\aformula_{\bound-1}'$ are satisfiable core types in $\coretype{\asetvar}{\bound}$,
        \item[\itemlabel{\textbf{E}}{csl:lemma62:prime-formula:prop5}]
          $( \ATOM{\avariable_i} \separate (\aformula_\bound' \lor \aformula_{\bound-1}')) \implies \aformula$ is valid.
        \item[\itemlabel{\textbf{F}}{csl:lemma62:prime-formula:prop6}]
            $(\boxseptra{\aformula}{\aformulabis} \separate \ATOM{\avariable_i}) \implies \boxseptra{\aformula_{\bound}'}{\aformulabis} \lor \boxseptra{\aformula_{\bound-1}'}{\aformulabis}$ is valid.
      \end{itemize}

      \vspace{5pt}

      \begin{proof}[Proof of~\rm\ref{csl:lemma62:prime-formula:prop4}]
          The proof is very similar to the one of the property~\ref{csl:lemma62:prime-formula:prop1}. Here, we pinpoint the main differences.
          First of all, 
          since both $\aformula_{\bound}'$ and $\aformula_{\bound-1}'$ are obtained from $\aformula$ by changing the polarity of some of the literals in $\literals{\aformula}$, they are both in $\coretype{\asetvar}{\bound}$. 
          To show that $\aformula_{\bound}'$ and $\aformula_{\bound-1}'$ are satisfiable,
          we rely on the fact that $\aformula$ is satisfiable. Let $\pair{\astore}{\aheap}$ be a memory state satisfying~$\aformula$. 
          Since $\size \geq \bound \inside \aformula$, $\card{\domain{\aheap}} \geq \bound$.
          Without loss of generality, we can assume $\card{\domain{\aheap}} > \bound$. Indeed, if 
          $\card{\domain{\aheap}} = \bound$ it is sufficient to add a memory cell $\pair{\alocation}{\alocation}$ to $\aheap$, such that $\alocation$ does not correspond to a program variable $\avariable \in \asetvar$. It is straightforward to check that the resulting memory state still satisfies $\aformula$.
          We introduce a second heap $\aheap'$.
          Let $\asetoflocs = \domain{\aheap} \cap \{\astore(\avariable) \mid \avariable \in \asetvar\}$ be the set of locations in $\domain{\aheap}$ that corresponds to variables in $\asetvar$.
          Since $\card{\asetvar} \leq \bound$, $\card{\asetoflocs} \leq \bound$.
          Let $\aheap' \subheap \aheap$ such that $\asetoflocs \subseteq \domain{\aheap'}$ 
          and $\card{\domain{\aheap'}} = \bound$.
          Again, it is straightforward to see that $\pair{\astore}{\aheap'}$ satisfies~$\aformula$.
          Intuitively, we rely on $\pair{\astore}{\aheap}$ to show that $\aformula_{\bound}'$ is satisfiable, and on $\pair{\astore}{\aheap'}$ to show that $\aformula_{\bound-1}'$ is satisfiable. 
          As $\alloc{\avariable_i} \inside \aformula$, we have $\astore(\avariable)\in \domain{\aheap}$ and $\astore(\avariable) \in \domain{\aheap'}$.
          We consider heaps $\aheap_1$ and $\aheap_2$ such that $\aheap = \aheap_1 \heapsum \aheap_2$ and $\domain{\aheap_1} = \{\astore(\avariable_i)\}$. 
          Similarly, we consider heaps $\aheap_1'$ and $\aheap_2'$ such that $\aheap' = \aheap_1' \heapsum \aheap_2'$ and $\domain{\aheap_1'} = \{\astore(\avariable_i)\}$.
          We show that $\pair{\astore}{\aheap_2} \models \aformula_{\bound}'$ and 
          $\pair{\astore}{\aheap_2'} \models \aformula_{\bound-1}'$. 
          Let us first discuss the former result. Let 
          $\aliteral \in \literals{\aformula_{\bound}'}$. 
          If $\aliteral$ is not of the form $\size \geq \inbound$ or $\lnot \size \geq \inbound$, 
          then $\pair{\astore}{\aheap_2} \models \aliteral$ follows exactly as in the proof 
          of~\ref{csl:lemma62:prime-formula:prop1}. Otherwise,
          \begin{description}
            \item[case: $\aliteral \, = \, \size \geq \inbound$]
              By definition of $\aheap_2$, $\card{\domain{\aheap_2}} = \card{\domain{\aheap}} - 1 \geq \bound$. Since $\inbound \leq \bound$ (as $\aformula_{\bound}'$ is in $\coretype{\asetvar}{\bound}$), we conclude that $\pair{\astore}{\aheap_2} \models \size \geq \inbound$.
            \item[case: $\aliteral \, = \, \lnot \size \geq \inbound$]  
              By definition of $\aformula_{\bound}'$, no literals of the form ${\lnot \size \geq \inbound}$ belongs to $\literals{\aformula_{\bound}'}$. Therefore, 
              this case does not occur.
          \end{description}
          This concludes the proof of $\pair{\astore}{\aheap_2} \models \aformula_{\bound}'$.
          For the proof of $\pair{\astore}{\aheap_2'} \models \aformula_{\bound-1}'$, 
          let us consider $\aliteral \in \literals{\aformula_{\bound-1}'}$.
          Again, if $\aliteral$ is not of the form $\size \geq \inbound$ or $\lnot \size \geq \bound$, 
          then $\pair{\astore}{\aheap_2'} \models \aliteral$ follows exactly as in the proof 
          of~\ref{csl:lemma62:prime-formula:prop1} (replacing $\aheap$ by $\aheap'$ and $\aheap_2$ by $\aheap_2'$). Otherwise, 
          \begin{description}
            \item[case: $\aliteral \, = \, \size \geq \inbound$]
              By definition of $\aformula_{\bound-1}'$, we have $\inbound < \bound$. 
              By definition of $\aheap_2'$, $\card{\domain{\aheap_2'}} = \card{\domain{\aheap'}} - 1 = \bound - 1$. Therefore, $\pair{\astore}{\aheap_2'} \models \size \geq \inbound$.
            \item[case: $\aliteral \, = \, \lnot \size \geq \inbound$]  
              By definition of $\aformula_{\bound-1}'$, $\inbound = \bound$. 
              Since $\card{\domain{\aheap_2'}} = \bound - 1$, 
              we conclude that $\pair{\astore}{\aheap_2'} \models \lnot \size \geq \inbound$.
              \qedhere
          \end{description}
        \end{proof}
  
        \begin{proof}[Proof of~\rm\ref{csl:lemma62:prime-formula:prop5}]
          The proof is very similar to the one of the 
          property~\ref{csl:lemma62:prime-formula:prop2}.
          We show that $(\ATOM{\avariable_i} \separate \aformula_\bound') \implies \aformula$ and 
          $(\ATOM{\avariable_i} \separate \aformula_{\bound-1}') \implies \aformula$. 
          Then, \ref{csl:lemma62:prime-formula:prop5} follows as the separating conjunction distributes over disjunction.
          First, let us consider $(\ATOM{\avariable_i} \separate \aformula_\bound') \implies \aformula$, and a memory state $\pair{\astore}{\aheap}$ satisfying $\ATOM{\avariable_i} \separate \aformula_\bound'$. 
          There are $\aheap_1$ and $\aheap_2$ such that $\aheap = \aheap_1 \heapsum \aheap_2$, $\pair{\astore}{\aheap_1} \models \ATOM{\avariable_i}$ and $\pair{\astore}{\aheap_2} \models \aformula_{\bound}'$. 
          Let $\aliteral \in \literals{\aformula}$. 
          Notice that $\aformula$ does not contain negated $\size \geq \inbound$ literals.
          If $\aliteral$ is not $\size \geq \inbound$, for some $\inbound \in \interval{0}{\bound}$, 
          then $\pair{\astore}{\aheap} \models \aliteral$ follows exactly as it is shown in the proof of~\ref{csl:lemma62:prime-formula:prop2}. 
          Otherwise, suppose $\aliteral = \size \geq \inbound$, where $\inbound \in \interval{0}{\bound}$. By definition 
          of~$\aformula_{\bound}'$, 
          $\size \geq \bound \inside \aformula_{\bound}'$. 
          Hence, $\card{\domain{\aheap_2}} \geq \bound$ and, from $\aheap_2 \subheap \aheap$, 
          we derive $\pair{\astore}{\aheap} \models \size \geq \inbound$.
          So, $\pair{\astore}{\aheap} \models \aformula$.

          Let us now consider $(\ATOM{\avariable_i} \separate \aformula_{\bound-1}') \implies \aformula$ and a memory state $\pair{\astore}{\aheap}$ satisfying $\ATOM{\avariable_i} \separate \aformula_{\bound-1}'$.
          There are $\aheap_1$ and $\aheap_2$ such that $\aheap = \aheap_1 \heapsum \aheap_2$, $\pair{\astore}{\aheap_1} \models \ATOM{\avariable_i}$ and $\pair{\astore}{\aheap_2} \models \aformula_{\bound-1}'$. 
          Let $\aliteral \in \literals{\aformula}$. 
          Again, $\aformula$ does not contain negated $\size \geq \inbound$ literals, and if 
          $\aliteral$ is not $\size \geq \inbound$, for some $\inbound \in \interval{0}{\bound}$, 
          then $\pair{\astore}{\aheap} \models \aliteral$ follows exactly as is shown in the proof of~\ref{csl:lemma62:prime-formula:prop2}. 
          Otherwise, suppose $\aliteral = \size \geq \inbound$, where $\inbound \in \interval{0}{\bound}$. 
          By definition of $\aformula_{\bound-1}'$, 
          $\size \geq \bound \dotminus 1 \inside \aformula_{\bound-1}'$.
          Therefore, $\card{\domain{\aheap_2}} \geq \bound-1$.
          By definition of $\ATOM{\avariable_i}$, $\card{\domain{\aheap_1}} = 1$. 
          From $\aheap = \aheap_1 \heapsum \aheap_2$,  
          we conclude that $\card{\domain{\aheap}} \geq \bound$ and thus $\pair{\astore}{\aheap} \models \size \geq \inbound$.
          Therefore, $\pair{\astore}{\aheap} \models \aformula$.
      \end{proof}

      \vspace{5pt}

      \begin{proof}[Proof of~\rm\ref{csl:lemma62:prime-formula:prop6}]
          Recall that $\boxseptra{\aformula}{\aformulabis}$ is satisfiable. 
          In particular, from its definition together with $\size \geq \bound \inside \aformula$, this implies that ${\size \geq \bound} \inside \aformulabis$, as otherwise we would have $\lnot \size \geq 0 \inside \boxseptra{\aformula}{\aformulabis}$. So, as $\aformulabis$ is a satisfiable core type in $\coretype{\asetvar}{\bound}$, 
          for all $\inbound \in \interval{0}{\bound}$,
          $\size \geq \inbound \inside \aformulabis$.
          Alternatively, $\aformulabis$ does not contain $\lnot \size \geq \inbound$ literals.
          We look at the definitions of $\boxseptra{\aformula_{\bound}'}{\aformulabis}$ 
          and~$\boxseptra{\aformula_{\bound-1}'}{\aformulabis}$.
          \begin{itemize}[align=left]
          \item[\itemlabel{a}{csl:lemma62:a-prop}] Since for all 
          $\inbound \in \interval{0}{\bound}$, 
          $\size \geq \inbound \inside \aformula_{\bound}'$ and $\size \geq \inbound \inside \aformulabis$, 
          we derive that $\boxseptra{\aformula_{\bound}'}{\aformulabis}$ 
          does not contain $\size \geq \inbound$ nor $\lnot \size \geq \inbound$ literals (for all~${\inbound \in \interval{0}{\bound}}$). 
          This holds directly by definition of 
          $\boxseptra{\aformula_{\bound}'}{\aformulabis}$, which can be retrieved by substituting $\aformula'$ by $\aformula_{\bound}'$ in Figure~\ref{figure:csl:boxseptra-formula-again}.

          \item[\itemlabel{b}{csl:lemma62:b-prop}] Analogously, we know that
          $\lnot \size \geq \bound \inside \aformula_{\bound-1}'$ whereas for every $\inbound \in \interval{0}{\bound-1}$, $\size \geq \inbound \inside \aformula_{\bound-1}'$, 
          and therefore among all the literals $\size \geq \inbound$ or 
          $\lnot \size \geq \inbound$ ($\inbound \in \interval{0}{\bound}$), 
          $\boxseptra{\aformula_{\bound-1}'}{\aformulabis}$
          only contains $\size \geq 1$ (occurring positively).
          \end{itemize}
          By definition and with the sole exception of the polarity of the formula ${\size \geq \bound}$ (occurring positively in $\aformula_{\bound}'$ and negatively in $\aformula_{\bound-1}'$), the two core types 
          $\aformula_{\bound-1}'$ and $\aformula_{\bound}'$ are equal.
          Directly by definition of 
          $\boxseptra{\aformula_{\bound}'}{\aformulabis}$
          and 
          $\boxseptra{\aformula_{\bound-1}'}{\aformulabis}$, 
          together with~\ref{csl:lemma62:a-prop} and~\ref{csl:lemma62:b-prop}, 
          this implies that $\boxseptra{\aformula_{\bound-1}'}{\aformulabis}$
          is syntactically equal to ${\boxseptra{\aformula_{\bound}'}{\aformulabis} \land \size \geq 1}$ (up to commutativity and associativity of conjunction).
          This means that the formula $\boxseptra{\aformula_{\bound-1}'}{\aformulabis} \implies \boxseptra{\aformula_{\bound}'}{\aformulabis}$ is valid, 
          and suggests us that,
          in order to show~\ref{csl:lemma62:prime-formula:prop6}, 
          we can simply establish that $(\boxseptra{\aformula}{\aformulabis} \separate \ATOM{\avariable_i}) \implies \boxseptra{\aformula_{\bound}'}{\aformulabis}$ is valid.
          As we already stated, $\aformula_{\bound}'$ is defined as $\aformula'$ (in the previous step of the proof), with the exception that we do not modify the polarity of $\size \geq \inbound$ literals. 
          Because of this, we can rely on the proof of~\ref{csl:lemma62:prime-formula:prop3}.
          Briefly, 
          we consider a memory state $\pair{\astore}{\aheap}$ satisfying $\boxseptra{\aformula}{\aformulabis} \separate \ATOM{\avariable_i}$.
          There are $\aheap_1$ and $\aheap_2$ such that $\aheap = \aheap_1 \heapsum \aheap_2$, 
          $\pair{\astore}{\aheap_1} \models \boxseptra{\aformula}{\aformulabis}$ and 
          $\pair{\astore}{\aheap_2} \models \ATOM{\avariable_i}$.
          Let $\aliteral \in \literals{\boxseptra{\aformula_{\bound-1}'}{\aformulabis}}$. 
          By \ref{csl:lemma62:a-prop},  $\aliteral$ is neither of the form $\size \geq \inbound$ nor of the form $\lnot \size \geq \inbound$.
          Therefore, $\pair{\astore}{\aheap} \models \aliteral$ follows exactly as shown in the proof of~\ref{csl:lemma62:prime-formula:prop3}.
      \end{proof}

      We are now ready to prove that $\boxseptra{\aformula}{\aformulabis} \implies (\aformula \septraction \true)$.
      By Theorem~\ref{theo:starCompleteness}, the tautologies 
      in~\ref{csl:lemma62:prime-formula:prop4} and~\ref{csl:lemma62:prime-formula:prop6} 
      are derivable in $\coresys(\separate,\magicwand)$. Moreover, 
      since $\lnot \alloc{\avariable_i} \inside \andliterals{\aformula_{\bound}'}{\aformula_{\bound-1}'}$ and, for every $\avariablebis \in \asetvar$, $\lnot \alloc{\avariablebis} \inside \aformula$ implies $\lnot \alloc{\avariablebis} \inside \andliterals{\aformula_{\bound}'}{\aformula_{\bound-1}'}$, 
      we rely on the induction hypothesis to derive 
      $$
        \vdash_{\coresys(\separate,\magicwand)} \boxseptra{\aformula_{\bound}'}{\aformulabis} \implies (\aformula_{\bound}' \septraction \true),
        \qquad
        \vdash_{\coresys(\separate,\magicwand)} \boxseptra{\aformula_{\bound-1}'}{\aformulabis} \implies (\aformula_{\bound-1}' \septraction \true).
      $$
      We derive $\boxseptra{\aformula}{\aformulabis} \implies (\aformula \septraction \true)$ (see Figure~\ref{figure-final-derivation}) concluding the proof of~Lemma~\ref{lemma:magicwandPSLelim}
      \qedhere
\begin{figure}
      \begin{syntproof}
        1 & \boxseptra{\aformula_{\bound}'}{\aformulabis} \implies (\aformula_{\bound}' \septraction \true)
        & \mbox{Induction hypothesis} \\
        2 & \boxseptra{\aformula_{\bound-1}'}{\aformulabis} \implies (\aformula_{\bound-1}' \septraction \true)
        & \mbox{Induction hypothesis} \\
        3 &  \ATOM{\avariable_i} \separate (\aformula_{\bound}' \lor \aformula_{\bound-1}') \implies \aformula
        & \mbox{\ref{csl:lemma62:prime-formula:prop5}, Theorem~\ref{theo:starCompleteness}} \\
        4 & \boxseptra{\aformula}{\aformulabis} \separate \ATOM{\avariable_i} \implies \boxseptra{\aformula_{\bound}'}{\aformulabis} \lor \boxseptra{\aformula_{\bound-1}'}{\aformulabis}
        & \mbox{\ref{csl:lemma62:prime-formula:prop6}, Theorem~\ref{theo:starCompleteness}} \\
        5 & \boxseptra{\aformula_{\bound}'}{\aformulabis} \lor \boxseptra{\aformula_{\bound-1}'}{\aformulabis}
            \implies (\aformula_{\bound}' \septraction \true) \lor (\aformula_{\bound-1}' \septraction \true)
          & \mbox{PC, 1, 2}\\
        6 & (\aformula_{\bound}' \septraction \true) \lor (\aformula_{\bound-1}' \septraction \true)
        \implies (\aformula_{\bound}' \lor \aformula_{\bound-1}' \septraction \true)
          & \mbox{\ref{mwAx:OrL}}\\
        7 & \boxseptra{\aformula}{\aformulabis} \separate \ATOM{\avariable_i} \implies (\aformula_{\bound}' \lor \aformula_{\bound-1}' \septraction \true)
          & \mbox{\ref{rule:imptr}, 4, 5, 6} \\ 
        8 & \boxseptra{\aformula}{\aformulabis} \implies (\ATOM{\avariable_i}  \septraction
        \boxseptra{\aformula}{\aformulabis} \separate \ATOM{\avariable_i})
        & \mbox{(\ref{csl:lemma62:ddagger})} \\ 
        9 & (\ATOM{\avariable_i}  \septraction
        \boxseptra{\aformula}{\aformulabis} \separate \ATOM{\avariable_i})
        \implies\\[-2pt] 
          & \quad (\ATOM{\avariable_i}  \septraction
        (\aformula_{\bound}' \lor \aformula_{\bound-1}' \septraction \true))
           & \mbox{\ref{mwAx:ImpR}, 7}\\
        10 & \big(\ATOM{\avariable_i}  \septraction
        (\aformula_{\bound}' \lor \aformula_{\bound-1}' \septraction \true )\big) \implies\\[-2pt]
        & \quad (\ATOM{\avariable_i} \separate (\aformula_{\bound}' \lor \aformula_{\bound-1}') \septraction \true)
        & \mbox{\ref{mwAx:Curry}} \\
        11 & (\ATOM{\avariable_i} \separate (\aformula_{\bound}' \lor \aformula_{\bound-1}') \septraction \true) \implies (\aformula \septraction \true)
           & \mbox{\ref{mwAx:ImpL}, 3} \\
        12 & \boxseptra{\aformula}{\aformulabis} \implies (\aformula \septraction \true)
        & \mbox{\ref{rule:imptr}, 8, 9, 10, 11}
      \end{syntproof}
\caption{Proof of~Lemma~\ref{lemma:magicwandPSLelim}: the final derivation.}
\label{figure-final-derivation}
\end{figure} 
    \end{description}
  \end{description}

\end{proof}

Lemma~\ref{lemma:magicwandPSLelim} in which $\aformula$ and $\aformulabis$ 
are core types can be extended to arbitrary Boolean combinations of core formulae, as we show that the distributivity of $\septraction$ over disjunctions is provable in $\magicwandsys$.
As a consequence of this development, we achieve the main result of the paper.

\begin{thm}\label{theo:PSLcompleteAx}
$\magicwandsys$ is sound and complete for \slSW.
\end{thm}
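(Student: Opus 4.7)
The plan is to mirror the completeness proof of Theorem~\ref{theo:starCompleteness}, extending it to handle the separating implication by leveraging Lemma~\ref{lemma:magicwandPSLelim} in the same way Corollary~\ref{lemma:starPSLelim} was used for $\separate$. Soundness comes for free from Lemma~\ref{lemma:magicwandPSLvalid}. For completeness, it suffices to prove that every formula $\aformula \in \slSW$ is provably equivalent in $\magicwandsys$ to some Boolean combination of core formulae $\aformulabis$. Once this is established, if $\aformula$ is valid then so is $\aformulabis$ by soundness, hence $\vdash_{\coresys} \aformulabis$ by Theorem~\ref{theo:corePSLcompl}, and therefore $\vdash_{\magicwandsys} \aformula$ by propositional reasoning, since $\coresys \subseteq \magicwandsys$.

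The reduction to a Boolean combination of core formulae proceeds by induction on the number of occurrences of $\separate$ and $\magicwand$ in $\aformula$ that are not part of the unfolding of some $\size \geq \inbound$. The replacement-of-equivalents property (shown in the proof of Theorem~\ref{theo:starCompleteness} via the rule~\ref{rule:FS:starinference}) extends to $\magicwand$: from $\vdash_{\magicwandsys} \aformula \iff \aformula'$ one derives $\vdash_{\magicwandsys} (\aformula \magicwand \aformulater) \iff (\aformula' \magicwand \aformulater)$ and $\vdash_{\magicwandsys} (\aformulater \magicwand \aformula) \iff (\aformulater \magicwand \aformula')$ through two applications each of the adjunction rules~\ref{rule:FS:staradj} and~\ref{rule:FS:magicwandadj}. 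Hence in the induction step we may focus on an innermost subformula $\aformulabis_1 \star \aformulabis_2$ with $\star \in \{\separate,\magicwand\}$ and $\aformulabis_1, \aformulabis_2$ already Boolean combinations of core formulae. The $\separate$ case is handled exactly as in Theorem~\ref{theo:starCompleteness}, via Corollary~\ref{lemma:starPSLelim}.

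The main obstacle is the $\magicwand$ case, where Lemma~\ref{lemma:magicwandPSLelim} applies only to core types (and is stated for $\septraction$ rather than $\magicwand$). Rewriting $\aformulabis_1 \magicwand \aformulabis_2$ as $\lnot(\aformulabis_1 \septraction \lnot \aformulabis_2)$, one puts $\aformulabis_1$ and $\lnot \aformulabis_2$ into disjunctive normal form $\aformulabis_1^1 \vee \cdots \vee \aformulabis_1^{n_1}$ and $\aformulabis_2^1 \vee \cdots \vee \aformulabis_2^{n_2}$ whose disjuncts are core types in a common $\coretype{\asetvar}{\max(\card{\asetvar},\bound_1,\bound_2)}$, using the same argument as in the $\separate$ case. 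The admissible distributivity axioms~\ref{mwAx:OrL} and~\ref{mwAx:OrR} of Lemma~\ref{lemma:septractionadmissible} yield
\[
  \vdash_{\magicwandsys} (\aformulabis_1 \septraction \lnot \aformulabis_2) \iff \bigvee_{j_1, j_2} (\aformulabis_1^{j_1} \septraction \aformulabis_2^{j_2}),
\]
and Lemma~\ref{lemma:magicwandPSLelim} produces for each pair $(j_1,j_2)$ a Boolean combination of core formulae $\aformulater^{j_1,j_2}$ with $\vdash_{\magicwandsys} (\aformulabis_1^{j_1} \septraction \aformulabis_2^{j_2}) \iff \aformulater^{j_1,j_2}$. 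Propositional reasoning and negation then give a Boolean combination of core formulae provably equivalent to $\aformulabis_1 \magicwand \aformulabis_2$. Invoking the replacement property reduces the number of outer separating connectives by one and closes the induction.
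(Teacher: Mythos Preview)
Your proposal is correct and follows essentially the same approach as the paper's proof: reduce to Boolean combinations of core formulae by bottom-up elimination of the separating connectives, handling $\separate$ via Corollary~\ref{lemma:starPSLelim} and $\magicwand$ (recast as $\septraction$) via Lemma~\ref{lemma:magicwandPSLelim} after distributing over disjunctions with~\ref{mwAx:OrL} and~\ref{mwAx:OrR}. The only minor differences are presentational: the paper first globally rewrites every $\magicwand$ as $\lnot(\cdot\septraction\lnot\cdot)$ (so the induction ranges over $\separate$ and $\septraction$), derives the $\magicwand$-congruence rules from~\ref{mwAx:ImpL} and~\ref{mwAx:ImpR} rather than directly from adjunction, and counts occurrences not involved in \emph{either} $\size\geq\inbound$ \emph{or} $\alloc{\avariable}$ (your measure omits the latter, but this is harmless since such an occurrence would simply be eliminated to an equivalent core formula in one extra step).
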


\begin{proof}
Soundness of the proof system $\magicwandsys$ has been already established earlier, see Lemma~\ref{lemma:magicwandPSLvalid}.
As far as the completeness proof is concerned, its structure is very similar to the proof
of Theorem~\ref{theo:starCompleteness} except that we have to be able to handle the separating
implication. In order to be self-contained, we reproduce some of its arguments albeit adapted to
$\magicwandsys$.

We need to show that for every formula $\aformula$
in \slSW, there is a Boolean combination of core formulae $\aformulabis$ such that
$\prove_{\magicwandsys} \aformula \Leftrightarrow \aformulabis$. In order to conclude the proof,
when $\aformula$ is valid for \slSW, by soundness of $\magicwandsys$, we obtain that
$\aformulabis$ is valid too and therefore  $\prove_{\magicwandsys} \aformulabis$
as $\coresys$ is a subsystem of $\magicwandsys$ and $\coresys$ is complete by
Theorem~\ref{theo:corePSLcompl}. By propositional reasoning, we get that
$\prove_{\magicwandsys} \aformula$.

In order to show that every formula $\aformula$ has a provably equivalent Boolean combination
of core formulae, we heavily rely on Corollary~\ref{lemma:starPSLelim} and on Lemma~\ref{lemma:magicwandPSLelim}. The proof is by simple induction
on the number of occurrences of $\separate$ or $\magicwand$ in $\aformula$ that are not involved in the definition of some core
formula of the form $\size \geq \inbound$ or $\alloc{\avariable}$. For the base case, when $\aformula$
has no occurrence of the separating connectives, $\avariable = \avariablebis$ and $\avariable \Ipto
\avariablebis$ are already core formulae, whereas $\emp$ is logically equivalent to $\neg \size \geq 1$.

Before performing the induction step, let us observe that in $\magicwandsys$, the replacement of provably equivalent
formulae holds true, which is stated as follows:

\begin{enumerate}[label=\textbf{R\arabic*}]
\setcounter{enumi}{0}
\item\label{SC-auxlemmaR1} Let $\aformula, \aformula'$ and $\aformulabis$ be
formulae of \slSW such that
$\prove_{\magicwandsys} \aformula \Leftrightarrow \aformula'$. Then,
$$\prove_{\magicwandsys}  \aformulabis[\aformula]_{\rho} \Rightarrow
\aformulabis[\aformula']_{\rho}$$
\end{enumerate}

In order to prove~\ref{SC-auxlemmaR1}, we are almost done as we have already shown~\ref{SC-auxlemmaR0}
in the proof of Theorem~\ref{theo:starCompleteness} and the same properties hold for \slSW though
the language is richer.

As a direct consequence of the admissibility of the rules~\ref{mwAx:ImpL} and~\ref{mwAx:ImpR}
from Lemma~\ref{lemma:septractionadmissible},
the rules below are also admissible:
$$\inference{\aformula \Leftrightarrow \aformula'}{\aformula \magicwand \aformulabis \Leftrightarrow \aformula'
\magicwand \aformulabis} \ \ \ \ \ \inference{\aformula \Leftrightarrow \aformula'}{\aformulabis \magicwand \aformula \Leftrightarrow \aformulabis
\magicwand \aformula'}
$$

We need  the two rules as $\magicwand$ is not commutative.
Consequently, by structural induction on $\aformulabis$, one can conclude that
$\prove_{\magicwandsys} \aformula \Leftrightarrow \aformula'$
 implies $\prove_{\magicwandsys}  \aformulabis[\aformula]_{\rho} \Rightarrow
\aformulabis[\aformula']_{\rho}$.

Now, assume $\aformula$ is a formula in \slSW. Without loss of generality, we can assume that
the separating connectives in $\aformula$ are restricted to $\separate$ and $\septraction$
for the occurrences that are not related to abbreviations for core formulae. Indeed,
$\aformulabis'\septraction\aformulabis$ is a shortcut for $\neg(\aformulabis'\magicwand\neg\aformulabis)$
and therefore one can replace every occurrence of $\aformulabis'\magicwand\aformulabis$
by $\neg (\aformulabis'\septraction\neg\aformulabis)$ assuming that $\aformulabis'$ and $\aformulabis$ are already of the appropriate shape.
Such a replacement is possible thanks to~\ref{SC-auxlemmaR1}.

Assume that $\aformula$ is a formula in $\seplogic{\separate,\septraction}$ with $n+1$ occurrences of $\separate$
or  $\septraction$ not involved in the definition of core formulae.

\cut{
Let $\aformulabis$ be a subformula of $\aformula$ (at the occurrence $\rho$) of the form $\aformulabis_1 \separate \aformulabis_2$ such that
$\aformulabis_1$ and $\aformulabis_2$ are  Boolean combinations of
core formulae. As in the proof of Theorem~\ref{theo:starCompleteness}, we can derive a Boolean combination of
core formulae such that
$$
\vdash_{\magicwandsys} \aformulabis_1 \separate \aformulabis_2 \Leftrightarrow 
\bigvee_{j_1 \in \interval{1}{n_1}, j_2 \in \interval{1}{n_2}} \aformulabis^{j_1,j_2}.
$$
Consequently (thanks to the property~\ref{SC-auxlemmaR1}), we obtain
$$
\vdash_{\magicwandsys} \aformula \Leftrightarrow \aformula[\bigvee_{j_1 \in \interval{1}{n_1}, j_2 \in \interval{1}{n_2}} \aformulabis^{j_1,j_2}]_{\rho}
$$
}

Let $\aformulabis$ be a subformula of $\aformula$ (at the occurrence $\rho$) of the form $\aformulabis_1 \septraction \aformulabis_2$ such that
$\aformulabis_1$ and $\aformulabis_2$ are in $\boolcomb{\coreformulae{\asetvar}{\bound_1}}$ and 
$\boolcomb{\coreformulae{\asetvar}{\bound_2}}$, respectively.
By propositional reasoning, one can show that there are formulae in disjunctive normal form
$\aformulabis_1^1 \vee \cdots \vee \aformulabis_1^{n_1}$ and $\aformulabis_2^1 \vee \cdots \vee \aformulabis_2^{n_2}$
such that $\vdash_{\coresys} \aformulabis_i \Leftrightarrow \aformulabis_i^1 \vee \cdots \vee \aformulabis_i^{n_i}$
for $i \in \set{1,2}$, and moreover every $\aformulabis_i^j$'s is a core type in $\coretype{\asetvar}{\max(\card{\asetvar},\bound_1,\bound_2)}$.
Again, by using propositional reasoning but this time establishing also distributivity of $\vee$ over $\septraction$, we have
$$
\vdash_{\magicwandsys} \aformulabis_1 \septraction \aformulabis_2 \Leftrightarrow
\bigvee_{j_1 \in \interval{1}{n_1}, j_2 \in \interval{1}{n_2}} \aformulabis_1^{j_1} \septraction \aformulabis_2^{j_2}.
$$
We rely on Lemma~\ref{lemma:magicwandPSLelim}, and conclude that there is 
a conjunction of core formulae $ \aformulabis^{j_1,j_2}$ in $\conjcomb{\coreformulae{\asetvar}{\max(\card{\asetvar},\bound_1,\bound_2)}}$
such that  $\vdash_{\magicwandsys} \aformulabis_1^{j_1} \septraction \aformulabis_2^{j_2} \Leftrightarrow \aformulabis^{j_1,j_2}$.
By propositional reasoning, we get
$$
\vdash_{\magicwandsys} \aformulabis_1 \septraction \aformulabis_2 \Leftrightarrow
\bigvee_{j_1 \in \interval{1}{n_1}, j_2 \in \interval{1}{n_2}} \aformulabis^{j_1,j_2}.
$$
Consequently (thanks to the property~\ref{SC-auxlemmaR1}), we obtain
$$
\vdash_{\magicwandsys} \aformula \Leftrightarrow \aformula[\bigvee_{j_1 \in \interval{1}{n_1}, j_2 \in \interval{1}{n_2}} \aformulabis^{j_1,j_2}]_{\rho}
$$
Note that the right-hand side formula has  $n$ occurrences of the separating
connnectives that are not involved in the definition of some core
formula. The induction hypothesis applies, which concludes the proof.

The case when $\aformulabis$ is a subformula of $\aformula$ (at the occurrence $\rho$) 
of the form $\aformulabis_1 \separate \aformulabis_2$ is treated as in the proof of Theorem~\ref{theo:starCompleteness}
and therefore is omitted herein.

\end{proof}

\section{Related work}\label{section:related-work}
In this section, we briefly compare our Hilbert-style proof system $\magicwandsys$ 
with existing proof systems for \slSW, fragments or extensions and we recall a few landmark works
proposing proof systems for abstract separation logics or for logics that are variants of Boolean BI. 
Those
latter proof systems are not necessarily Hilbert-style and may contain labels or other similar machineries.
So, this section completes the presentation of the context from Section~\ref{section:introduction}
while pinpointing the main original features of our calculus.
Finally, we also evoke several works that use the idea of axiomatising a fragment of a logic
and to provide in the proof system means to transform any formula into an equivalent formula from that
fragment. This is clearly similar to the approach we have followed, but we aim at picking examples from
outside the realm of spatial and resource logics. In order to keep the length of this section reasonable, 
we limit ourselves to the main bibliographical entries but  additional relevant works can be found in the cited materials. 

\vspace{15pt}
\noindent\textbf{Proof systems for quantifier-free separation logic.}  
Surprisingly, as far as we know, sound and complete proof systems for \slSW are very rare
and the only system we are aware of is a tableaux-based calculus from~\cite{Galmiche&Mery10}
with labelled formulae (each formula is enriched with a label to be interpreted by 
some heap)  and with resource graphs to encode symbolically constraints between heap expressions (i.e. labels). Of course, 
 translations from separation logics into
logics or theories have been designed, see e.g.~\cite{Calcagno&Gardner&Hague05,Reynoldsetal16},
but the finding of proof systems for \slSW with all Boolean connectives and the separating
connectives $\separate$ and $\magicwand$ has been quite challenging. 
Unlike~\cite{Galmiche&Mery10}, $\magicwandsys$ uses only \slSW formulae and therefore
can be viewed as a quite orthodox Hilbert-style calculus with no extra syntactic objects. 
In particular,  $\magicwandsys$ has no syntactic machinery to refer to  heaps or to other
semantical objects related to \slSW.
In~\cite{Galmiche&Mery10}, the resource graphs attached to the tableaux are designed
to reason about heap constraints, and to provide control for designing strategies that lead 
to termination. Interestingly, the calculus in~\cite{Galmiche&Mery10} is 
intended to be helpful 
to synthesize countermodels (which is a standard feature for labelled deduction systems~\cite{Gabbay96d}) 
or to be extended to the first-order case, which is partly done 
in~\cite{Galmiche&Mery10} but we know that completeness is theoretically impossible.
Besides, a sound labelled sequent calculus for the first-order extension of \slSW
is presented in~\cite{Hou&Gore&Tiu15} but completeness for the sublogic \slSW is not established.
The calculus in~\cite{Hou&Gore&Tiu15} has also labels, which differs from our puristic approach. 
A complete sequent-style calculus for the symbolic heap fragment has been designed quite early 
in~\cite{Berdine&Calcagno&OHearn04} but does not deal with full \slSW (in particular it is
not closed under Boolean connectives and does not contain the separating implication). 
A complexity-wise optimal decision procedure for the symbolic heap fragment is designed in~\cite{Cooketal11}
based on a characterisation in terms of homomorphisms.

\subsection*{Frameworks for abstract separation logics.} 
Bunched logics, such as the bunched logic BI introduced in~\cite{OHearn&Pym99},  
are known to be closely related to separation logics that can be
viewed as concretisation of (Boolean) BI with models made of memory states, 
see e.g.~\cite{Pym02,Reynolds02,Galmiche&Mery05,Pym&Spring&OHearn18}.
Actually, bunched logics come with different flavours, Boolean BI being considered as  the genuine abstract 
version of \slSW. Though Boolean BI has been shown undecidable in~\cite{LarcheyWendling&Galmiche13,Brotherston&Kanovich14},
a Hilbert-style axiomatisation can be found in~\cite{Galmiche&Larchey06}. Our proof system $\magicwandsys$ inherits
all the axiom schemas and inference rules for Boolean BI from~\cite{Galmiche&Larchey06}, which is expected
as 
\slSW can be viewed as Boolean BI on concrete heaps but with the notable difference of having built-in atomic formulae 
$\avariable = \avariablebis$
and $\avariable \Ipto \avariablebis$. 
Bunched logics, such as Boolean BI, can be defined in several ways, for instance assuming classical or intuitionistic
connectives, and in~\cite{Brotherston12}, a unified proof theory based on display calculi~\cite{Belnap82}
is designed for a variety of four bunched logics, including Boolean BI (see also the nested
sequent calculus for Boolean BI in~\cite{Park&Seo&Park13}). In display calculi, structural connectives
enrich the sequent-style structures, providing a family of structural connectives accompanying the standard
comma from sequent-style calculi. The main results in~\cite{Brotherston12} include cut-elimination, soundness
and completeness. So, compared to our calculus $\magicwandsys$, the calculi in~\cite{Brotherston12} are designed
for logics with more abstract semantical structures and owns a proof-theoretical machinery that does not
include labels but instead  complex structured sequents.

The quest for designing frameworks dedicated to classes of abstract separation logics have been pursued in several directions. 
For instance, models for Boolean BI are typically relational commutative monoids but  properties can be
added leading to a separation theory. In~\cite{Brotherston&Villard14}, 
a hybrid version of Boolean BI is introduced, called HyBBI, in which nominals (in the sense of hybrid modal logics, see 
e.g.~\cite{Areces&Blackburn&Marx01}) are added in order to be able to express rich standard properties in
separation theory, such as cancellativity. Not only an Hilbert-style proof system is provided for HyBBI~\cite{Brotherston&Villard14}
but also a parametric completeness result is shown. More precisely, any extension of the proof system for HyBBI
with a set of specific axioms is actually complete with respect to the class of models that satisfy the axioms,
which is analogous to Sahlqvist's Theorem for modal logics~\cite{Sahlqvist75,Blackburn&deRijke&Venema01}.
This provides a very general means to axiomatise variants of Boolean BI but at the cost of having the extra machinery
for nominals. Moreover,  as HyBBI and its extensions are abstract separation logics with no atomic formulae of
the form $\avariable = \avariablebis$ or $\avariable \Ipto \avariablebis$, the tools developed in~\cite{Brotherston&Villard14}
are of no help to design an Hilbert-style proof system for \slSW (except that its part dealing with Boolean BI 
is precisely borrowed from~\cite{Galmiche&Larchey06} too).

Besides, in~\cite{Houetal18} labelled sequent calculi are designed for several abstract separation logics by considering different sets of
properties. The sequents contain labelled formulae (a formula prefixed by a label to be interpreted
as an abstract heap)  as well as relational atoms to express relationships between abstract heaps. Though the framework in~\cite{Houetal18}
is modular and very general to handle abstract separation logics, it is not tailored to separation logics with concrete semantics,
see~\cite[Section 7]{Houetal18} for possible future directions. In contrast, as explained already, the paper~\cite{Hou&Gore&Tiu15} deals with first-order
separation logic with concrete semantics and presents a sound labelled sequent calculus for it. Of course, the calculus 
cannot be complete but more importantly in the context of the current paper, completeness is not established for the quantifier-free fragment. 
In~\cite{Hou&Gore&Tiu15}, the sequents contain labelled formulae and relational atoms, similarly to~\cite{Houetal18} (see also~\cite{Hou15}).
Hence, this does not meet our requirements to have a pure axiomatisation in which only logical formulae
from quantifier-free separation logic are allowed.

Modularity of the approaches from~\cite{Brotherston12,Brotherston&Villard14,Houetal18} is further developed in the recent 
work~\cite{Docherty&Pym18,Docherty19} by proposing a framework for labelled tableaux systems parametrised by the choice of separation theories
(in the very sense of~\cite{Brotherston&Villard14}). It is remarkable that the developments in~\cite{Docherty&Pym18,Docherty19} 
are very general as it can handle separation theories that can be expressed in the rich class of so-called coherent first-order
formulae, included in the first-order fragment $\Pi_2$. The first-order axioms are directly translated into inference rules.
The calculi use labelled formulae (every formula is decorated by a sign and by a label) as well as constraints enforcing
properties between worlds/resources. Unlike~\cite{Galmiche&Mery10}, the reasoning about labels is not outsourced 
but handled directly by the calculus. As several works mentioned above, the framework in~\cite{Docherty&Pym18,Docherty19} does not
provide for free a proof system for \slSW (which might have been a close cousin of the one in~\cite{Galmiche&Mery10}).
More importantly, similarly to the works~\cite{Galmiche&Mery10,Brotherston&Villard14,Houetal18}, the labelled tableaux systems handle
syntactic objects referring to semantical concepts related to the abstract separation logics that go beyond the only presence
of formulae. In a way, modularity of the approach prevents from having a puristic calculus for \slSW, apart from the fact that
\slSW is not part of the logics handled in~\cite{Docherty&Pym18}.

\subsection*{Axiomatising knowledge logics with reduction axioms.} 
In order to conclude this section, let us recall that the derivations in 
$\magicwandsys$ are able to simulate the bottom-up elimination of
separating connectives, leading to Boolean combinations of core formulae
for which the system $\magicwandsys$ is also complete. As the core formulae
are (simple) formulae in \slSW, the axiomatisation provided by $\magicwandsys$
uses only \slSW formulae and is complete for the full logic \slSW (and not only
for Boolean combinations of core formulae).  
Note that as a by-product of our completeness proof for \slSW, we get expressive
completeness of \slSW with respect to Boolean combinations of core formulae,
with a proof different from the developments in~\cite{Lozes04bis,Brochenin&Demri&Lozes09,Echenim&Iosif&Peltier19}.

This general principle described above is  familiar for axiomatising dynamic epistemic 
logics in which dynamic connectives might be eliminated with the help of so-called 
 \emph{reduction axioms}, see e.g. standard examples 
in~\cite{vanDitmarsch&vanderHoek&Kooi08,vanBenthem2011ldii,WangC13,Fervari&VelazquezQuesada19}.
In a nutshell, every formula  containing a dynamic operator is provably reduced to a formula without such an operator.
Completeness is then established thanks to the completeness of the underlying `basic' language,
A similar approach for the linear $\mu$-calculus is recently presented in~\cite{Doumane17} 
for which a form of constructive completeness is advocated, see also~\cite{Luck18}. 
Hilbert-style axiomatisations following similar high-level principles for the
modal separation logics MSL($\separate$,$\Diamond$) and MSL($\separate$,$\langle \neq \rangle$)
introduced in~\cite{Demri&Fervari19}, have been designed in~\cite{Demri&Fervari&Mansutti19}.

\section{Conclusion}\label{section:conclusion}
We  presented a method to axiomatise internally quantifier-free separation logic \slSW 
based on the axiomatisation of Boolean combinations of core formulae
(and even more precisely, based on the  restricted fragment of core types).
We designed the first proof system for \slSW
that is completely internal and highlights the essential ingredients
of the heaplet semantics.
The fact that the calculus is internal simply means that the 
axioms and inference rules
involve schemas instantiated by formulae in \slSW (no use of nominals, labels or other
syntactic objects that are not \slSW formulae).
Obviously, the Hilbert-style proof system presented in the paper 
is of theoretical interest, at least to grasp what are the essential features of \slSW . Still, it remains
to be seen whether applications are possible for designing decision procedures, for instance
to feed provers with appropriate axiom instances to accelerate the proof search. 
Furthermore, we have not investigated whether the proof system $\magicwandsys$ (see Figure~\ref{figure-full-proof-system})
can be simplified without loosing completeness. This might be rewarding for using
the calculus for other logics or for other applications. Most probably the most obvious 
part to study in that respect would be $\coresys(\separate)$. 

To provide further evidence that our method is robust, it is desirable to
apply it to axiomatise other separation logics, for instance
by  adding the list segment predicate $\ls$~\cite{Berdine&Calcagno&OHearn04}
(or more generally user-defined inductive predicates)
or by adding first-order quantification. A key step in our approach is first to show  that
the logic admits a characterisation in terms of core formulae and such formulae
need to be designed adequately. Of course, it is required that the set of valid formulae is recursively
enumerable, which discards any attempt with $\seplogic{\separate,\magicwand,\ls}$ or with
the first-order version of \slSW~\cite{DemriLM18,Brochenin&Demri&Lozes12}.
The second part of the paper~\cite{Demri&Lozes&Mansutti20} introduces  an extension of  $\seplogic{\separate,\ls}$
 and  presents an axiomatisation  with our method. More separation logics could be axiomatised that way,
other good candidates are the version of separation logic with one individual variable
studied in~\cite{Demrietal17} as well as the quantifier-free separation logic with general universes
from~\cite{Echenim&Iosif&Peltier19}. 

\noindent
{\bf Acknowledgements.} We would like to thank the anonymous reviewers for their numerous 
remarks and suggestions that help us to improve the quality of the document. 

\bibliographystyle{alpha}
\bibliography{paper-bibliography}

\newcommand{\etalchar}[1]{$^{#1}$}
\begin{thebibliography}{DGLWM17}

\bibitem[ABM01]{Areces&Blackburn&Marx01}
C.~Areces, P.~Blackburn, and M.~Marx.
\newblock Hybrid logics: characterization, interpolation and complexity.
\newblock {\em The Journal of Symbolic Logic}, 66(3):977--1010, 2001.

\bibitem[BCO04]{Berdine&Calcagno&OHearn04}
J.~Berdine, C.~Calcagno, and P.W. O'Hearn.
\newblock A decidable fragment of separation logic.
\newblock In {\em FST\&TCS'04}, volume 3328 of {\em LNCS}, pages 97--109.
  Springer, 2004.

\bibitem[BDL09]{Brochenin&Demri&Lozes09}
R.~Brochenin, S.~Demri, and {\'{E}}.~Lozes.
\newblock Reasoning about sequences of memory states.
\newblock {\em Annals of Pure and Applied Logic}, 161(3):305--323, 2009.

\bibitem[BDL12]{Brochenin&Demri&Lozes12}
R.~Brochenin, S.~Demri, and {\'{E}}.~Lozes.
\newblock On the almighty wand.
\newblock {\em Information and Computation}, 211:106--137, 2012.

\bibitem[BdRV01]{Blackburn&deRijke&Venema01}
P.~Blackburn, M.~de~Rijke, and Y.~Venema.
\newblock {\em Modal Logic}.
\newblock Cambridge University Press, 2001.

\bibitem[Bel82]{Belnap82}
N.~Belnap.
\newblock Display logic.
\newblock {\em Journal of Philosophical Logic}, 11:375--417, 1982.

\bibitem[BIP10]{Bozga&Iosif&Perarnau10}
M.~Bozga, R.~Iosif, and S.~Perarnau.
\newblock Quantitative separation logic and programs with lists.
\newblock {\em Journal of Automated Reasoning}, 45(2):131--156, 2010.

\bibitem[BK14]{Brotherston&Kanovich14}
J.~Brotherston and M.~Kanovich.
\newblock Undecidability of propositional separation logic and its neighbours.
\newblock {\em Journal of the Association for Computing Machinery}, 61(2),
  2014.

\bibitem[BK18]{Brotherston&Kanovich18}
J.~Brotherston and M.~Kanovich.
\newblock On the complexity of pointer arithmetic in separation logic.
\newblock In {\em APLAS'18}, volume 11275 of {\em LNCS}, pages 329--349.
  Springer, 2018.

\bibitem[Bro12]{Brotherston12}
J.~Brotherston.
\newblock Bunched logics displayed.
\newblock {\em Studia Logica}, 100(6):1223--1254, 2012.

\bibitem[BV14]{Brotherston&Villard14}
J.~Brotherston and J.~Villard.
\newblock Parametric completeness for separation theories.
\newblock In {\em POPL'14}, pages 453--464. ACM, 2014.

\bibitem[CGH05]{Calcagno&Gardner&Hague05}
C.~Calcagno, Ph. Gardner, and M.~Hague.
\newblock From separation logic to first-order logic.
\newblock In {\em FoSSaCS'05}, volume 3441 of {\em LNCS}, pages 395--409.
  Springer, 2005.

\bibitem[CHO{\etalchar{+}}11]{Cooketal11}
B.~Cook, C.~Haase, J.~Ouaknine, M.~Parkinson, and J.~Worrell.
\newblock Tractable reasoning in a fragment of separation logic.
\newblock In {\em CONCUR'11}, volume 6901 of {\em LNCS}, pages 235--249.
  Springer, 2011.

\bibitem[COY01]{Calcagno&Yang&OHearn01}
C.~Calcagno, P.W. O'Hearn, and H.~Yang.
\newblock Computability and complexity results for a spatial assertion language
  for data structures.
\newblock In {\em FST\&TCS'01}, volume 2245 of {\em LNCS}, pages 108--119.
  Springer, 2001.

\bibitem[DD15]{DemriDeters15bis}
S.~Demri and M.~Deters.
\newblock Separation logics and modalities: A survey.
\newblock {\em Journal of Applied Non-Classical Logics}, 25(1):50--99, 2015.

\bibitem[DF19]{Demri&Fervari19}
S.~Demri and R.~Fervari.
\newblock The power of modal separation logics.
\newblock {\em Journal of Logic and Computation}, 29(8):1139--1184, 2019.

\bibitem[DFM19]{Demri&Fervari&Mansutti19}
S.~Demri, R.~Fervari, and A.~Mansutti.
\newblock Axiomatising logics with separating conjunction and modalities.
\newblock In {\em JELIA'19}, volume 11468 of {\em LNAI}, pages 692--708.
  Springer, 2019.

\bibitem[DGLWM17]{Demrietal17}
S.~Demri, D.~Galmiche, D.~Larchey-Wendling, and D.~Mery.
\newblock Separation logic with one quantified variable.
\newblock {\em Theory of Computing Systems}, 61:371--461, 2017.

\bibitem[DLM18a]{DemriLM18}
S.~Demri, {\'{E}}.~Lozes, and A.~Mansutti.
\newblock The effects of adding reachability predicates in propositional
  separation logic.
\newblock In {\em FoSSaCS'18}, volume 10803 of {\em LNCS}, pages 476--493.
  Springer, 2018.

\bibitem[DLM18b]{Demri&Lozes&Mansutti18bis}
S.~Demri, {\'{E}}.~Lozes, and A.~Mansutti.
\newblock The effects of adding reachability predicates in propositional
  separation logic.
\newblock arXiv:1810.05410, October 2018.
\newblock 44 pages. Long version of~\cite{DemriLM18}.

\bibitem[DLM20]{Demri&Lozes&Mansutti20}
S.~Demri, {\'{E}}.~Lozes, and A.~Mansutti.
\newblock Internal calculi for separation logics.
\newblock In {\em CSL'20}, Leibniz International Proceedings in Informatics,
  pages 19:1--19:18. Leibniz-Zentrum f{\"u}r Informatik, 2020.

\bibitem[Doc19]{Docherty19}
S.~Docherty.
\newblock {\em Bunched logics: a uniform approach}.
\newblock PhD thesis, University College London, 2019.

\bibitem[Dou17]{Doumane17}
A.~Doumane.
\newblock Constructive completeness for the linear-time {\(\mu\)}-calculus.
\newblock In {\em LiCS'17}, pages 1--12. {IEEE} Computer Society, 2017.

\bibitem[DP18]{Docherty&Pym18}
S.~Docherty and D.~Pym.
\newblock Modular tableaux calculi for separation theories.
\newblock In {\em FoSSaCS'18}, volume 10803 of {\em LNCS}, pages 441--458.
  Springer, 2018.

\bibitem[EIP19]{Echenim&Iosif&Peltier19}
M.~Echenim, R.~Iosif, and N.~Peltier.
\newblock The {B}ernays-{S}ch{\"{o}}nfinkel-{R}amsey class of separation logic
  on arbitrary domains.
\newblock In {\em FoSSaCS'19}, volume 11425 of {\em LNCS}, pages 242--259.
  Springer, 2019.

\bibitem[FVQ19]{Fervari&VelazquezQuesada19}
R.~Fervari and F.~R. Vel\'azquez-Quesada.
\newblock Introspection as an action in relational models.
\newblock {\em Journal of Logical and Algebraic Methods in Programming},
  108:1--23, 2019.

\bibitem[Gab96]{Gabbay96d}
D.~Gabbay.
\newblock {\em Labelled Deductive Systems}.
\newblock Oxford University Press, 1996.

\bibitem[GLW06]{Galmiche&Larchey06}
D.~Galmiche and D.~Larchey-Wending.
\newblock Expressivity properties of boolean {BI} through relational models.
\newblock In {\em FST\&TCS'06}, volume 4337 of {\em LNCS}, pages 358--369.
  Springer, 2006.

\bibitem[GM05]{Galmiche&Mery05}
D.~Galmiche and D.~Mery.
\newblock Characterizing provability in {BI}'s pointer logic through resource
  graphs.
\newblock In {\em LPAR'05}, volume 3835 of {\em LNCS}, pages 459--473.
  Springer, 2005.

\bibitem[GM10]{Galmiche&Mery10}
D.~Galmiche and D.~M\'ery.
\newblock Tableaux and resource graphs for separation logic.
\newblock {\em Journal of Logic and Computation}, 20(1):189--231, 2010.

\bibitem[GvD06]{Goranko&vanDrimmelen06}
V.~Goranko and G.~van Drimmelen.
\newblock Complete axiomatization and decidability of alternating-time temporal
  logic.
\newblock {\em Theoretical Computer Science}, 353(1-3):93--117, 2006.

\bibitem[HCGT18]{Houetal18}
Z.~H{\'o}u, R.~Clouston, R.~Gor\'e, and A.~Tiu.
\newblock Modular labelled sequent calculi for abstract separation logics.
\newblock {\em ACM Transactions on Computational Logic}, 19(2):13:1--13:35,
  2018.

\bibitem[HGT15]{Hou&Gore&Tiu15}
Z.~H{\'o}u, R.~Gor{\'{e}}, and A.~Tiu.
\newblock Automated theorem proving for assertions in separation logic with all
  connectives.
\newblock In {\em CADE'15}, volume 9195 of {\em LNCS}, pages 501--516.
  Springer, 2015.

\bibitem[H{\'o}u15]{Hou15}
Z.~H{\'o}u.
\newblock {\em Labelled sequent calculi and automated reasoning for assertions
  in separation logic}.
\newblock PhD thesis, Australian National University, November 2015.

\bibitem[IO01]{Ishtiaq&OHearn01}
S.~Ishtiaq and P.W. O'Hearn.
\newblock {BI} as an assertion language for mutable data structures.
\newblock In {\em POPL'01}, pages 14--26. ACM, 2001.

\bibitem[Kai95]{Kaivola95}
R.~Kaivola.
\newblock Axiomatising linear time mu-calculus.
\newblock In {\em CONCUR'95}, volume 962 of {\em LNCS}, pages 423--437.
  Springer, 1995.

\bibitem[LG13]{LarcheyWendling&Galmiche13}
D.~Larchey{-}Wendling and D.~Galmiche.
\newblock Nondeterministic phase semantics and the undecidability of {B}oolean
  {BI}.
\newblock {\em ACM Transactions on Computational Logic}, 14(1), 2013.

\bibitem[LMX16]{Larsen&Mardare&Xue16}
K.G. Larsen, R.~Mardare, and B.~Xue.
\newblock Probabilistic mu-calculus: Decidability and complete axiomatization.
\newblock In {\em FST\&TCS'16}, volume~65 of {\em LIPIcs}, pages 25:1--25:18.
  Schloss Dagstuhl - Leibniz-Zentrum fuer Informatik, 2016.

\bibitem[Loz04a]{Lozes04bis}
{\'{E}}.~Lozes.
\newblock {\em Expressivit{\'e} des Logiques Spatiales}.
\newblock PhD thesis, ENS Lyon, 2004.

\bibitem[Loz04b]{Lozes04}
{\'{E}}.~Lozes.
\newblock Separation logic preserves the expressive power of classical logic.
\newblock In {\em SPACE'04}, 2004.

\bibitem[L{\"{u}}c18]{Luck18}
M.~L{\"{u}}ck.
\newblock Axiomatizations of team logics.
\newblock {\em Annals of Pure and Applied Logic}, 169(9):928--969, 2018.

\bibitem[Man18]{Mansutti18}
A.~Mansutti.
\newblock Extending propositional separation logic for robustness properties.
\newblock In {\em FST\&TCS'18}, volume 122 of {\em LIPIcs}, pages 42:1--42:23.
  Schloss Dagstuhl - Leibniz-Zentrum fuer Informatik, 2018.

\bibitem[Man20]{Mansutti20}
A.~Mansutti.
\newblock {\em Reasoning with {S}eparation {L}ogics: {C}omplexity, {E}xpressive
  {P}ower, {P}roof {S}ystems}.
\newblock PhD thesis, Universit\'e Paris-Saclay, December 2020.

\bibitem[O'H12]{OHearn12}
P.W. O'Hearn.
\newblock A primer on separation logic.
\newblock In {\em Software Safety and Security: Tools for Analysis and
  Verification}, volume~33 of {\em NATO Science for Peace and Security Series},
  pages 286--318, 2012.

\bibitem[OP99]{OHearn&Pym99}
P.W. O'Hearn and D.~Pym.
\newblock The logic of bunched implications.
\newblock {\em Bulletin of Symbolic Logic}, 5(2):215--244, 1999.

\bibitem[PSO18]{Pym&Spring&OHearn18}
D.~Pym, J.~Spring, and P.W. O'Hearn.
\newblock Why separation logic works.
\newblock {\em Philosophy \& Technology}, pages 1--34, 2018.

\bibitem[PSP13]{Park&Seo&Park13}
J.~Park, J.~Seo, and S.~Park.
\newblock A theorem prover for {B}oolean {BI}.
\newblock In {\em POPL'13}, pages 219--232. {ACM}, 2013.

\bibitem[PWZ13]{Piskac&Wies&Zufferey13}
R.~Piska{\'c}, Th. Wies, and D.~Zufferey.
\newblock Automating separation logic using {SMT}.
\newblock In {\em CAV'13}, volume 8044 of {\em LNCS}, pages 773--789. Springer,
  2013.

\bibitem[Pym02]{Pym02}
D.~Pym.
\newblock {\em The Semantics and Proof Theory of the Logic of Bunched
  Implications}, volume~26 of {\em Applied Logic}.
\newblock Kluwer Academic Publishers, 2002.

\bibitem[Rey01]{Reynolds01}
M.~Reynolds.
\newblock An axiomatization of full computation tree logic.
\newblock {\em The Journal of Symbolic Logic}, 66(3):1011--1057, 2001.

\bibitem[Rey02]{Reynolds02}
J.C. Reynolds.
\newblock Separation logic: a logic for shared mutable data structures.
\newblock In {\em LiCS'02}, pages 55--74. IEEE, 2002.

\bibitem[RISK16]{Reynoldsetal16}
A.~Reynolds, R.~Iosif, C.~Serban, and T.~King.
\newblock A decision procedure for separation logic in {SMT}.
\newblock In {\em ATVA'16}, volume 9938 of {\em LNCS}, pages 244--261, 2016.

\bibitem[Sah75]{Sahlqvist75}
H.~Sahlqvist.
\newblock Completeness and correspondence in the first and second order
  semantics for modal logics.
\newblock In S.~Kanger, editor, {\em 3rd Scandinavian Logic Symposium, Uppsala,
  Sweden, 1973}, pages 110--143. North Holland, 1975.

\bibitem[SV18]{Schroder&Venema18}
L.~Schr{\"{o}}der and Y.~Venema.
\newblock Completeness of flat coalgebraic fixpoint logics.
\newblock {\em ACM Transactions on Computational Logic}, 19(1):4:1--4:34, 2018.

\bibitem[vB11]{vanBenthem2011ldii}
J.~van Benthem.
\newblock {\em Logical Dynamics of Information and Interaction}.
\newblock Cambridge University Press, 2011.

\bibitem[vDvdHK08]{vanDitmarsch&vanderHoek&Kooi08}
H.~van Ditmarsch, W.~van~der Hoek, and B.~Kooi.
\newblock {\em Dynamic Epistemic Logic}, volume 337 of {\em Synthese Library
  Series}.
\newblock Springer, Dordrecht, 2008.

\bibitem[Wal00]{Walukiewicz00}
I.~Walukiewicz.
\newblock Completeness of {K}ozen's axiomatisation of the propositional
  $\mu$-calculus.
\newblock {\em Information and Computation}, 157(1--2):142--182, 2000.

\bibitem[WC13]{WangC13}
Y.~Wang and Q.~Cao.
\newblock On axiomatizations of public announcement logic.
\newblock {\em Synthese}, 190(Supplement-1):103--134, 2013.

\bibitem[Yan01]{Yang01}
H.~Yang.
\newblock {\em Local Reasoning for Stateful Programs}.
\newblock PhD thesis, University of Illinois, Urbana-Champaign, 2001.

\end{thebibliography}

\newpage
\appendix
\noindent As in the rest of the paper, in the derivations below we use the following precedence between the various connectives of~$\slSW$:
$\set{\lnot} > \{\land,\lor,\separate\} > \set{\implies,\magicwand,\septraction} > \{\iff\}$.

\section{Proof of Lemma~\ref{lemma:separate-auxiliary-stuff}}
\label{appendix-separate-auxiliary-stuff}

\begin{proof}[Proof of~\rm\ref{starAx:auxilary-1}.]~
    \begin{syntproof}
        1   & \aformula \implies (\aformula \land \avariable \sim \avariablebis) \lor (\aformula \land\! \lnot \avariable \sim \avariablebis)
            & \mbox{PC}\\
        2   & \aformula \separate \aformulabis \implies ((\aformula \land \avariable \sim \avariablebis) \lor (\aformula \land\! \lnot \avariable \sim \avariablebis)) \separate \aformulabis 
            & \mbox{\ref{rule:starinference}, 1}\\
        3   & ((\aformula \,{\land}\, \avariable \hspace{1pt}{\sim}\hspace{1pt} \avariablebis) \hspace{1pt}{\lor}\hspace{1pt} (\aformula {\land} \lnot \avariable \hspace{1pt}{\sim}\hspace{1pt} \avariablebis)) \separate \aformulabis \implies ((\aformula {\land} \avariable \hspace{1pt}{\sim}\hspace{1pt} \avariablebis) \separate \aformulabis) 
            {\lor} ((\aformula {\land} \lnot \avariable \hspace{1pt}{\sim}\hspace{1pt} \avariablebis) \separate \aformulabis)
            & \mbox{\ref{starAx:DistrOr}}\\
        4   & \aformula \land\! \lnot \avariable \sim \avariablebis \implies \lnot \avariable \sim \avariablebis 
            & \mbox{PC}\\
        5   & \aformulabis \implies \true 
            & \mbox{PC}\\
        6   & (\aformula \,{\land} \lnot \avariable \hspace{1pt}{\sim}\hspace{1pt} \avariablebis) \separate \aformulabis \implies (\lnot \avariable \sim \avariablebis) \separate \true
            & \mbox{\ref{rule:starintroLR}, 4, 5}\\
        7   & (\lnot \avariable \sim \avariablebis) 
                \separate \true \implies \lnot \avariable \sim \avariablebis
            & \mbox{\ref{starAx:MonoCore}}\\ 
        8   & (\aformula \,{\land} \lnot \avariable \hspace{1pt}{\sim}\hspace{1pt} \avariablebis) \separate \aformulabis \implies \lnot \avariable \sim \avariablebis
            & \mbox{\ref{rule:imptr}, 6, 7}\\
        9   & ((\aformula \,{\land}\, \avariable \hspace{1pt}{\sim}\hspace{1pt} \avariablebis) \separate \aformulabis) 
        \lor ((\aformula \,{\land} \lnot \avariable \hspace{1pt}{\sim}\hspace{1pt} \avariablebis) \separate \aformulabis) \implies 
        ((\aformula \,{\land}\, \avariable \hspace{1pt}{\sim}\hspace{1pt} \avariablebis) \separate \aformulabis) \lor \lnot \avariable \sim \avariablebis 
            & \mbox{8, PC}\\
        10   & \aformula \separate \aformulabis \implies ((\aformula \,{\land}\, \avariable \hspace{1pt}{\sim}\hspace{1pt} \avariablebis) \separate \aformulabis) \lor \lnot \avariable \sim \avariablebis 
            & \mbox{\ref{rule:imptr}, 2, 3, 9}\\
        11  & \avariable \sim \avariablebis \land (\aformula \separate \aformulabis) \implies (\aformula \land \avariable \sim \avariablebis) \separate \aformulabis 
            & \mbox{10, PC}\hfill\qedhere
    \end{syntproof}
\end{proof}
\vspace{0.1cm}
\begin{proof}[Proof of~\rm\ref{starAx:auxilary-2}.]~
    \begin{syntproof}
        1   & \alloc{\avariable} \land \avariable = \avariablebis \implies \alloc{\avariablebis}
            & \mbox{\ref{coreAx:EqSub}}\\
        2   & \avariable = \avariablebis \land ((\aformula \land \alloc{\avariable}) \separate \aformulabis) 
        \implies ((\aformula \land \alloc{\avariable} \land \avariable = \avariablebis) \separate \aformulabis) 
            & \mbox{\ref{starAx:auxilary-1}}\\ 
        3   & (\aformula \land \alloc{\avariable} \land \avariable = \avariablebis) \separate \aformulabis \implies (\aformula \land \alloc{\avariablebis}) \separate \aformulabis 
            & \mbox{PC, \ref{rule:starinference}, 1}\\
        4   & \avariable = \avariablebis \land ((\aformula \land \alloc{\avariable}) \separate \aformulabis) \implies (\aformula \land \alloc{\avariablebis}) \separate \aformulabis 
            & \mbox{\ref{rule:imptr}, 2, 3} \hfill\qedhere
    \end{syntproof}
\end{proof}
\vspace{0.1cm}
\begin{proof}[Proof of~\rm\ref{starAx:auxilary-3}.]~
    \begin{syntproof}
        1   &   \aformulabis \implies (\aformulabis \land \alloc{\avariable}) \lor (\aformulabis \land\! \lnot \alloc{\avariable})
            & \mbox{PC}\\
        2   &   (\aformula \land \alloc{\avariable}) \separate \aformulabis 
                \implies\\[-2pt] 
            & \quad (\aformula \land \alloc{\avariable}) \separate ((\aformulabis \land \alloc{\avariable}) \lor (\aformulabis \land\! \lnot \alloc{\avariable}))
            & \mbox{\ref{starAx:Commute}, \ref{rule:starinference}, 1}\\
        3   & (\aformula \land \alloc{\avariable}) \separate ((\aformulabis \land \alloc{\avariable}) \lor (\aformulabis \land\! \lnot \alloc{\avariable}))
            \implies\\[-2pt]
            & \quad ((\aformula \land \alloc{\avariable}) \separate (\aformulabis \land \alloc{\avariable})) \lor 
            ((\aformula \land \alloc{\avariable}) \separate (\aformulabis \land\! \lnot \alloc{\avariable}))
            & \quad \mbox{\ref{starAx:Commute}, \ref{starAx:DistrOr}, 2}\\
        4   & \aformulater \land \alloc{\avariable} \implies \alloc{\avariable}
            & (\aformulater \in \{\aformula,\aformulabis\}),\mbox{ PC}\\
        5   & (\aformula \land \alloc{\avariable}) \separate (\aformulabis \land \alloc{\avariable})
            \implies \alloc{\avariable} \separate \alloc{\avariable}
            & \mbox{\ref{rule:starintroLR}, 4}\\
        6   & \alloc{\avariable} \separate \alloc{\avariable} \implies \false 
            & \mbox{\ref{starAx:DoubleAlloc}}\\
        7   & (\aformula \land \alloc{\avariable}) \separate (\aformulabis \land \alloc{\avariable})
        \implies \false 
            & \mbox{\ref{rule:imptr}, 5, 6}\\
        8   & (\aformula \land \alloc{\avariable}) \separate \aformulabis 
            \implies 
            {\false \lor 
            ((\aformula \land \alloc{\avariable}) \separate (\aformulabis \land\! \lnot \alloc{\avariable}))}
            & \mbox{PC, 2, 3, 7}\\
        9 & (\aformula \land \alloc{\avariable}) \separate \aformulabis 
        \implies (\aformula \land \alloc{\avariable}) \separate (\aformulabis \land\! \lnot \alloc{\avariable})
            & \mbox{PC, 8}\\
        10 & \aformula \land \alloc{\avariable} \implies \aformula
            &\mbox{PC}\\
        11  &  (\aformula \land \alloc{\avariable}) \separate (\aformulabis \land\! \lnot \alloc{\avariable}) \implies \aformula \separate (\aformulabis \land\! \lnot \alloc{\avariable}) 
            &\mbox{\ref{rule:starinference}, 10}\\
        12 &     (\aformula \land \alloc{\avariable}) \separate \aformulabis  \implies 
                 \aformula \separate (\aformulabis \land\! \lnot \alloc{\avariable}) 
           &\mbox{\ref{rule:imptr}, 9, 11} \hfill\qedhere
    \end{syntproof}
\end{proof}
\vspace{0.1cm}
\begin{proof}[Proof of~\rm\ref{starAx:auxilary-4}.]~
    \begin{syntproof}
        1   &   \aformula \implies (\aformula \land \alloc{\avariable}) \lor (\aformula \land\! \lnot \alloc{\avariable})
            & \mbox{PC}\\   
        2   & \aformula \separate \aformulabis  
            \implies
            \big((\aformula \land \alloc{\avariable}) \lor (\aformula \land\! \lnot \alloc{\avariable})\big) \separate \aformulabis
            & \mbox{\ref{rule:starinference}, 1}\\
        3   & \big((\aformula \land \alloc{\avariable}) \lor (\aformula \land\! \lnot \alloc{\avariable})\big) \separate \aformulabis \implies\\[-2pt]
            &   \quad ((\aformula \land \alloc{\avariable}) \separate \aformulabis) \lor 
                ((\aformula \land \lnot \alloc{\avariable}) \separate \aformulabis)
            &\mbox{\ref{starAx:DistrOr}}\\ 
        4   & \aformula \land \alloc{\avariable} \implies \alloc{\avariable}
            & \mbox{PC}\\
        5   & \aformulabis \implies \true 
            & \mbox{PC}\\
        6   & (\aformula \land \alloc{\avariable}) \separate \aformulabis \implies (\alloc{\avariable} \separate \true)
            & \mbox{\ref{rule:starintroLR}, 4, 5}\\
        7  & \alloc{\avariable} \separate \true \implies \alloc{\avariable} 
            & \mbox{\ref{starAx:StarAlloc}}\\
        8  & \aformula \separate \aformulabis  \implies
            \alloc{\avariable} \lor ((\aformula \land \lnot\alloc{\avariable}) \separate \aformulabis)
            & \mbox{PC, 2, 3, 6, 7}\\
        9  & \lnot \alloc{\avariable} \land (\aformula \separate \aformula)
        \implies (\aformula \land\! \lnot\alloc{\avariable}) \separate \aformulabis
            & \mbox{PC, 8}\hfill\qedhere
    \end{syntproof}
\end{proof}
\vspace{0.1cm}
\begin{proof}[Proof of~\rm\ref{starAx:auxilary-4bis}.]~
    \begin{syntproof}
        1   & \aformula \implies (\aformula \land \alloc{\avariable}) \lor (\aformula \land\! \lnot \alloc{\avariable})
            & \mbox{PC}\\
        2   & \aformula \separate (\lnot \alloc{\avariable} \land \aformulabis) 
        \implies\\[-2pt]
            &\quad ((\aformula {\land} \alloc{\avariable}) \separate (\aformulabis {\land} \lnot \alloc{\avariable})) \lor 
            ((\aformula {\land} \lnot \alloc{\avariable}) \separate (\aformulabis {\land} \lnot \alloc{\avariable})) 
            & \mbox{\ref{rule:starinference}, 1, \ref{starAx:DistrOr}}\\
        3   & \aformulater \land\! \lnot \alloc{\avariable} \implies \lnot \alloc{\avariable}
            & (\aformulater \in \{\aformula,\aformulabis\}),\mbox{ PC}\\
        4   & (\aformula \land\! \lnot \alloc{\avariable}) \separate (\aformulabis \land\! \lnot \alloc{\avariable}) \implies \lnot \alloc{\avariable} \separate \lnot \alloc{\avariable}
            & \mbox{PC, \ref{rule:starintroLR}, 3}\\
        5   & \lnot \alloc{\avariable} \separate \lnot \alloc{\avariable} \implies \lnot \alloc{\avariable}
            & \mbox{\ref{starAx:AllocNeg}}\\
        6   & \aformula \separate (\lnot \alloc{\avariable} \land \aformulabis) 
        \implies ((\aformula \land \alloc{\avariable}) \separate (\aformulabis \land\! \lnot \alloc{\avariable})) \lor \lnot \alloc{\avariable}
            & \mbox{PC, 2, 4, 5}\\
        7   & \alloc{\avariable} \land (\aformula \separate (\lnot \alloc{\avariable} \land \aformulabis)) \implies (\aformula \land \alloc{\avariable}) \separate (\aformulabis \land\! \lnot \alloc{\avariable})
            & \mbox{PC, 6}\hfill\qedhere
    \end{syntproof}
\end{proof}
\vspace{0.1cm}
\begin{proof}[Proof of~\rm\ref{starAx:auxilary-5}.]~
    \begin{syntproof}
        1   & \aformula \land \alloc{\avariable} \implies (\aformula \land \alloc{\avariable} \land \avariable \Ipto \avariablebis) \lor (\aformula \land \alloc{\avariable} \land \lnot \avariable \Ipto \avariablebis)
            & \mbox{PC}\\
        2   & (\aformula \land \alloc{\avariable}) \separate \aformulabis \implies\\
        & \quad \big((\aformula \land \alloc{\avariable} \land \avariable \Ipto \avariablebis) \lor (\aformula \land \alloc{\avariable} \land \lnot \avariable \Ipto \avariablebis)\big) \separate \aformulabis
            & \mbox{\ref{rule:starinference}, 1}\\
        3   & (\aformula \land \alloc{\avariable}) \separate \aformulabis \implies\\
        & \quad ((\aformula \land \alloc{\avariable} \land \avariable \Ipto \avariablebis) \separate \aformulabis) \lor ((\aformula \land \alloc{\avariable} \land \lnot \avariable \Ipto \avariablebis) \separate \aformulabis)
            & \mbox{\ref{starAx:DistrOr}, \ref{rule:imptr}, 2}\\
        4   & \aformula \land \alloc{\avariable} \land \lnot \avariable \Ipto \avariablebis \implies \alloc{\avariable} \land\! \lnot \avariable \Ipto \avariablebis 
            & \mbox{PC}\\
        5   & \aformulabis \implies \true 
            & \mbox{PC}\\
        6   & (\aformula \land \alloc{\avariable} \land \lnot \avariable \Ipto \avariablebis) \separate \aformulabis \implies (\alloc{\avariable} \land\! \lnot \avariable \Ipto \avariablebis) \separate \true 
            & \mbox{\ref{rule:starintroLR}}\\
        7   & (\alloc{\avariable} \land\! \lnot \avariable \Ipto \avariablebis) \separate \true 
            \implies \lnot \avariable \Ipto \avariablebis 
            & \mbox{\ref{starAx:PointsNeg}}\\
        8   & (\aformula \land \alloc{\avariable}) \separate \aformulabis \implies
        ((\aformula \land \alloc{\avariable} \land \avariable \Ipto \avariablebis) \separate \aformulabis) \lor \lnot \avariable \Ipto \avariablebis 
            & \mbox{PC, 3, 6, 7}\\
        9   & \avariable \Ipto \avariablebis \land ((\alloc{\avariable} \land \aformula) \separate \aformulabis) \implies (\aformula \land \alloc{\avariable} \land \avariable \Ipto \avariablebis) \separate \aformulabis
            & \mbox{PC, 8}\\ 
        10  & \aformula \land \alloc{\avariable} \land \avariable \Ipto \avariablebis
                \implies \aformula \land \avariable \Ipto \avariablebis
            & \mbox{PC}\\
        11  &  (\aformula \land \alloc{\avariable} \land \avariable \Ipto \avariablebis) \separate \aformulabis \implies 
        (\aformula \land \avariable \Ipto \avariablebis) \separate \aformulabis
            & \mbox{\ref{rule:starinference}, 10}\\
        12  & \avariable \Ipto \avariablebis \land ((\alloc{\avariable} \land \aformula) \separate \aformulabis) \implies 
        (\aformula \land \avariable \Ipto \avariablebis) \separate \aformulabis
            & \mbox{\ref{rule:imptr}, 9, 11}
        \hfill\qedhere
    \end{syntproof}
\end{proof}

\begin{proof}[Proof of~\rm\ref{starAx:auxilary-6}.]
    Similar to the proof of \ref{starAx:auxilary-4}, by replacing  $\alloc{\avariable}$ with $\avariable \Ipto \avariablebis$.
    \begin{syntproof}
        1   &   \aformula \implies (\aformula \land \avariable \Ipto \avariablebis) \lor (\aformula \land\! \lnot \avariable \Ipto \avariablebis)
            & \mbox{PC}\\   
        2   & \aformula \separate \aformulabis  
            \implies
            ((\aformula \land \avariable \Ipto \avariablebis) \separate \aformulabis) \lor 
                ((\aformula \land \lnot \avariable \Ipto \avariablebis) \separate \aformulabis)
            &\mbox{\ref{rule:starinference}, 1, \ref{starAx:DistrOr}}\\ 
        3   & \aformula \land \avariable \Ipto \avariablebis \implies \avariable \Ipto \avariablebis
            & \mbox{PC}\\
        4   & \aformulabis \implies \true 
            & \mbox{PC}\\
        5   & (\aformula \land \avariable \Ipto \avariablebis) \separate \aformulabis \implies (\avariable \Ipto \avariablebis \separate \true)
            & \mbox{\ref{rule:starintroLR}, 3, 4}\\
        6  & \avariable \Ipto \avariablebis \separate \true \implies \avariable \Ipto \avariablebis
            & \mbox{\ref{starAx:MonoCore}}\\
        7  & \aformula \separate \aformulabis  \implies
        \avariable \Ipto \avariablebis \lor ((\aformula \land \lnot\avariable \Ipto \avariablebis) \separate \aformulabis)
            & \mbox{PC, 2, 5, 6}\\
        8  & \lnot \avariable \Ipto \avariablebis \land (\aformula \separate \aformulabis)
        \implies (\aformula \land\! \lnot\avariable \Ipto \avariablebis) \separate \aformulabis
            & \mbox{PC, 7}\hfill\qedhere
    \end{syntproof}
\end{proof}

\section{Derivation of the~$\size$ formulae required for~Lemma~\ref{lemma:starPSLelim-sat}}
\label{appendix-derivation-one}
\label{appendix-derivation-two}

In this appendix, we show the derivations in $\coresys(\separate)$ of~$\size \geq \inbound_1 + \inbound_2 \implies \size = \inbound_1 \separate \size \geq \inbound_2$ and $\size = \inbound_1 + \inbound_2 \implies \size = \inbound_1 \separate \size = \inbound_2$, which are required for the proof of~Lemma~\ref{lemma:starPSLelim-sat}.

The derivation of
$
\size \geq \inbound_1 + \inbound_2 \implies \size = \inbound_1 \separate \size \geq \inbound_2
$
is proven by induction on~$\inbound_1$.
The derivation for the base case $\inbound_1 = 0$ is:
      \begin{syntproof}
      1 & \size \geq \inbound_2 \implies \emp \separate \size \geq \inbound_2
      & \mbox{\ref{starAx:Emp}} \\
      2 & \emp \implies \size \geq 0 \land \lnot \size \geq 1
      & \mbox{PC, def.~of $\size \geq 1$} \\
      3 & \emp \separate \size \geq \inbound_2
          \implies
          \size = 0 \separate  \size \geq \inbound_2
      & \mbox{\ref{rule:starinference}, 2, def.~of $\size = 0$} \\
      4 &  \size \geq \inbound_2 \implies  \size = 0 \separate  \size \geq \inbound_2
      & \mbox{\ref{rule:imptr}, 1, 3}
      \end{syntproof}
      \noindent For the induction step, let us suppose the formula to be derivable for a certain $\inbound_1$, and let us prove that it is also derivable for $\inbound_1+1$.
      \begin{syntproof}
      1 & \size \geq \inbound_1+1+\inbound_2 \implies \size \geq 1 \separate \size \geq \inbound_1 + \inbound_2
        & \mbox{def.~of $\size \geq \inbound$, \ref{starAx:Commute}, \ref{starAx:Assoc}}\\
      2 &  \size \geq 1 \implies \size = 1 \separate \top
        & \mbox{\ref{starAx:SizeOne}, def.~of~$\size \geq 1$} \\
      3 & \size \geq 1 \separate \size \geq \inbound_1 + \inbound_2 \implies\\[-2pt]
        & \quad (\size = 1 \separate \top) \separate \size \geq \inbound_1 + \inbound_2
        & \mbox{\ref{rule:starinference}, 2}\\
      4 & (\size = 1 \separate \top) \separate \size \geq \inbound_1 + \inbound_2
          \implies\\[-2pt]
        & \quad \size = 1 \separate \size \geq \inbound_1 + \inbound_2
        & \mbox{PC, \ref{starAx:Commute}, \ref{starAx:Assoc}, \ref{starAx:MonoCore}}
        \\
      5 & \size \geq \inbound_1 + \inbound_2 \implies \size = \inbound_1 \separate \size \geq \inbound_2
        & \mbox{Induction Hypothesis}\\
      6 & \size = 1 \separate \size \geq \inbound_1 + \inbound_2 \implies\\[-2pt]
        & \quad (\size = 1 \separate \size = \inbound_1) \separate \size \geq \inbound_2
        & \mbox{\ref{starAx:Commute}, \ref{rule:starinference}, \ref{starAx:Assoc}}\\
      7 & \size = \widetilde{\inbound} \implies \size \geq \widetilde{\inbound}
        & \mbox{PC, def.~of~$\size = \widetilde{\inbound}$}\\
      8 & \size = \widetilde{\inbound} \implies \lnot \size \geq \widetilde{\inbound}+1
      & \mbox{PC, def.~of~$\size = \widetilde{\inbound}$}\\
      9   & \size = 1 \separate \size = \inbound_1 \implies \size \geq 1 \separate \size \geq \inbound_1
          & \mbox{\ref{rule:starintroLR}, 7}\\
      10  & \size = 1 \separate \size = \inbound_1 \implies
            \lnot \size \geq 2 \separate \lnot \size \geq \inbound_1+1
      & \mbox{\ref{rule:starintroLR}, 8}\\
      11  & \size \geq 1 \separate \size \geq \inbound_1 \implies \size \geq \inbound_1+1
          & \mbox{def.~of $\size \geq \inbound$, \ref{starAx:Commute}, \ref{starAx:Assoc}}\\
      12  & \lnot \size \geq 2 \separate \lnot \size \geq \inbound_1+1  \implies \lnot \size \geq \inbound_1 + 2
          & \mbox{\ref{starAx:SizeNeg}}\\
      13  & \size = 1 \separate \size = \inbound_1 \implies \size = \inbound_1+1
          & \mbox{PC, 9--12, def.~of $\size = \inbound_1$}\\
      14 & (\size = 1 \separate \size = \inbound_1) \separate \size \geq \inbound_2 \implies\\[-2pt]
          & \quad \size = \inbound_1+1 \separate \size \geq \inbound_2
          & \mbox{\ref{rule:starinference}, 13}\\
      15  & \size \geq \inbound_1+1+\inbound_2 \implies
            \size = \inbound_1+1 \separate \size \geq \inbound_2
          & \ref{rule:imptr}, 1, 3, 4, 6, 14
      \end{syntproof}

\noindent The derivation of the formula
$\size = \inbound_1 + \inbound_2 \implies \size = \inbound_1 \separate \size = \inbound_2$
is provided below.

      \begin{syntproof}
        1 & \size = \inbound_1 + \inbound_2 \implies \size \geq \inbound_1 + \inbound_2
          & \mbox{PC, def.~of~$\size = \inbound$}\\
        2 & \size \geq \inbound_1 + \inbound_2 \implies \size = \inbound_1 \separate \size \geq \inbound_2
          & \mbox{Previously derived}\\
        3 & \size \geq \inbound_2 \implies (\size \geq \inbound_2 \land \size \geq \inbound_2+1) \lor \size = \inbound_2
          & \mbox{PC, def.~of~$\size = \inbound_2$}\\
        4 & \size = \inbound_1 \separate \size \geq \inbound_2
          \implies\\[-2pt]
          & \quad \size = \inbound_1 \separate ((\size \geq \inbound_2 \land \size \geq \inbound_2+1) \lor \size = \inbound_2)
          & \mbox{\ref{starAx:Commute}, \ref{rule:starinference}, 3}\\
        5 & \size = \inbound_1 \separate ((\size \geq \inbound_2 \land \size \geq \inbound_2+1) \lor \size = \inbound_2)
        \implies\\[-2pt]
          & \quad (\size = \inbound_1 \separate (\size \geq \inbound_2 {\land} \size \geq \inbound_2+1)) {\lor} (\size = \inbound_1 \separate \size = \inbound_2)
          &\quad  \mbox{\ref{starAx:Commute}, \ref{starAx:DistrOr}}\\
        6 & \size \geq \widetilde{\inbound} \land \aformulater \implies \size \geq \widetilde{\inbound}
          & \mbox{PC}\\
        7 & \size = \inbound_1 \separate (\size \geq \inbound_2 {\land} \size \geq \inbound_2+1)
            \implies\\[-2pt]
          & \size \geq \inbound_1 \separate \size \geq \inbound_2+1
          & \mbox{PC, \ref{rule:starintroLR}, 6}\\
        8 & \size \geq \inbound_1 \separate \size \geq \inbound_2+1 \implies
            \size \geq \inbound_1 + \inbound_2 + 1
          & \mbox{\ref{starAx:Commute}, \ref{starAx:Assoc}}\\
        9 & \size = \inbound_1 \separate (\size \geq \inbound_2 {\land} \size \geq \inbound_2+1)
        \implies \size \geq \inbound_1 + \inbound_2 + 1
          & \mbox{\ref{rule:imptr}, 7, 8}\\
        10  &  \size = \inbound_1 \separate ((\size \geq \inbound_2 \land \size \geq \inbound_2+1) \lor \size = \inbound_2)
        \implies\\[-2pt]
          & \quad \size \geq \inbound_1 + \inbound_2 + 1 {\lor} (\size = \inbound_1 \separate \size = \inbound_2)
          & \mbox{PC, 5, 9}\\
        11 & \size = \inbound_1 + \inbound_2 \implies \size \geq \inbound_1 + \inbound_2 + 1 {\lor} (\size = \inbound_1 \separate \size = \inbound_2)
          & \mbox{\ref{rule:imptr}, 1, 2, 4, 10}\\
        12 & \size = \inbound_1 + \inbound_2 \implies \lnot \size \geq \inbound_1 + \inbound_2 + 1
          & \mbox{PC, def. of~$\size = \inbound$}\\
        13 & \size = \inbound_1 + \inbound_2 \implies \size = \inbound_1 \separate \size = \inbound_2
          & \mbox{PC, 11, 12}
      \end{syntproof}

\section{Proof of Lemma~\ref{lemma:admissible-axioms-2}}
\label{appendix-admissible-axioms-2}

\vspace{3pt}

\noindent \textit{Proof of~\rm\ref{starAx:DistrOr}.}
\begin{syntproof}
  1 & (\aformula \separate \aformulater) \implies (\aformula \separate \aformulater) \lor (\aformulabis \separate \aformulater)
  & \mbox{PC} \\
  2 & (\aformulabis \separate \aformulater) \implies (\aformula \separate \aformulater) \lor (\aformulabis \separate \aformulater)
  & \mbox{PC} \\
  3 & \aformula \implies (\aformulater \magicwand (\aformula \separate \aformulater) \lor (\aformulabis \separate \aformulater))
  & \mbox{\ref{rule:staradj}, 1} \\
  4 & \aformulabis \implies (\aformulater \magicwand (\aformula \separate \aformulater) \lor (\aformulabis \separate \aformulater))
  & \mbox{\ref{rule:staradj}, 2} \\
  5 & \aformula \lor \aformulabis \implies (\aformulater \magicwand (\aformula \separate \aformulater) \lor (\aformulabis \separate \aformulater))
  & \mbox{PC, 3, 4} \\
  6 & (\aformula \lor \aformulabis) \separate \aformulater \implies (\aformula \separate \aformulater) \lor (\aformulabis \separate \aformulater)
  & \mbox{\ref{rule:magicwandadj}, 5} 
  \hfill\qed
\end{syntproof}


\noindent \textit{Proof of~\rm\ref{starAx:False}.}
The axiom~\ref{starAx:False} is provable by~\ref{rule:staradj}. Indeed, proving $(\false \separate \aformula) \implies \false$ 
  reduces to proving
  $\false \implies (\aformula \magicwand \false)$. The latter is a tautology by propositional reasoning. \qed

\vspace{3pt}

\noindent \textit{Proof of~\rm\ref{starAx:StarAlloc}.}
  \begin{syntproof}
  1 & \perp \separate \top \implies \perp
  & \mbox{\ref{starAx:False}} \\
  2 & (\avariable \Ipto \avariable \magicwand \perp) \implies (\avariable \Ipto \avariable \magicwand \perp)
  & \mbox{PC} \\
  3 & (\avariable \Ipto \avariable \magicwand \perp) \separate \avariable \Ipto \avariable \implies  \perp
  & \mbox{\ref{rule:magicwandadj}, 2} \\
  4 & \avariable \Ipto \avariable \separate (\avariable \Ipto \avariable\magicwand \perp) \implies (\avariable \Ipto \avariable 
  \magicwand \perp) \separate \avariable \Ipto \avariable
  & \mbox{\ref{starAx:Commute}} \\
  5 &  \avariable \Ipto \avariable \separate (\avariable \Ipto \avariable \magicwand \perp)\implies \perp
  & \mbox{\ref{rule:imptr}, 4, 3} \\
  6 &  (\avariable \Ipto \avariable \separate (\avariable \Ipto \avariable \magicwand \perp)) \separate \top \implies \perp \separate \top
  & \mbox{\ref{rule:starinference}, 5} \\
  7 & ((\avariable \Ipto \avariable \magicwand \perp) \separate \top) \separate (\avariable \Ipto \avariable)
      \implies (\avariable \Ipto \avariable \separate (\avariable \Ipto \avariable \magicwand \perp)) \separate \top
  & \mbox{\ref{starAx:Commute}, \ref{starAx:Assoc}}\\
  8 & ((\avariable \Ipto \avariable \magicwand \perp) \separate \top) \separate (\avariable \Ipto \avariable) \implies \perp
  & \mbox{\ref{rule:imptr}, 7, 6, 1} \\
  9 & (\avariable \Ipto \avariable \magicwand \perp) \separate \top \implies (\avariable \Ipto \avariable \magicwand \perp)
  & \mbox{\ref{rule:staradj}, 8} \\
  10 & \alloc{\avariable} \separate \top \implies \alloc{\avariable}
  & \mbox{Def. $\alloc{\avariable}$, 9} \hfill\qed
  \end{syntproof}


\section{Proof of Lemma~\ref{lemma:septractionadmissible}}
\label{appendix-septractionadmissible}


\begin{proof}[Proof~of~\rm\ref{mwAx:BotL}]~

\noindent \begin{minipage}{0.46\linewidth}
  \begin{syntproof}
  1 & \perp \separate \top \implies \perp
  & \mbox{\ref{starAx:False}} \\
  2 & \perp \implies \lnot \aformula
  & \mbox{PC} \\
  3 &  \perp \separate \top \implies \lnot \aformula
  & \mbox{\ref{rule:imptr}, 1, 2}
  \end{syntproof}
\end{minipage}
\hfill
\begin{minipage}{0.51\linewidth}
  \begin{syntproof}
    4 &  \top \implies (\perp \magicwand \lnot \aformula)
    & \mbox{\ref{starAx:Commute}, \ref{rule:staradj}} \\
    5 &  \top \implies \neg (\perp \septraction \  \aformula)
    & \mbox{Def. $\septraction$, PC} \\
    6 &  (\perp \septraction \ \aformula) \implies \perp
    & \mbox{5, PC} \hfill\qedhere
    \end{syntproof}
\end{minipage}
\end{proof}

\vspace{3pt}

\begin{proof}[Proof of~\rm\ref{mwAx:BotR}]~

\noindent
\begin{minipage}{0.46\linewidth}
  \begin{syntproof}
  1 & \top \separate \aformula \implies \top
  & \mbox{PC} \\
  2 & \top \implies (\aformula \magicwand \top)
  & \mbox{\ref{rule:staradj}}
  \end{syntproof}
\end{minipage}
\hfill
\begin{minipage}{0.51\linewidth}
  \begin{syntproof}
  3 & \lnot (\aformula \magicwand \top) \implies \perp
  & \mbox{PC, 2} \\
  4 &  (\aformula \septraction \perp) \implies \perp
  & \mbox{Def. $\septraction$, PC}
  \end{syntproof}
\end{minipage}

\vspace{5pt}

\noindent Note that implicitly, we have assumed that we can replace $\lnot \top$ by $\perp$ in the scope
of $\septraction$ or $\magicwand$, which is possible as the replacement of equivalents holds
in the calculus~$\coresys(\separate,\magicwand)$ (see e.g. the proof of Theorem~\ref{theo:PSLcompleteAx}).
\end{proof}

\vspace{3pt}

\begin{proof}[Proof of~\rm\ref{mwAx:Cut}]~
  \begin{syntproof}
    1 & (\aformula \magicwand \aformulabis) \implies (\aformula \magicwand \aformulabis)
    & \mbox{PC} \\
    2 & (\aformula \magicwand \aformulabis) \separate \aformula \implies  \aformulabis
    & \mbox{\ref{rule:magicwandadj}, 1} \\
    3 & \aformula \separate (\aformula \magicwand \aformulabis) \implies (\aformula \magicwand \aformulabis) \separate \aformula
    & \mbox{\ref{starAx:Commute}} \\
    4 &  \aformula \separate (\aformula\magicwand\aformulabis)\implies \aformulabis
    & \mbox{\ref{rule:imptr}, 3, 2}\hfill\qedhere
  \end{syntproof}
\end{proof}

\vspace{3pt}

\begin{proof}[Proof of~\rm\ref{mwAx:ImpL}]~
  \begin{syntproof}
    1 & \aformula \implies \aformulabis
    & \mbox{Hypothesis} \\
    2 & \aformulabis \separate (\aformulabis\magicwand\neg\aformulater)\implies \neg \aformulater
    & \mbox{\ref{mwAx:Cut}} \\
    3 & (\aformulabis\magicwand\neg\aformulater) \separate \aformula  \implies \aformula \separate (\aformulabis\magicwand\neg\aformulater)
    & \mbox{\ref{starAx:Commute}} \\
    4 &  \aformula \separate (\aformulabis\magicwand\neg\aformulater) \implies  \aformulabis
    \separate (\aformulabis\magicwand\neg\aformulater)
    & \mbox{\ref{rule:starinference}, 1} \\
    5 & \aformula \separate (\aformulabis\magicwand\neg\aformulater) \implies \neg \aformulater
    & \mbox{\ref{rule:imptr}, 2, 4} \\
    6 &  (\aformulabis\magicwand\neg\aformulater) \separate \aformula  \implies \neg \aformulater
    & \mbox{\ref{rule:imptr}, 3, 5} \\
    7 & \aformulabis \magicwand \neg \aformulater \implies \aformula \magicwand \neg \aformulater
    & \mbox{\ref{rule:staradj}, 6} \\
    8 & \lnot (\aformula \magicwand\neg \aformulater) \implies  \lnot (\aformulabis \magicwand \neg \aformulater)
    & \mbox{PC, 7} \\
    9 & (\aformula \septraction \aformulater) \implies (\aformulabis\septraction \aformulater)
    & \mbox{Def. $\septraction$, 8} \hfill \qedhere
  \end{syntproof}
\end{proof}

\vspace{3pt}

\begin{proof}[Proof of~\rm\ref{mwAx:ImpR}]~
\begin{syntproof}
  1 & \aformula \implies \aformulabis
  & \mbox{Hypothesis} \\
  2 & \lnot \aformulabis \implies \lnot \aformula
  & \mbox{PC, 1} \\
  3 & \aformulater \separate (\aformulater \magicwand \neg \aformulabis) \implies \neg \aformulabis
  & \mbox{\ref{mwAx:Cut}} \\
  4 &  \aformulater \separate (\aformulater \magicwand \neg \aformulabis) \implies  \lnot \aformula
  & \mbox{\ref{rule:imptr}, 3, 2} \\
  5 & (\aformulater \magicwand \neg \aformulabis) \separate \aformulater  \implies \aformulater \separate (\aformulater \magicwand \neg \aformulabis)
  & \mbox{\ref{starAx:Commute}} \\
  6 & (\aformulater \magicwand \neg \aformulabis) \separate \aformulater  \implies \neg \aformula
  & \mbox{\ref{rule:imptr}, 4, 5} \\
  7 & (\aformulater \magicwand \neg \aformulabis) \implies  (\aformulater \magicwand \neg \aformula)
  & \mbox{\ref{rule:staradj}, 6} \\
  8 & \neg (\aformulater \magicwand \neg \aformula) \implies \neg (\aformulater \magicwand \neg \aformulabis)
  & \mbox{PC, 7} \\
  9 &  (\aformulater \septraction \aformula) \implies (\aformulater\septraction \aformulabis)
  & \mbox{Def. $\septraction$}\hfill\qedhere
\end{syntproof}
\end{proof}

\vspace{3pt}

\begin{proof}[Proof of~\rm\ref{mwAx:Curry}]
  By definition of the septraction operator $\septraction$,~\ref{mwAx:Curry} is equivalent to
  $\aformula\magicwand (\aformulabis\magicwand \neg \aformulater)) \ \iff\ (\aformula*\aformulabis)\magicwand\neg\aformulater$. 
  This equivalence is provable in $\coresys(\separate,\magicwand)$, thanks to 
  the adjunction rules, as we now show.

\begin{syntproof}
  1 & (\aformula \separate \aformulabis) \separate (\aformula \separate \aformulabis \magicwand \neg \aformulater)\implies \neg\aformulater & 
  \mbox{\ref{mwAx:Cut}}
  \\
  2 & \aformulabis \separate (\aformula \separate (\aformula \separate \aformulabis \magicwand \neg \aformulater))\implies \neg\aformulater & \mbox{\ref{starAx:Commute}, \ref{starAx:Assoc}, 1}
  \\
  3 & \aformula \separate (\aformula \separate \aformulabis \magicwand \neg \aformulater)\implies (\aformulabis \magicwand
  \neg \aformulater) & \mbox{\ref{rule:staradj}, 2}
  \\
  4 & (\aformula \separate \aformulabis \magicwand \neg \aformulater) \implies (\aformula \magicwand (\aformulabis\magicwand \neg \aformulater))
  & \mbox{\ref{rule:staradj}, 3, \ref{starAx:Commute}}
  \\
  5 & \aformula \separate (\aformula \magicwand (\aformulabis \magicwand \neg \aformulater))\implies
  (\aformulabis \magicwand \neg \aformulater) &
  \ref{mwAx:Cut}
  \\
  6 & \aformulabis \separate \aformula \separate (\aformula \magicwand (\aformulabis \magicwand \neg \aformulater))\implies
  \neg \aformulater
  & \mbox{\ref{rule:magicwandadj}, 5, \ref{starAx:Commute}, \ref{starAx:Assoc}}
  \\
  7 & (\aformula \separate \aformulabis) \separate (\aformula \magicwand (\aformulabis \magicwand \neg \aformulater))
  \implies \neg \aformulater &
  \mbox{\ref{starAx:Commute}, \ref{starAx:Assoc}, 6}
  \\
  8 & (\aformula \magicwand (\aformulabis \magicwand \neg \aformulater))
  \implies (\aformula \separate \aformulabis\magicwand \neg \aformulater)
  & \mbox{\ref{rule:staradj}, 7}
  \\
  9 & \aformula \magicwand (\aformulabis \magicwand \neg \aformulater)
  \iff (\aformula * \aformulabis)\magicwand \neg \aformulater
  & \mbox{PC, 4, 8} \hfill\qedhere
\end{syntproof}
\end{proof}

\vspace{3pt}

\begin{proof}[Proof of~\rm\ref{mwAx:OrL}]
We derive each implication separately.
\begin{syntproof}
  1 & (\aformula \magicwand \neg \aformulater) \wedge  (\aformulabis \magicwand \neg \aformulater)
  \implies (\aformulabis \magicwand \neg \aformulater)
  & \mbox{PC} \\
  2 & \aformulabis \separate ((\aformula \magicwand \neg \aformulater) \wedge  (\aformulabis \magicwand \neg \aformulater))  \implies \aformulabis \separate
  (\aformulabis \magicwand \neg \aformulater)
  & \mbox{\ref{rule:starintroLR}, 1} \\
  3 & (\aformula \magicwand \neg \aformulater) \wedge (\aformulabis \magicwand \neg \aformulater)
  \implies (\aformula \magicwand \neg \aformulater)
  & \mbox{PC} \\
  4 & \aformula \separate ((\aformula \magicwand \neg \aformulater) \wedge  (\aformulabis \magicwand \neg \aformulater))
  \implies \aformula \separate
  (\aformula \magicwand \neg \aformulater)
  & \mbox{\ref{rule:starintroLR}, 3} \\
  5 & \aformula \separate  (\aformula \magicwand \neg \aformulater) \implies \neg \aformulater
  & \mbox{\ref{mwAx:Cut}} \\
  6 & \aformulabis\separate  (\aformulabis \magicwand \neg \aformulater) \implies \neg \aformulater
  & \mbox{\ref{mwAx:Cut}} \\
  7 & \aformulabis \separate  ((\aformula \magicwand \neg \aformulater) \wedge (\aformulabis \magicwand \neg \aformulater))
  \implies \neg \aformulater
  & \mbox{\ref{rule:imptr}, 2, 6} \\
  8 & \aformula \separate  ((\aformula \magicwand \neg \aformulater) \wedge (\aformulabis \magicwand \neg \aformulater))
  \implies \neg \aformulater
  & \mbox{\ref{rule:imptr}, 4, 5} \\
  9 & (\aformula \vee \aformulabis)  \separate  ((\aformula \magicwand \neg \aformulater) \wedge (\aformulabis \magicwand \neg \aformulater))  \implies \\
  &\quad (\aformula \separate  (\aformula \magicwand \neg \aformulater \wedge \aformulabis \magicwand \neg \aformulater)) \vee
  (\aformulabis \separate  (\aformula \magicwand \neg \aformulater \wedge \aformulabis \magicwand \neg \aformulater))
  & \mbox{\ref{starAx:DistrOr}} \\
  10 & (\aformula \vee \aformulabis)  \separate  ((\aformula \magicwand \neg \aformulater) \wedge (\aformulabis \magicwand \neg \aformulater))  \implies \neg \aformulater
  & \mbox{PC, 7, 8, 9}\\
  11 & ((\aformula \magicwand \neg \aformulater) \wedge( \aformulabis \magicwand \neg \aformulater))
  \separate  (\aformula \vee \aformulabis) \implies (\aformula \vee \aformulabis)  \separate  ((\aformula \magicwand \neg \aformulater) \wedge (\aformulabis \magicwand \neg \aformulater))
  & \mbox{\ref{starAx:Commute}} \\
  12 & ((\aformula \magicwand \neg \aformulater) \wedge (\aformulabis \magicwand \neg \aformulater))
  \separate  (\aformula \vee \aformulabis) \implies \neg \aformulater
  & \mbox{\ref{rule:imptr}, 12, 10} \\
  13 &  (\aformula \magicwand \neg \aformulater) \wedge (\aformulabis \magicwand \neg \aformulater)
  \implies  (\aformula \vee \aformulabis \magicwand \neg \aformulater)
  & \mbox{\ref{rule:staradj}, 12} \\
  14 & \neg (\aformula \vee \aformulabis \magicwand \neg \aformulater)
  \implies \neg (\aformula \magicwand \neg \aformulater) \vee \neg (\aformulabis \magicwand \neg \aformulater)
  & \mbox{PC, 13} \\
  15 & (\aformula \vee \aformulabis \septraction \aformulater)
  \implies (\aformula \septraction \aformulater) \vee (\aformulabis \septraction \aformulater)
  & \mbox{Def. $\septraction$, 14}
\end{syntproof}

\noindent The derivation of the other implication can be found below.

\begin{syntproof}
1 & \aformula \implies \aformula \vee \aformulabis
& \mbox{PC} \\
2 & \aformulabis \implies \aformula \vee \aformulabis
& \mbox{PC} \\
3 & (\aformula \septraction \aformulater) \implies (\aformula \vee \aformulabis \septraction \aformulater)
& \mbox{\ref{mwAx:ImpL}, 1} \\
4 & (\aformulabis \septraction \aformulater) \implies (\aformula \vee \aformulabis \septraction \aformulater)
& \mbox{\ref{mwAx:ImpL}, 2} \\
5 & (\aformulabis \septraction \aformulater)
\vee  (\aformula \septraction \aformulater) \implies (\aformula \vee \aformulabis \septraction \aformulater)
& \mbox{PC, 3, 4}
\hfill\qedhere
\end{syntproof}
\end{proof}


\begin{proof}[Proof of~\rm\ref{mwAx:OrR}]
We handle each implication separately,
and we follow a pattern similar to the one used in the proof of~\ref{mwAx:OrL}.

\begin{syntproof}
  1 & \aformulater \separate (\aformulater \magicwand \neg \aformula) \implies \neg \aformula
  & \mbox{\ref{mwAx:Cut}} \\
  2 & (\aformulater \magicwand \neg \aformula) \land (\aformulater \magicwand \neg \aformulabis)
  \implies \aformulater \magicwand \neg \aformula
  & \mbox{PC} \\
  3 & \aformulater \separate ((\aformulater \magicwand \neg \aformula) \land (\aformulater \magicwand \neg \aformulabis))
  \implies \aformulater \separate (\aformulater \magicwand \neg \aformula)
  & \mbox{\ref{rule:starintroLR},2} \\
  4 & \aformulater \separate ((\aformulater \magicwand \neg \aformula) \land (\aformulater \magicwand \neg \aformulabis))
  \implies  \neg \aformula
  & \mbox{\ref{rule:imptr}, 3, 1} \\
  5 & \aformulater \separate (\aformulater \magicwand \neg \aformulabis) \implies \neg \aformulabis
  & \mbox{\ref{mwAx:Cut}} \\
  6 & (\aformulater \magicwand \neg \aformula) \land (\aformulater \magicwand \neg \aformulabis)
  \implies (\aformulater \magicwand \neg \aformulabis)
  & \mbox{PC} \\
  7 & \aformulater \separate ((\aformulater \magicwand \neg \aformula) \land (\aformulater \magicwand \neg \aformulabis))
  \implies \aformulater \separate (\aformulater \magicwand \neg \aformulabis)
  & \mbox{\ref{rule:starintroLR},6} \\
  8 & \aformulater \separate ((\aformulater \magicwand \neg \aformula) \land (\aformulater \magicwand \neg \aformulabis))
  \implies  \neg \aformulabis
  & \mbox{\ref{rule:imptr}, 7, 5} \\
  9  & \aformulater \separate ((\aformulater \magicwand \neg \aformula) \land (\aformulater \magicwand \neg \aformulabis))
  \implies  \neg (\aformula \vee \aformulabis)
  & \mbox{PC, 4, 8} \\
  10 & ((\aformulater \magicwand \neg \aformula) \land (\aformulater \magicwand \neg \aformulabis))  \separate
  \aformulater \implies  \neg (\aformula \vee \aformulabis)
  & \mbox{\ref{starAx:Commute} + \ref{rule:imptr}, 9} \\
  11 &  (\aformulater \magicwand \neg \aformula) \land (\aformulater \magicwand \neg \aformulabis)  \implies
  (\aformulater \magicwand  \neg (\aformula \vee \aformulabis))
  & \mbox{\ref{rule:staradj}, 10} \\
  12 & \neg (\aformulater \magicwand  \neg (\aformula \vee \aformulabis)) \implies
  \neg (\aformulater \magicwand \neg \aformula) \vee \neg (\aformulater \magicwand \neg \aformulabis)
  & \mbox{PC, 11} \\
  13 & (\aformulater \septraction (\aformula \vee \aformulabis)) \implies
  (\aformulater \septraction \aformula) \vee (\aformulater \septraction \aformulabis)
  & \mbox{Def. $\septraction$, 12}
\end{syntproof}



\noindent The derivation of the other implication can be found below.

\begin{syntproof}
  1 & \aformula \implies \aformula \vee \aformulabis
  & \mbox{PC} \\
  2 & \aformulabis \implies \aformula \vee \aformulabis
  & \mbox{PC} \\
  3 & (\aformulater \septraction  \aformula)  \implies ( \aformulater  \septraction \aformula \vee \aformulabis)
  & \mbox{\ref{mwAx:ImpR}, 1} \\
  4 & (\aformulater \septraction  \aformulabis)  \implies ( \aformulater  \septraction \aformula \vee \aformulabis)
  & \mbox{\ref{mwAx:ImpR}, 2} \\
  5 & (\aformulater \septraction  \aformula)
  \vee  (\aformulater \septraction \aformulabis) \implies
  ( \aformulater  \septraction \aformula \vee \aformulabis)
  & \mbox{PC, 3, 4}\hfill\qedhere
\end{syntproof}
\end{proof}

\vspace{3pt}

\noindent \textit{Proof of~\rm\ref{mwAx:Mix}.}
\begin{syntproof}
  1 & \aformula \separate (\aformula \magicwand \aformulater) \implies \aformulater
  & \mbox{\ref{mwAx:Cut}} \\
  2 & \aformula \separate (\aformula \magicwand \neg (\aformulabis \wedge \aformulater)) \implies
  \neg (\aformulabis \wedge \aformulater)
  & \mbox{\ref{mwAx:Cut}} \\
  3 &   (\aformula \separate (\aformula \magicwand \aformulater)) \wedge
  (\aformula \separate (\aformula \magicwand \neg (\aformulabis \wedge \aformulater))) \implies
  \neg \aformulabis
  & \mbox{PC, 1, 2} \\
  4 & \aformula \separate ( (\aformula \magicwand \aformulater) \wedge
    (\aformula \magicwand \neg (\aformulabis \wedge \aformulater)))
    \implies \\
  & \quad (\aformula \separate (\aformula \magicwand \aformulater)) \wedge
      (\aformula \separate (\aformula \magicwand \neg (\aformulabis \wedge \aformulater)))
  & \mbox{\ref{rule:starintroLR}, PC} \\
  5 &  \aformula \separate ( (\aformula \magicwand \aformulater) \wedge
    (\aformula \magicwand \neg (\aformulabis \wedge \aformulater))) \implies
  \neg \aformulabis
  & \mbox{\ref{rule:imptr}, 4} \\
  6 & (\aformula \magicwand \aformulater) \wedge
    (\aformula \magicwand \neg (\aformulabis \wedge \aformulater)) \implies (\aformula \magicwand \neg \aformulabis)
  & \mbox{\ref{starAx:Commute}, \ref{rule:staradj}, 5} \\
  7 & (\aformula \magicwand \aformulater) \wedge  \neg (\aformula \magicwand \neg \aformulabis)
      \implies \neg (\aformula \magicwand \neg (\aformulabis \wedge \aformulater))
  & \mbox{PC}  \\
  8 & (\aformula\magicwand\aformulater)  \wedge (\aformula\septraction\aformulabis) \implies
  (\aformula\septraction \aformulabis\wedge\aformulater)
  & \mbox{Def. $\septraction$, 7} \hfill\qed
\end{syntproof}






\vspace{3pt}

\noindent \textit{Proof of~{\rm\ref{mwAx:SeptEq}} and~\rm\ref{mwAx:SeptIneq}.}
Below, we provide the derivation for the admissible axiom schema~\ref{mwAx:SeptEq} (the derivation 
for~\ref{mwAx:SeptIneq} is very similar and is thus omitted).

\begin{syntproof}
1 & \aformula \implies (\aformula \wedge \avariable = \avariablebis) \vee (\aformula \wedge \avariable \neq 
\avariablebis) 
& \mbox{PC} \\
2 & (\aformula \septraction \aformulabis) \implies ((\aformula \wedge \avariable = \avariablebis) \vee (\aformula \wedge \avariable \neq 
\avariablebis) \septraction \aformulabis)
& \mbox{\ref{mwAx:ImpL}, 1} \\
3 & (\aformula \septraction \aformulabis) \implies 
   (\aformula \wedge \avariable = \avariablebis \septraction \aformulabis) \vee 
  (\aformula \wedge \avariable \neq  \avariablebis \septraction \aformulabis)
& \mbox{\ref{mwAx:OrL}, \ref{rule:imptr}, 2} \\
4 & \avariable = \avariablebis \separate \avariable \neq \avariablebis \implies \avariable = \avariablebis
& \mbox{\ref{starAx:MonoCore}, \ref{rule:starintroLR}} \\
5 & \avariable \neq \avariablebis \separate \avariable = \avariablebis \implies \avariable \neq \avariablebis
& \mbox{\ref{starAx:MonoCore}, \ref{rule:starintroLR}} \\
6 & \avariable = \avariablebis \separate \avariable \neq \avariablebis \implies  \avariable = \avariablebis \wedge 
 \avariable \neq \avariablebis
& \mbox{\ref{starAx:Commute}, \ref{rule:imptr}, PC, 4, 5} \\
7 & \avariable = \avariablebis \separate \avariable \neq \avariablebis \implies \neg \top
& \mbox{PC, 6} \\
8 & \lnot \top \implies \lnot \aformulabis  & \mbox{PC }\\
9 &  \avariable = \avariablebis \separate \avariable \neq \avariablebis \implies \lnot \aformulabis &
\mbox{PC, 7, 8} \\ 
10 & \avariable = \avariablebis \implies (\avariable \neq \avariablebis  \magicwand \neg \aformulabis)
& \mbox{\ref{rule:staradj}, 9} \\
11 & \neg (\avariable \neq \avariablebis  \magicwand \neg \aformulabis) \implies \avariable \neq \avariablebis
& \mbox{PC, 10} \\ 
12 & (\avariable \neq \avariablebis  \septraction \aformulabis) \implies \avariable \neq \avariablebis
& \mbox{Def. $\septraction$, 11} \\ 
13 & \aformula \wedge \avariable \neq \avariablebis  \implies \avariable \neq \avariablebis
& \mbox{PC} \\
14 & (\aformula \wedge \avariable \neq \avariablebis \septraction \aformulabis) \implies 
(\avariable \neq \avariablebis
\septraction \aformulabis)
& \mbox{\ref{mwAx:ImpL}, 13} \\
15 &  (\aformula \wedge \avariable \neq \avariablebis \septraction \aformulabis) \implies 
\avariable \neq \avariablebis
& \mbox{\ref{rule:imptr}, 12, 14} \\
16 &  \avariable =  \avariablebis \wedge (\aformula \septraction \aformulabis) \implies  
(\aformula \wedge \avariable = \avariablebis \septraction \aformulabis) \vee \avariable \neq \avariablebis
& \mbox{PC, 3, 15} \\
17 & \avariable =  \avariablebis \wedge (\aformula \septraction \aformulabis) 
\implies  (\aformula \wedge \avariable = \avariablebis \septraction \aformulabis)
& \mbox{PC, 16}\hfill\qed
\end{syntproof}

\cut{
\begin{syntproof}
1 & \aformula \implies (\aformula \wedge \avariable = \avariablebis) \vee (\aformula \wedge \avariable \neq 
\avariablebis) 
& \mbox{PC} \\
2 & (\aformula \septraction \top) \implies ((\aformula \wedge \avariable = \avariablebis) \vee (\aformula \wedge \avariable \neq 
\avariablebis) \septraction \top)
& \mbox{\ref{mwAx:ImpL}, 1} \\
3 & (\aformula \septraction \top) \implies 
   (\aformula \wedge \avariable = \avariablebis \septraction \top) \vee 
  (\aformula \wedge \avariable \neq  \avariablebis \septraction \top)
& \mbox{\ref{mwAx:OrL}, \ref{rule:imptr}, 2} \\
4 & \avariable = \avariablebis \separate \avariable \neq \avariablebis \implies \avariable = \avariablebis
& \mbox{\ref{starAx:MonoCore}, \ref{rule:starintroLR}} \\
5 & \avariable \neq \avariablebis \separate \avariable = \avariablebis \implies \avariable \neq \avariablebis
& \mbox{\ref{starAx:MonoCore}, \ref{rule:starintroLR}} \\
6 & \avariable = \avariablebis \separate \avariable \neq \avariablebis \implies  \avariable = \avariablebis \wedge 
 \avariable \neq \avariablebis
& \mbox{\ref{starAx:Commute}, \ref{rule:imptr}, PC, 4, 5} \\
7 & \avariable = \avariablebis \separate \avariable \neq \avariablebis \implies \neg \top
& \mbox{PC, 6} \\
8 & \avariable = \avariablebis \implies (\avariable \neq \avariablebis  \magicwand \neg \top)
& \mbox{\ref{rule:staradj}, 7} \\
9 & \neg (\avariable \neq \avariablebis  \magicwand \neg \top) \implies \avariable \neq \avariablebis
& \mbox{PC, 8} \\ 
10 & (\avariable \neq \avariablebis  \septraction \top) \implies \avariable \neq \avariablebis
& \mbox{Def. $\septraction$, 9} \\ 
11 & \aformula \wedge \avariable \neq \avariablebis  \implies \avariable \neq \avariablebis
& \mbox{PC} \\
12 & (\aformula \wedge \avariable \neq \avariablebis \septraction \top) \implies (\avariable \neq \avariablebis
\septraction \top)
& \mbox{\ref{mwAx:OrL}, 11} \\
13 &  (\aformula \wedge \avariable \neq \avariablebis \septraction \top) \implies \avariable \neq \avariablebis
& \mbox{\ref{rule:imptr}, 10, 12} \\
14 &  \avariable =  \avariablebis \wedge (\aformula \septraction \top) \implies  (\aformula \wedge \avariable = \avariablebis \septraction \top) \vee \avariable \neq \avariablebis
& \mbox{PC, 3, 13} \\
15 & \avariable =  \avariablebis \wedge (\aformula \septraction \top) \implies  (\aformula \wedge \avariable = \avariablebis \septraction \top)
& \mbox{PC, 14}\hfill\qed
\end{syntproof}
}

\begin{proof}[Proof of~\rm\ref{mwAx:SizeLiterals}]
Notice that, since $\aformula_{\size}$ is satisfiable, 
for every $\inbound_1,\inbound_2 \in \Nat$ such that 
$\size \geq \inbound_1 \land \lnot \size \geq \inbound_2 \inside \aformula_{\size}$,
we must have $\inbound_1 < \inbound_2$.
Moreover, thanks to~\ref{coreAx:Size} and~\ref{mwAx:ImpL}, 
without loss of generality, we can restrict ourselves to 
$\aformula_{\size}$ of the form:
\begin{itemize}[align=left]
     \item[\parenthesislabel{1}{csl:septraction:add-size:shape-1}] 
          $\aformula_{\size} = \size \geq \inbound$  for some $\inbound \geq 0$,
     \item[\parenthesislabel{2}{csl:septraction:add-size:shape-2}]  
          $\aformula_{\size} = \neg (\size \geq \inbound)$  for some $\inbound > 0$,
     \item[\parenthesislabel{3}{csl:septraction:add-size:shape-3}] 
          $\aformula_{\size} = \size \geq \inbound_1 \wedge \neg (\size \geq \inbound_2)$ for some $\inbound_2 > \inbound_1$. 
\end{itemize}
Indeed, given an arbitrary $\aformula_{\size}$, 
every positive literal $\size \geq \inbound$ such that $\inbound < \maxsize{\aformula_{\size}}$ can be derived starting from $\size \geq \maxsize{\aformula_{\size}}$, by repeated applications of~\ref{coreAx:Size}. 
Similarly, let $\overline{\inbound}$ be the smallest natural number
such that $\lnot \size \geq \overline{\inbound} \inside \aformula_{\size}$, if any. 
Every literal $\lnot \size \geq \inbound' \inside \aformula_{\size}$ with $\inbound' \geq \overline{\inbound}$ 
can be derived from 
$\lnot \size \geq \overline{\inbound}$, by repeated applications of the axiom~\ref{coreAx:Size} (taken in contrapositive form i.e.~$\lnot \size \geq \inbound \implies \lnot \size \geq \inbound+1$, which is derivable in $\coresys$ by propositional reasoning).

We write $\UNALLOC(\asetvar)$ to denote the conjunction 
$\bigwedge_{\avariable \in \asetvar} \lnot \alloc{\avariable}$ (`$\UNALLOC$' stands for `unallocated'). 
Below, given $\inbound \in \Nat$, 
we aim at deriving the formula $(\size = \inbound \land \UNALLOC(\asetvar)) \septraction \true$
since this implies that \ref{mwAx:SizeLiterals} is derivable in 
its instances~\ref{csl:septraction:add-size:shape-1}--\ref{csl:septraction:add-size:shape-3}, as shown below.
\begin{description}
     \item[case \ref{csl:septraction:add-size:shape-1}] Let $\aformula_{\size} = \size \geq \inbound$.
          \begin{syntproof}
               1    & \size = \inbound \land \UNALLOC(\asetvar) \septraction \true
                    & \mbox{Hypothesis}\\
               2    & \size = \inbound  \land \UNALLOC(\asetvar) \implies \size \geq \inbound  \land \UNALLOC(\asetvar)
                    & \mbox{PC, def. of~$\size = \inbound$}\\
               3    & (\size = \inbound \land \UNALLOC(\asetvar) \septraction \true)
                         \implies 
                         (\size \geq \inbound \land \UNALLOC(\asetvar) \septraction \true)
                    & \mbox{\ref{mwAx:ImpL}, 2}\\
               4    & \size \geq \inbound \land \UNALLOC(\asetvar) \septraction \true
                    & \mbox{Modus Ponens, 1, 3}
          \end{syntproof}  
     \item[case \ref{csl:septraction:add-size:shape-2}] Let $\aformula_{\size} = \lnot \size \geq \inbound$. Since $\aformula_{\size}$ is satisfiable, we have $\inbound \geq 1$.
     \begin{syntproof}
          1    & \size = \inbound{-}1 \land \UNALLOC(\asetvar) \septraction \true
               & \mbox{Hypothesis}\\
          2    & \size = \inbound{-}1  \land \UNALLOC(\asetvar) \implies \lnot \size \geq \inbound  \land \UNALLOC(\asetvar)
               & \mbox{PC, def. of~$\size = \inbound{-}1$}\\
          3    & (\size = \inbound{-}1 \land \UNALLOC(\asetvar) \septraction \true)
                    \implies
                    (\lnot\size \geq \inbound \land \UNALLOC(\asetvar) \septraction \true)
               & \mbox{\ref{mwAx:ImpL}, 2}\\
          4    & \lnot\size \geq \inbound \land \UNALLOC(\asetvar) \septraction \true
               & \mbox{Modus Ponens, 1, 3}
     \end{syntproof} 
     \item[case \ref{csl:septraction:add-size:shape-3}] Let $\aformula_{\size} = \size \geq \inbound_1 \land \lnot \size \geq \inbound_2$. Since $\aformula_{\size}$ is satisfiable, $\inbound_2 > \inbound_1$.
     \begin{syntproof}
          1    & \size = \inbound_2{-}1 \land \UNALLOC(\asetvar) \septraction \true
               & \mbox{Hypothesis}\\
          2    & \size = \inbound_2{-}1 \implies \size \geq \inbound_1 
               & \mbox{repeated~\ref{coreAx:Size}, as $\inbound_2 > \inbound_1$}\\
          3    & \size = \inbound_2{-}1 \implies \lnot \size \geq \inbound_2  
               & \mbox{PC, def. of~$\size = \inbound{-}1$}\\
          4    & \size = \inbound_2{-}1  \land \UNALLOC(\asetvar) \implies
               \size \geq \inbound_1 \land  \lnot \size \geq \inbound_2  \land \UNALLOC(\asetvar)
               & \mbox{PC, 2, 3}\\
          5    & \big(\size = \inbound_2{-}1 \land \UNALLOC(\asetvar) \septraction \true\big)
                    \implies\\[-2pt]
               & \qquad (\size \geq \inbound_1 \land  \lnot \size \geq \inbound_2 \land \UNALLOC(\asetvar) \septraction \true)
               & \mbox{\ref{mwAx:ImpL}, 4}\\
          6    & \size \geq \inbound_1 \land  \lnot \size \geq \inbound_2 \land \UNALLOC(\asetvar) \septraction \true
               & \mbox{Modus Ponens, 1, 5}
     \end{syntproof}  
\end{description}
To conclude the proof, let us show that $(\size = \inbound \land \UNALLOC(\asetvar)) \septraction \true$ is derivable in $\coresys(\separate,\magicwand)$.
The proof is by induction on $\inbound$, with two base cases, for $\inbound = 0$ and $\inbound = 1$.
\begin{description}
     \item[base case: $\inbound = 0$]
          In this case, $\size = 0$ is equal to $\size \geq 0 \land \lnot \size \geq 1$.
          We have, 
          \begin{syntproof}
               1 & (\emp \magicwand \perp) \implies  \emp \separate (\emp \magicwand \perp)  
               & \mbox{\ref{starAx:Emp}} \\
               2 & \emp \separate (\emp \magicwand \perp) \implies \perp
               & \mbox{\ref{mwAx:Cut}} \\
               3 &  (\emp \magicwand \perp) \implies \perp
               & \mbox{\ref{rule:imptr}, 1, 2} \\
               4 & \emp \septraction \top
               & \mbox{PC, 3, def.~of $\septraction$} \\
               5 & \alloc{\avariable} \implies \size \geq 1 
               & \mbox{\ref{coreAx:AllocSize}} \\
               6 & \emp \implies \neg \alloc{\avariable}
               & \mbox{PC, 5, as $\size \geq 1 = \lnot \emp$} \\
               7 & \emp \implies \UNALLOC(\asetvar)
               & \mbox{PC, 6 used for all $\avariable \in \asetvar$} \\
               8 & \emp \implies  \size \geq 0 \wedge \neg (\size \geq 1)
               & \mbox{PC, def. of $\size \geq \inbound$} \\
               9    &\emp \implies \size \geq 0 \wedge \neg (\size \geq 1) \land \UNALLOC(\asetvar)
                    & \mbox{PC, 7, 8}\\
               10 & (\emp \septraction \top) \implies (\size \geq 0 \wedge \neg (\size \geq 1)  \wedge \UNALLOC(\asetvar) \septraction \top)
               & \mbox{\ref{mwAx:ImpL}, 9} \\
               11 & \size \geq 0 \wedge \neg (\size \geq 1)  \wedge \UNALLOC(\asetvar) \septraction \top
               & \mbox{Modus Ponens, 4, 10}
          \end{syntproof}
     \item[base case: $\inbound = 1$] This case corresponds exactly 
     to the axiom~\ref{wandAx:Size}.
     \item[induction step: $\inbound \geq 2$]
          First of all, we notice that the following formula is valid: 
          \begin{equation}
          (\size = 1 \land \UNALLOC(\asetvar)) \separate (\size = \inbound{-}1 \land \UNALLOC(\asetvar)) \implies \size = \inbound \land \UNALLOC(\asetvar).
          \tag{$\dagger$}\label{csl:septraction:auxiliary:dagger}
          \end{equation}
          Indeed, 
          let $\pair{\astore}{\aheap}$ be a memory state satisfying the antecedent of the implication above. 
          So, there are disjoint heaps $\aheap_1$ and $\aheap_2$ such that 
          $\aheap = \aheap_1 \heapsum \aheap_2$, $\card{\domain{\aheap_1}} = 1$, 
          $\card{\domain{\aheap_2}} = \inbound-1$, 
          and for every $\avariable \in \asetvar$, 
          $\astore(\avariable) \not \in \domain{\aheap_1}$ and $\astore(\avariable) \not \in \domain{\aheap_2}$.
          By $\aheap = \aheap_1 \heapsum \aheap_2$, 
          $\card{\domain{\aheap}} = \card{\domain{\aheap_1}} + \card{\domain{\aheap_2}} = \inbound$, and for every~$\avariable \in \asetvar$, 
          $\astore(\avariable) \not \in \domain{\aheap}$.
          Thus, $\pair{\astore}{\aheap} \models \size = \inbound \land \UNALLOC(\asetvar)$.

          As~(\ref{csl:septraction:auxiliary:dagger}) can be 
          seen as a formula in~$\slSA$, 
          by Theorem~\ref{theo:starCompleteness} it is derivable in $\coresys(\separate)$ and thus in $\coresys(\separate,\magicwand)$.
          Now, let us derive 
          $(\size = \inbound \land \UNALLOC(\asetvar)) \septraction \true$.
          Let us consider as induction hypothesis the derivability of 
          $(\size = \inbound{-}1 \land \UNALLOC(\asetvar)) \septraction \true$. 
          Therefore, 
          \begin{syntproof}
               1    & \size = \inbound{-}1 \land \UNALLOC(\asetvar) \septraction \true
                    & \mbox{Induction Hypothesis}\\
               2    & (\size = 1 \land \UNALLOC(\asetvar)) \separate (\size = \inbound{-}1 \land \UNALLOC(\asetvar)) \implies \size = \inbound \land \UNALLOC(\asetvar) 
                    & \mbox{(\ref{csl:septraction:auxiliary:dagger}), see above}\\
               3    & \size = 1 \land \UNALLOC(\asetvar) \septraction \true 
                    & \mbox{\ref{wandAx:Size}}\\
               4    & \true \implies (\size = \inbound{-}1 \land \UNALLOC(\asetvar) \septraction \true)
                    & \mbox{PC, 1}\\
               5    &  (\size = 1 \land \UNALLOC(\asetvar) \septraction \true)
                         \implies\\[-2pt]
                    &  \quad \big(\size = 1 \land \UNALLOC(\asetvar) \septraction 
                       (\size = \inbound{-}1 \land \UNALLOC(\asetvar) \septraction \true)\big)
                    & \mbox{\ref{mwAx:ImpR}, 4}\\
               6    & \big(\size = 1 \land \UNALLOC(\asetvar) \septraction 
                    (\size = \inbound{-}1 \land \UNALLOC(\asetvar) \septraction \true)\big) \implies\\[-2pt]
                    & \quad \big((\size = 1 \land \UNALLOC(\asetvar)) \separate 
                    (\size = \inbound{-}1 \land \UNALLOC(\asetvar)) \septraction \true \big)
                    & \mbox{\ref{mwAx:Curry}}\\
               7    & \big((\size = 1 \land \UNALLOC(\asetvar)) \separate 
               (\size = \inbound{-}1 \land \UNALLOC(\asetvar)) \septraction \true \big) 
               \implies\\[-2pt]
                    & \quad (\size = \inbound \land \UNALLOC(\asetvar) \septraction \true) 
                    & \mbox{\ref{mwAx:ImpL}, 2}\\
               8    & (\size = 1 \land \UNALLOC(\asetvar) \septraction \true) 
                         \implies (\size = \inbound \land \UNALLOC(\asetvar) \septraction \true) 
                    & \mbox{\ref{rule:imptr}, 5, 6, 7}\\
               9    & \size = \inbound \land \UNALLOC(\asetvar) \septraction \true
                    & \mbox{Modus Ponens, 3, 8}\hfill\qedhere
          \end{syntproof}
\end{description}
\end{proof}


\end{document}